\newcommand{\macrospath}{./Macros}
\newcommand{\proofspath}{./proofs}
  \DeclareSymbolFont{stix@largesymbols}{LS2}{stixex}{m}{n}
  \DeclareMathDelimiter{\lBrace}{\mathopen} {stix@largesymbols}{"E8}%
                                            {stix@largesymbols}{"0E}
  \DeclareMathDelimiter{\rBrace}{\mathclose}{stix@largesymbols}{"E9}%
                                            {stix@largesymbols}{"0F}
\newcommand{\myproof}[1]{
\ifthenelse{\boolean{withproofs}}{#1}{}
}
\newcommand{\la}[1]{\lambda #1.}
\newcommand{\tm}{t}
\newcommand{\tmtwo}{u}
\newcommand{\tmthree}{s}
\newcommand{\tmfour}{r}
\newcommand{\var}{x}
\newcommand{\vartwo}{y}
\newcommand{\varthree}{z}
\newcommand{\varfour}{w}
\newcommand{\dom}[1]{\symfont{dom}(#1)}
\newcommand{\rootRew}[1]{\mapsto_{#1}}
\newcommand{\Rew}[1]{\rightarrow_{#1}}
\renewcommand{\to}{\Rew{}}
\newcommand{\tostrat}{\Rew{x}}
\newcommand{\symfont}[1]{\mathtt{#1}}
\newcommand{\ovsym}{oh}
\newcommand{\cbv}{{CbV}\xspace}
\newcommand{\val}{v}
\newcommand{\valtwo}{\val'}
\newcommand{\ctxholep}[1]{\langle #1\rangle}
\newcommand{\ctxhole}{\ctxholep{\cdot}}
\newcommand{\ctx}{C}
\newcommand{\evctx}{C}
\newcommand{\evctxtwo}{\evctx'}
\newcommand{\evctxp}[1]{\evctx\ctxholep{#1}}
\newcommand{\nbvctxtwo}[1]{\nbvctxtwo{#1}}
\newcommand{\defeq}{:=}
\newcommand{\grameq}{::=}
\newcommand{\esub}[2]{[#1{\shortleftarrow}#2]}
\newcommand{\isub}[2]{\{#1{\shortleftarrow}#2\}}
\newcommand{\llbrace}{\{ \kern -0.27em \vert}
\newcommand{\rrbrace}{\vert \kern -0.27em \}}
\renewcommand{\l}{\lambda}
\newcommand{\ie}{i.e.\xspace}
\newcommand{\eg}{e.g.\xspace}
\newcommand{\ih}{{\textit{i.h.}}\xspace}
\newcommand{\fv}[1]{\symfont{fv}(#1)}
\newcommand{\ben}[1]{{\color{red} {#1}}}
\newcommand{\ignore}[1]{}
\newcommand{\myinput}[1]{\ifthenelse{\boolean{withimages}}{\input{#1}}{}}
\newcommand{\inputproof}[1]{\ifthenelse{\boolean{withproofs}}{\input{#1}}{}}
\newcommand{\reflemma}[1]{Lemma~\ref{l:#1}}
\newcommand{\reflemmaeqp}[2]{{L.\ref{l:#1}.\ref{p:#1-#2}}}
\newcommand{\refthm}[1]{Theorem~\ref{thm:#1}}
\newcommand{\refthmp}[2]{Theorem~\ref{thm:#1}.\ref{p:#1-#2}}
\newcommand{\refprop}[1]{Proposition~\ref{prop:#1}}
\newcommand{\refpropeq}[1]{Prop.~\ref{prop:#1}}
\newcommand{\refsect}[1]{Sect.~\ref{sect:#1}}
\newcommand{\reffig}[1]{Fig.~\ref{fig:#1}}
\newcommand{\refdef}[1]{Definition~\ref{def:#1}}
\newcommand{\refdefeq}[1]{Def.~\ref{def:#1}}
\newcommand{\refpoint}[1]{Point~\ref{p:#1}}
\newcommand{\set}[1]{\{#1\}}
\newcommand{\nat}{\mathbb{N}}
\newcommand{\betav}{\beta_\val}
\newcommand{\tobv}{\Rew{\betav}}
\newcommand{\rtobv}{\rootRew{\betav}}
\newcommand{\size}[1]{|#1|}
\newcommand{\sizep}[2]{|#1|_{#2}}
\newcommand{\clos}{c}
\newcommand{\clostwo}{\clos'}
\newcommand{\env}{E}
\newcommand{\envtwo}{\env'}
\newcommand{\stack}{S}
\newcommand{\stacktwo}{\stack'}
	\renewcommand{\state}{q}
	\newcommand{\state}{s}
\newcommand{\statetwo}{\state'}
\newcommand{\statethree}{\state''}
\newcommand{\pairstate}[2]{(#1\mid#2)}
\newcommand{\fourstate}[4]{(#1\mid#2\mid#3\mid#4)}
\newcommand{\exec}{\rho}
\newcommand{\decode}[1]{\underline{#1}}
\newcommand{\decodep}[2]{\decode{#1}\ctxholep{#2}}
\newcommand{\emptystack}{\epsilon}
\newcommand{\cons}{:}
\newcommand{\deriv}{\ensuremath{e}}
\newcommand{\derivtwo}{\deriv'}
\newcommand{\derivthree}{\deriv''}
\newcommand{\rename}[1]{#1^\alpha}
\newcommand{\lab}{l}
\newcommand{\pair}[2]{(#1,#2)}
\newcommand{\mach}{\symfont{M}}
\newcommand{\tomachhole}[1]{\leadsto_{#1}}
\newcommand{\tomach}{\tomachhole{}}
\newcommand{\compil}[1]{#1^\circ}
\newcommand{\subsym}{sub}
\newcommand{\seasym}{sea}
\newcommand{\prsym}{pr}
\newcommand{\tomachine}{\tomachhole\mach}
\newcommand{\tomachpr}{\tomachhole{pr}}
\newcommand{\tomachbeta}{\tomachhole{\betav}}
\newcommand{\tomachbetaev}{\tomachhole{\evsym\betav}}
\newcommand{\tomachproj}{\tomachhole{\pi}}
\newcommand{\tomachprojev}{\tomachhole{\evsym\pi}}
\newcommand{\tomachsub}{\tomachhole{\subsym}}
\newcommand{\tomachsubnv}{\tomachhole{\nsubsym}}
\newcommand{\subvsym}{\subsym_\val}
\newcommand{\subwsym}{\subsym_w}
\newcommand{\nsubsym}{\nvsym\subsym}
\newcommand{\nsubvsym}{\nvsym\subvsym}
\newcommand{\nsubwsym}{\nvsym\subwsym}
\newcommand{\tomachsubvev}{\tomachhole{\nsubvsym}}
\newcommand{\tomachsubwev}{\tomachhole{\nsubwsym}}
\newcommand{\tomachseaone}{\tomachhole{\seasym_1}}
\newcommand{\tomachseatwo}{\tomachhole{\seasym_2}}
\newcommand{\tomachseaoneev}{\tomachhole{\evsym\seasym_{1}}}
\newcommand{\tomachseaonenv}{\tomachhole{\nvsym\seasym_{1}}}
\newcommand{\tomachseatwonv}{\tomachhole{\nvsym\seasym_{2}}}
\newcommand{\tomachseathreeev}{\tomachhole{\evsym\seasym_{3}}}
\newcommand{\tomachseathreenv}{\tomachhole{\nvsym\seasym_{3}}}
\newcommand{\tomachseafournv}{\tomachhole{\nvsym\seasym_{4}}}
\newcommand{\tomachseafivenv}{\tomachhole{\nvsym\seasym_{5}}}
\newcommand{\tomachseasixev}{\tomachhole{\evsym\seasym_{6}}}
\newcommand{\tomachseasevenev}{\tomachhole{\evsym\seasym_{7}}}
\newcommand{\sizepr}[1]{\sizep{#1}{\prsym}}
\newcommand{\sizebeta}[1]{\sizep{#1}{\betav}}
\newcommand{\midd}{\mid}
\newcommand{\withproofs}[1]{\ifthenelse{\boolean{withproofs}}{#1}{}}
\newcommand{\withoutproofs}[1]{\ifthenelse{\boolean{withproofs}}{}{#1}}
\newcounter{numberone}
\newcommand{\bigo}{{\mathcal{O}}}
\newcommand{\emptyenv}{\epsilon}
\newcommand{\tmtwop}{\tmtwo'}
\definecolor{dgreen}{rgb}{0.0, 0.5, 0.0}
\newcommand{\Id}{\symfont{I}}
\newcommand{\tomacho}{\tomachov}
\newcommand{\tomachov}{\tomachhole{\ovsym}}
\newcommand{\exder}{%
  \def\exderW[##1]{\triangleright_{##1}\ }%
  \def\exderWO{\triangleright\ }%
  \@ifnextchar[\exderW\exderWO%
  }
\newcommand\Deribbase[5]{{#3}\ {\pmb\vdash}_{#2}^{#1} {#4}\  {:}\  {#5}}
\newcommand{\Deribase}[1]{%
  \def\DeribW[##1]{\Deribbase{##1}{#1}}%
  \def\DeribWO{\Deribbase{}{#1}}%
  \@ifnextchar[\DeribW\DeribWO%
  }
  \newcommand{\Deri}{%
  \def\DeriW_##1{\Deribase{##1}}%
  \def\DeriWO{\Deribase{}}%
  \@ifnextchar_\DeriW\DeriWO%
  }
\newcommand{\app}{\symfont{app}}
\newcommand{\appresult}{%
  \def\appresultW<##1>{\app_\result^{##1}}%
  \def\appresultWO{\app_\result}%
  \@ifnextchar<\appresultW\appresultWO%
  }
\newcommand\mydots{\hbox to .6em{.\hss.}}
\newcommand{\cameratech}[2]{\ifthenelse{\boolean{techr}}{#2}{#1}}
\newcommand{\tuple}[1]{\orange{\llparenthesis}#1\orange{\rrparenthesis}}
\newcommand{\tuple}[1]{\llparenthesis#1\rrparenthesis}
\newcommand{\mycyan}[1]{\cyan{#1}}
\newcommand{\mycyan}[1]{#1}
\newcommand{\proj}{\pi}
	\newcommand{\nvsym}{\red\bullet}
	\newcommand{\evsym}{\blue\bullet}
	\newcommand{\nvsym}{{\circ}}
	\newcommand{\evsym}{{\bullet}}
\newcommand{\bareclos}{c}
\newcommand{\laxclos}{\flag\bareclos}
\newcommand{\nvlab}[1]{\nvsym{#1}}
\newcommand{\evlab}[1]{\evsym{#1}}
\renewcommand{\clos}{\evlab\bareclos}
\renewcommand{\clostwo}{\evlab{\bareclos'}}
\newcommand{\nvclos}{\nvlab\bareclos}
\newcommand{\clc}[3]{\overline{#1}^{#2,#3}}
\newcommand{\lali}[1]{\overline{#1}}
\newcommand{\projsym}{\pi}
\newcommand{\rtoproj}{\rootRew{\projsym}}
\newcommand{\toproj}{\Rew{\projsym}}
\newcommand{\cbvsym}{\symfont{cbv}}
\newcommand{\intprefix}{\symfont{i}}
\newcommand{\tarprefix}{\symfont{t}}
\newcommand{\tollbv}{\Rew{\intprefix\betav}}
\newcommand{\rtoibv}{\rootRew{\intprefix\betav}}
\newcommand{\toibv}{\Rew{\intprefix\betav}}
\newcommand{\rtoccbv}{\rootRew{\tarprefix\betav}}
\newcommand{\totbv}{\Rew{\tarprefix\betav}}
\newcommand{\tollproj}{\Rew{\intprefix\projsym}}
\newcommand{\rtoiproj}{\rootRew{\intprefix\projsym}}
\newcommand{\toiproj}{\Rew{\intprefix\projsym}}
\newcommand{\rtoccproj}{\rootRew{\tarprefix\projsym}}
\newcommand{\toccproj}{\Rew{\tarprefix\projsym}}
\newcommand{\totproj}{\Rew{\tarprefix\projsym}}
\newcommand{\tollcbv}{\Rew{\symfont{\laliprefix cbv}}}
\newcommand{\norm}[1]{\lVert #1 \rVert}
\newcommand{\prnorm}[1]{\lVert #1 \rVert_{\vvar}}
\newcommand{\fnorm}[1]{\lVert #1 \rVert_{\efvar}}
\newcommand{\nnorm}[1]{\lVert #1 \rVert_{\envar}}
\newcommand{\lnorm}[1]{\lVert #1 \rVert_{\efvar}}
\newcommand{\omeas}[1]{\sizep{#1}{\ovsym}}
\newcommand{\stacke}{\stack_{en}}
\newcommand{\ars}{A}
\newcommand{\arstwo}{\ars'}
\newcommand{\pack}[2]{\purple{\llbracket}#1\purple{|}#2\purple{\rrbracket}}
\newcommand{\packnmp}[3]{\purple{\llbracket}#1\purple{|}#2\purple{\rrbracket}_{#3}}
\newcommand{\pack}[2]{\llbracket#1|#2\rrbracket}
\newcommand{\packnmp}[3]{\llbracket#1|#2\rrbracket_{#3}}
\newcommand{\packnm}[2]{\packnmp{#1}{#2}{n,m}}
\newcommand{\wid}[1]{\symfont{wd}(#1)}
\newcommand{\hg}[1]{\symfont{hg}(#1)}
\newcommand{\unlali}[1]{\underline{#1}}
\newcommand{\machctxhole}{\downarrow}
\newcommand{\dabs}[2]{#1;#2.}
\newcommand{\dabsv}[2]{\dabs{\tuv{#1}}{\tuv{#2}}}
\newcommand{\varlist}{\tuvar}
\newcommand{\vartwolist}{\tuvartwo}
\newcommand{\varthreelist}{\tuvarthree}
\newcommand{\varfourlist}{\tuvarfour}
\newcommand{\TAM}{Source TAM\xspace}
\newcommand{\totam}{\tomachhole{\textup{\tiny STAM}}}
\newcommand{\LTAM}{Int TAM\xspace}
\newcommand{\TTAM}{Target TAM\xspace}
\newcommand{\tottam}{\tomachhole{\textup{\tiny TTAM}}}
\def\bwcirc{\raisebox{1pt}{\scalebox{0.6}{\LEFTcircle}}}
\newcommand{\sousym}{\symfont{sou}}
\newcommand{\intsym}{\symfont{int}}
\newcommand{\tarsym}{\symfont{tar}}
\newcommand{\cbvcal}{\lambda_{\cbvsym}}
\newcommand{\soucal}{\lambda_{\sousym}}
\newcommand{\intcal}{\lambda_{\intsym}}
\newcommand{\tarcal}{\lambda_{\tarsym}}
\newcommand{\tosou}{\Rew{\sousym}}
\newcommand{\toint}{\Rew{\intsym}}
\newcommand{\totar}{\Rew{\tarsym}}
\renewcommand{\tollcbv}{\toint}
\newcommand{\bag}{b}
\newcommand{\bagtwo}{b'}
\newcommand{\valnone}{\val_1}
\newcommand{\valntwo}{\val_2}
\newcommand{\vvvalnone}{\vv\valnone}
\newcommand{\vvvalntwo}{\vv\valntwo}
\newcommand{\itsubp}[4]{\lBrace#1;#3{\shortleftarrow}#2;#4\rBrace}
\newcommand{\isubp}[4]{\{#1;#3{\shortleftarrow}#2;#4\}}
\newcommand{\isubsim}[4]{\{#1{\shortleftarrow}#2,\mydots,#3{\shortleftarrow}#4\}}
\newcommand{\prvar}{p}
\newcommand{\prvartwo}{\prvar'}
\newcommand{\svar}{\mathtt{s}}
\newcommand{\lvar}{\mathtt{l}}
\newcommand{\vvar}{\mathtt{v}}
\newcommand{\envar}{\svar}
\newcommand{\efvar}{\lvar}
\newcommand{\tuvar}{\tuv\var}
\newcommand{\tuvartwo}{\tuv\vartwo}
\newcommand{\tuvarthree}{\tuv\varthree}
\newcommand{\tuvarfour}{\tuv\varfour}
\newcommand{\vvvar}{\tuvar}
\newcommand{\vvvartwo}{\tuvartwo}
\newcommand{\vdisjoint}{\#}
\newcommand{\naming}[1]{\underline{#1}_{\tuvartwo,\tuvar}}
\newcommand{\namingp}[3]{\underline{#1}_{#2,#3}}
\newcommand{\emptylist}{\epsilon}
\newcommand{\States}{\symfont{States}}
\renewcommand{\tostrat}{\Rew{\symfont{str}}}
\newcommand{\run}{\exec}
\newcommand{\runtwo}{\run'}
\newcommand{\runthree}{\run''}
\newcommand{\xcal}{\l_{cal}}
\newcommand{\nfov}[1]{\symfont{nf}_{\ovsym}(#1)} 
\newcommand{\evval}{\evsym\val}
\newcommand{\evvaltwo}{\evsym\valtwo}
\newcommand{\Stackable}{Stackable\xspace}
\newcommand{\LAM}{LAM\xspace}
\newsavebox\tempbox
\let\svwidetilde\widetilde
\renewcommand\widetilde[1]{\sbox\tempbox{$#1$}\svwidetilde{\usebox{\tempbox}}}
	\renewcommand{\flag}{\bwcirc}
	\newcommand{\flag}{\bwcirc}}
\newcommand{\tuv}[1]{\widetilde{#1}}
\newcommand{\cc}[1]{{#1}^\symfont{cc}}
\tikzset{
node distance=1.3cm, auto,
every node/.style={font=\scriptsize },
ocenter/.style={baseline={([yshift=-.5ex, xshift=-.5ex]current bounding box)}},  
labelBeginAbove/.style={postaction={decorate,decoration={markings,mark=at position 0 with {\node[inner sep= 0.6pt, above=1pt]{\tiny #1};}} } },
labelBeginBelow/.style={postaction={decorate,decoration={markings,mark=at position 0 with {\node[inner sep= 0.6pt, below=1pt]{\tiny #1};}}}},
labelEndAbove/.style={postaction={decorate,decoration={markings,mark=at position 1 with {\node[inner sep= 0.6pt, above=1pt]{\tiny #1};}}}},
labelEndBelow/.style={postaction={decorate,decoration={markings,mark=at position 1 with {\node[inner sep= 0.6pt, below=1pt]{\tiny #1};}}}},
labelEndRight/.style={postaction={decorate,decoration={markings,mark=at position 1 with {\node[inner sep= 0.6pt, right=1pt]{\tiny #1};}}}},
labelEndLeft/.style={postaction={decorate,decoration={markings,mark=at position 1 with {\node[inner sep= 0.6pt, left=1pt]{\tiny #1};}}}}
}
\newcommand{\wrapt}{closure\xspace}
\newcommand{\wrapts}{closures\xspace}
\newcommand{\Wrapt}{Closure\xspace}
\newcommand{\bagt}{bag\xspace}
\newcommand{\bagts}{bags\xspace}
\newcommand{\Bagts}{Bags\xspace}
\newcommand{\mclosure}{m-closure\xspace}
\newcommand{\mclosures}{m-closures\xspace}
\newcommand{\Mclosures}{M-Closures\xspace}
\newcommand{\lifting}{wrapping\xspace}
\newcommand{\Lifting}{Wrapping\xspace}
\newcommand{\lifted}{wrapped\xspace}
\newcommand{\Lifted}{Wrapped\xspace}
\newcommand{\collapsing}{unwrapping\xspace}
\newcommand{\Collapsing}{Unwrapping\xspace}
\renewcommand{\exec}{r}
\renewcommand{\lvar}{\symfont{w}}
\renewcommand{\unlali}[1]{\lceil#1\rceil}
\renewcommand{\lali}[1]{\underline{#1}}
\renewcommand{\clc}[3]{\graybox{#1}^{\footnotesize#2,#3}}
\renewcommand{\naming}[1]{\namingp{#1}{\tuvartwo}{\tuvar}}
\renewcommand{\namingp}[3]{\darkgraybox{#1}_{#2,#3}}
\definecolor{LightGray}{gray}{.80}
\definecolor{DarkGray}{gray}{.60}
\newcommand{\graybox}[1]{\colorbox{LightGray}{\ensuremath{#1}}}
\newcommand{\darkgraybox}[1]{\colorbox{DarkGray}{\ensuremath{#1}}}
\renewcommand{\subwsym}{\subsym_c}
\begin{document}

\title{Closure Conversion, Flat Environments, 
	and the Complexity of Abstract Machines}

\author{Beniamino Accattoli}
\affiliation{%
	\institution{Inria, École Polytechnique}
	\country{France}
}
\email{beniamino.accattoli@inria.fr}

\author{Cl\'audio Belo Louren\c co}
\affiliation{%
	\institution{Huawei Central Software Institute}
	\country{UK}
	}
\email{claudio.lourenco@huawei.com}

\author{Dan Ghica}
\affiliation{%
	\institution{Huawei Central Software Institute, University of Birmingham}
	\country{UK}
}
\email{dan.ghica@huawei.com}

\author{Giulio Guerrieri}
\affiliation{%
	\institution{University of Sussex}
	\country{UK}
	}
\email{g.guerrieri@sussex.ac.uk}

\author{Claudio {Sacerdoti Coen}}
\affiliation{%
	\institution{
		Università di Bologna}
	\country{Italy}
}
\email{claudio.sacerdoticoen@unibo.it}
\renewcommand{\shortauthors}{Accattoli, Belo Lourenco, Ghica, Guerrieri, Sacerdoti Coen}

\begin{abstract}
Closure conversion is a program transformation at work in compilers for functional languages to turn inner functions into global ones, by building \emph{closures} pairing the transformed functions with the \emph{environment} of their free variables. Abstract machines rely on similar and yet different concepts of \emph{closures} and \emph{environments}. 

We study the relationship between the two approaches. We adopt a simple $\lambda$-calculus with tuples as source language and study abstract machines for both the source language and the target of closure conversion. Moreover, we focus on the simple case of flat closures/environments 
(no sharing of environments).
We provide three contributions. 
Firstly, a new simple proof technique for the correctness of closure conversion, inspired by abstract machines. 
Secondly, we show how the closure invariants of the target language allow us to design a new way of handling environments in abstract machines, not suffering the shortcomings of~other~styles. 

Thirdly, we study the machines from the point of view of time complexity. 
We show that closure conversion decreases various dynamic costs while increasing the size of the initial code. Despite these changes, the overall complexity of the machines before and after closure conversion \mbox{turns out to be the same}. 
\end{abstract}
\begin{CCSXML}
\end{CCSXML}

\settopmatter{printfolios=true}

\keywords{Lambda calculus, abstract machines, program transformations}


\maketitle

\section{Introduction}
\label{sect:intro}
A feature of functional languages, as well as of the $\l$-calculi on which they are based, is the possibility of defining \emph{inner functions}, that is, functions defined inside the definition of other functions. The tricky point is that inner functions can also use the external variables of their enveloping functions. For instance, the Church numeral  $\underline{3} \defeq \la\var\la\vartwo\var(\var(\var\vartwo))$ contains the inner function $\la\vartwo\var(\var(\var\vartwo))$ that uses the externally defined variable $\var$.

The result of functional programs can be a function. In particular, it can be the instantiation of an inner function. For instance, the result of applying $\underline{3}$ above to a value $\val$ is the instantiated function $\la\vartwo \val(\val(\val\vartwo)$, as one can easily see by doing one $\beta$-reduction step.

\paragraph{Closure Conversion} In practice, however, compiled functional programs do not follow $\beta$-reduction literally, nor do they produce the code of the instantiated function itself, because the potential duplications of the substitution process would be too costly. 
The idea is to decompose $\beta$-reduction in smaller, micro steps, delaying substitution as much as possible, and computing  representations of instantiated functions called \emph{closures}. A closure, roughly, is the  pair of the defined inner function (that is, before instantiation) plus the tuple of instantiations for its free variables, called its \emph{environment}.

This is achieved via a program transformation called \emph{closure conversion}, that re-structures the code  turning beforehand all inner functions into closures. At compile time, closures pair inner functions with \emph{initial environments} that simply contain the free variables of the functions. 
Execution shall then dynamically fill up the environments with the actual instantiations.


\paragraph{Compilation and Abstract Machines}  Abstract machines are often seen as a technique 
alternative to compilation and related to the interpretation of programming languages. This is because abstract machines tend to be developed for the source language, before the pipeline of transformations and optimizations of the compiler. 

A \emph{first aim} of our paper is to take a step toward closing the gap between compilation and abstract machines, by studying how transformations used by compilers---
here closure conversion---induce invariants 
exploitable for the design of machines working at further stages of the compilation pipeline, while~still~being~abstract.

\paragraph{Same Terminology, Different Concepts} 
Similarly to compilation, abstract machines do not follow $\beta$-reduction literally, nor do they produce the code of instantiated functions. In particular, 
some abstract machines use data structures called again 
\emph{closures} and \emph{environments}. These notions, however, are similar and yet \emph{different}. 

In closure conversion, an inner function with free variables is transformed at compile time into a closed function paired with the environment of its free variables. 
The environment is in general \emph{open} (it is closed by some enveloping converted function, yet locally it is open), and shall be filled/closed only during execution, 
which provides the instantiations. 
The aim is to hoist the closed function up to global scope 
while its environment stays at the call site.

In abstract machines, every piece of code receives an environment, not just functions, and environments and closures have a mutually recursive structure. Additionally, environments are built dynamically, during execution, not in advance, and they are always closed: the pair of a piece of code and its environment is called a \emph{closure} because they read back (or decode) to a closed $\l$-term. 
Moreover, there is no hoisting of code.

A \emph{second aim} of this paper is to establish and clarify the relationship between these two approaches. To disambiguate, we keep \emph{\wrapts} for the pairs \emph{(closed function, possibly open environment)} of closure conversion and use \emph{\mclosures} for the pairs \emph{(code, closed environment)} used in abstract machines; we further disambiguate calling \emph{\bagts} the environments of \wrapts, 
while the environments of \mclosures 
keep their usual names.
With respect to this terminology, we study \emph{\wrapt conversion for \mclosures}. One of the outcomes of \wrapt conversion turns out to be the elimination of \mclosures.

\paragraph{Complexity of Abstract Machines} Finally, a last influence on our study comes from the recent development by Accattoli and co-authors of both a complexity-based theory of abstract machines  \cite{DBLP:conf/icfp/AccattoliBM14,DBLP:conf/aplas/AccattoliBM15,DBLP:conf/ppdp/AccattoliB17,DBLP:journals/scp/AccattoliG19,DBLP:conf/ppdp/CondoluciAC19,DBLP:conf/ppdp/AccattoliCGC19,DBLP:journals/pacmpl/AccattoliLV21} 
and time and space reasonable cost models for the $\l$-calculus \cite{DBLP:journals/corr/AccattoliL16,DBLP:conf/lics/AccattoliCC21,DBLP:conf/lics/AccattoliLV22}. 
Such a line of work has developed fine analyses of tools and techniques for abstract machines, studying how different kinds of environments and forms of sharing impact on the cost of execution of abstract machines. 
The \emph{third aim} of the paper is to understand how tuples and \wrapt conversions affect environments, forms of sharing, and the cost of execution.

To avoid misunderstandings, our aim is \emph{not} the study of optimized/shared notions of (compiler) closures, their efficient representations, their minimization, or the trade-off between access time and allocation time---in fact, we adopt the basic form of \emph{flat} closure conversion that we apply to \emph{all} functions for \emph{all} their free variables. 

\paragraph{Flat Environments} 
Different data structures for (m-)closures and bags/environments and various closure conversion algorithms can be used. The design space, in particular, is due to chains of nested functions---say, $f$ is nested inside $g$, in turn, nested inside $h$---where two consecutive nested functions, $f$ and $g$, can both use variables, say $\var$, of $h$. Therefore, one might want the closures for $f$ and $g$ to share the environment entry for $\var$. 
Perhaps surprisingly, sharing bags between closures can easily break \emph{safety for space} of closure conversion, \ie might not preserve the space required by the source program, if parts of shared environments survive the lifespan of the associated closures. 
The simplest approach is using \emph{flat bags} (and flat closure conversion), where no sharing between bags is used; with flat bags, there are two distinct but identical entries for $\var$ in the bags of $f$ and $g$. 
Flat bags are safe for space \cite{DBLP:conf/lfp/ShaoA94,DBLP:journals/toplas/ShaoA00,DBLP:journals/pacmpl/Paraskevopoulou19}. 

Machine environments can also either be shared or flat (i.e. with no sharing between environments). The distinction, however, is not often made since it does not show up in the abstract specification of the machine, but only when one concretely implements it or studies the complexity of the overhead of the machine (since the two techniques have different costs). For a meaningful comparison with flat closure conversion, for source programs we adopt an abstract machine meant to be implemented using flat environments.

Our aim is to understand how closure conversion compares \emph{asymptotically} with \mclosures of abstract machines, in particular with respect to one of the key parameters for time analyses, the size of the initial term. 
Flat environments are chosen for their simplicity and their safeness for space.
More elaborated forms are future work. 

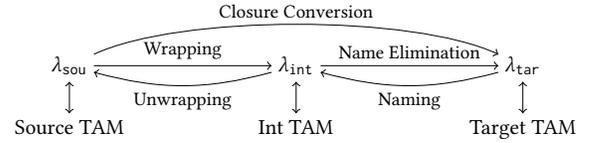
\begin{figure}[t!]
\small
\centering
	\scalebox{0.9}{\begin{tikzpicture}[ocenter]
		\node at (0,0)[align = center](soucal){\normalsize $\soucal$};
		\node at (soucal.center)[right = 85pt](intcal){\normalsize $\intcal$};
		\node at (intcal.center)[right =  85pt](tarcal){\normalsize $\tarcal$};
		
		\node at (soucal.center)[below = 20pt](soutam){\normalsize \TAM};
		\node at (intcal.center)[below = 20pt](inttam){\normalsize \LTAM};
		\node at (tarcal.center)[below = 20pt](tartam){\normalsize \TTAM};
		
		\draw[->](soucal) to node[above] {\small \Lifting} (intcal);
		\draw[->, out =-165, in =-15](intcal) to node[below] {\small \Collapsing} (soucal);
		\draw[->](intcal) to  node[above] {\small Name Elimination} (tarcal);
		\draw[->, out =-165, in =-15](tarcal) to node[below] {\small Naming} (intcal);
		\draw[->, out=25, in=155,looseness=0.6](soucal) to  node[above] {\small \Wrapt Conversion} (tarcal);
		
		\draw[<->](soucal) to (soutam);
		\draw[<->](intcal) to (inttam);
		\draw[<->](tarcal) to (tartam);
	\end{tikzpicture}}
	
	\vspace*{-.5\baselineskip}
\caption{$\l$-Calculi and abstract machines in the paper.}
\label{fig:setting}

\vspace*{-.5\baselineskip}
\end{figure}
\paragraph{Our Setting} 
We 
decompose \wrapt conversion in two phases, 
first going from a source calculus $\soucal$ to an intermediate one $\intcal$, and then to a target calculus $\tarcal$. 
We provide each of the three calculi with its own abstract machine. 
Our setting is summed up in \reffig{setting}, where \emph{TAM} stands for \emph{tupled abstract machine}. 

Our source $\l$-calculus $\soucal$ is the simplest possible setting accommodating \wrapt conversion, that is, Plotkin's effect-free call-by-value $\l$-calculus extended with tuples, because tuples are needed to define the conversion. Our setting is untyped, because it is how abstract machines are usually studied, and for the sake of minimality.

The first transformation $\lali\cdot: \soucal \to \intcal$, dubbed \emph{\lifting}, replaces abstractions $\la\tuvar\tm$ (where $\tuvar$ is a sequence of variables) with the 
dedicated construct of \emph{abstract \wrapts} $\lali{\la\tuvar\tm} \defeq \pack{\la{\tuvartwo}\la\tuvar\lali\tm}{\vv\vartwo}$ where $\tuvartwo$ is the sequence of free variables of $\la\tuvar\tm$, and $\vv\vartwo$ is the tuple of these same variables, forming the \emph{initial \bagt} of the \wrapt.

Roughly, the second transformation $\intcal \to \tarcal$, deemed \emph{name elimination}, turns the abstraction $\l\tuvartwo$ of many variables into the abstraction of a single variable representing the \bagt/environment. 
It is similar to a translation from named variables to de Bruijn indices. In fact, name elimination is only outlined in the paper, the details are given in the Appendix, because they are mostly routine.

The target calculus $\tarcal$ 
is not the low-level target language of a compiler. It is indeed still 
high-level, because we study only flat closure conversion, not the whole compilation pipeline. We do not model the hoisting of closed functions up to global scope; it is 
an easy aspect of closure conversion and is usually avoided~in~its~study.

We give three contributions, 
shaping the paper into three parts.

\paragraph{Contribution 1: A Simple Proof of Correctness} The correctness of \wrapt conversion is often showed by endowing both the source and the target calculus of the transformation with big-step operational semantics and establishing a logical relation between the two \cite{DBLP:conf/popl/MinamideMH96,DBLP:conf/pepm/SullivanDA21,DBLP:journals/pacmpl/Paraskevopoulou19,DBLP:conf/icfp/AhmedB08}. The reason is that adopting a small-step semantics does not seem to work: \wrapt conversion does not commute with meta-level substitution (that is, the substitution of converted terms is not the conversion of the substituted terms), and thus it does not map $\beta$-steps from the source to the target calculus. 

\citet{DBLP:conf/pldi/BowmanA18} are to our knowledge the only ones adopting a small-step semantics. They neatly get around the non-commutation issue by noticing that the substitution of converted terms is \emph{$\eta$-equivalent to} the conversion of the substituted terms. Our proof technique looks at the same issue in a different way, inspired by correctness proofs for abstract machines and independently of $\eta$-equivalence (which we do not consider for our calculi, for the sake of minimality). It might be of interest for languages where $\eta$-equivalence is not sound (\eg because of observable effects, as in OCaml, or when $\eta$-equivalence does not preserve typability, like in languages with mutability and the value restriction), since in these cases Bowman and Ahmed's proof might not scale up.


\paragraph{Contribution 2: New Kinds of Machine Environments} According to \citet{DBLP:journals/entcs/FernandezS09}, there are two kinds of abstract machines, those using many (shared or flat) \emph{local environments}, which are defined by mutual induction with \mclosures, 
and those using a single (necessarily flat) \emph{global environment} or \emph{heap} and no \mclosures. 
Each kind has pros and cons, there is no absolute better style of machine environments; see also 
\citet{DBLP:conf/ppdp/AccattoliB17}.

For our machine for the source calculus $\soucal$, the \TAM, we adopt flat local environments and \mclosures, as to allow us to compare closures (with flat bags) and \mclosures. 
The contribution here is that the invariants enforced by flat closure conversion (more precisely, by wrapping) 
enable a new management of environments, what we dub \emph{stackable environments} and plug into the \LTAM, our machine for the intermediate calculus $\intcal$. Stackable environments have the pros of \emph{both} global and local environments, and \emph{none} of their cons, as explained in \refsect{LTAM}. They are called \emph{stackable} because the current one can be put on hold---on the stack---when entering a closed function with its new environment, and re-activated when the evaluation of the function is over. But be careful: their stackability is not necessarily an advantage, it is just the way they work; the advantage is the lack of the cons of local and global environments.

Moving to the target calculus $\tarcal$ enables a further tweak of environments, adopted by the \TTAM: environments---which usually are \emph{maps} associating variables to values---become \emph{tuples} of values, with no association to variables. This is enabled by \emph{name elimination}, which turns variables into indices referring to the tuple, in a way reminiscent of de Bruijn indices. 

\paragraph{Contribution 3: Time Complexity} 
We study how tuples and flat \wrapt conversion impact the forms of sharing and the time complexity of the machines. Our analyses produce four insights:
\begin{enumerate}
\item \emph{Tuples raise the overhead}: we give a theoretical analysis of the cost of adding tuples to the pure $\l$-calculus, which, to our knowledge, does not appear anywhere in the literature. We show why tuples require their own form of sharing and that the \emph{creation of tuples} at runtime is \emph{unavoidable}. This is done by adapting \emph{size exploding families} from the study of reasonable time cost models for the $\l$-calculus \cite{DBLP:conf/rta/Accattoli19}. Moreover, tuples raise the dependency on the size $\size\tm$ of the initial code $\tm$  of the overhead of the machine. Namely, let the \emph{height} $\hg\tm$ be the maximum number of bound variables of $\tm$ in the scope of which a sub-term of $\tm$ is contained: the dependency for flat environments raises from $\bigo(\hg\tm)$ to $\bigo(\size\tm\cdot\hg\tm)$.

\item \emph{Name elimination brings a logarithmic speed-up}: with variable names, flat environments have at best $\bigo(\log(\hg\tm))$ access time, while de Bruijn indices (or our name elimination) enable $\bigo(1)$ access time---this is true both before and after closure conversion. Before conversion, however, the improvement does not lower the overall asymptotic overhead of the machine with respect to the size of the initial term, which is dominated by the other flat  environment operations. After closure conversion, instead, it \emph{does} lower the overall dependency of the machine from $\bigo(\size{\cc\tm} \cdot\hg{\cc\tm})$ to $\bigo(\size{\cc\tm})$, where $\cc\tm$ is $\tm$ after closure conversion.

\item \emph{Amortized constant cost of transitions}: 
in any abstract machine, independently of their implementation,
the number of transitions of an execution and the cost of some single transitions depend on  the size of the initial term. 
This is related to the higher-order nature of $\l$-calculi. 
Closure conversion 
impacts on the cost of single transitions (but not on their number): their \emph{amortized} cost becomes constant. 
The insight is that the non-constant cost of transitions in ordinary abstract machines is related to inner functions.

\item \emph{Dynamically faster, statically bigger, overall the same}: the previous two points show that closure conversion decreases the dependency of machines on the size of the initial term during execution. The dynamic improvement however is counter-balanced by the fact that $\size{\cc\tm}$ is possibly \emph{bigger} than 
$\size{\tm}$, namely $\size{\cc\tm}\in \bigo(\size\tm \cdot \hg\tm)$. 
Therefore, the overall complexity is $\bigo(\size\tm\cdot\hg\tm)$ also after closure conversion.
\end{enumerate}

\paragraph{OCaml Code and Proofs.} As additional material on GitHub \cite{PPDP25ocaml}, we provide an OCaml implementation of the \TTAM, the machine for closure converted terms, also described in {\Cref{sect:app-implementation}}.

All proofs are in the Appendix, 
which will be uploaded on ArXiv.

\section{Preliminaries: 
$\cbvcal$, a Call-by-Value $\l$-Calculus}
\label{sect:plotkin}
\begin{figure}[t!]
\centering
\setlength{\arraycolsep}{3pt}
\begin{tabular}{c|cc}
		$\begin{array}{r rlllll}
			\textsc{Terms} &
			\tm,\tmtwo,\tmthree,\tmfour& \grameq & \var  \midd \la\var\tm \midd \tm\tmtwo
			\\
			\textsc{Values} &
			\val,\valtwo &  \grameq  & \la\var\tm
			\\
			\textsc{Ev. Ctxs} &
			\evctx,\evctxtwo &  \grameq  & 
			\ctxhole \midd \tm \evctx \midd \evctx \val
		\end{array}$
	&
	\makecell{$
		(\la{\var}\tm)\val  \rtobv  
		\tm\isub{\var}{\val}
	$
	\\[.4em]
	\AxiomC{$\tm \rootRew{\betav} \tmtwo$}
	\UnaryInfC{$\evctxp{\tm} \Rew{\betav} \evctxp{\tmtwo}$}
	\DisplayProof}
	
\end{tabular}

\vspace*{-.5\baselineskip}
\caption{The untyped 
	pure call-by-value calculus $\cbvcal$} 
\label{fig:cbv_calculus}

\vspace*{-.5\baselineskip}
\end{figure}
%
In \reffig{cbv_calculus} we present $\cbvcal$, a variant of Plotkin's call-by-value $\l$-calculus \cite{DBLP:journals/tcs/Plotkin75} with its $\beta$-reduction by value $\tobv$, 
adopting two specific choices. Firstly, the only values are $\l$-abstractions. Excluding variables from values differs from \cite{DBLP:journals/tcs/Plotkin75} but is common in the machine-oriented literature. It does not change the result of evaluation while inducing a faster substitution process, see \cite{DBLP:journals/iandc/AccattoliC17}. 

Secondly, we adopt a small-step operational semantics, defined via evaluation contexts. 
Evaluation contexts $\evctx$ are special terms with exactly one occurrence of the \emph{hole} constant $\ctxhole$. 
We write $\evctxp\tm$ for the term obtained from the evaluation context $\evctx$ by replacing its hole with the term $\tm$ (possibly capturing some free variables~of~$\tm$).

The small-step rule of $\betav$-reduction is \emph{weak}, that is, it does not evaluate abstraction bodies (indeed, the production $\la\var\evctx$ is \emph{absent} in the definition of {evaluation context} $\evctx$ in \reffig{cbv_calculus}), as it is common in functional programming languages, and \emph{deterministic}, namely proceeding from right to left (as forced by the production $\evctx \val$)\footnotemark. 
\footnotetext{
	The right-to-left order (adopted also in \cite{Leroy-ZINC,DBLP:conf/icfp/GregoireL02}) 
	induces a more natural presentation of the machines, but all 
	our results could be restated using the left-to-right order.}

We identify terms up to $\alpha$-renaming; $\tm\isub{\var}{\tmtwo}$ stands for metalevel capture-avoiding substitution of $\tmtwo$ for the free occurrences~of~$\var$~in~$\tm$.

The lemma below rests on the closed hypothesis and will be used as a  design check for next sections' calculi. 
It is an untyped instantiation of \citeauthor{WrightFelleisen94}'s uniform evaluation property~\cite{WrightFelleisen94}.

\begin{lemma}[$\cbvcal$ harmony]
\label{l:cbv-harmony}
If $\tm\in\cbvcal$ is closed, then either $\tm$ is a value or $\tm \tobv \tmtwo$ for some closed $\tmtwo\in\cbvcal$.
\end{lemma}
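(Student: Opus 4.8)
The plan is to prove harmony by structural induction on the closed term $\tm$, following the three productions of the term grammar and mirroring, in the case analysis, the right-to-left discipline baked into the evaluation contexts. The variable case is vacuous: a lone variable $\var$ is not closed, so it cannot arise under the hypothesis. If $\tm = \la\var\tmtwo$ is an abstraction, then $\tm$ is already a value and there is nothing to do. The only real work is in the application case $\tm = \tmtwo\tmthree$.

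So suppose $\tm = \tmtwo\tmthree$. Since $\tm$ is closed, both $\tmtwo$ and $\tmthree$ are closed, and the induction hypothesis applies to each. Following the order forced by the context grammar, I would first examine the argument $\tmthree$. If $\tmthree \tobv \tmthree'$ with $\tmthree'$ closed, then—because $\tmtwo\ctxhole$ is an evaluation context (production $\tm\evctx$)—we obtain $\tm \tobv \tmtwo\tmthree'$, which is closed as both components are. Otherwise $\tmthree$ is a value $\val$, and I would turn to the function $\tmtwo$. If $\tmtwo \tobv \tmtwo'$, then since $\ctxhole\val$ is an evaluation context (production $\evctx\val$), we get $\tm \tobv \tmtwo'\val$, again closed. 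If instead $\tmtwo$ is a value, then—because the only values are abstractions—$\tmtwo = \la\var\tmfour$, so $\tm = (\la\var\tmfour)\val$ is a redex firing at the root (the empty context), giving $\tm \tobv \tmfour\isub\var\val$.

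Two bookkeeping points run through the argument, neither of which is a genuine obstacle. First, each reduct must again be closed: in the two contextual subcases this is inherited directly from the induction hypothesis, and in the root subcase the only new observation is that $\tmfour\isub\var\val$ is closed, since $\la\var\tmfour$ being closed means $\tmfour$ has at most $\var$ free, and substituting the closed value $\val$ removes that last free occurrence. Second, each decomposition must match a production of the grammar $\evctx \grameq \ctxhole \midd \tm\evctx \midd \evctx\val$, which is exactly why the case split above leaves no term stuck: the grammar lets us descend into the argument of any application, but into the function only once the argument is a value. The place asking for the most care is thus keeping the sub-case analysis on $\tmtwo$ and $\tmthree$ aligned with this context discipline, so that in every non-value case at least one $\tobv$-step is available; everything else is routine.
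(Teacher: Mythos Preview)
Your proof is correct and is exactly the standard structural-induction argument one expects here. The paper does not actually spell out a proof of this lemma: it states it as a basic design check and attributes it to Wright and Felleisen's uniform evaluation property, so there is nothing to compare against beyond noting that your argument is the canonical one.
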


\paragraph{Notations} We set some notations for both calculi and machines. Let $\to$ be a reduction relation. An evaluation sequence $\deriv: \tm \to^* \tmtwo$ is a possibly empty sequence of $\to$-steps the length of which is noted 
$\size\deriv$. If $a$ and $b$ are sub-reduction (i.e., $\Rew{a}\subseteq \to$ and 
$\Rew{b}\subseteq \to$) then $\Rew{a,b} \defeq \Rew{a}\cup \Rew{b}$ and $\sizep\deriv a$ 
is the number of $a$ steps in $\deriv$.
\begin{figure}[t!]
\centering
\scalebox{0.93}{
\setlength{\tabcolsep}{2pt}
\setlength{\arraycolsep}{2pt}
\!\!\!\!\!\!\!\!\!\!\!\begin{tabular}{c|c}
	\multicolumn{2}{c}{
	$\begin{array}{r@{\hspace{.2cm}} rlllll}
		\textsc{Terms} &
		\tm,\tmtwo,\tmthree&\grameq& \var \midd \tm\tmtwo \midd \proj_i \tm \midd \overgroup{\la{\var_1,\mydots,\var_n}\tm}^{\la{\tuv\var}\tm} \ \ n\geq 0 \midd \overgroup{\tuple{\tm_1,\mydots, \tm_n}}^{\vv\tm}  \ \ n\geq 0
		\\
		\textsc{Values} &
		\val,\valtwo & \grameq  & \la{\tuv\var}\tm \midd \vv\val
		\\
		\textsc{Ev. Ctxs} &
		\evctx,\evctxtwo & \grameq  & 
		\ctxhole \midd \tm \evctx \midd \evctx \val \midd \proj_{i} \evctx \midd \undergroup{ \tuple{\tm_{1},\mydots,\tm_{k},\evctx,\val_{1},\mydots,\val_{h}}}_{\tuple{\tuv\tm, \evctx, \tuv\val}} \ \ k,h \geq 0
	\end{array}$
	}
\\[3em]
\hline
	$\begin{array}{r l l l}
		(\la{\tuv\var}\tm)\vv\val & \rtobv & 
		\tm\isub{\tuv\var}{\vv\val} & \mbox{if }\norm{\tuv\var} = \norm{\vv\val}
		\\
		\proj_{i} \vv\val
		& \rtoproj &
		\val_{i} & \mbox{if } 1 \leq i \leq \norm{\vv\val}
	\end{array}$
&
	\begin{tabular}{ccc}
		\AxiomC{$\tm \rootRew{a} \tmtwo$}
		\UnaryInfC{\!\!$\evctxp{\tm} \Rew{a} \evctxp{\tmtwo}$\!\!}
		\DisplayProof
		&\!$a \!\in\! \{ \betav, \projsym\}$
		\\[6pt]
		\multicolumn{2}{l}{$\tosou \ \defeq \ \tobv \cup \toproj$}
	\end{tabular}
\end{tabular}
}

\vspace*{-.5\baselineskip}
\caption{The source calculus $\soucal$ extending $\cbvcal$ with tuples.}
\label{fig:source_calculus}

\vspace*{-\baselineskip}
\end{figure}

\section{Part 1: The Source Calculus \texorpdfstring{$\soucal$}{}} 
\label{sect:source-calculus}
In this section, we extend 
$\cbvcal$ with tuples, which are needed to define closure conversion, obtaining $\soucal$, our source calculus. 

\paragraph{Terms} The \emph{source calculus} $\soucal$ defined in \reffig{source_calculus} adopts 
$n$-ary \emph{tuples} $\vv\tm = \tuple{\tm_1,\mydots, \tm_n}$, together with projections $\proj_i$ on the $i^\text{th}$ element\footnotemark
\footnotetext{
We do not define projections as tuple-unpacking abstractions because it would turn $\proj$-steps into $\beta$-steps and so blur the cost analysis (that counts $\beta$ but not $\proj$-steps).
}.
Abstractions are now on \emph{sequences} of variables $\tuvar = \var_1, \mydots, \var_n$.
With a slight abuse, we also compact $\tuple{\tm_{1},\mydots,\tm_{n}}$ into $\tuple{\,\tuv\tm\,}$, and write $\tuple{\,\tuv\tm,\tmtwo,\tuv\tmthree\,}$ (note that replacing sequences with tuples changes the meaning: $\tuple{\,\tuv\tm,\tmtwo,\tuv\tmthree\,}$ and $\tuple{\,\vv\tm,\tmtwo,\vv\tmthree\,}$ are different terms, and we shall need both notations). Both tuples and sequences of variables 
can be empty, that is, $\la{}\tm$ and $\tuple{}$ are terms of $\soucal$. 
\emph{Values} now are abstractions \emph{and} tuples \emph{of values}---tuples of arbitrary terms are not values in general. 

\paragraph{Notations and Conventions about Tuples and Sequences.} We assume that in every sequence $\tuvar$ all elements are distinct and, for brevity, we abuse notations and consider sequences of variables also as the sets of their elements, writing $\var_i\in \tuvar$, or $\fv\tm=\tuvar$, or $\vvvar\cup\vvvartwo$. 
We set $\norm{\vv\tm} \defeq n$ if $\vv\tm = \tuple{\tm_{1},\mydots,\tm_{n}}$ and call it the \emph{length} of the tuple $\vv\tm$ 
(so, $\norm{\tuple{}} = 0$), and similarly for $\norm\tuvar$.
Moreover, if $\tuvar = \var_{1},\mydots,\var_{n}$ and $\vv\val = \tuple{\val_{1},\mydots,\val_{n}}$ we then set $\tm\isub{\tuvar}{\vv\val} \defeq 
\tm \isubsim{\var_{1}}{\val_{1}}{\var_{n}}{\val_{n}}$ (the \emph{simultaneous} substitution). 
We write $\tuvar\vdisjoint \tuvartwo$ when $\tuvar$ and $\tuvartwo$ have no element in common, and $\fv\tm \subseteq \tuvar\vdisjoint \tuvartwo$ when moreover $\fv\tm \subseteq (\tuvar\cup\tuvartwo)$.

\paragraph{Small-Step Operational Semantics.} The $\betav$-rule can fire only if the argument is a tuple of values of the right length, and similarly for the 
$\proj$-rule. 
For instance, $(\la\var\var) (\la\vartwo\vartwo\vartwo) \not \tobv \la\vartwo\vartwo\vartwo$, because one needs a unary tuple around the argument, that is, $(\la\var\var) \tuple{\la\vartwo\vartwo\vartwo}  \tobv \la\vartwo\vartwo\vartwo$. Note that evaluation contexts now enter projections and tuples, proceeding right-to-left. As it shall be the case for all calculi in this paper, the operational semantics $\tosou$ of $\soucal$ is \emph{deterministic}: 
if $\tm \tosou \tmtwo$ and $\tm \tosou \tmthree$ then $\tmtwo = \tmthree$ and $\tm$ is not a value. The proof is a routine induction.

\paragraph{Clashes} 
In an untyped setting, there might be terms with \emph{clashes}, that is, irreducible badly formed configurations such as $\proj_i(\la\var\tm)$. 
To exclude clashes without having to have types, we adopt 
a notion of clash-freeness, which would be ensured by any type system.

\begin{definition}[Clashes, clash-free terms]
A term $\tm$ is a \emph{clash} if it has shape $\evctxp{\tmtwo}$ where $\tmtwo$ has one of the following forms
:
\begin{itemize}
\item \emph{Clashing projection}: $\tmtwo = \proj_{i} \val$ and 
if $\val \!=\!\! \vv{\!\!\val\!\!}$ then $\norm{\vv{\val}} <i$;
\item \emph{Clashing abstraction}:  $\tmtwo = (\la{\tuvar}\tmthree) \val$ and 
if $\val \!=\!\! \vv{\!\!\val\!\!}$ \mbox{then $\norm{\tuvar} \!\neq\! \norm{\vv{\!\val\!}}$;}
\item \emph{Clashing tuple}: $\tmtwo = \vv{\tmfour} \tmthree$.
\end{itemize}
A term $\tm$ is \emph{clash-free} when, co-inductively, $\tm$ is not a clash and if $\tm \tosou \tmtwo$ then $\tmtwo$ is clash-free.
\end{definition}
Note that clashes are normal forms. All the calculi and machines of the paper shall come with their notion of clash and clash-freeness, which shall be taken into account in statements and proofs but the definitions of which shall be omitted (they are in the Appendix).

\begin{toappendix}
\begin{lemma}[$\soucal$ harmony]
\label{l:source-harmony}
If $\tm \in \soucal$ is closed and clash-free, then either $\tm$ is a value or $\tm \tosou \tmtwo$ for some closed and clash-free $\tmtwo$.
\end{lemma}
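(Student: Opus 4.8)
The plan is to split the statement into a \emph{progress} part and a \emph{preservation} part, exactly as in a syntactic type-soundness argument, with clash-freeness playing the role of the typing judgement. Concretely, I would first prove progress: if $\tm$ is closed, not a clash, and not a value, then $\tm\tosou\tmtwo$ for some $\tmtwo$; then prove that $\tosou$ preserves both closedness and clash-freeness; and finally assemble the three facts, using that clash-freeness entails ``not a clash''.

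Progress I would establish by structural induction on $\tm$, using crucially that evaluation is right-to-left. The case $\tm=\var$ is vacuous, since closed terms are not variables, and abstractions and tuples of values are values. The informative cases are tuples, projections, and applications, and in each I identify the unique subterm in \emph{evaluation position}: the rightmost non-value component $\tm_j$ of a tuple $\tuple{\tm_1,\mydots,\tm_n}$; the argument $\tmthree$ of a projection $\proj_i\tmthree$; and, for an application $\tmthree\tmfour$, the argument $\tmfour$ first and, only once it is a value, the function $\tmthree$. Such a subterm is closed, because its free variables are among those of $\tm$ and no binder separates it from the root, and it is \emph{not a clash}: were it equal to $\evctxp{\tmtwo}$ with $\tmtwo$ a clashing redex, the surrounding shape would itself be an evaluation context (via the productions $\proj_i\evctx$, $\tm\evctx$, $\evctx\val$, and $\tuple{\tuv\tm,\evctx,\tuv\val}$), making $\tm$ a clash against the hypothesis. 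This licenses the induction hypothesis on the subterm.

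It then remains to analyse the fully-evaluated subcases. If the chosen subterm reduces, so does $\tm$, by plugging its redex into the matching evaluation context. If instead it is a value, I read the conclusion off the clash definitions: for $\proj_i\val$, \emph{not} being a clashing projection forces $\val=\vv\val$ with $\norm{\vv\val}\ge i$, so the $\projsym$-rule fires; for $(\la\tuvar\tmthree)\val$, \emph{not} being a clashing abstraction forces $\val=\vv\val$ with $\norm\tuvar=\norm{\vv\val}$, so the $\betav$-rule fires; and a function position holding a tuple of values is impossible, since $\vv\valtwo\,\val$ is a clashing tuple. The delicate point is parsing the conditionals ``if $\val=\vv\val$ then \dots'' in the clash definition: because an abstraction in argument position satisfies the conditional vacuously and hence \emph{is} a clash, ``not a clash'' means exactly that the argument is a tuple of values of the matching length, which is precisely what the reduction rules demand.

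For preservation, closedness follows because the two root rules only substitute closed values or select a component, giving $\fv\tmtwo\subseteq\fv\tm$ at the root, and since evaluation contexts never cross a binder (there is no production $\la\var\evctx$) this inclusion lifts to the contextual closure; clash-freeness is preserved for free, since it is defined co-inductively so that $\tm$ clash-free and $\tm\tosou\tmtwo$ immediately yield $\tmtwo$ clash-free. Assembling: if $\tm$ is a value we are done, and otherwise clash-freeness supplies the ``not a clash'' hypothesis, progress yields a step $\tm\tosou\tmtwo$, and preservation makes $\tmtwo$ closed and clash-free. I expect the main obstacle to be bookkeeping rather than conceptual: correctly propagating ``not a clash'' down to the subterm fed to the induction hypothesis, which rests on the closure of evaluation contexts and on the careful reading of the clash conditionals above.
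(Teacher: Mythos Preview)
Your proposal is correct and follows the same approach the paper takes: structural induction on $\tm$ for progress, using composition of evaluation contexts to push the ``not a clash'' hypothesis down to the subterm in evaluation position, and reading the clash conditionals exactly as you do to fire the root rule in the fully-evaluated subcases; preservation of closedness is immediate and preservation of clash-freeness is baked into the co-inductive definition. Your careful parsing of the ``if $\val=\vv\val$ then \dots'' clauses is precisely the point where the clash-free hypothesis does its work.
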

\end{toappendix}

 \section{Part 1: the Intermediate Calculus \texorpdfstring{$\intcal$}{} and the \Lifting Transformation}
 \label{sect:intermediate-calculus}
\begin{figure}[t!]
\centering
\scalebox{0.92}{
\!\!\!\begin{tabular}{c|cc}
	\multicolumn{2}{c}{ \setlength{\arraycolsep}{2pt}
	$\begin{array}{r@{\hspace{.35cm}} rll@{\hspace{.5cm}} r@{\hspace{.35cm}} rllll}
		\textsc{\Bagts} &
		\bag,\bagtwo&\grameq& \vv\var \midd \vv\val
		\\
		\textsc{Terms} &
		\tm,\tmtwo,\tmthree,\tmfour&\grameq& \var \midd \tm\tmtwo \midd \proj_i \tm \midd \vv\tm \midd \pack{\dabsv\vartwo\var\tm}{\bag}  
		\\
		\textsc{Values} &
		\val,\valtwo & \grameq  &\pack{\dabsv\vartwo\var\tm}{\bag} \midd \vv\val
		\\
		\textsc{Eval ctxs} &
		\evctx,\evctxtwo & \grameq  & 
		\ctxhole \midd \tm \evctx \midd \evctx \val \midd \proj_{i} \evctx \midd \tuple{\tuv\tm, \evctx, \tuv\val}
	\end{array}$
	}
\\[20pt]
	\multicolumn{2}{l}{
	$\begin{array}{r l l@{\hspace{.3cm}} l}
		\pack{ \dabsv\vartwo\var\tm }{ \vv{\val_1} }\, \vv{\val_2} & \!\!\!\!\rtoibv\!\!\!\! & 
		\tm\isubp{\tuvartwo}{\vv{\val_1}}{\tuvar}{\vv{\val_2}}
		& \mbox{if } \norm{\tuvartwo} = \norm{\vv{\val_1}} \mbox{ and }\norm{\tuvar} = \norm{\vv{\val_2}}
		\\
		\proj_{i} \vv\val
		& \!\!\!\!\rtoiproj\!\!\!\! &
		\val_{i} & \mbox{if } 1 \leq i \leq \norm{\vv\val}
	\end{array}$
	}
\\[10pt]
		\AxiomC{$\tm \rootRew{a} \tmtwo$}
		\UnaryInfC{$\evctxp{\tm} \Rew{a} \evctxp{\tmtwo}$}
		\DisplayProof
\quad
		$a \in \{ \intprefix\betav, \intprefix\projsym\}$
&
	$
		\tollcbv \ \defeq \ \tollbv \cup \tollproj
$
\end{tabular}
}

\vspace*{-.5\baselineskip}
\caption{The intermediate calculus $\intcal$.}
\label{fig:intermediate_calculus}

\vspace*{-.6\baselineskip}
\end{figure}
 In this section, we define the intermediate calculus $\intcal$ and the 
 \lifting translation from $\soucal$ to $\intcal$.
 We discuss why the natural first attempt to show the correctness of the translation does not work, and solve the issue via a reverse translation from $\intcal$~to~$\soucal$.
 
\paragraph{Terms of $\intcal$.} In $\intcal$, defined in \reffig{intermediate_calculus}, abstractions $\la\tuvar\tm$ are replaced by \emph{\wrapts} $\pack{\la{\tuvartwo}\la{\tuvar}\tm}{\bag}$, which are compactly noted $\pack{\dabsv\vartwo\var\tm}{\bag}$. The \emph{\bagt} $\bag$ of a \wrapt can be of two forms $\vv\varthree$ and $\vv\val$, giving \emph{variable} 
\emph{\wrapts} $\pack{\dabsv\vartwo\var\tm}{\vv\varthree}$ and \emph{evaluated \wrapts} $\pack{\dabsv\vartwo\var\tm}{\vv\val}$, which are both values. 
In a \wrapt $\pack{\dabsv{\vartwo}{\var}\tm}{\bag}$, $\vartwolist$ and $\varlist$ verify $\tuvartwo\vdisjoint\tuvar$ (\ie no elements in common), and 
scope over the \emph{body} $\tm$ of the \wrapt; 
$\vartwolist$ and $\varlist$ do not scope over $\bag$. 
The idea is that  $\fv{\la\varlist\tm}\subseteq \tuvartwo$, so that $\la\tuvartwo\la\tuvar\tm$ is closed. The elements of $\bag$ are meant to replace the variables $\tuvartwo$ in $\tm$.

The rationale behind $\intcal$ is understood by looking at the translation from $\soucal$ to $\intcal$ in \reffig{sou-int_translations}. 
Basically, every abstraction is closed and paired with the \bagt of its free variables. 
Evaluated \wrapts $\pack{\dabsv\vartwo\var\tm}{\vv\val}$ are not in the image of the translation, unless $\fv{\la\tuvar\tm}=\emptyset$, which gives the \wrapt $\pack{;\tuvar.\tm}{\tuple{}}$ that is both a variable and an evaluated \wrapt.
Evaluated \wrapts are generated by the reduction rules, discussed after 
defining well-formedness.

\begin{definition}
A \wrapt $\pack{\dabsv\vartwo\var\tm}{\bag}$ is \emph{well-formed} if $\fv\tm\subseteq \tuvartwo\vdisjoint\tuvar$, $\norm{\bag} = \norm{\tuvartwo}$ and if $\bag$ is a variable \bagt then 
$\bag = \tuple{\tuvartwo}$.
Terms $\tm\in\intcal$ and evaluation contexts $\evctx\in\intcal$ are \emph{well-formed} if all their \wrapts are well-formed, and \emph{prime} if moreover they are variable \wrapts. 
\end{definition}

\paragraph{Operational Semantics} The intermediate variant $\toibv$ of the $\betav$-rule involves a well-formed evaluated \wrapt $\pack{ \dabsv\vartwo\var\tm }{ \vv{\val_1} }$ and an argument $\vv{\val_2}$ of the same length of $\tuvar$, and amounts to substitute the \bagt $\vv{\val_1}$ on $\tuvartwo$ and the argument $\vv{\val_2}$ on $\tuvar$. 
Substitution $\tm\isubp{\tuvartwo}{\vv{\val_1}}{\tuvar}{\vv{\val_2}}$ is as defined expected (see \Cref{sect:app-intermediate-calculus}) by performing $\isub{\tuvartwo}{\vv{\val_1}}$ and $\isub{\tuvar}{\vv{\val_2}}$ \emph{simultaneously} and requires that $\fv\tm\subseteq\tuvartwo\vdisjoint\tuvar$.
Well-formed terms are stable by substitution, when  defined, and the reduct of $\toibv$ is a (well-formed) term of $\intcal$ because (by well-formedness) \wrapts close their bodies, so the closing  substitution generated by the step  turns variable \bagts into value \bagts. Tuple projection is as in $\soucal$. 
Note that evaluation contexts do not enter \wrapts. 
See \Cref{sect:app-intermediate-calculus} for the definition of clash(-freeness) for $\intcal$. We say that $\tm\in\intcal$ is \emph{good} if it is well-formed and clash-free.

The intermediate calculus $\intcal$ is deterministic (if $\tm \toint \tmtwo$ and $\tm \toint \tmthree$ then $\tmtwo = \tmthree$ and $\tm$ is not a value) and harmonic.
	
\begin{toappendix}
\begin{lemma}[Harmony of $\intcal$]
\label{l:intermediate-harmony}
Let $\tm\in\intcal$ be closed and good. Then either $\tm$ is a value or $\tm \toint \tmtwo$ for some closed good $\tmtwo\in\intcal$. 
\end{lemma}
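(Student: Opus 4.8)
The plan is to split harmony into \emph{progress} (a closed good term is a value or is $\toint$-reducible) and \emph{preservation} (the reduct of a closed good term is again closed and good), and then combine them. Preservation is the easy half and follows largely from facts already recorded: clash-freeness is preserved by construction, since it is defined co-inductively so that clash-freeness of $\tm$ already requires clash-freeness of every $\toint$-reduct (and $\intcal$ is deterministic, so the reduct is unique); well-formedness is stable under the substitution produced by the $\toibv$-rule, as noted when $\intcal$ was introduced; and closedness is preserved because the two root rules send a closed redex to a closed term (a closing substitution of closed values into a body $\tm$ with $\fv\tm\subseteq\tuvartwo\vdisjoint\tuvar$, resp.\ a projection of a closed tuple), while evaluation contexts bind no variable, so the active redex of a closed term is itself closed.

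The substance is progress, which I would prove by structural induction on $\tm$, letting the right-to-left evaluation order dictate the analysis. The one auxiliary fact I need is that the subterm sitting in the hole of an evaluation context inherits goodness: closedness and well-formedness are immediate (evaluation contexts bind nothing, and well-formedness is a property of all occurring \wrapts), while clash-freeness \emph{along} evaluation contexts---if $\evctxp{\tm'}$ is clash-free then so is $\tm'$---is proved co-inductively, using that a clash inside $\tm'$ together with the surrounding context would exhibit $\evctxp{\tm'}$ itself as a clash (evaluation contexts compose), and that every reduct of $\tm'$ lifts to a reduct of $\evctxp{\tm'}$.

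With this in hand the case analysis runs as follows. A variable is excluded by closedness; a \wrapt and a tuple of values are values. For a tuple that is not a value, I take its rightmost non-value component, which then sits in an evaluation context of the form $\tuple{\tuv\tm,\evctx,\tuv\val}$, is good by the inheritance fact, and reduces by the induction hypothesis, so the whole tuple reduces. For an application $\tm_1\tm_2$, I first reduce $\tm_2$ if it is not a value (context $\tm_1\ctxhole$), then $\tm_1$ if it is not a value (context $\ctxhole\val_2$, now legitimate since $\tm_2=\val_2$); when both are values, $\tm_1$ is a closed value, hence either a tuple---excluded as a clashing tuple---or a \wrapt, which being closed and well-formed must be an \emph{evaluated} \wrapt with a \bagt of length $\norm{\tuvartwo}$ (a closed variable \bagt is forced by well-formedness to be $\tuple{\tuvartwo}$, hence empty, hence the empty value \bagt). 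Clash-freeness then forces $\val_2$ to be a tuple of length $\norm{\tuvar}$, otherwise we would have a clashing abstraction, so the $\toibv$-rule fires. The projection case $\proj_i\tm'$ is analogous: reduce $\tm'$ if possible, otherwise $\tm'$ is a closed value which clash-freeness forces to be a tuple of length at least $i$, so the projection rule fires.

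I expect the main obstacle to be the clash-freeness inheritance lemma: because clash-freeness is co-inductive, one cannot simply read it off for a subterm, and the argument must track both the ``not a clash'' condition (via composition of evaluation contexts) and stability under reduction (via the co-inductive hypothesis). The one genuinely calculus-specific point, on which the application case hinges, is the observation that \emph{closedness plus well-formedness upgrades every \wrapt to an evaluated \wrapt of the correct arity}: this is exactly what makes the $\toibv$-rule applicable and prevents getting stuck on an un-filled \bagt.
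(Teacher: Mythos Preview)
Your proposal is correct and follows essentially the same approach the paper takes: a structural induction for progress, using composition of evaluation contexts to push ``not a clash'' (and more generally goodness) down to the immediate subterm in focus, together with the calculus-specific observation that a \emph{closed} well-formed \wrapt is necessarily an evaluated \wrapt of the right bag length, which is exactly what makes $\rtoibv$ fire; preservation is then immediate from the co-inductive definition of clash-freeness and the stability of well-formedness under the closing substitution. One minor remark: for the inductive step you only need that the subterm is \emph{not a clash} (inherited by context composition) rather than full co-inductive clash-freeness, so the inheritance lemma you flag as the main obstacle can be stated and used in this weaker, purely inductive form.
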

\end{toappendix}

\begin{figure}[t!]
\centering
\scalebox{0.93}{
\setlength{\arraycolsep}{2pt}
$\begin{array}{rll@{\hspace{.5cm}} rll}
		\multicolumn{6}{c}{\textsc{\textsc{\Lifting} translation } \soucal \to \intcal}
		\\[2pt]
		\lali\var  & \defeq  & \var
		&
		\lali{\tm\tmtwo}  & \defeq  & \lali\tm \,\lali\tmtwo
		\\
		\lali{\la{\tuvar}\tm}  & \defeq  & \pack{\dabsv\vartwo\var\lali\tm}{ \vv\vartwo}
&\multicolumn{3}{c}{\mbox{if }\fv{\la{\tuvar}\tm} =\tuvartwo}
		\\
		\lali{\proj_i\tm}  & \defeq  & \proj_i{\lali\tm }
		&
		\lali{\tuple{\tm_1,\mydots,\tm_n}}  & \defeq  & \tuple{\lali{\tm_1} ,\mydots,\lali{\tm_n} }
	
\\[5pt]\hline\\[-4pt]
		\multicolumn{6}{c}{\textsc{Reverse \collapsing translation }\intcal \to \soucal}
		\\[2pt]
		\unlali\var  & \defeq  & \var 
		&
		\unlali{\tm\tmtwo}  & \defeq  & \unlali\tm\,  \unlali\tmtwo
		\\[2pt]
		\unlali{\pack{\dabsv\vartwo\var\tm}{\vv\vartwo}} & \defeq  & \la{\tuvar}\unlali\tm
		&
		\unlali{\pack{\dabsv\vartwo\var\tm}{\vv\val}} & \defeq  & \la{\tuvar}\unlali\tm\isub{\tuvartwo}{\vv{\unlali\val}}
		\\[2pt]
		\unlali{\proj_i\tm}  & \defeq  & \proj_i{\unlali\tm }
		&
		\unlali{\tuple{\tm_1,\mydots,\tm_n}}  & \defeq  & \tuple{\unlali{\tm_1} ,\mydots,\unlali{\tm_n} }
	\end{array}$
}

\vspace*{-.5\baselineskip}
\caption{The translations $\lali{\cdot} \colon \soucal \to \intcal$ and \mbox{$\unlali{\cdot} \colon \intcal \to \soucal$}.}
\label{fig:sou-int_translations}

\vspace*{-.5\baselineskip}
\end{figure}
\paragraph{Translation From Source to Intermediate, or \Lifting.} The wrapping translation from $\soucal$ to $\intcal$ takes a (possibly open) term $\tm\in\soucal$ and returns a term $\lali\tm\in\intcal$, and it is defined in \reffig{sou-int_translations}. As already mentioned, it turns abstractions into \wrapts by closing them and pairing them with the \bagt of their free variables. It is extended to evaluation contexts $\evctx$ as expected, setting $\lali\ctxhole \defeq \ctxhole$.

\begin{toappendix}
\begin{lemma}[Properties of the translation $\soucal\to\intcal$]
\label{l:lali-properties} 
\hfill
\begin{enumerate}
\item 
\emph{Values}: if $\val\in\soucal$ then $\lali\val$ is a value of $\intcal$.
\item 
\emph{Terms}: if $\tm\in\soucal$ then $\lali\tm\in\intcal$ is well-formed and prime.
\item 
\emph{Contexts}: if $\evctx\in\soucal$ then $\lali\evctx$ is an evaluation context of $\intcal$.
\end{enumerate}
\end{lemma}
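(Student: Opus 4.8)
The plan is to prove the three items by direct structural inductions, resting on one preliminary observation that drives everything: the \lifting translation \emph{preserves free variables}, i.e.\ $\fv{\lali\tm} = \fv\tm$ for every $\tm\in\soucal$. This is itself a quick induction on $\tm$, the only case with content being the abstraction, where $\lali{\la\tuvar\tm} = \pack{\dabsv\vartwo\var\lali\tm}{\vv\vartwo}$ with $\tuvartwo = \fv{\la\tuvar\tm}$: the bound sequences $\tuvartwo,\tuvar$ scope over the body but \emph{not} over the \bagt, so the free variables of the whole \wrapt are exactly the ones listed in $\vv\vartwo$, namely $\tuvartwo = \fv{\la\tuvar\tm}$. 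With this in hand, Item~1 is a short induction on the value $\val$: if $\val = \la\tuvar\tm$ then $\lali\val$ is by definition a (variable) \wrapt, hence a value of $\intcal$; if $\val = \tuple{\val_1,\dots,\val_n}$ then $\lali\val = \tuple{\lali{\val_1},\dots,\lali{\val_n}}$ is a tuple of values by the induction hypothesis, hence again a value of $\intcal$.

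Item~2 is the core and proceeds by structural induction on $\tm$. The variable, application, projection, and tuple cases are immediate: $\lali\cdot$ preserves these constructors, introduces no fresh \wrapts, and the induction hypothesis discharges the immediate subterms. The interesting case is $\tm = \la\tuvar\tmthree$, where $\lali\tm = \pack{\dabsv\vartwo\var\lali\tmthree}{\vv\vartwo}$ with $\tuvartwo = \fv{\la\tuvar\tmthree}$. Since the \bagt $\vv\vartwo$ is a tuple of variables, the outer \wrapt is a \emph{variable} \wrapt, which together with primeness of the inner \wrapts (induction hypothesis on $\lali\tmthree$) yields primeness. For well-formedness of the outer \wrapt I would check the three defining conditions: the containment-with-disjointness $\fv{\lali\tmthree} \subseteq \tuvartwo\vdisjoint\tuvar$ follows from the preliminary observation $\fv{\lali\tmthree} = \fv\tmthree$ together with $\tuvartwo = \fv{\la\tuvar\tmthree} = \fv\tmthree\setminus\tuvar$, the disjointness $\tuvartwo\vdisjoint\tuvar$ being guaranteed by the bound-variable convention keeping $\tuvar$ fresh for $\tuvartwo$; the length condition $\norm{\vv\vartwo} = \norm\tuvartwo$ is immediate since $\vv\vartwo$ merely tuples the sequence $\tuvartwo$; and the variable-\bagt condition $\vv\vartwo = \tuple{\tuvartwo}$ holds by construction. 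Well-formedness of the inner \wrapts is again the induction hypothesis.

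Item~3 is a final structural induction on the evaluation context $\evctx$. The grammars of evaluation contexts in $\soucal$ and $\intcal$ coincide production-for-production, and $\lali\cdot$ commutes with each of them, starting from $\lali\ctxhole = \ctxhole$. The only point needing attention is that the productions $\evctx\val$ and $\tuple{\tuv\tm,\evctx,\tuv\val}$ carry \emph{value} subterms, whose translations must themselves be values of $\intcal$---this is exactly Item~1, while the plain-term positions are handled by Item~2 and the recursive context position by the induction hypothesis. The only place with genuine (if mild) content is thus the abstraction case of Item~2: I expect the main, and still minor, obstacle to be the disjointness $\tuvartwo\vdisjoint\tuvar$, which is exactly where the $\alpha$-renaming convention is silently used; everything else is routine compositional bookkeeping.
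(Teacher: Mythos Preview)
Your proposal is correct and follows essentially the same approach the paper takes: straightforward structural inductions, with the free-variable preservation $\fv{\lali\tm}=\fv\tm$ as the key auxiliary fact driving the abstraction case of Item~2. One small remark: the disjointness $\tuvartwo\vdisjoint\tuvar$ you flag as needing the $\alpha$-convention is in fact immediate from the definition $\tuvartwo=\fv{\la\tuvar\tm}=\fv\tm\setminus\tuvar$, so no convention is invoked there.
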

\end{toappendix}

\paragraph{Problem: Wrapping and Substitution Do not Commute.}
\label{l:lali-subs}
As source values and contexts are translated to their intermediate analogous, one may think that the translation preserves reduction steps: if $\tm \tosou \tmtwo$ then $\lali{\tm} \toint \lali{\tmtwo}$. 
But this is \emph{false}, because the translation does not commute with  substitution, as also discussed in \cite{DBLP:conf/pldi/BowmanA18,DBLP:conf/ppdp/SullivanDA23}.
%
Indeed, in general $\lali\tm \isub\var{\lali\val} \neq \lali{\tm\isub\var\val}$: take $\tm \defeq \la\vartwo\vartwo\var$ and a closed value $\mycyan\val$, we get the terms below, where $\pack{\dabs\var\vartwo\vartwo \var}{\tuple{\lali{\mycyan\val}}} \neq \pack{\dabs{}\vartwo\vartwo \lali{\mycyan\val}}{ \tuple{}}$.
\begin{center}
$\begin{array}{ccccc} 
\lali\tm \isub\var{\lali{\mycyan\val}}  & = &\pack{\dabs\var\vartwo\vartwo \var}{\tuple\var} \isub\var{\lali{\mycyan\val}} & =& \pack{\dabs\var\vartwo\vartwo \var}{\tuple{\lali{\mycyan\val}}}
\\
\lali{\tm\isub\var{\mycyan\val}} & = & \lali{\la\vartwo\vartwo\mycyan{\val}} & = & \pack{\dabs{}\vartwo\vartwo \lali{\mycyan\val}}{ \tuple{}}
\end{array}$
\end{center}
The point being that $\tm\isub\var{\mycyan\val}$ is \emph{closed} and so its \lifting is different from first \lifting $\tm$, which is instead \emph{open}, and then closing the \lifted term $\lali\tm$ using $\isub\var{\lali{\mycyan\val}}$.
Not only $\lali\tm \isub\var{\lali{\mycyan\val}} \neq \lali{\tm\isub\var{\mycyan\val}}$, they are not even related by $\toint$, or by the equational theory generated by $\toint$ (but they can be shown to be contextually equivalent).

The problem is that the translation targets prime terms (\Cref{l:lali-properties}.2) but $\toint$ creates evaluated \wrapts (which are not in the image of translation), i.e., \emph{the reduct of a prime term (of $\intcal$) may not be prime}. 
Consider again $\tm \defeq \la\vartwo\vartwo\var$: then $(\la\var\tm)\tuple{\val} \tosou \tm \isub\var{\mycyan\val} = \la\vartwo\vartwo\mycyan{\val}$ and $\lali{(\la\var\tm)\tuple{\val}} = \pack{\dabs{}\var\pack{\dabs\var\vartwo\vartwo \var}{\tuple\var}}{\tuple{}} \tuple{\lali{\val}}$ is prime, but $\lali{(\la\var\tm)\tuple{\val}}$ $\toint$-reduces to the 
non-prime $\pack{\dabs\var\vartwo\vartwo \var}{\tuple{\lali{\mycyan\val}}} = \lali\tm \isub\var{\lali{\mycyan\val}} \neq \lali{\tm \isub\var{\mycyan\val}}$.

\paragraph{Reverse Translation.} The literature usually overcomes this problem by switching to a different approach, adopting a big-step semantics and a logical relation proof technique. One of our contributions is to show a direct solution, as done also by \citet{DBLP:conf/pldi/BowmanA18}, but in a different way. The idea comes from the correctness of abstract machines, which is proved by projecting the machine on the calculus, rather than the calculus on the machine. Therefore, we define a reverse \emph{\collapsing} translation from $\intcal$ to $\soucal$ and show that it smoothly preserves reduction steps.

The \emph{\collapsing translation} $\unlali\tm$ of a well formed term $\tm\in\intcal$ to $\soucal$ is defined in \reffig{sou-int_translations}. It amounts to substitute the \bagt $\bag$ for the sequence $\tuvartwo$ of variables of a \wrapt $\pack{\dabsv\vartwo\var\tm}{\bag}$. In contrast with 
wrapping, now 
\collapsing and substitution commute.

\begin{toappendix}
\begin{proposition}[Commutation of substitution and the reverse translation]
\label{prop:intcal-key-sub-prop-rev-trans}
If $\tm,\vv	\valnone,\vv\valntwo\in\intcal$ and $\fv\tm\subseteq \tuvar\vdisjoint\tuvartwo$, then $\unlali{\tm\isubp{\tuvar}{\vv\valnone}{\tuvartwo}{\vv\valntwo}} = \unlali\tm\isub{\tuvar}{\unlali{\vv\valnone}}\isub{\tuvartwo}{\unlali{\vv\valntwo}}$.
\end{proposition}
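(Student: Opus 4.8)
The plan is to argue by structural induction on $\tm$, after fixing the usual variable convention: by $\alpha$-renaming we may assume that every binder of a \wrapt occurring in $\tm$ is chosen fresh for $\tuvar$, $\tuvartwo$ and for $\fv{\vv\valnone}\cup\fv{\vv\valntwo}$. Throughout we use that $\unlali{\cdot}$ is only defined on well-formed terms. This freshness is exactly what lets the two sequential substitutions $\isub{\tuvar}{\unlali{\vv\valnone}}$ and $\isub{\tuvartwo}{\unlali{\vv\valntwo}}$ on the right-hand side behave as a single simultaneous substitution, and it validates the standard substitution lemmas (commuting and composing substitutions) that I will use silently below. In the situation where the statement is actually applied---a $\toibv$-step inside a closed term---the values $\vv\valnone,\vv\valntwo$ are closed, so these side conditions are immediate.

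The homomorphic cases are routine. If $\tm=\var$, then $\var\in\tuvar$ or $\var\in\tuvartwo$ by the hypothesis $\fv\tm\subseteq\tuvar\vdisjoint\tuvartwo$, so $\var$ is hit by exactly one of the two substitution blocks: the left-hand side yields the collapse of the corresponding value because simultaneous substitution selects it, while on the right-hand side the matching block produces the same collapsed value and the other block leaves it unchanged by freshness. If $\tm=\tmtwo\tmthree$, $\tm=\proj_i\tmtwo$, or $\tm$ is a tuple, then both $\unlali{\cdot}$ and substitution commute with the constructor, and the claim follows by applying the induction hypothesis to the immediate subterms.

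The crux is the two \wrapt cases, where $\unlali{\cdot}$ itself may introduce a substitution. Write $\tm=\pack{\la{\tuvartwo'}\la{\tuvar'}\tmtwo}{\bag}$ with $\tuvartwo',\tuvar'$ fresh as above; well-formedness gives $\fv\tmtwo\subseteq\tuvartwo'\vdisjoint\tuvar'$, so the outer substitution never reaches the body $\tmtwo$ and acts only on $\bag$. If $\bag$ is an evaluated \bagt $\vv\val$, then $\tm\isubp{\tuvar}{\vv\valnone}{\tuvartwo}{\vv\valntwo}=\pack{\la{\tuvartwo'}\la{\tuvar'}\tmtwo}{\vv\val\isubp{\tuvar}{\vv\valnone}{\tuvartwo}{\vv\valntwo}}$; collapsing and applying the induction hypothesis componentwise to $\vv\val$ rewrites the inner substitution $\isub{\tuvartwo'}{\unlali{\vv\val}}$ created by $\unlali{\cdot}$, and one concludes by pushing the outer substitution through it with the substitution-composition lemma, which is legitimate because $\fv{\unlali\tmtwo}\subseteq\tuvartwo'\cup\tuvar'$ is disjoint from $\tuvar\cup\tuvartwo$. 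The interesting subtlety is the variable-bag case $\bag=\tuple{\tuvartwo'}$: here the collapse $\unlali\tm=\la{\tuvar'}\unlali\tmtwo$ carries \emph{no} inner substitution, yet after substitution the bag $\tuple{\tuvartwo'}$---whose variables occur \emph{free} and lie in $\tuvar\cup\tuvartwo$---becomes an evaluated \bagt, so that $\unlali{\cdot}$ now fires its \emph{other} clause and produces an inner substitution $\isub{\tuvartwo'}{\cdots}$. One then checks that the values placed on $\tuvartwo'$ by this inner substitution are precisely the collapsed values that the outer simultaneous substitution assigns to the corresponding free variables of $\tuple{\tuvartwo'}$, which is exactly what the right-hand side computes.

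I expect the main obstacle to be precisely this variable-bag/evaluated-bag switch, together with the $\alpha$-renaming it forces: the first binder $\tuvartwo'$ of a variable \wrapt is pinned down by the bag and so cannot be renamed independently, and it only becomes freely bound once the substitution has turned the bag into a value tuple. Tracking which free occurrences of $\tuvartwo'$ are consumed by the outer substitution versus re-bound after collapsing, and invoking the substitution-composition lemma with the correct disjointness side conditions, is where the real care is needed; once this bookkeeping is fixed, the remaining steps are purely mechanical.
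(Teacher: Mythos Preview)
Your proposal is correct and follows essentially the same route as the paper: a structural induction on $\tm$, with the only delicate case being the well-formed \wrapt whose variable \bagt $\tuple{\tuvartwo'}$ is turned into an evaluated \bagt by the outer substitution, so that $\unlali\cdot$ switches from its first \wrapt clause to its second one. Your remark that the sequential substitutions on the right-hand side only match the simultaneous one on the left under a freshness side condition (automatically met when $\vv\valnone,\vv\valntwo$ are closed, which is how the proposition is used in \refthm{rev-transl-implem-bricks}) is also on point.
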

\end{toappendix}

\paragraph{Strong Bisimulation} From the commutation property (\refpropeq{intcal-key-sub-prop-rev-trans}), it easily follows that the reverse translation projects and reflects rewrite steps and---if terms are closed---also normal forms. 
As a consequence, it is a termination-preserving strong bisimulation, possibly the strongest form of \emph{correctness} for a \mbox{program~transformation}.

\begin{toappendix}
\begin{theorem}[Source-intermediate termination-preserving strong bisimulation]
\label{thm:rev-transl-implem-bricks} 
Let $\tm \in \intcal$ be 
closed and well-formed.
\begin{enumerate}
\item 
\emph{Projection}: 
for $a \in \{\betav, \proj\}$, if $\tm\Rew{\intprefix a}\tmtwo$ then $\unlali\tm \Rew{a} \unlali \tmtwo$.
\item 
\emph{Halt}: $\tm$ is $\toint$-normal if and only if $\unlali\tm$ is  $\tosou$-normal.

\item 
\emph{Reflection}: 
for $a \in \{\betav, \proj\}$, if $\unlali\tm \Rew{a} \tmtwo$ then there exists $\tmthree \in \intcal$ such that $\tm \Rew{\intprefix a} \tmthree$ and $\unlali\tmthree=\tmtwo$.

\item 
\emph{Inverse}: if $\tmtwo$ is a source term then $\unlali{\lali\tmtwo} = \tmtwo$.

\end{enumerate}
\end{theorem}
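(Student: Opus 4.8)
The plan is to prove the four items in the order \emph{Projection}, \emph{Reflection}, \emph{Halt}, \emph{Inverse}, using the commutation of substitution with the reverse translation (\Cref{prop:intcal-key-sub-prop-rev-trans}) as the workhorse for \emph{Projection}, harmony (\Cref{l:intermediate-harmony}) together with determinism of $\tosou$ for \emph{Reflection}, and a direct structural induction for \emph{Inverse}; \emph{Halt} then drops out formally from \emph{Projection} and \emph{Reflection}. Before starting I would record two auxiliary facts about $\unlali\cdot$. First, \emph{collapsing sends values to values}: a variable or evaluated \wrapt collapses to an abstraction, and a tuple of values collapses componentwise to a tuple of values, so every value of $\intcal$ collapses to a $\tosou$-normal form. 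Second, \emph{collapsing commutes with evaluation contexts}: since $\intcal$-contexts never enter \wrapts, the clause $\unlali\ctxhole\defeq\ctxhole$ extends $\unlali\cdot$ homomorphically over the constructors $\tm\evctx$, $\evctx\val$, $\proj_i\evctx$, $\tuple{\tuv\tm,\evctx,\tuv\val}$, which it preserves verbatim; using value preservation, $\unlali\evctx$ is again a $\soucal$ evaluation context and $\unlali{\evctxp{\tmthree}}=\unlali{\evctx}[\unlali{\tmthree}]$.

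For \emph{Projection}, decompose $\tm\Rew{\intprefix a}\tmtwo$ as $\tm=\evctxp{\tmthree}$, $\tmtwo=\evctxp{\tmfour}$ with $\tmthree\rootRew{\intprefix a}\tmfour$; by the context fact it suffices to check the root step, capture being avoided by choosing the bound $\tuvar$ fresh for the \bagt via $\alpha$-renaming. For a $\intprefix\proj$-redex, $\unlali{\proj_i\vv\val}=\proj_i\tuple{\unlali{\val_1},\dots,\unlali{\val_n}}\toproj\unlali{\val_i}$. For a $\intprefix\betav$-redex $\pack{\dabsv\vartwo\var{\tm_0}}{\vv{\val_1}}\,\vv{\val_2}$, the left-hand side collapses to $(\la\tuvar(\unlali{\tm_0}\isub{\tuvartwo}{\unlali{\vv{\val_1}}}))\,\unlali{\vv{\val_2}}$, which is a $\soucal$ $\betav$-redex since collapsing preserves tuple lengths and the rule guarantees $\norm{\tuvar}=\norm{\vv{\val_2}}$; it $\Rew{\betav}$-reduces to $\unlali{\tm_0}\isub{\tuvartwo}{\unlali{\vv{\val_1}}}\isub{\tuvar}{\unlali{\vv{\val_2}}}$, which equals $\unlali{\tm_0\isubp{\tuvartwo}{\vv{\val_1}}{\tuvar}{\vv{\val_2}}}=\unlali\tmfour$ by \Cref{prop:intcal-key-sub-prop-rev-trans}, whose hypothesis $\fv{\tm_0}\subseteq\tuvartwo\vdisjoint\tuvar$ is exactly well-formedness of the \wrapt.

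I would derive \emph{Reflection} from \emph{Projection} plus harmony and determinism, sidestepping any direct inversion of $\unlali\cdot$ on redexes. Given $\unlali\tm\Rew a\tmtwo$, the closed well-formed $\tm$ is either $\toint$-reducible, a value, or a clash (the trichotomy refining \Cref{l:intermediate-harmony}). In the latter two cases $\unlali\tm$ is $\tosou$-normal, since values collapse to values and clashes collapse to clashes, contradicting $\unlali\tm\Rew a\tmtwo$; hence $\tm\Rew{\intprefix b}\tmfour$ for some $b$ and some $\tmfour\in\intcal$. By \emph{Projection}, $\unlali\tm\Rew b\unlali\tmfour$, and determinism of $\tosou$ forces $b=a$ and $\unlali\tmfour=\tmtwo$, so $\tmthree\defeq\tmfour$ works. \emph{Halt} is then immediate: if $\tm$ were $\toint$-normal but $\unlali\tm\Rew a\tmtwo$, \emph{Reflection} would contradict normality, and symmetrically if $\unlali\tm$ were $\tosou$-normal but $\tm\Rew{\intprefix a}\tmtwo$, \emph{Projection} would.

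Finally, \emph{Inverse} is a routine induction on the source term $\tmtwo$; the only non-trivial case is the abstraction $\tmtwo=\la\tuvar{\tm_0}$ with $\fv{\la\tuvar{\tm_0}}=\tuvartwo$, where $\lali{\la\tuvar{\tm_0}}=\pack{\dabsv\vartwo\var{\lali{\tm_0}}}{\vv\vartwo}$ is a \emph{variable} \wrapt, so $\unlali{\lali{\la\tuvar{\tm_0}}}=\la\tuvar\unlali{\lali{\tm_0}}=\la\tuvar{\tm_0}$ by the induction hypothesis, the other cases being homomorphic. The main obstacle is \emph{Reflection}: a head-on proof would require recognising which $\soucal$-redexes inside $\unlali\tm$ descend from $\intcal$-redexes, which is delicate precisely because evaluated \wrapts collapse by substituting their \bagt into the body and can a priori manufacture $\soucal$-redexes with no transparent $\intcal$ preimage. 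Routing the argument through harmony and determinism avoids this, but it leans essentially on the closedness hypothesis, which is what licenses the trichotomy and the appeal to harmony.
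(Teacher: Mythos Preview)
Your proposal is correct and uses essentially the same ingredients as the paper: the auxiliary facts that $\unlali\cdot$ preserves values, evaluation contexts, and (root) clashes (the paper records these as \reflemma{rev-transl-properties}), the commutation \refprop{intcal-key-sub-prop-rev-trans} for the $\betav$ case of \emph{Projection}, and determinism of $\tosou$ to close \emph{Reflection}.

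The only difference is the order of dependencies between \emph{Halt} and \emph{Reflection}. The paper proves \emph{Halt} first---using \emph{Projection} for one implication and the trichotomy (closed well-formed normal forms are values or clashes) plus preservation of values and clashes for the other---and then derives \emph{Reflection} from \emph{Halt}, \emph{Projection}, and determinism (exactly as in the analogous target-intermediate theorem, \refthm{tar-int-implementation}). You instead inline the backward direction of \emph{Halt} directly into the \emph{Reflection} argument and recover both directions of \emph{Halt} afterwards. This is a harmless repackaging; the actual content is the same.
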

\end{toappendix}

Projection can be extended to sequences of steps. 
The  reflection and inverse properties ensure that reflections of consecutive steps from the source can be composed, bypassing the problem with the wrapping translation, as shown by \ben{the} corollary below. 
Its proof only depends on the abstract properties of our notion of bisimulation. 

\begin{toappendix}
\begin{corollary}[Preservation of reduction steps]
	\label{cor:step-preservation}
\!\!If $\tm \tosou^k~\!\!\tmtwo$ then there exists $\tmthree \in \intcal$ such that $\lali\tm \toint^k \tmthree$ and $\unlali\tmthree = \tmtwo$.
\end{corollary}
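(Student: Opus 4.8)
The plan is to argue by induction on the length $k$ of the source reduction sequence, lifting it one step at a time into $\intcal$ by repeated use of the \emph{reflection} and \emph{inverse} properties of \Cref{thm:rev-transl-implem-bricks}. The whole difficulty of the statement has already been absorbed into that theorem (ultimately into the commutation \Cref{prop:intcal-key-sub-prop-rev-trans}), so what remains is essentially the abstract chaining of one-step reflections.

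Before the induction I would fix the invariant that makes the chaining legal: at every stage the intermediate term produced must be \emph{closed} and \emph{well-formed}, since these are exactly the hypotheses under which \Cref{thm:rev-transl-implem-bricks} applies. Taking $\tm$ closed (as is implicit in the whole-program reading of the weak calculus), \Cref{l:lali-properties} gives that $\lali\tm$ is well-formed, and closedness of $\lali\tm$ follows from that of $\tm$; moreover both closedness and well-formedness are preserved along $\toint$ (the latter being the observation, made when defining $\toibv$, that \wrapts close their bodies so the step-generated substitution turns variable \bagts into value \bagts). Hence every $\toint$-reduct of $\lali\tm$ stays closed and well-formed, and reflection remains applicable throughout.

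For the base case $k = 0$ I would take $\tmthree \defeq \lali\tm$: then $\lali\tm \toint^0 \tmthree$ trivially, and the inverse property (\Cref{thm:rev-transl-implem-bricks}) yields $\unlali\tmthree = \unlali{\lali\tm} = \tm = \tmtwo$. For the inductive step I would decompose $\tm \tosou^{k+1}\tmtwo$ as $\tm \tosou^{k}\tmfour \Rew{a}\tmtwo$ with $a \in \{\betav,\proj\}$. The induction hypothesis provides $\tmthree\in\intcal$ with $\lali\tm \toint^{k}\tmthree$ and $\unlali\tmthree = \tmfour$; as $\tmthree$ is closed and well-formed by the invariant, and $\unlali\tmthree = \tmfour \Rew{a}\tmtwo$ is a genuine source step, reflection (\Cref{thm:rev-transl-implem-bricks}) supplies $\tmthree'\in\intcal$ with $\tmthree \Rew{\intprefix a}\tmthree'$ and $\unlali{\tmthree'} = \tmtwo$. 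Composing gives $\lali\tm \toint^{k+1}\tmthree'$ with $\unlali{\tmthree'} = \tmtwo$, closing the induction; since reflection preserves the label $a$, the lifted sequence has length exactly $k+1$, as required.

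The main obstacle is not any single deduction but keeping the closed-and-well-formed invariant alive across the whole sequence, because reflection is stated only for closed, well-formed intermediate terms: once this stability under $\toint$ is in hand, the rest is bookkeeping. It is worth stressing why one cannot argue directly with wrapping: since $\lali\cdot$ does not commute with substitution, $\lali{\tmfour}$ need not equal the $\tmthree$ reached after $k$ steps, so one genuinely needs the reverse translation $\unlali\cdot$ and its properties rather than a naive forward simulation.
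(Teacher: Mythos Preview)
Your proposal is correct and follows essentially the same route as the paper: induction on $k$, using the inverse property of \Cref{thm:rev-transl-implem-bricks} for the base case and reflection for the inductive step, decomposing the sequence at the last step. You are in fact more careful than the paper's proof in tracking the closed-and-well-formed invariant needed to keep reflection applicable; the paper's proof leaves this implicit.
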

\end{toappendix}


\begin{figure}[t!]
\centering
\scalebox{0.93}{
\begin{tabular}{ccc}
	\multicolumn{2}{c}{	\setlength{\arraycolsep}{2pt}
		$\begin{array}{r@{\hspace{.5cm}} rll@{\hspace{1.cm}} r@{\hspace{.5cm}} rll}
		\textsc{Projected vars} & \prvar,\prvartwo &\grameq& \proj_i \efvar \midd \proj_i \envar
		\\
		\textsc{Terms} &
		\tm,\tmtwo,\tmthree,\tmfour&\grameq& \prvar \midd \proj_i \tm \midd \vv\tm \midd \tm\,\tmtwo   \midd  \packnm\tm\bag 
		\\
		\textsc{\Bagts} &
		\bag,\bagtwo&\grameq& \vec\prvar \midd \vec\val
		\\
		\textsc{Values} &
		\val,\valtwo & \grameq  &\packnm\tm\bag \midd \vv\val
		\\
		\textsc{Eval Contexts} &
		\evctx,\evctxtwo & \grameq  & 
		\multicolumn{4}{l}{\ctxhole \midd \tm\, \evctx \midd \evctx\,\vv \val \midd \proj_{i} \evctx \midd \tuple{\tuv\tm, \evctx, \tuv\val} }
	\end{array}$
	}
\\[15pt]\hline\\[-10pt]
$\begin{array}{rll@{\hspace{.85cm}} rll@{\hspace{.85cm}} rlllllrllrlllll}
		\multicolumn{6}{c}{\textsc{Name elimination translation } \intcal \to \tarcal}
		\\[2pt]
		\clc{\vartwo_i}\vvvartwo\vvvar & \!\!\!\!\defeq\!\!\!\!  & \proj_i \lvar
		&
		\clc{\tuple{\tm_1,\mydots,\tm_n}}\vvvartwo\vvvar & \!\!\!\!\defeq\!\!\!\!  & \tuple{\clc{\tm_1}\vvvartwo\vvvar,\mydots,\clc{\tm_n}\vvvartwo\vvvar}
		\\
				\clc{\proj_i\tm}\vvvartwo\vvvar & \!\!\!\!\defeq\!\!\!\!  & \proj_i{\clc\tm\vvvartwo\vvvar}
		&
		\clc{\pack{\dabsv\varthree\varfour\tm}{\bag} }\vvvartwo\vvvar & \!\!\!\!\defeq\!\!\!\! & 
		\pack{\clc\tm{\tuvarthree}{\tuvarfour}}{ \clc\bag\vvvartwo\vvvar}
		\\[5pt]
		\clc{\var_i}\vvvartwo\vvvar & \!\!\!\!\defeq\!\!\!\!  & \proj_i \svar
		&
		\clc{\tm\tmtwo}\vvvartwo\vvvar & \!\!\!\!\defeq\!\!\!\!  & \clc\tm\vvvartwo\vvvar \,\clc\tmtwo\vvvartwo\vvvar
		
	\end{array}$

\end{tabular}
}

\vspace*{-.5\baselineskip}
\caption{
The target calculus $\tarcal$ and name elimination.}
\label{fig:target_calculus}

\vspace*{-.5\baselineskip}
\end{figure}

 \section{Part 1: Outline of the Target Calculus \texorpdfstring{$\tarcal$}{} and of Name Elimination}
 \label{sect:target-calculus}
  In this section, we quickly outline the \emph{target calculus} $\tarcal$ and the translation from $\intcal$ to $\tarcal$, dubbed \emph{name elimination}, which, when composed with the \lifting translation of \Cref{sect:intermediate-calculus}, provides the \emph{\wrapt conversion} transformation. All the unsurprising details (definitions, statements, and proofs) are given in \Cref{sect:app-target-calculus}.
The ideas behind $\tarcal$ and the elimination of names, in \reffig{target_calculus}, are:
\begin{itemize}
\item \emph{Variable binders}: replacing the sequences $\tuvartwo$ and $\tuvar$ of abstracted variables in \wrapts $\pack{\dabsv\vartwo\var\tm}{\bag}$ with the special variables $\lvar$ and $\svar$ (short for \emph{\lifted} and \emph{source}), 
standing for the tuples $\vv\valnone$ and $\vv\valntwo$ meant to be substituted on $\tuvartwo$ and $\tuvar$; 
\item \emph{Variable occurrences}: replacing every occurrence of a variable $\vartwo_i$ or $\var_j$ in $\tm$ with the \emph{projected variables} $\proj_i\lvar$ or $\proj_j\svar$, which are two special compound terms as $\lvar$ and $\svar$ cannot appear without being paired with a projection. 
\end{itemize}
We use $\prvar$ (for \emph{projected variable}) to refer to either $\proj_i\lvar$ or $\proj_j\svar$. 
The transformation is similar to switching to de Bruijn indices, except that---because of the already \lifted setting---the index simply refers to the composite binder of the \wrapt, rather to the nested binders above the variable occurrence. It also slightly differs from standard closure conversion: the standard transformation would eliminate the $\tuvartwo$ names but usually not the $\tuvar$ ones. We eliminate both, as the total elimination of names shall induce a logarithmic speed-up for the abstract machine associated with $\tarcal$, in \refsect{int-complexity}.

The terminology \emph{name elimination} refers to the fact that, after this transformation, a term uses only two variables, $\lvar$ and $\svar$. 
As all \wrapts would then have shape $\pack{\dabs{\lvar}{\svar}\tm}{\bag}$, we simplify the notation and just write $\pack\tm\bag$, with the implicit assumption that every \wrapt now binds $\lvar$ and $\svar$ in $\tm$. 
To define a reverse translation of $\tarcal$ to $\intcal$ (given in \Cref{sect:app-target-calculus}), \wrapts $\packnm\tm\bag$ are annotated with two natural numbers $n,m\in\nat$, which record the length of the replaced sequences of variables $\tuvartwo$ and $\tuvar$. To simplify the notation, however, we shall omit these annotations when not relevant. The \emph{body} of a \wrapt $\pack\tm\bag$ is $\tm$, and---as in $\intcal$---\wrapts are either variable \wrapts, if $\bag=\tuv\prvar$, or evaluated \wrapts, if $\bag=\vv\val$. A term $\tm\in\tarcal$ is \emph{closed} if $\proj_i \efvar$ and $\proj_i \envar$ do not occur out of \wrapt bodies.

Name elimination is parametric in two 
lists $\tuvartwo$ and $\tuvar$ of abstracted variables, which intuitively are those of the closest enclosing closure being translated. 
In particular the two parametric 
lists change when the translation crosses the boundary of a closure in the $\clc{\pack{\dabsv\varthree\varfour\tm}{\bag} }\vvvartwo\vvvar$ clause in \reffig{target_calculus}. On closed terms, the translation is meant to be applied with empty parameter lists (as $\clc\tm\emptylist\emptylist$). 

\paragraph{Results} In \Cref{sect:app-target-calculus}, we prove that name elimination, its reverse naming translation (from $\tarcal$ to $\intcal$), and the composed naming-unwrapping reverse translation (from $\tarcal$ to $\soucal$) are termination-preserving strong bisimulations as in \refthm{rev-transl-implem-bricks}. The final correctness theorem is \refthm{final-bisimulation}, page \pageref{thm:final-bisimulation}. The technical development is rather smooth and follows the structure of 
\Cref{sect:intermediate-calculus} without the subtleties related to the commutation with substitution.

 \section{Part 2 Preliminaries: Abstract Machines}
 \label{sect:prelim-am}
This section starts the second part of the paper. Here, we introduce the terminology and the form of implementation theorem that we adopt for our abstract machines, along the lines of Accattoli and co-authors \cite{DBLP:conf/icfp/AccattoliBM14,DBLP:journals/scp/AccattoliG19,DBLP:conf/ppdp/AccattoliCGC19}, here adapted to handle clashes.
 
\paragraph{Abstract Machines Glossary.}  
An \emph{abstract machine} for a strategy $\tostrat$ of a calculus $\xcal$ is a quadruple $\mach = (\States, \tomach, \compil\cdot, \decode\cdot)$ where $(\States, \tomach)$ is a labeled transition 
system with transitions $\tomach$ partitioned into \emph{principal transitions} $\tomachpr$, 
corresponding to the steps of the strategy and 
labeled with the labels  of the rewrite rules in $\xcal$ (here variants of $\betav$ and $\proj$ steps), and \emph{overhead transitions} $\tomacho$, that take care of the various tasks of the machine (searching, substituting, and $\alpha$-renaming), together with two functions:
\begin{itemize}
\item \emph{Initialization} $\compil\cdot\colon\xcal\to\States$  turns $\xcal$-terms into 
states;
\item \emph{Read-back} $\decode\cdot\colon\States\to\xcal$ turns states into 
$\xcal$-terms and satisfies the 
constraint $\decode{\compil\tm}=\tm$ for every $\xcal$-term $t$. 
\end{itemize}

A state $\state\in\States$ is composed by the \emph{active term} $\tm$, and some data structures. A state $\state$ is \emph{initial} for $\tm$ if $\compil\tm = \state$. 
A state is \emph{final} if no transitions apply; final states are partitioned into \emph{successful} and \emph{clash} states.
 A \emph{run} $\run
 $ is a possibly empty sequence of transitions. 
 For runs, we use notations such as $\size\run$ as for evaluation sequences (\refsect{plotkin}). 
 An \emph{initial run} (\emph{from} $\tm$) is a run from an initial state $\compil\tm$. 
 A state $\state$ is \emph{reachable} if it 
is the target state of an~initial~run. 

Abstract machines manipulate \emph{pre-terms}, that is, terms without implicit $\alpha$-renaming, even if for simplicity we keep calling them terms.
In that setting, we write $\rename{\tm}$ in a state $\state$ for a {\em fresh renaming} of $\tm$,
\ie $\rename{\tm}$ is $\alpha$-equivalent to $\tm$ but all of its bound variables
are fresh (with respect to those in $\tm$ and in the other components of $\state$).

\paragraph{Implementation Theorem, Abstractly.} We now define when a machine \emph{implements} the strategy $\tostrat$ of a calculus $\xcal$, abstracting and generalizing the setting of the previous sections.
\begin{definition}[Machine implementation]
A machine $\mach=(\States, \allowbreak\tomach, \compil\cdot, \decode\cdot)$ \emph{implements the strategy} $\tostrat$ of a calculus $\xcal$ when given a $\tm\in\xcal$ the following 
holds:\label{def:implem}
\begin{enumerate}
\item \emph{Runs to evaluations}: for any $\mach$-run $\exec: \compil\tm \tomach^* \state$ there exists a 
$\tostrat$-evaluation $\deriv: \tm \tostrat^* \decode\state$. Additionally, if $\state$ is a successful state then $\decode\state$ is a clash-free $\tostrat$-normal form.

\item \emph{Evaluations to runs}: for every $\tostrat$-evaluation $\deriv: \tm \tostrat^* \tmtwo$ there exists a 
$\mach$-run $\exec: \compil\tm \tomach^* \state$ such that $\decode\state = \tmtwo$. Additionally, if $\tmtwo$ is a clash-free $\tostrat$-normal form then there exists a successful state $\statetwo$ such that $\state \tomacho^* \statetwo$.

\item \emph{Principal matching}: in both previous points the number $\sizep\deriv\lab$ of steps of of the evaluation $\deriv$ of label $\lab$ are exactly the number $\sizep\exec{\lab}$ of principal $\lab$ transitions in $\exec$, \ie $\sizep\deriv\lab = \sizep\exec\lab$.
\end{enumerate}
\end{definition}
Next, we give sufficient conditions that a machine and a deterministic strategy have to satisfy in order for the former to implement the latter, what we call \emph{an implementation system}. 

\begin{definition}[Implementation system]
  \label{def:implementation}
  A machine $\mach=(\States, \allowbreak \tomach, \compil\cdot, \decode\cdot)$ and a strategy $\tostrat$ form an \emph{implementation system} if:
  \begin{enumerate}
		\item\label{p:def-overhead-transparency} \emph{Overhead transparency}: $\state \tomacho \statetwo$ implies $\decode\state = \decode\statetwo$;
		\item\label{p:def-beta-projection} \emph{Principal projection}: $\state \tomachpr \statetwo$ implies $\decode\state \tostrat \decode\statetwo$ and the two have the same label;
		\item\label{p:def-overhead-terminate}	\emph{Overhead termination}:  $\tomacho$ terminates;	
	\item\label{p:def-halt} \emph{Halt}:  $\mach$ successful states read back to $\tostrat$-normal forms, and clash states to clashes of $\xcal$.
  \end{enumerate}
\end{definition}

Via a simple lemma for the \emph{evaluation to runs} part (in \Cref{sect:app-prelim-am}), we obtain the following abstract implementation theorem.

\begin{toappendix}
\begin{theorem}[Sufficient condition for implementations]
\label{thm:abs-impl}
  Let  $\mach$ be a machine and $\tostrat$ be a strategy forming an implementation system.
  Then, $\mach$ implements $\tostrat$ (in the sense of \refdef{implem}). 
\end{theorem}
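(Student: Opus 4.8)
The plan is to establish the two conditions of \refdef{implem} by transferring the four properties of the implementation system along the read-back $\decode\cdot$, treating the \emph{runs to evaluations} direction as a direct induction on runs and the \emph{evaluations to runs} direction through the auxiliary one-step lemma alluded to just before the statement; principal matching will then drop out of the label bookkeeping of both constructions. The reason to split this way is that overhead steps are invisible to $\decode\cdot$ (they only realign the state), whereas principal steps are the ones in exact correspondence with the strategy.

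For \emph{runs to evaluations}, I would induct on the length $\size\exec$ of $\exec\colon\compil\tm \tomach^* \state$. The empty run reads back to $\decode{\compil\tm}=\tm$ by the read-back constraint. For the inductive step I peel off the last transition $\statetwo \tomach \state$: if it is overhead, overhead transparency (\refdef{implementation}.\ref{p:def-overhead-transparency}) gives $\decode\statetwo=\decode\state$, leaving the decoded evaluation unchanged; if it is principal, principal projection (\refdef{implementation}.\ref{p:def-beta-projection}) yields a single step $\decode\statetwo \tostrat \decode\state$ carrying the same label. Concatenating the decoded steps produces $\deriv\colon\tm \tostrat^* \decode\state$. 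If moreover $\state$ is successful, Halt (\refdef{implementation}.\ref{p:def-halt}) gives that $\decode\state$ is $\tostrat$-normal; since a normal form has no reducts, the co-inductive clash-freeness predicate collapses to ``not a clash'', and the partition of final states into successful and clash ones ensures a successful state does not read back to a clash, so $\decode\state$ is a clash-free normal form.

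The harder direction is \emph{evaluations to runs}, and this is where I expect the main obstacle: the machine may have to perform a burst of overhead work before it can mirror one strategy step, and we must guarantee that this burst terminates and that the ensuing principal transition lands on exactly the right term. I would isolate this in a one-step lemma: if $\compil\tm \tomach^* \state$ with $\decode\state$ reducible, say $\decode\state \tostrat \tmtwo$, then $\state$ is not final (otherwise Halt would force $\decode\state$ to be normal or a clash, both impossible for a reducible term, as clashes are normal forms), so by overhead termination (\refdef{implementation}.\ref{p:def-overhead-terminate}) we may fire overhead steps $\state \tomacho^* \statetwo$ until none applies, with $\decode\statetwo=\decode\state$ by transparency; the still-reducible $\statetwo$ is again non-final yet overhead-normal, hence admits a principal transition $\statetwo \tomachpr \statetwo'$, and principal projection gives $\decode\statetwo \tostrat \decode{\statetwo'}$ with the same label, whence $\decode{\statetwo'}=\tmtwo$ by determinism of $\tostrat$. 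Iterating this lemma along a given $\deriv\colon\tm \tostrat^* \tmtwo$ (by induction on $\size\deriv$, empty case trivial) builds the run $\exec\colon\compil\tm \tomach^* \state$ with $\decode\state=\tmtwo$. When $\tmtwo$ is a clash-free normal form, one last appeal to overhead termination drives $\state \tomacho^* \statetwo$ to an overhead-normal state; it admits no principal transition either (else $\tmtwo$ would reduce), so it is final, and it cannot be a clash state (Halt would make $\tmtwo=\decode\statetwo$ a clash, against clash-freeness), hence it is successful.

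Finally, principal matching is read off both constructions: in the soundness direction only principal transitions contribute decoded steps, and each preserves its label, so $\sizep\exec\lab = \sizep\deriv\lab$ for every $\lab$; in the completeness direction the one-step lemma pairs each $\tostrat$-step of label $\lab$ with exactly one principal transition of the same label, giving the same counts. The two delicate points I anticipate are the non-finality (progress) argument, which crucially combines Halt with the fact that clashes are normal forms, and the use of determinism of $\tostrat$ to pin down the target of the simulating principal transition; everything else is routine induction.
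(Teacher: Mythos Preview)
Your proposal is correct and follows essentially the same approach as the paper: a one-step simulation lemma (overhead-normalize, then fire one principal transition, using determinism of $\tostrat$ to pin the target) feeds an induction on $\size\deriv$ for the evaluations-to-runs direction, while runs-to-evaluations is a straightforward induction along the run using overhead transparency and principal projection. The only cosmetic difference is that the paper inducts on the number of \emph{principal} transitions rather than the full run length, and your treatment of the clash-free clauses is slightly more explicit than the paper's; neither changes the argument.
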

\end{toappendix}

 \label{sect:lam-environments}
 \paragraph{Local Environments}
 We overview one the two main forms of environments for abstract machines, the \emph{local} one (as opposed to \emph{global}), for the 
 archetypal
 call-by-value calculus $\cbvcal$ of \refsect{plotkin}, to compare it later with the \TAM for $\soucal$. 
 The terminology local/global is due to \cite{DBLP:journals/entcs/FernandezS09}, and the two techniques are 
 analyzed e.g. (for call-by-name and call-by-need) in \cite{DBLP:conf/ppdp/AccattoliB17} but \mbox{they are~folklore}. 
\begin{figure}[t]
\centering
\scalebox{0.93}{
\begin{tabular}{c@{\hspace{-.2cm}}cc}
$\begin{array}{c@{\hspace{.5cm}} c ccccccc}
 \textsc{\Mclosures} & \textsc{Local Envs} & \textsc{Stacks} 
\\
\bareclos \grameq \pair\tm\env &
\env  \grameq  \emptylist \mid \esub\var\bareclos \!\cons\! \envtwo 
&
\stack  \grameq  \emptylist   \mid   \clos \cons \stack \mid  \nvsym\bareclos \cons \stack 
\end{array}$
\\[2pt]
\!\!\!$\begin{array}{|l|r|r|@{\hspace{.25cm}}c@{\hspace{.25cm}}|l|r|r|}
\hline
\multicolumn{2}{|c|}{\textsc{M-Clos.}} & \textsc{Stack} &\textsc{Trans.}& \multicolumn{2}{c|}{\textsc{M-Clos.}} & \textsc{Stack}
\\\hline
 \tm\,\tmtwo   & \env & \stack
&\tomachseaone&
\tmtwo   & \env & \nvsym\pair\tm\env \cons\stack
\\

\la\var\tm   & \env &\nvsym\bareclos\cons\stack
&\tomachseatwo &
\multicolumn{2}{c|}{\bareclos} & \evsym\pair{\la\var\tm}\env \cons\stack
\\

\la\var\tm   & \env &\clos\cons\stack
&\tomachbeta &
\tm  & \esub\var\bareclos \cons \env & \stack
\\

 \var   & \env &\stack
& \tomachsub &
\multicolumn{2}{c|}{\env(\var)} & \stack
\\
\hline
\end{array}$
\end{tabular}}

\vspace*{-.5\baselineskip}
\caption{The local abstract machine (\LAM) for $\cbvcal$}
\label{fig:machine_local_cbv}

\vspace*{-.5\baselineskip}
\end{figure}
 
  The  \emph{local abstract machine} (\LAM), a machine with local environments for $\cbvcal$, is defined in \reffig{machine_local_cbv}. 
  It is a right-to-left variant of the CEK machine \cite{DBLP:conf/ifip2/FelleisenF87}.
  It uses a \emph{stack} to search for $\betav$-redexes, filling it with entries that encode an evaluation context of $\cbvcal$. The substitutions triggered by the encountered $\betav$-redexes are delayed and stored in environments. 
  There is one principal transition $\tomachbeta$, of label $\betav$, and three overhead transitions $\tomachseaone$, $\tomachseatwo$, and $\tomachsub$ realizing the search for $\betav$-redexes and 
  substitution.    %
  Contrary to the global approach, 
  the \LAM has many environments $\env$, paired with terms as to form \emph{\mclosures} $\bareclos$. In fact, a local environment $\env$ is itself a list of pairs of variables and \mclosures: \mclosures and environments are defined by mutual induction. 
  A \emph{state} is a triple $(\tm,\env,\stack)$ 
  where $\pair\tm\env$ is the \mclosure of an active term $\tm$ and a local environment $\env$, and $\stack$ is a stack.
  Initial states have shape $(\tm,\emptylist,\emptylist)$. 
  
  \Mclosures are called in this way because when evaluating a closed term an 
  invariant ensures that $\fv\tm\subseteq \dom\env$ for any \mclosure $\pair\tm\env$ in a reachable state (including the active one), where the domain $\dom\env$ of an environment $\env$ is the set of variables on which it has a substitution. The use of \mclosures allows one to avoid $\alpha$-renaming (and copying code) in transition $\tomachsub$, at the price of using many environments, thus using space anyway. 
  The duplication of $\env$ in transition $\tomachseaone$ is where different implementation approaches (shared \textit{vs} flat) play a main role. With shared environments (whose simplest implementation is as linked lists), the duplication only duplicates the pointer to the environment, not the whole environment. With flat environments (whose simplest implementation is as arrays), duplication is an actual duplication of the array; we shall further discuss the duplication of flat environments when discussing the complexity of abstract machines. 
  
\section{Part 2: the \TAM for \texorpdfstring{$\soucal$}{the Source Calculus}}
 \label{sect:TAM}
Here we present a machine with local environments, the \emph{source tupled abstract machine} (\TAM) for the source calculus $\soucal$. 
 We adopt local rather than global environments as to have \mclosures, to then show that wrapping removes their need, in the next section.

\begin{figure*}[t]
	\vspace*{-.6\baselineskip}
\centering
\setlength{\arraycolsep}{3pt}
\scalebox{0.92}{\begin{tabular}{c}
\begin{tabular}{c|c}
$\begin{array}{c}
\textsc{Eval. m-clos.}
\\
\clos \grameq \evlab{\pair{\la\tuvar\tm}\env} \midd  \overgroup{\tuple{\clos_{1},\mydots,\clos_{n}}}^{\vv\clos} \ n \!\geq\! 0
\\[2pt]
\textsc{Flagged m-clos.}
\\ 
\laxclos \grameq \clos \midd \nvlab{\pair\tm\env}
\\[2pt]
\textsc{Local envs}
\\
\env,\envtwo  \grameq   \emptyenv \mid \esub\var\clos \cons \env
\\[2pt]
\textsc{St. entries}
\\ 
\stacke \grameq \laxclos \midd \proj_i \midd \pair{\tuple{\tuv\tm,\machctxhole,\tuv\clos}}\env
\\[2pt]
\textsc{Stacks}
\\ 
\stack, \stacktwo \grameq \emptystack  \midd  \stacke \cons \stack
\\[2pt]
\textsc{States}
\\ 
\state, \statetwo \grameq \pairstate{\laxclos}\stack
\end{array}$
&
\begin{tabular}{c}
$\begin{array}{|c|c|c|r|@{\hspace{.2cm}}c@{\hspace{.2cm}}|c|c|c|r|l}
	\hhline{-|-|-|-|-|-|-|-|-}
\multicolumn{3}{|c|}{\textsc{\mclosure}} & \textsc{Stack}  &\textsc{Trans.} &  \multicolumn{3}{|c|}{\textsc{\mclosure}} & \textsc{Stack}
\\
	\hhline{-|-|-|-|-|-|-|-|-}
\nvsym & \tm\tmtwo & \env  &\stack 
&\tomachseaonenv&
\nvsym & \tmtwo & \env  & \nvlab{\pair\tm\env} \cons\stack 
\\

\nvsym & \proj_i \tm & \env & \stack 
& \tomachseatwonv &
\nvsym & \tm & \env &  \proj_i \cons \stack 
\\

\nvsym & \tuple{\mydots,\tm_n} & \env & \stack 
& \tomachseathreenv &
\nvsym &  \tm_n& \env  & \pair{\tuple{\mydots,\machctxhole}}\env \cons \stack 
\\

\nvsym & \tuple{} & \env  & \stack 
& \tomachseafournv &
\evsym & \tuple{} & \emptyenv & \stack 
\\

\nvsym & \la{\tuvar}\tm & \env  & \stack 
& \tomachseafivenv &
\evsym & \la{\tuvar}\tm & \env & \stack 
\\

\nvsym & \var & \env  & \stack 
& \tomachsubnv &
\multicolumn{3}{c|}{\env(\var)} & \stack 
\\
	\hhline{-|-|-|-|-|-|-|-|-}
\multicolumn{3}{|c|}{}&&&&&&\\[-10pt]
\multicolumn{3}{|c|}{\clos} & \nvlab{\pair\tm\env} \cons\stack 
&\tomachseaoneev&
\nvsym & \tm & \env & \clos \cons\stack 
\\
\multicolumn{3}{|c|}{\clos}  & \pair{\tuple{\mydots,\tm,\machctxhole,\mydots}}\env \cons\stack 
&\tomachseasixev&
\nvsym & \tm & \env &  \pair{\tuple{\mydots,\machctxhole,\clos,\mydots}}\env \cons\stack 
\\

\multicolumn{3}{|c|}{\clos}  & \pair{\tuple{\machctxhole,\mydots}}\env \cons\stack 
&\tomachseathreeev&
\multicolumn{3}{c|}{\tuple{\clos,\mydots}}  & \stack 
\\

\evsym & \la{\tuvar}\tm & \env & \vv\clos \cons\stack 
&\tomachbetaev&
\nvsym & \tm & \esub{\tuvar}{\vv\clos} \cons \env  & \stack &(*)
\\

\multicolumn{3}{|c|}{\vv\clos}  & \proj_i \cons \stack 
& \tomachprojev &
\multicolumn{3}{c|}{\clos_i} & \stack &(\#)
\\
\hhline{-|-|-|-|-|-|-|-|-}
\end{array}$
\\
\begin{tabular}{c@{\hspace{.25cm}}|@{\hspace{.25cm}}c}
Side conditions: 
&
\textsc{States read back $\decode{\cdot}$ (to terms of $\soucal$)}
\\
($*$) if $\norm{\tuvar} =\norm{\vv{\clos}}$ \ \ \ \ (\#) if $1 \!\leq\! i \!\leq\! \norm{\vv\clos}$
&
	$\decode{\pairstate{\laxclos}{\stack}}  \defeq  \decode\stack\ctxholep{\decode\laxclos}$
\end{tabular}
\end{tabular}
\end{tabular}
\\[76pt]\hline

\begin{tabular}{c@{\hspace{.25cm}}|@{\hspace{.25cm}}c}
$\begin{array}{rll@{\hspace{.75cm}}rll}
\multicolumn{6}{c}{\textsc{\Mclosures read-back $\decode{\cdot}$ (to terms of $\soucal$)}}
\\
\decode{\flag{\pair\tm{\emptyenv}}}  &\!\!\defeq\!\!&  \tm
&
\decode{\tuple{\clos_{1},\mydots,\clos_{n}}} & \!\!\defeq\!\!&  \tuple{\decode{\clos_{1}},\mydots,\decode{\clos_{n}}}
\\
\multicolumn{6}{c}{\decode{\flag{\pair\tm{\esub\var\clos\!\cons\!\env}}} \,\defeq\,  \decode{\pair{\tm\isub\var{\decode\clos}}{\env}}}
\end{array}$
&
$\begin{array}{rll@{\hspace{.2cm}}rll@{\hspace{.75cm}}rll}
\multicolumn{9}{c}{\textsc{Stacks read-back $\decode{\cdot}$ (to evaluation contexts of $\soucal$)}}
\\
\decode\emptystack  &\defeq&  \ctxhole
&
\decode{\nvlab{\pair\tm\env} \cons \stack} & \defeq&  \decode\stack\ctxholep{\decode{\nvlab{\pair\tm\env}} \ctxhole}
&
\decode{\clos \cons \stack} & \defeq&  \decode\stack\ctxholep{\ctxhole \decode\clos}
\\
\decode{\proj_{i}\cons\stack} & \defeq&  \decode\stack\ctxholep{\proj_{i} \ctxhole}
&
 \multicolumn{6}{c}{\decode{\pair{\tuple{\tm_{1},\mydots,\tm_{n},\machctxhole,\tuv\clos}}\env \cons \stack}
\,\defeq\,  \decode\stack\ctxholep{\tuple{\decode{\pair{\tm_{1}}\env},\mydots, \decode{\pair{\tm_{n}}\env},\ctxhole,\tuv{\decode\clos}}} }
\end{array}$
\end{tabular}
\end{tabular}}

\vspace*{-.5\baselineskip}
\caption{The source tupled abstract machine (\TAM).}
\label{fig:tam}
\end{figure*}

The \TAM is defined in \reffig{tam}. 
\Mclosures carry a flag $\flag\in\set{\nvsym,\evsym}$, where $\nvsym$ stands for \emph{non-(completely)-evaluated} and $\evsym$ for \emph{evaluated}. In $\soucal$, values have a tree structure, the leaves of which are abstractions. The evaluated \mclosures $\clos$ of the \TAM have a similar tree structure, plus local environments. 
A \emph{state} is a couple $\pairstate{\laxclos}{\stack}$, where $\laxclos$ is a \mclosure and $\stack$ is a stack. 
The active \mclosure is also flagged, naturally inducing a partition of transitions in two blocks. 
Stack entries are flagged \mclosures, $\proj_i$ and \emph{partially evaluated tuples} $\pair{\tuple{\tuv\tm,\machctxhole,\tuv\clos}}\env$ ($\tuv\tm$ are non-evaluated~terms).

The \emph{initialization} of $\tm$ is the \emph{initial state} $\compil\tm \defeq \pairstate{\nvsym\pair\tm\emptyenv}\emptystack$.
\emph{Successful states} are $\pairstate{\clos}\emptystack$, that is, an evaluated \mclosure and an empty stack.
Clash and clash-free states are defined in \Cref{sect:app-TAM}.

\paragraph{Transitions} The union of all the transitions of the \TAM is noted $\totam$. The principal transitions are $\tomachbetaev$ and $\tomachprojev$, of label $\betav$ and $\proj$, all the other transitions are overhead ones. If  $\tuvar = \var_{1},\mydots,\var_{n}$ and $\vv\clos = \tuple{\clos_{1},\mydots,\clos_{n}}$, we set $\esub{\tuvar}{\vv\clos} \defeq \esub{\var_{1}}{\clos_{1}}\mydots \esub{\var_{n}}{\clos_{n}}$; this notation is used in transition $\tomachbetaev$. In presence of tuples, there are additional overhead transitions. If the active \mclosure is:
\begin{itemize}
	\item $\nvsym\pair{\proj_i\tm}\env$ then $\tomachseatwonv$ triggers the evaluation of $\tm$ and the projection $\proj_i$ goes on the stack (to trigger transition $\tomachprojev$); 
	\item $\nvsym\pair{\la{\tuvar}\tm}\env$ then $\tomachseafivenv$ flips the flag to $\evsym$ (weak evaluation);
	\item $\nvsym\pair{\tuple{}}\env$ then $\tomachseafournv$ changes the flag to $\evsym$, discarding $\env$;
	\item $\nvsym\pair{\tuple{\mydots,\tm_n}}\env$ then $\tomachseathreenv$ 
	evaluates its elements right-to-left, adding a partially evaluated tuple to the stack $\pair{\tuple{\mydots,\machctxhole}}\env$.
	\item $\clos$ then the behavior depends on the first element of the stack. If it is a partially evaluated tuple $\pair{\tuple{\mydots,\tm,\machctxhole,\mydots}}\env$ then $\tomachseasixev$ swaps $\clos$ and the next element $\tm$ in the tuple 
	(duplicating $\env$),
	similarly to $\tomachseaoneev$. 
	If the tuple on the stack is $\pair{\tuple{\machctxhole,\mydots}}\env$ then $\tomachseathreeev$ plugs $\clos$ on $\machctxhole$ forming a new evaluated \mclosure. 
\end{itemize}
The name of transitions $\tomachseaoneev$ and $\tomachseathreeev$ 
stresses that they do the dual job of 
$\tomachseaonenv$ and $\tomachseathreenv$---such a duality shall be exploited in the complexity analysis of \refsect{source-complexity}. Transition $\tomachseasixev$ has no dual (it is not named $\tomachhole{\evsym sea_2}$ because it is not the dual of $\tomachseatwonv$).

\paragraph{Invariant and Read Back} 
Here is an invariant of the \TAM.
\begin{toappendix}
\begin{lemma}[\mclosure invariant]
\label{l:TAM-closure-invariant}
Let $\state$ be a \TAM reachable state and $\laxclos=\flag\pair\tm\env$ be a \mclosure or $\pair{\tuple{\mydots,\tm,\mydots,\machctxhole,\tuv\clos}}\env$ be a stack entry in $\state$. 
Then $\fv\tm \subseteq \dom\env$.
\end{lemma}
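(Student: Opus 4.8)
The plan is to prove the invariant by induction on the length of the initial run $\exec\colon \compil\tm \totam^* \state$, where---as is standard for this machine, and exactly as in the analogous property stated for the \LAM---the initial term $\tm$ is assumed closed. The crucial point, already built into the statement, is that the invariant is \emph{compositional}: it quantifies over \emph{every} m-closure and every partially-evaluated-tuple stack entry occurring \emph{anywhere} in $\state$, including the ones nested inside environments and inside evaluated m-closures. Keeping this reading is what makes the induction self-sustaining, since several transitions promote a nested closure to the active position or assemble a new closure out of components that are already present.

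For the base case $\state = \compil\tm = \pairstate{\nvsym\pair\tm\emptyenv}{\emptystack}$, the only m-closure is $\pair\tm\emptyenv$, the stack is empty, and $\fv\tm = \emptyset \subseteq \dom\emptyenv$ by closedness. For the inductive step I assume $\state$ satisfies the invariant and $\state \totam \statetwo$, and argue by cases on the transition. The search and administrative transitions are routine: $\tomachseaonenv$, $\tomachseatwonv$, $\tomachseathreenv$ merely split a term $\tm'$ with $\fv{\tm'}\subseteq\dom\env$ into sub-terms (and, for tuples, a residual tuple stack entry), each of which still has its free variables in the \emph{same} $\dom\env$, as the environment is copied, not modified; $\tomachseafivenv$ only flips a flag; $\tomachseafournv$ produces $\pair{\tuple{}}\emptyenv$ with $\fv{\tuple{}}=\emptyset$; and $\tomachseaoneev$, $\tomachseasixev$ just move closures between the active position and the stack without altering any pairing $\pair{\tm'}{\env'}$. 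In each of these the property transfers verbatim from the induction hypothesis.

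The transitions that actually change the free-variable bookkeeping are handled as follows. For $\tomachbetaev$, from $\pairstate{\evsym\pair{\la\tuvar\tm}\env}{\vv\clos\cons\stack}$ to $\pairstate{\nvsym\pair\tm{\esub\tuvar{\vv\clos}\cons\env}}{\stack}$, the hypothesis gives $\fv{\la\tuvar\tm} = \fv\tm\setminus\tuvar \subseteq \dom\env$, while the new environment has domain $\dom\env\cup\tuvar$ (the side condition $\norm\tuvar=\norm{\vv\clos}$ ensures $\esub\tuvar{\vv\clos}$ binds exactly $\tuvar$), hence $\fv\tm\subseteq\tuvar\cup\dom\env$ as required; the new entries $\clos_i$ come from the stack closure $\vv\clos$ and already satisfy the invariant. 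For $\tomachsubnv$, from $\nvsym\pair\var\env$ the hypothesis $\{\var\}=\fv\var\subseteq\dom\env$ ensures $\env(\var)$ is defined, and $\env(\var)$ is a closure \emph{nested in} $\env$, hence already covered by the compositional hypothesis. Symmetrically, $\tomachprojev$ returns a component $\clos_i$ of $\vv\clos$, and $\tomachseathreeev$ assembles $\tuple{\clos,\mydots}$ out of the active closure and the components recorded in the tuple stack entry; in both cases the resulting closure and all its sub-closures were already present in $\state$ and satisfy the invariant.

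The one delicate point---and the only real obstacle---is precisely the need to read the statement as ranging over all nested occurrences rather than only the top-level active closure: transitions $\tomachsubnv$, $\tomachprojev$, and $\tomachseathreeev$ expose material that was buried inside an environment or a tuple, and the argument works only because that material was \emph{already} subject to the invariant in $\state$. Once this is settled, no further strengthening of the invariant is needed, and the freshness handled by $\rename\cdot$ is irrelevant here, since $\alpha$-renaming bound variables does not change the set of free variables.
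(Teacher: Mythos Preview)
Your proof is correct and follows the same approach the paper takes: induction on the length of the run from the initial state, with a case analysis on the last transition, reading the invariant compositionally over all nested m-closures and tuple stack entries. One tiny cosmetic point: the \TAM has no $\rename\cdot$ transition (that belongs to global-environment machines), so your closing remark about $\alpha$-renaming, while not wrong, is simply inapplicable here.
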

\end{toappendix}
The read-back $\decode{\cdot}$ of the \TAM to $\soucal$ is defined in \reffig{tam}. 
\Mclosures and states read back to terms, 
stacks read back to evaluation contexts. 
In the read back of partially evaluated \mclosures $\pair{\tuple{\tuv\tm,\machctxhole,\tuv\clos}}\env$ on the stack, $\env$ spreads on the non-evaluated terms $\tuv\tm$ (the \mclosures in $\tuv\clos$ have their own environments on their leaves).

\begin{toappendix}
\begin{lemma}[Read-back properties]
\label{l:TAM-decoding-invariants}
\hfill
	\begin{enumerate}
		\item
		$\decode{\clos}$ is a value of $\soucal$ for every 
		$\clos$ of the \TAM.
		\item
		$\decode{\stack}$ is an evaluation context of $\soucal$ for \mbox{every \TAM stack $\stack$}.
	\end{enumerate}
\end{lemma}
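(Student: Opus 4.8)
The statement has two parts, and I would prove them in order: first (1), which is self-contained, and then (2), which uses (1). Both are pure structural inductions, on the grammar of evaluated \mclosures and on the grammar of stacks respectively. Since the claims quantify over \emph{all} syntactic \mclosures and stacks (not merely the reachable ones), neither the \mclosure invariant (\Cref{l:TAM-closure-invariant}) nor any reachability hypothesis is needed.

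For (1), I would argue by induction on the structure of the evaluated \mclosure $\clos$. If $\clos = \tuple{\clos_1,\mydots,\clos_n}$, then $\decode\clos = \tuple{\decode{\clos_1},\mydots,\decode{\clos_n}}$, and by the induction hypothesis each $\decode{\clos_i}$ is a value, so $\decode\clos$ is a tuple of values, hence a value of $\soucal$. If $\clos = \evlab{\pair{\la\tuvar\tm}\env}$, the read-back unfolds the environment by a sequence of meta-level substitutions of the terms $\decode{\clos'}$ associated with the entries of $\env$. The key observation is that these substitutions never alter the top-level shape: by capture-avoidance $\varthree\notin\tuvar$ for every entry $\esub\varthree{\clos'}$, so $(\la\tuvar\tmtwo)\isub\varthree{\decode{\clos'}} = \la\tuvar(\tmtwo\isub\varthree{\decode{\clos'}})$ stays an abstraction. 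A straightforward inner induction on $\env$ then shows that $\decode{\pair{\la\tuvar\tm}\env}$ is again an abstraction, hence a value. Note that here the entries of $\env$ need \emph{not} be read back to values: only preservation of the abstraction shape is used, which is what keeps the two inductions from being genuinely mutual.

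For (2), I would proceed by induction on the structure of the stack $\stack$. The base case $\stack=\emptystack$ gives $\decode\emptystack = \ctxhole$, an evaluation context. For $\stack = \stacke\cons\stacktwo$, the read-back has the form $\decode{\stacktwo}\ctxholep{C_{\stacke}}$, where the head entry $\stacke$ contributes a single-layer context $C_{\stacke}$: namely $\decode{\nvlab{\pair\tm\env}}\,\ctxhole$ (of shape $\tm\evctx$), or $\ctxhole\,\decode\clos$ (of shape $\evctx\val$), or $\proj_i\ctxhole$ (of shape $\proj_i\evctx$), or $\tuple{\decode{\pair{\tm_1}\env},\mydots,\decode{\pair{\tm_n}\env},\ctxhole,\tuv{\decode\clos}}$ (of shape $\tuple{\tuv\tm,\evctx,\tuv\val}$). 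In the second and fourth cases the relevant side condition---that $\decode\clos$, respectively each component of $\tuv{\decode\clos}$, is a value---is exactly part (1), so $C_{\stacke}$ is an evaluation context of $\soucal$ in every case. By the induction hypothesis $\decode{\stacktwo}$ is an evaluation context, and the result follows from the routine fact that plugging an evaluation context into the hole of an evaluation context yields an evaluation context (an easy induction on the outer context), applied to $\decode{\stacktwo}\ctxholep{C_{\stacke}}$.

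I do not expect a genuine obstacle: the lemma is a double structural induction resting on two routine auxiliary facts---substitution preserves the abstraction shape of a term, and evaluation contexts compose. The only points requiring a little care are proving (1) before (2), since the value side conditions of the application and tuple stack entries in (2) are discharged precisely by (1), and recognizing in (1) that the abstraction case needs only shape preservation rather than value-ness of the environment entries, which is what keeps the argument a clean induction on the grammar rather than a subtle mutual induction between \mclosures and environments.
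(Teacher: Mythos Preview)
Your proposal is correct and follows essentially the same approach as the paper: structural induction on $\clos$ for (1) and on $\stack$ for (2), with (2) relying on (1) for the value side conditions and on composition of evaluation contexts (the paper's \Cref{l:source-evctxs-compose}). The only organizational difference is that the paper factors your inner induction on $\env$---the observation that read-back preserves the abstraction shape---into a standalone auxiliary lemma (\Cref{l:TAM-decode-laxclosure}, proved by induction on the length of $\env$), whereas you carry it out inline; the content is the same.
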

\end{toappendix}

\paragraph{Implementation Theorem} According to the recipe in \refsect{prelim-am}, we now prove the properties inducing the implementation theorem. The read-back 
to evaluation contexts (\Cref{l:TAM-decoding-invariants}) is used for principal projection. 
The \mclosure invariant (\Cref{l:TAM-closure-invariant}) is used in the proof of the halt property, to prove that the machine is never stuck on the left-hand side of a $\tomachsubnv$ transition (see \Cref{sect:app-TAM}). 
Overhead termination is proved via a measure, developed in \refsect{source-complexity}, which gives a bound on the number of overhead transitions. 



\begin{toappendix}
\begin{theorem}
	\label{thm:sou-tam-implementation}
	The \TAM and $\soucal$ form an implementation system (as in \refdefeq{implementation}), hence the \TAM implements~$\tosou$.
\end{theorem}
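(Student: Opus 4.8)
The plan is to invoke the abstract implementation theorem (\refthm{abs-impl}): it suffices to verify that the \TAM and $\tosou$ form an implementation system in the sense of \refdefeq{implementation}, i.e.\ to establish the four properties \emph{overhead transparency}, \emph{principal projection}, \emph{overhead termination}, and \emph{halt}. The whole proof is therefore a case analysis over the transitions of the \TAM (\reffig{tam}), supported by the two invariants already available, namely the \mclosure invariant (\Cref{l:TAM-closure-invariant}) and the read-back properties (\Cref{l:TAM-decoding-invariants}).

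First I would check \emph{overhead transparency}: for each overhead transition $\state \tomacho \statetwo$ (the search transitions $\tomachseaonenv,\tomachseatwonv,\tomachseathreenv,\tomachseafournv,\tomachseafivenv,\tomachseaoneev,\tomachseasixev,\tomachseathreeev$ and the substitution transition $\tomachsubnv$), I would unfold the definition of the read-back $\decode{\cdot}$ on states, \mclosures and stacks and check that $\decode\state = \decode\statetwo$. The search transitions merely shuffle a subterm between the active \mclosure and a freshly created stack entry, and the read-back clauses for stacks are designed so that plugging the moved piece back into the decoded context yields the same $\soucal$-term; the flag flips ($\tomachseafournv,\tomachseafivenv$) are transparent because $\decode{\cdot}$ ignores flags. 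The only genuinely substantial case is $\tomachsubnv$, where $\decode{\env(\var)}$ must equal $\decode{\nvsym\pair\var\env}$; this follows by induction on $\env$ from the read-back clause $\decode{\flag{\pair\tm{\esub\var\clos\cons\env}}} = \decode{\pair{\tm\isub\var{\decode\clos}}\env}$, and here the \mclosure invariant guarantees $\var \in \dom\env$ so that $\env(\var)$ is defined. \emph{Principal projection} is the analogous check for the two principal transitions $\tomachbetaev$ and $\tomachprojev$: I would show that $\decode{\cdot}$ sends them to a $\betav$- resp.\ $\proj$-step of $\tosou$ with the matching label, using \Cref{l:TAM-decoding-invariants} to see that the decoded stack is a genuine evaluation context (so the root step may be closed under it) and using the side conditions $(*)$ and $(\#)$ to match the length/index constraints of the $\soucal$ rewrite rules.

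For \emph{halt} I would characterize the final states. A state is final when no transition applies; the partition into successful states $\pairstate\clos\emptystack$ and clash states must cover every stuck configuration. Using the \mclosure invariant, a reachable state with active \mclosure $\nvsym\pair\var\env$ always has $\var\in\dom\env$, so $\tomachsubnv$ always fires and the machine is never stuck on a variable lookup; a systematic inspection of the remaining shapes of the active \mclosure against the top of the stack then shows that every stuck state is either successful (an evaluated \mclosure with empty stack, which reads back to a $\tosou$-normal form by \Cref{l:TAM-decoding-invariants}.1) or a clash (reading back to a $\soucal$-clash). This is the point where the clash/clash-free definitions of \Cref{sect:app-TAM} must be matched with the configurations of the \TAM.

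The main obstacle I expect is \emph{overhead termination}: unlike the other three properties, it is not a per-transition check but a global statement that the overhead transitions $\tomacho$ cannot loop. As the text signals, I would prove it by exhibiting a well-founded measure on states that strictly decreases along every overhead transition (and, for the \emph{evaluation to runs} direction, bounds their number), with the subtlety that transitions such as $\tomachseaoneev$ and $\tomachseasixev$ move work in the opposite direction to the search transitions, so a naive size measure does not obviously shrink. The paper defers this measure to \refsect{source-complexity}, so in the plan I would assume that measure as a black box and simply invoke it to discharge overhead termination; with all four conditions in hand, \refthm{abs-impl} yields that the \TAM implements $\tosou$.
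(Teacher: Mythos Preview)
Your plan is correct and mirrors the paper's approach: the paper packages exactly your four checks into \refprop{tam-requir-implem} and then applies \refthm{abs-impl}. Two small calibrations: the \mclosure invariant is used for the \emph{halt} property (to rule out being stuck on the left of $\tomachsubnv$), not for overhead transparency of $\tomachsubnv$, since that transition already presupposes $\var\in\dom\env$; and the overhead measure of \refsect{source-complexity} is constant on $\tomachseaoneev$ and $\tomachseathreeev$, so termination is obtained by pairing the measure with the transition-match bound (\reflemma{TAM-blue-bound}) rather than by a single strictly decreasing measure.
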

\end{toappendix}

  \section{Part 2: the \LTAM for \texorpdfstring{$\intcal$}{the Intermediate Calculus}}
 \label{sect:LTAM}

Here we present the \emph{intermediate tupled abstract machine} (\LTAM) for the intermediate calculus $\intcal$. 
Its feature is a new way of handling environments, resting on the strong properties of $\intcal$.

\begin{figure*}[t!]
\centering
\scalebox{0.92}{\begin{tabular}{c}
$\begin{array}{r@{\hspace{.3cm}} lll@{\hspace{.5cm}}r@{\hspace{.3cm}} lll}
\textsc{Stackable envs} &
\env, \envtwo &\grameq& \emptyenv \midd \esub\var{\evlab\val} \cons \env
&
\textsc{Activation stacks} & 
\ars,\arstwo & \grameq & \emptystack \midd \pair\stack\env \cons \ars

\\
\textsc{Constructor stacks} &
\stack, \stacktwo &\grameq& \emptystack \midd \nvlab\tm\cons \stack \midd \evlab\val\cons \stack \midd \proj_i \cons \stack \midd \tuple{\tuv{\nvsym\tm},\machctxhole,\tuv{\evsym\val}}\cons\stack 
&
\textsc{States} & 
\state,\statetwo & \grameq & \fourstate{\flag\tm} \stack \env \ars
\end{array}$
\\\\[-5pt]
{\footnotesize
$\begin{array}{|l|r|r|r|@{\hspace{.25cm}}c@{\hspace{.25cm}}|l|r|r|r|l}
\cline{1-9}
\textsc{Focus} & \textsc{Cons. stack} &\textsc{Env}& \textsc{Act. stack} &\textsc{trans.}& \textsc{Focus} & \textsc{Cons. stack}&\textsc{Env}& \textsc{Act. stack}
\\
\cline{1-9}
\nvsym  (\tm\tmtwo)  & \stack & \env & \ars
&\tomachseaonenv&
\nvsym  \tmtwo  & \nvlab\tm \cons\stack & \env & \ars
\\

\nvsym   \proj_i \tm  & \stack & \env & \ars
& \tomachseatwonv &
\nvsym  \tm  &\proj_i \cons \stack & \env & \ars
\\

\nvsym  \tuple{\tm_1,\mydots,\tm_n}  &\stack & \env & \ars
& \tomachseathreenv &
\nvsym   \tm_n & \tuple{\nvsym\tm_1,\mydots,\machctxhole} \cons \stack & \env & \ars
\\

\nvsym  \tuple{}  &\stack & \env & \ars
& \tomachseafournv &
\evsym   \tuple{} & \stack & \env & \ars
\\


\nvsym \pack{ \dabsv\vartwo\var\tm}{\tuple{\tuvartwo}}  &\stack & \env & \ars
& \tomachsubwev &
\evsym \pack{ \dabsv\vartwo\var\nvsym\tm}{\env(\tuvartwo)} & \stack & \env & \ars
\\

\nvsym  \var  &\stack & \env & \ars
& \tomachsubvev &
\env(\var) & \stack & \env & \ars
\\
\cline{1-9}

\evsym  \val &\nvlab\tm \cons\stack & \env & \ars
&\tomachseaoneev&
\nvsym  \tm & \evlab\val \cons\stack & \env & \ars
\\

\evsym  \val & \tuple{\mydots,\nvsym\tm,\machctxhole,\mydots} \cons\stack & \env & \ars
&\tomachseasixev&
\nvsym  \tm & \tuple{\mydots,\machctxhole,\evlab\val,\mydots}\cons\stack & \env & \ars
\\

\evsym  \val   & \tuple{\machctxhole,\mydots} \cons\stack & \env & \ars
&\tomachseathreeev&
\evsym  \tuple{\evlab\val,\mydots}  &\stack & \env & \ars
\\

\evsym  \vv\val  & \proj_i \cons \stack & \env & \ars
& \tomachprojev &
\evsym  \val_i &\stack & \env & \ars
&\mbox{if $1\leq i \leq \norm{\vv\val}$}
\\

\evsym  \pack{ \dabsv\vartwo\var\nvsym\tm}{\vv{\evlab{\val_1}}}  & \vv{\evlab{\val_2}} \cons\stack & \env & \ars
&\tomachbetaev&
\nvsym  \tm  & \emptystack & \!\!\esub{\tuvartwo}{\vv{\evsym\valnone}}\esub{\tuvar}{\vv{\evsym\valntwo}}\!\! & \pair\stack\env \cons \ars
& \mbox{if $\norm{\tuvartwo}\!=\!\norm{\vv{\evlab{\val_2}}}$, $\norm{\tuvar}\!=\!\norm{\vv{\evlab{\val_1}}}$}
\\

\evsym  \val  & \emptystack & \env & \pair\stack\envtwo\cons \ars
&\tomachseasevenev&
\evsym  \val &\stack & \envtwo & \ars
\\
\cline{1-9}
\end{array}$
}
\\\\[-8pt]
\hline
\textsc{Read back $\decode{\cdot}$ to $\intcal$}
\\[3pt]
$\begin{array}{r@{\hspace{.35cm}} r@{\hspace{.7cm}} l@{\hspace{.7cm}} llll}
\textsc{Flagged terms} &
\decode{\nvlab\tm}  \defeq  \tm
& \decode{\evsym  \pack{ \dabsv\vartwo\var\nvsym\tm}{\vv\evval} }  \defeq    \pack{ \dabsv\vartwo\var\tm}{\vv{\decode\evval}} 
& \decode{\evsym  \tuple{\evsym\val_1,\ldots,\evsym\val_n} }  \defeq      \tuple{\decode{\evsym\val_1},\ldots,\decode{\evsym\val_n}} 
\end{array}$
\\[5pt]
$\begin{array}{r@{\hspace{.35cm}} r@{\hspace{.7cm}} l@{\hspace{.7cm}} l@{\hspace{.7cm}} l@{\hspace{.7cm}} l}
\textsc{Constructor stacks} &
\decode\emptystack  \defeq  \ctxhole
&
 \decode{\proj_{i}\cons\stack}  \defeq  \decodep\stack{\proj_{i} \ctxhole}
&
 \decode{\evlab\val \cons \stack}  \defeq  \decodep\stack{\ctxhole \decode\evval}

&
\decode{\nvlab\tm \cons \stack}  \defeq  \decodep\stack{\tm \ctxhole}
& \decode{\tuple{\tuv{\nvsym\tm},\machctxhole,\tuv{\evsym\val}}\cons\stack}
 \defeq  \decodep\stack{\tuple{\tuv\tm,\ctxhole,\tuv{\decode\evval} }}
\end{array}$
\\[5pt]

$\begin{array}{rll@{\hspace{.15cm}}|@{\hspace{.15cm}}rl@{\hspace{.15cm}}|@{\hspace{.15cm}}rl}
\!\!\!\!\!\!\!x\textsc{Act. stacks} &
\decode\emptystack  \defeq  \ctxhole
&
\decode{\pair\stack\env \cons \ars}  \defeq  \decodep\ars{\decode{\stack}\sigma_\env}
&
\textsc{States} & 
\decode{\fourstate{\flag\tm}\stack\env\ars}  \defeq  \decodep\ars{\decode\stack\sigma_\env\ctxholep{\decode{\flag\tm}\sigma_\env}}
&
\textsc{Env-induced subst.}
&
\sigma_{\esub{\tuvar}{\vv\evval}}  \defeq  \isubp{\tuvar}{\vv{\decode\evval}}\emptylist{\tuple{}}
\end{array}$
\end{tabular}}

\vspace*{-.5\baselineskip}
\caption{The intermediate tupled abstract machine (\LTAM).}
\label{fig:lifted-tam}

\vspace*{-.5\baselineskip}
\end{figure*}

The \LTAM is defined in \reffig{lifted-tam}. 
The flags will be explained after an overview of the new aspects of the machine. 
Mostly, the \LTAM behaves as the \TAM by just having removed the structure of \mclosures. 
The principal transitions are $\tomachbetaev$ and $\tomachprojev$, of label $\betav$ and $\proj$, all the other transitions are overhead. There are two new aspects: transition $\tomachsubwev$ that evaluates variable \bagts 
and the use of \emph{stackable environments} rather than \emph{local} environments and \mclosures. 
The machine is akin to those with global environments, except that 
no $\alpha$-renaming is needed, as in the~local~approach. 

\paragraph{Evaluating \Bagts.} 
In the \LTAM, abstractions $\la\tuvar\tm$ are replaced by non-evaluated prime \wrapt{s} $\nvsym\pack{ \dabsv\vartwo\var\tm }{ \tuple{\tuvartwo}}$. 
The new transition $\tomachsubwev$ (replacing $\tomachseafivenv$) substitutes on all variables in $\tuvartwo$ in one shot, producing the evaluated \wrapt $\evsym\pack{ \dabsv\vartwo\var\tm }{ \env(\tuvartwo)}$ where $\env(\tuvartwo) = \tuple{\env(\vartwo_1),\mydots,\env(\vartwo_n)}$  if $\tuvartwo = \vartwo_1, \mydots,\vartwo_n$, and $\env$ is the (global) environment.

\paragraph{\Stackable Environments.} 
Transitions $\tomachbetaev$ and the new $\tomachseasevenev$ encapsulate a second new aspect,  \emph{stackable environments}. Indeed: 
\begin{itemize}
\item \emph{\Wrapt bodies and environments}: when the machine encounters the analogous of a $\toibv$-redex $\pack{ \dabsv\vartwo\var\tm }{ \vv{\val_1} }\, \vv{\val_2}$, the new entries $\esub{\tuvartwo}{\vv{\evsym\valnone}}\,\esub{\tuvar}{\vv{\evsym\valntwo}}$ of the environment created by transition $\tomachbetaev$ in the machine are all that is needed to evaluate $\tm$, because the free variables of $\tm$ are all among $\tuvartwo$ and $\tuvar$. 
Thus, the environment $\env$ that is active before firing the redex is useless to evaluate $\tm$, and can be removed after transition~$\tomachbetaev$. 

\item \emph{Stackability}: $\env$ is not garbage collected, it is pushed on the new \emph{activation stack}, 
along with the ordinary stack $\stack$ (now called \emph{constructor stack}) which contains non-evaluated terms with variables in $\dom\env$. 
This is still done by transition~$\tomachbetaev$.

\item \emph{Popping}: 
when the body $\tm$ of the \wrapt has been evaluated, the focus is on a value $\evval$ and the constructor stack is empty. 
The activation stack has the pair $(\stack,\env)$ that was active before firing the $\betav$-redex. The machine throws away the current environment $\envtwo \!\defeq \esub{\tuvartwo}{\vv{\evsym\valnone}} \esub{\tuvar}{\vv{\evsym\valntwo}}$, since $\tuvartwo$ and $\tuvar$ have no occurrences out of $\tm$, and reactivates the pair $(\stack,\env)$, to keep evaluating terms in $\stack$. This is done \mbox{by the new transition $\tomachseasevenev$.}
\end{itemize}

\paragraph{Flags}
The \LTAM evaluates (well-formed) terms of $\intcal$ decorated (only on top) with a flag $\flag\in\set{\nvsym,\evsym}$: $\nvsym\tm$ denotes that $\tm$ has not been evaluated yet, while $\evsym\tm$ denotes that $\tm$ has been evaluated. 
The results of 
evaluation are values, thus the $\evsym$ flag shall be associated to values only, and an invariant shall ensure that every evaluated value $\evsym\val$ in a reachable state is closed. In $\nvsym\tm$, $\tm$ has no flags, and non-flagged sub-terms are implicitly considered as flagged with $\nvsym$. Every evaluated value carries a $\evsym$ flag, so in $\evval$ there are in general other $\evsym$ flags (when it is a tuple or 
the \bagt of the \wrapt). 
Moreover, evaluated \wrapts shall have shape $\evsym\pack{ \dabsv\vartwo\var\nvsym\tm}{\vv\evval}$, that is, they have an additional (redundant) $\nvsym$ flag on their body (it shall be used in section \refsect{int-complexity} for the complexity analysis). 
The machine is started on \emph{prime} terms of $\intcal$ (because \lifted terms of $\soucal$ are prime, \reflemma{lali-properties}), thus all \wrapts in an initial state have shape $\pack{ \dabsv\vartwo\var\tm}{\tuple{\tuvartwo}}$. 
In fact, all non-evaluated \wrapts $\nvsym\pack{ \dabsv\vartwo\var\tm}{\bag}$ in reachable states shall always be prime (that is, such that $\bag=\tuple{\tuvartwo}$): we prove this invariant and we avoid assuming the general shape of non-evaluated closures, 
which would require \mbox{additional 
and never used transitions.} 

The \emph{initialization} $\compil\tm$ of $\tm \in \intcal$ is given by the \emph{initial state} $\fourstate{\nvsym\tm}\emptyenv\emptystack \emptystack$ where $\tm$ is closed and prime. 
\emph{Successful states} have shape $\fourstate{\evsym\val}\emptystack\env \emptystack$. Clash states are defined in \Cref{sect:app-LTAM}.

\paragraph{Invariants and Read Back} 
Here are the invariants of the \LTAM.

\begin{toappendix}
\begin{lemma}[Invariants]
\label{l:LTAM-invariants}
Let $\state=\fourstate{\flag\tm}\stack\env\ars$ be a \LTAM reachable state.
\begin{enumerate}
	\item \emph{Well-formedness}: all \wrapts in $\state$ are well-formed.
	\item \emph{Closed values}: every value $\evlab\val$ in $\state$ is closed.
	\item \emph{Closure}: $\fv\tm \cup \fv\stack\subseteq \dom\env$ and $\fv\stacktwo\subseteq \dom\envtwo$ for every entry $\pair\stacktwo\envtwo$ of the activation stack $\ars$.
\end{enumerate}
\end{lemma}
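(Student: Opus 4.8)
The plan is to prove the three invariants \emph{simultaneously} by induction on the length of the initial run reaching $\state$, because they are mutually dependent and cannot be separated: preserving \emph{closed values} under the transitions that fill the environment relies on \emph{closure} (to know that the looked-up variables lie in $\dom\env$), while preserving \emph{closure} under $\tomachbetaev$ relies on \emph{well-formedness} (to bound the free variables of a \wrapt body by its two binders $\tuvartwo,\tuvar$). I would therefore state the conjunction of the three items as a single invariant and carry all of them through every case.

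For the base case, the initial state is $\fourstate{\nvsym\tm}\emptystack\emptyenv\emptystack$ with $\tm$ closed and prime. Well-formedness holds since $\tm$ is a prime, hence well-formed, term; closed values holds vacuously as no $\evlab\val$ occurs anywhere; and closure holds because $\fv\tm=\emptyset=\dom\emptyenv$, the constructor stack is empty, and the activation stack is empty.

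For the inductive step I would do a case analysis on the last transition. The search and projection transitions $\tomachseaonenv$, $\tomachseatwonv$, $\tomachseathreenv$, $\tomachseafournv$, $\tomachseaoneev$, $\tomachseasixev$, $\tomachseathreeev$, $\tomachprojev$ are routine: they only move or decompose terms and values already present, so well-formedness and closed values are immediate, and for closure one checks that the free variables redistributed between focus and constructor stack stay within the unchanged $\dom\env$, using that any $\evlab\val$ pushed onto the stack is closed by the induction hypothesis. The genuinely new cases touch the environment or the activation stack. In $\tomachsubwev$ the \wrapt in focus is a well-formed variable \wrapt, so its free variables are exactly $\tuvartwo$, and closure gives $\tuvartwo\subseteq\dom\env$; hence $\env(\tuvartwo)$ is defined and its components are closed by the closed-values hypothesis, so the produced $\evsym\pack{\dabsv\vartwo\var\nvsym\tm}{\env(\tuvartwo)}$ is a closed, well-formed evaluated \wrapt (the body condition is inherited and $\norm{\env(\tuvartwo)}=\norm\tuvartwo$). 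In $\tomachsubvev$ the focus becomes $\env(\var)$, defined since $\var\in\dom\env$ by closure and closed by closed values. The central case is $\tomachbetaev$: the new environment $\esub{\tuvartwo}{\vv{\evsym\valnone}}\esub{\tuvar}{\vv{\evsym\valntwo}}$ binds exactly the \bagt values and the argument values, all closed by the closed-values hypothesis (the \bagt coming from the evaluated \wrapt in focus, the argument from the stack), so closed values is preserved; well-formedness of the \wrapts inside $\tm$ and inside these values is inherited; and for closure, $\fv\tm\subseteq\tuvartwo\vdisjoint\tuvar$, which is the domain of the new environment, holds precisely by well-formedness of the \wrapt, while the freshly stored activation entry $\pair\stack\env$ satisfies $\fv\stack\subseteq\dom\env$ because $\stack$ is a suffix of the old constructor stack, which satisfied closure. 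Finally, $\tomachseasevenev$ pops $\pair\stack\envtwo$ and restores it: the focus value is closed, and $\fv\stack\subseteq\dom\envtwo$ is exactly the closure property recorded for that activation entry.

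I expect the main obstacle to be bookkeeping the interaction of the three properties rather than any deep argument. The delicate points are getting the right reading of ``closed'' for a \wrapt—namely that it means its \bagt is closed, the body's free variables being captured by the two binders via well-formedness—and checking that the environment domain created by $\tomachbetaev$ coincides with $\tuvartwo\cup\tuvar$, so that the closure condition on the new focus follows from well-formedness alone. Formulating the conjunction so that each case has exactly the hypotheses it needs is the only genuinely careful step.
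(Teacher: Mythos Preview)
Your proposal is correct and matches the paper's approach: a simultaneous induction on the length of the initial run, with a case analysis on the last transition, using exactly the interactions you highlight (well-formedness feeding the closure check at $\tomachbetaev$, closure feeding the definedness of look-up at $\tomachsubwev$ and $\tomachsubvev$, and closed values propagating through environment creation and tuple formation). The bookkeeping you describe for each transition, including the handling of the activation entry pushed by $\tomachbetaev$ and popped by $\tomachseasevenev$, is precisely what is needed.
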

\end{toappendix}

The read back $\decode{\cdot}$ is defined in \reffig{lifted-tam}. Flagged terms $\flag\tm$ and states $\state$ read back to terms of $\intcal$, 
the constructor and activation stacks $\stack$ and $\ars$ read back to evaluation contexts of $\intcal$. The read back of $\ars$ and $\state$ is based on a notion of meta-level \emph{environment-induced (simultaneous) substitution} $\sigma_\env$, defined in \reffig{lifted-tam} and applied to 
terms (as read back of flagged terms) and evaluation contexts (as read back of stacks); meta-level substitutions are extended to evaluation contexts as expected, the definition is in \Cref{sect:app-LTAM}. 

The implementation theorem is proved following the schema used for the \TAM, see \Cref{sect:app-LTAM}. Overhead transparency for 
transition $\tomachseasevenev$ relies on the closed values invariant (\reflemma{LTAM-invariants}).

\begin{toappendix}
\begin{lemma}[Read back properties]
\label{l:LTAM-decoding-invariants}
\hfill
	\begin{enumerate}
		\item
		\emph{Values}: $\decode{\evval}$ is a value of $\intcal$  for every $\evval$ of the \LTAM.
		\item
		\emph{Evaluation contexts}: $\decode{\stack}$ and $\decode\ars$ are evaluation contexts of $\intcal$ for every \LTAM constructor and activation stacks $\stack$ and $\ars$.
	\end{enumerate}
\end{lemma}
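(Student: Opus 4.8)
The statement to prove is the Read-back properties lemma for the \LTAM (\reflemma{LTAM-decoding-invariants}): that $\decode{\evval}$ is a value of $\intcal$ for every evaluated value $\evval$, and that $\decode\stack$ and $\decode\ars$ are evaluation contexts of $\intcal$ for every constructor stack $\stack$ and activation stack $\ars$. The plan is to prove the two items by straightforward structural inductions following the grammar of evaluated values, constructor stacks, and activation stacks as given in \reffig{lifted-tam}, checking at each production that the read-back clause lands in the right syntactic class of $\intcal$.

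**Item 1 (values).** First I would proceed by induction on the structure of the evaluated value $\evval$. By the grammar, an evaluated value is either an evaluated \wrapt $\evsym\pack{ \dabsv\vartwo\var\nvsym\tm}{\vv\evval}$ or a tuple of evaluated values $\evsym\tuple{\evsym\val_1,\ldots,\evsym\val_n}$. In the \wrapt case, the read-back clause gives $\pack{ \dabsv\vartwo\var\tm}{\vv{\decode\evval}}$; by the induction hypothesis each $\decode{\evsym\val_i}$ is a value of $\intcal$, so $\vv{\decode\evval}$ is a value \bagt, and hence the whole term is an evaluated \wrapt of $\intcal$, which is indeed a value by the $\intcal$ grammar (\reffig{intermediate_calculus}). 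In the tuple case, the read-back is $\tuple{\decode{\evsym\val_1},\ldots,\decode{\evsym\val_n}}$, which is a tuple of values by the induction hypothesis, hence again a value of $\intcal$.

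**Item 2 (stacks).** Next I would treat the two stacks. For constructor stacks I induct on $\stack$, inspecting each production $\emptystack$, $\nvlab\tm\cons\stack$, $\evlab\val\cons\stack$, $\proj_i\cons\stack$, and $\tuple{\tuv{\nvsym\tm},\machctxhole,\tuv{\evsym\val}}\cons\stack$. The base case reads back to $\ctxhole$, an evaluation context; for each cons-case the read-back clause plugs an inner context obtained from the induction hypothesis into a one-layer context former ($\tm\ctxhole$, $\ctxhole\decode\evval$, $\proj_i\ctxhole$, or the partially-evaluated tuple $\tuple{\tuv\tm,\ctxhole,\tuv{\decode\evval}}$), and each of these matches a production of the $\intcal$ evaluation-context grammar---here the $\ctxhole\decode\evval$ case needs Item~1 to know the argument component is a value as required by the $\evctx\val$ production. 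Activation stacks are handled by a second induction: the empty case reads back to $\ctxhole$, and for $\pair\stack\env\cons\ars$ the clause is $\decodep\ars{\decode\stack\sigma_\env}$, i.e.\ the read-back of $\ars$ with its hole filled by $\decode\stack$ under the environment-induced substitution $\sigma_\env$; this is a context-into-context plugging, so by the outer induction hypothesis on $\ars$ together with the constructor-stack result it yields an evaluation context, provided the substitution $\sigma_\env$ preserves evaluation contexts.

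**Main obstacle.** The only genuinely nontrivial point is this last interaction between meta-level substitution $\sigma_\env$ and evaluation contexts: I must check that applying $\sigma_\env$ to an evaluation context of $\intcal$ again yields an evaluation context. This is where I expect to need the \emph{closed values} invariant from \reflemma{LTAM-invariants}, since $\sigma_\env$ substitutes read-backs of evaluated (hence closed, value) entries, so no hole is captured or duplicated and the substituted terms, being values, keep every $\evctx\val$ argument slot a value. I would therefore state and use a small auxiliary fact---meta-level substitution of values for variables maps $\intcal$-evaluation-contexts to $\intcal$-evaluation-contexts---which is itself a routine induction on the context but is the crux that makes the activation-stack case go through. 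Everything else is a mechanical traversal of the grammar.
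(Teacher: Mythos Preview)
Your plan matches the paper's approach: both items are routine structural inductions, and the paper likewise isolates the one nontrivial ingredient---that the environment-induced substitution $\sigma_\env$ preserves evaluation contexts of $\intcal$---by explicitly defining simultaneous substitution on evaluation contexts just before the lemma (see the appendix definition preceding \reflemma{LTAM-decoding-invariants}) and then simply observing that each clause lands back in the evaluation-context grammar.

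One small correction: your appeal to the \emph{closed values} invariant of \reflemma{LTAM-invariants} is misplaced. The lemma is stated for arbitrary \LTAM data, not only reachable states, so the invariants are not available (and the paper does not use them here). The hole $\ctxhole$ is a constant, not a variable, so substitution never touches it---there is no capture or duplication issue to worry about. What actually makes the $\evctx\,\val$ clause go through is purely Item~1 (environment entries $\evval$ read back to values) together with the elementary fact that substituting values for variables sends $\intcal$-values to $\intcal$-values; closedness plays no role. With that adjustment your argument is complete and coincides with the paper's.
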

\end{toappendix}


\begin{toappendix}
\begin{theorem}
\label{thm:int-ltam-implementation}
The \LTAM and $\intcal$ form an implementation system (as in \refdefeq{implementation}), thus the \LTAM implements~$\toint$ on prime~terms.
\end{theorem}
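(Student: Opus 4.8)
By \refthm{abs-impl} it suffices to show that the \LTAM and $\toint$ satisfy the four requirements of an implementation system (\refdefeq{implementation}): overhead transparency, principal projection, overhead termination, and the halt property. I would organize the argument exactly as for the \TAM (\refthm{sou-tam-implementation}), first securing the invariants and read-back properties on which the four requirements rest, and then checking the requirements one transition at a time.

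\textbf{Invariants and read back.} The first step is to prove \reflemma{LTAM-invariants} by induction on the length of an initial run, verifying that each of the twelve transitions preserves well-formedness, the closed-values property, and the closure property $\fv\tm \cup \fv\stack \subseteq \dom\env$ (with the analogous condition on the activation-stack entries). Alongside these I would carry the primeness invariant mentioned after the flags discussion, namely that every non-evaluated \wrapt in a reachable state is prime, as this is what guarantees that $\tomachsubwev$ covers all non-evaluated \wrapts. The delicate inductive cases are $\tomachbetaev$, which installs the fresh environment $\esub{\tuvartwo}{\evsym\val}\esub{\tuvar}{\evsym\val}$ and pushes the old pair $\pair\stack\env$ onto the activation stack---here well-formedness of the fired \wrapt gives $\fv\tm \subseteq \tuvartwo \vdisjoint \tuvar$, so the new environment closes $\tm$---and $\tomachseasevenev$, which discards the current environment and reactivates $\pair\stack\env$, the closure property for the popped entry being exactly what legitimizes this. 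With the invariants in hand, \reflemma{LTAM-decoding-invariants} follows by a routine induction on the structure of values and stacks.

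\textbf{The four requirements.} For overhead transparency I would inspect each overhead transition and check that $\decode{\cdot}$ is unchanged. The search transitions merely shuttle a constructor between the focus and the constructor stack, so transparency is immediate from the definition of $\decode\stack$. Transition $\tomachsubvev$ (variable lookup) and $\tomachsubwev$ (\bagt evaluation) replace a variable, resp. a variable \bagt, by its value under $\sigma_\env$, and transparency follows because the read back already applies $\sigma_\env$; the closure invariant ensures the relevant variables lie in $\dom\env$, so the lookups are defined. The genuinely new case is $\tomachseasevenev$: after switching from the current environment back to the reactivated $\env$, the read back of the focused value must be unchanged, which holds precisely because the closed-values invariant makes $\evval$ independent of both environment-induced substitutions. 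For principal projection, $\tomachprojev$ reads back to a $\proj$-labelled $\toint$-step inside the context $\decode\stack\sigma_\env$, while $\tomachbetaev$ reads back to a $\betav$-labelled step: its source state reads back to the redex $(\pack{\dabsv\vartwo\var\tm}{\bag})\,\vv\val$ plugged in $\decodep\ars{\decode\stack\sigma_\env\ctxholep{\cdot}}$, and since the newly created environment induces exactly the simultaneous substitution of the $\toibv$-rule, the target state reads back to the contractum placed in the same context (the old $\pair\stack\env$ having moved onto the activation stack). Overhead termination I would obtain from the numerical measure developed for the complexity analysis of \refsect{int-complexity}, which strictly decreases along every overhead transition. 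Finally, the halt property splits into two checks: a successful state $\fourstate{\evsym\val}\emptystack\env\emptystack$ reads back to a value, hence to a $\toint$-normal form, and the machine clash configurations are matched with the clashes of $\intcal$; the closure and primeness invariants, together with harmony of $\intcal$ (\reflemma{intermediate-harmony}), guarantee that the machine is never stuck on a well-formed closed term except at a value or a clash. Principal matching is then automatic from \refthm{abs-impl}.

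\textbf{Main obstacle.} The crux is the treatment of stackable environments in $\tomachbetaev$ and $\tomachseasevenev$. Unlike the \TAM, whose local environments travel inside \mclosures and are thus automatically coherent with the read back, the \LTAM uses a single environment that is replaced wholesale on a $\betav$-step and restored on exit. Proving overhead transparency of $\tomachseasevenev$ and principal projection of $\tomachbetaev$ therefore hinges entirely on the closed-values and closure invariants, so the real effort is to set up these invariants strongly enough that the abrupt environment swaps never alter the decoded term. Once that is done, the remaining cases are as routine as for the \TAM.
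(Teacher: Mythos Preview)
Your approach is correct and matches the paper's: establish the four implementation-system requirements (packaged as \refprop{LTAM-impl-requirements}) and then invoke \refthm{abs-impl}. One technical slip to fix: the overhead measure does \emph{not} strictly decrease along every overhead transition---by \reflemma{LTAM-overhead} it is unchanged by $\tomachseaoneev$, $\tomachseathreeev$, and $\tomachseasevenev$---so overhead termination also needs the observation that each of these transitions consumes a stack item produced by an earlier transition (the matching argument behind \reflemma{LTAM-blue-bound}; in particular, in a pure overhead sequence $\tomachseasevenev$ can fire at most as many times as the activation stack is initially long), or equivalently a lexicographic refinement of the measure.
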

\end{toappendix}

 \section{Part 2: the \TTAM for \texorpdfstring{$\tarcal$}{}}
 \label{sect:TTAM}
Here we present the \emph{target tupled abstract machine} (\TTAM) for the target calculus $\intcal$, a minor variant of the \LTAM. 
Beyond the elimination of variable names, 
its key feature is the use of \emph{tupled environment}, that is, 
a pair of tuples as data structures for environments, instead of a list of explicit \mbox{substitution entries $\esub\var\evval$}.

\begin{figure}[t]
\centering
\!\!\!\!\!\!%
\scalebox{0.93}{
\begin{tabular}{c}
\setlength{\arraycolsep}{2pt}
$\begin{array}{rll@{\hspace{.6cm}} r lll}
\multicolumn{3}{c}{\textsc{Tupled envs}}
&
\multicolumn{4}{c}{\textsc{Tupled envs lookup }(\vvar\!\in\!\set{\lvar,\svar})}
\\
\env, \envtwo &\grameq& \vv{\evval_\lvar};\vv{\evval_\svar}
 & 
 (\vv{\evval_\lvar};\vv{\evval_\svar}) (\proj_i\vvar) &\defeq &(\vv{\evval_\vvar})_i & \mbox{if }i\leq\norm{\vv{\evval_\vvar}}
\end{array}$
\\\\[-5pt]
{\footnotesize
$\begin{array}{|l|r|r|c|@{\hspace{.1cm}}c@{\hspace{.1cm}}|l|r|r|r|}
\hline
\textsc{Focus} & \!\textsc{Co.}\!\! & \!\textsc{En}\!\! & \!\!\textsc{Ac.}\!\!\! &\textsc{trans.}& \textsc{Focus} & \!\textsc{Co.}\!\! & \textsc{En}& \!\!\textsc{Act. st.}\!\!
\\
\hline
\nvsym \pack{ \tm}{\vv\prvar}  &\stack & \env & \ars
& \tomachsubwev &
\evsym \pack{ \nvsym\tm}{\vv{\env(\prvar)}}\!\! & \stack & \env & \ars
\\

\nvsym  \prvar  &\stack & \env & \ars
& \tomachsubvev &
\env(\prvar) & \stack & \env & \ars
\\
\hline

\evsym  \pack{ \nvsym\tm}{\vv{\evval_1}}\!\!\!  & \!\vv{\evval_2} \!\cons\! \stack\!\! & \env & \ars
&\tomachbetaev&
\nvsym  \tm  & \emptystack & \!\!\vv{\evval_1};\vv{\evval_2}\!\!\! & \!\pair\stack\env \!\cons\! \ars\!\!\!
\\
\hline
\multicolumn{9}{c}{}
\\[-8pt]
\multicolumn{9}{c}{\mbox{Side conditions: if look-up is defined in $\tomachsubwev$ and  $\tomachsubvev$.}}
\end{array}$
}
\\\\[-8pt]
\hline
\textsc{Read back $\decode\cdot$ to $\tarcal$}
\\[3pt]
$\begin{array}{r@{\hspace{.3cm}} r c l@{\hspace{.85cm}}l}
\textsc{Flagged terms} 
& \decode{\evsym  \pack{\nvsym\tm}{\vv\evval} }  &\!\!\!\!\defeq\!\!\!\!&    \pack{ \tm}{\vv{\decode\evval}} 
\\
\textsc{Env-induced subst.} &
\sigma_{\vv{\evval_\lvar};\vv{\evval_\svar}}  &\!\!\!\!\defeq\!\!\!\!&  \itsubp\lvar{\vv{\evval_\lvar}}\svar{\vv{\evval_\svar}} 
\end{array}$
\end{tabular}
}

\vspace*{-.5\baselineskip}
\caption{The target tupled abstract machine (\TTAM).}
\label{fig:target-tam}

\vspace*{-.5\baselineskip}
\end{figure}

The \TTAM is defined in \reffig{target-tam}, by giving the only ingredients of the \LTAM that are redefined for the \TTAM, leaving everything else unchanged but for the fact that, when considering the omitted transitions of the \LTAM as transitions of the \TTAM, the symbol $\env$ for environment refers to the new notion of environment adopted here (the omitted transitions do not touch the environment). Here the \wrapts $\packnm\tm\bag$ of $\tarcal$ are written $\pack\tm\bag$ (and are then decorated with flags as for the \LTAM) because $n$ and $m$ play a role only for the reverse translation from $\tarcal$ to $\intcal$ studied in Appendix C, while here they are irrelevant.

The elimination of names enables the use of a \emph{tupled (stackable) environment}: the environment is now a pair of tuples $\vv{\evval_\lvar};\vv{\evval_\svar}$, where $\vv{\evval_\lvar}$ provides values for the \lifted projected variables $\proj_i\lvar$, and $\vv{\evval_\svar}$ for the source ones $\proj_i\svar$, with no need to associate the values of these tuples to variable names via entries of the form $\esub\var\evval$ (as it was the case for the \LTAM). The change is relevant, as the data structure for environments changes from a \emph{map} to a \emph{tuple}, removing the need (and the cost) of creating a map in transition $\tomachbetaev$ and inducing a logarithmic speed-up, as we shall see.

The look up into tupled environments is defined in \reffig{target-tam} and is the only new notion needed in the new transitions.
The implementation theorem is proved following the same schema used for the \TAM and the \LTAM, the details are in \Cref{sect:app-TTAM}.

\begin{toappendix}
\begin{theorem}
\label{thm:tar-ttam-implementation}
The \TTAM implements $\totar$ on prime terms.
\end{theorem}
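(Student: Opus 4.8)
The plan is to instantiate the abstract recipe of \refsect{prelim-am}: I would show that the \TTAM and $\totar$ form an implementation system in the sense of \refdefeq{implementation}, and then conclude by \refthm{abs-impl}. Since the \TTAM is obtained from the \LTAM by eliminating names and replacing map-environments with the tupled environments $\vv{\evval_\lvar};\vv{\evval_\svar}$, the proof mirrors that of \refthm{int-ltam-implementation}, and the transitions inherited unchanged from the \LTAM are handled verbatim; only the redefined transitions $\tomachsubwev$, $\tomachsubvev$, $\tomachbetaev$, together with the new tupled read-back, require genuine attention.

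First I would establish the \TTAM analogues of the \LTAM invariants (\reflemma{LTAM-invariants}): well-formedness of all \wrapts, the primeness of every non-evaluated \wrapt in a reachable state (so that its \bagt has the form $\vv\prvar$, matching the initialization on prime terms), closedness of every evaluated value $\evsym\val$, and a \emph{closure} invariant stating that each projected variable $\proj_i\lvar$ (resp. $\proj_i\svar$) occurring in the focus or the constructor stack has index $i$ within the length of the $\lvar$-tuple (resp. $\svar$-tuple) of the current environment, and symmetrically for the entries of the activation stack. This index-range reformulation replaces the domain condition $\fv\tm\cup\fv\stack\subseteq\dom\env$ used for the \LTAM, and it is precisely what guarantees that the look-up side conditions of $\tomachsubwev$ and $\tomachsubvev$ are met on reachable states, so the machine is never spuriously stuck. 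From these I would derive the read-back properties (the counterpart of \reflemma{LTAM-decoding-invariants}): $\decode\evval$ is a value of $\tarcal$, and $\decode\stack$, $\decode\ars$ are evaluation contexts of $\tarcal$.

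Then I would check the four conditions. \emph{Overhead transparency} is where the tupled read-back does the work: the environment-induced substitution $\sigma_{\vv{\evval_\lvar};\vv{\evval_\svar}} = \itsubp\lvar{\vv{\evval_\lvar}}\svar{\vv{\evval_\svar}}$ replaces each projected variable $\proj_i\vvar$ directly by the $i$-th component of the corresponding tuple, which is legitimate because $\lvar$ and $\svar$ occur only under projections; thus $\tomachsubvev$ preserves the read-back by definition of $\sigma_\env$, and $\tomachsubwev$ preserves it because wrapping a body with a variable \bagt $\vv\prvar$ reads back, under $\sigma_\env$, to the very same \wrapt carried by the target of $\tomachsubwev$ with the looked-up value \bagt. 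The inherited overhead transitions are transparent as in the \LTAM. For \emph{principal projection}, $\tomachprojev$ projects to a $\proj$-step and $\tomachbetaev$ to a $\betav$-step of $\totar$ with matching labels; the $\betav$ case is the only real computation, and it follows because pushing the environment $\vv{\evval_1};\vv{\evval_2}$ realizes, after read-back, exactly the simultaneous substitution of the two value tuples into the \wrapt body, the closed-values invariant ensuring that the discarded ambient environment does not affect it. \emph{Overhead termination} follows from the same measure used for the \TAM and \LTAM (\refsect{source-complexity}), which is insensitive to the shift from maps to tuples. \emph{Halt}: successful states $\fourstate{\evsym\val}\emptystack\env\emptystack$ read back to values, hence to $\totar$-normal forms, and clash states read back to $\tarcal$-clashes; the closure invariant again rules out states stuck on an undefined look-up, so every final state is successful or a clash.

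I expect the main obstacle to be the closure invariant in its index form and its interaction with the two new look-up transitions: one must verify that $\tomachbetaev$ installs a tupled environment whose two tuples have exactly the lengths recorded by the \wrapt, so that all subsequent projections land in range, and that the stackable discipline---pushing $\pair\stack\env$ onto the activation stack and restoring it with $\tomachseasevenev$---keeps the index bounds synchronized across the focus, the constructor stack, and each activation-stack layer. Everything else is routine transcription of the \LTAM argument. As an alternative I could transport \refthm{int-ltam-implementation} along the name-elimination strong bisimulation of \Cref{sect:app-target-calculus}, relating \TTAM runs to \LTAM runs by reconstructing names, but the direct verification stays closer to the schema already used for the other two machines.
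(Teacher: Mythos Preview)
Your proposal is correct and follows essentially the same approach as the paper: establish the \TTAM analogues of the \LTAM invariants (the paper's \refprop{TTAM-invariants}, with the closure invariant phrased via the norms $\prnorm\tm\leq\prnorm\env$ and $\prnorm\stack\leq\prnorm\env$, which is exactly your index-range condition), derive the read-back properties, verify the four implementation-system conditions (\refprop{TTAM-impl-requirements}), and conclude by \refthm{abs-impl}. Your identification of the closure invariant and its interaction with $\tomachbetaev$ as the main technical point matches the paper's development precisely.
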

\end{toappendix}

\paragraph{Actual Implementation of the \TTAM} 
We provide an OCaml implementation of the~\TTAM on GitHub \cite{PPDP25ocaml}, described in \Cref{sect:app-implementation}.
The textual interface asks for a term of the source calculus $\soucal$, which is
translated to the target calculus $\tarcal$ by applying first \lifting and then name elimination, thus passing through the intermediate calculus $\intcal$, as described above.
The obtained $\tarcal$-term is then reduced by the \TTAM until a normal form is reached, if any, and the final
$\tarcal$-term is extracted. The machine state is printed after every step, in ASCII art.

The implementation is not particularly optimized and it does not have a graphical user interface.
It is designed to stay as close as possible to the definitions given in the paper, and to provide evidence
supporting the assumptions of the cost analysis of \refsect{int-complexity}. In particular, we use OCaml arrays for tuples, variables in abstractions,
and \bagts in \wrapts, for achieving $\bigo(1)$ access times.

 \section{Part 3 Preliminaries: Sharing, Size Explosion, and the Complexity of Abstract Machines}
 \label{sect:sharing}
This section starts the third part of the paper, about the time complexity analysis of abstract machines. 
Here, we quickly overview the size explosion problem of the $\l$-calculus as the theoretical motivation for the use of sharing in implementations, as well as the structure of the study of the overhead of abstract machines.

\paragraph{Size Explosion} A well-known issue of the $\l$-calculus is the existence of families of terms whose size  grows \emph{exponentially} with the number of $\beta$-steps. They are usually built exploiting some variant of the duplicator $\delta \defeq \la\var\var\var$. We give an example in $\cbvcal$.  Define:

\vspace*{-.7\baselineskip}
\begin{center}
\begin{tabular}{c|c|c}
\!\!\!\!\makecell{
\textsc{Variant of $\delta$}
\\
$\pi \defeq  \la\var\la\vartwo\vartwo \var\var
$}\!
&
\!\!\!\!$\begin{array}{c@{\hspace{.3cm}} c}
\multicolumn{2}{c}{\textsc{Size explod. family}}
\\
  \tm_0  \defeq  \Id 
  & 
    \tm_{n+1}  \defeq  \pi \tm_n 

\end{array}$\!\!\!\!
&
\!\!\!\!$\begin{array}{c@{\hspace{.3cm}} c}
\multicolumn{2}{c}{\textsc{Exploded results}}
\\
  \tmtwo_0  \defeq  \Id
  &   
  \tmtwo_{n+1}  \defeq  \la\vartwo\vartwo\tmtwo_n \tmtwo_n
\end{array}$
\end{tabular}
\end{center}

\begin{toappendix}
\begin{proposition}[Size Explosion in $\cbvcal$]
\label{prop:size-explosion-plotkin}
Let $n\in \nat$. Then $\tm_n \tobv^n \tmtwo_n$, moreover $\size{\tm_n} = \bigo(n)$, $\size{\tmtwo_n} = 
\Omega(2^n)$, and $\tmtwo_n$ is a value.
\end{proposition}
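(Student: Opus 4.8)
The plan is to prove all the claims by a single induction on $n\in\nat$, strengthening the statement with two auxiliary invariants that make the induction go through: that each $\tmtwo_n$ is an abstraction (hence a value), and that both $\tm_n$ and $\tmtwo_n$ are \emph{closed}. The latter is needed so that the capture-avoiding substitution performed by $\rtobv$ behaves as plain textual replacement, with no $\alpha$-renaming interfering with the size count.

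For the reduction $\tm_n \tobv^n \tmtwo_n$, the base case $n=0$ is immediate since $\tm_0 = \tmtwo_0 = \Id$ and the empty reduction has length $0$. For the inductive step I would argue as follows. By definition $\tm_{n+1} = \pi\,\tm_n$, and since $\pi\,\evctx$ is an evaluation context (production $\tm\evctx$ with $\tm\defeq\pi$), every step of the reduction $\tm_n \tobv^n \tmtwo_n$ provided by the induction hypothesis lifts to a step of $\pi\,\tm_n \tobv^n \pi\,\tmtwo_n$. By the induction hypothesis $\tmtwo_n$ is a value, so $\pi\,\tmtwo_n = (\la\var\la\vartwo\vartwo\var\var)\tmtwo_n$ is a $\betav$-redex, which fires to $(\la\vartwo\vartwo\var\var)\isub\var{\tmtwo_n} = \la\vartwo\vartwo\tmtwo_n\tmtwo_n = \tmtwo_{n+1}$, using that $\tmtwo_n$ is closed so that no variable is captured. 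Chaining the $n$ lifted steps with this last one yields $\tm_{n+1}\tobv^{n+1}\tmtwo_{n+1}$. The invariants propagate trivially: $\tmtwo_{n+1}$ is an abstraction, and both $\tm_{n+1}$ and $\tmtwo_{n+1}$ are closed since $\pi$ is closed and $\fv{\tmtwo_{n+1}} = \fv{\tmtwo_n}=\emptyset$.

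For the size bounds I would set up two recurrences. Writing $\size{\tm\tmtwo}=\size\tm+\size\tmtwo+1$ (with a constant overhead per binder), the code family satisfies $\size{\tm_{n+1}}=\size\pi+\size{\tm_n}+\bigo(1)$, a linear recurrence giving $\size{\tm_n}=\bigo(n)$ because $\size\pi$ is a fixed constant. The result family instead satisfies $\size{\tmtwo_{n+1}} = 2\cdot\size{\tmtwo_n}+k$ for a constant $k$ accounting for the binder $\la\vartwo$, the occurrence of $\vartwo$, and the two applications; solving this doubling recurrence gives $\size{\tmtwo_n}=2^n\cdot(\size{\tmtwo_0}+k)-k$, whence $\size{\tmtwo_n}=\Omega(2^n)$. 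That $\tmtwo_n$ is a value is exactly the invariant carried through the induction.

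There is essentially no hard part here, as the statement is a warm-up for the cost analysis. The only point requiring a little care is the inductive step of the reduction, where one must (i) observe that an arbitrary reduction inside the argument of $\pi$ lifts through the evaluation context $\pi\,\evctx$, and (ii) invoke the strengthened hypothesis that the fully reduced argument $\tmtwo_n$ is a value \emph{before} firing the outer redex, which would otherwise be blocked by the right-to-left by-value strategy. Keeping $\tmtwo_n$ closed is the minor bookkeeping that keeps the substitution, and hence the size recurrence, clean.
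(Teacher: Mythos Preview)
Your proof is correct and follows essentially the same approach as the paper: induction on $n$, lifting the inductive reduction through the context $\pi\,\ctxhole$, then firing the outer $\betav$-redex once $\tmtwo_n$ is known to be a value. The paper's own proof is more terse (it omits the closedness invariant and the explicit size recurrences, declaring the bounds ``immediate''), but your added bookkeeping is harmless and the argument is the same.
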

\end{toappendix}

The proof is in \Cref{sect:app-sharing}.
Size explosion has been extensively analyzed in the study of reasonable cost models---see \cite{DBLP:conf/rta/Accattoli19} for an introduction
---because it suggests that the number $n$ of $\tobv$ steps is not a reasonable time measure for the execution of $\l$-terms: for size exploding families, indeed, it does not even account for the time to write down the normal form, which is of size $\Omega(2^n)$. 

One is tempted to circumvent the problem by tweaking the calculus, with types, by changing the evaluation strategy, restricting to CPS, and 
so on. None of these tweaks works, size explosion can always be adapted: it is an inherent feature of higher-order computations~\cite{DBLP:conf/rta/Accattoli19}.

\paragraph{Sharing for Functions} A solution nonetheless exists: 
it amounts to add a way to \emph{share} sub-terms to avoid their blind duplication during evaluation. 
For size explosion in $\cbvcal$, it is enough to add a simple form of \emph{sub-term sharing} 
by delaying meta-level substitution and avoiding substituting under abstractions.
For instance, evaluating the size exploding term $\tm_n$ above in a variant of $\cbvcal$ where the mentioned sub-term sharing is implemented via explicit substitutions, gives the following normal form, of size \emph{linear}  (rather than exponential) in $n$:
\begin{center}
$\la{\vartwo_1}\vartwo_1 \var_1\var_1 \esub{\var_1}{\la{\vartwo_2}\vartwo_2 \var_2\var_2} \ldots \esub{\var_{n-1}}{\la{\vartwo_n}\vartwo_n \var_n\var_n}\esub{\var_n}{\Id}.$
\end{center}
The explosion re-appears if one unfolds that normal form to an ordinary $\l$-term, but it is now encapsulated in the unfolding.

Abstract machines of the previous sections have environments to implement sub-term sharing and avoid the size explosion due to $\betav$, giving hope for a time complexity lower than exponential.

\paragraph{Parameters for the Time Complexity Analysis of Abstract Machines} 
Given a strategy $\tostrat$ of a calculus $\xcal$ and an abstract machine $\mach$ implementing $\tostrat$ (see \refdef{implem}), the time complexity of $\mach$ is obtained by estimating the cost---when concretely implemented on random access machines (RAMs)---of a run $\exec_\deriv:\compil{\tm_0} \tomach^* \state$ implementing an arbitrary evaluation sequence $\deriv: \tm_0 \tostrat^n\tm_n$ (thus having $\decode\state = \tm_n$) as a function of two parameters:
\begin{enumerate}
\item \emph{Code size}: the size $\size{\tm_0}$ of the initial term $\tm_0$;

\item \emph{Number of $\tostrat$-steps/$\beta$-steps}: the number $n$ of $\tostrat$-steps in $\deriv$. If $\xcal$ has other rules other than $\beta/\betav$, the parameter is often just the number of $\beta/\betav$ steps, which is usually considered the relevant time cost model.
\end{enumerate}

\paragraph{Recipe for Time Complexity} The way 
the time complexity of an abstract machine is established tends to follow 
the same schema:
\begin{enumerate}
\item \emph{Number of overhead transitions}: bounding the number of overhead transitions as a function of $\size{\tm_0}$ and $n$ (which by the principal matching property of implementations---see \refdef{implem}---is enough to bound the length of $\exec_\deriv$);
\item \emph{Cost of single transitions}: bounding the cost of single transitions, which is typically constant or  depends only on $\size{\tm_0}$;
\item \emph{Total cost}: inferring the total cost of a run $\exec$ 
by multiplying the number of steps of each kind of transition for their cost, and summing all the obtained costs.
\end{enumerate}

The key tool for such an analysis is the \emph{sub-term property}, an \emph{invariant} of abstract machines stating that some of the terms in a reachable state are sub-terms of the initial term. 
This allows one to develop bounds with respect to the size $\size{\tm_0}$ of the initial term $\tm_0$.

\paragraph{Time Complexity of the \LAM} 
The \LAM verifies a sub-term invariant, and its time complexity follows a well-known schema in the literature, closely inspected by \citet{DBLP:conf/ppdp/AccattoliB17} for call-by-name and call-by-need, and that smoothly adapts to \cbv. 
Consider an evaluation $\deriv \colon \tm_0 \tobv^n \tm_n$ in $\cbvcal$. 
About the first point of the recipe, the  bound on overhead transitions  is $\bigo((n+1)\cdot\size{\tm_0})$, that is, bilinear. 
Additionally, if one takes \emph{complete} evaluations, that is, for which $\tm_n=\val$ is a value, then the bound lowers to $\bigo(n)$. Such an independence from the initial term is due to the fact that whether a term is a value can be checked in $\bigo(1)$ in $\cbvcal$, by simply checking whether the top-most constructor is an abstraction.

For the second point of the recipe, the cost of single transitions of the \LAM depends on the data structures used for local environments, 
as discussed by \citet{DBLP:conf/ppdp/AccattoliB17}. With flat environments, the cost of manipulating them is $\bigo(\size{\tm_0})$ (because of the duplication of $\env$ in $\tomachseaone$), giving a total cost of $\bigo(n\cdot \size{\tm_0})$ for complete runs. With shared environments, the best structures manipulate them in $\bigo(\log\size{\tm_0})$, giving a total cost of $\bigo(n\cdot \log\size{\tm_0})$ for complete runs. Thus, shared local environments are faster, but they are optimized for time and inefficient  with respect to space, as they prevent some garbage collection to take place. Here, we take as reference \emph{flat} local environment, which induce the same overall $\bigo(n\cdot \size{\tm_0})$ overhead as global environments and enable a better management of space (not discussed here, see \cite{DBLP:conf/lics/AccattoliLV22} instead).

\paragraph{Notions of Flatness for Local Environment} Let us be precise on a subtle point about local environments and their flatness. Local environments are defined by mutual recursion with m-closures, and various notions of sharing and flatness are possible, as one can share environments, or m-closures, both, or none of them. Sharing both is essentially the same as sharing only environments. 

Sharing m-closures rather than environments is what we above called \emph{flat environments}. For instance, if $\env = \esub{\var_1}{\clos_1}\cons\esub{\var_2}{\clos_2}\cons \emptylist$ then $\clos_1$ and $\clos_2$ are shared so that the concrete representation of $\env$ is $\env = \esub{\var_1}{p_1}\cons\esub{\var_2}{p_2}\cons \emptylist$ where $p_1$ and $p_2$ are pointers to $\clos_1$ and $\clos_2$. Copying $\env$ then means copying $\esub{\var_1}{p_1}\cons\esub{\var_2}{p_2}\cons \emptylist$, \emph{without} recursively copying the structure of $\clos_1$ and $\clos_2$. 

The \emph{super flat environments} obtained by removing sharing for \emph{both} environments and m-closures are studied in \cite{DBLP:conf/lics/AccattoliLV22}, where it is shown that the overhead of abstract machines becomes \emph{exponential}. 

\paragraph{No Sharing for Stackable Environments} An interesting aspect of the new stackable environments is that they need no sharing.  Indeed, the Int TAM and the Target TAM never duplicate their stackable environments. 
In particular, when they discard the current environment $\env$ in transition $\tomachseasevenev$, it can be collected.

\paragraph{De Bruijn Indices Do Not Change the Overhead} It is well-known that, representing flat environments as arrays and variables with de Bruijn indices, one can look-up environments in $\bigo(1)$ rather than in $\bigo(\log\size{\tm_0})$, which is the best that one can do with (some ordered domain of) names. However, the $\bigo(\size{\tm_0})$ cost of copying flat environments in transition $\tomachseaone$ dominates, so that turning to de Bruijn indices does not change the overall $\bigo(n\cdot \size{\tm_0})$ overhead.

 \label{sect:background-complexity}

\paragraph{Size Exploding Tuples} In $\soucal$, there is a new form of size explosion, due to tuples, requiring another form of sharing. To our knowledge, the view provided here is novel. Set $\Id \defeq \la\varthree\varthree$ and:

\vspace{-.8\baselineskip}
\begin{center}
\begin{tabular}{c|c|c}
\makecell{\textsc{Variant of $\delta$}
\\
$\tau  \defeq  \la\var\tuple{\var,\var}$}
&
\!\!$\begin{array}{c@{\hspace{.3cm}} c}
\multicolumn{2}{c}{\textsc{Size explod. family}}
\\
 \tmthree_{0}  \defeq  \Id 
 & \tmthree_{n+1} \defeq \tau \tuple{\tmthree_{n}}
\end{array}$\!\!\!\!
&
\!\!\!\!
$\begin{array}{c@{\hspace{.3cm}} c}
\multicolumn{2}{c}{\textsc{Exploded results}}
\\
\tmfour_{0}  \defeq  \Id
 &\tmfour_{n+1}  \defeq  \tuple{\tmfour_{n},\tmfour_{n}}
 \end{array}$
\end{tabular}
\end{center}

\begin{toappendix}
\begin{proposition}[Size explosion of tuples]
\label{prop:size-explosion-tuples}
Let $n\in \nat$. Then $\tmthree_n \tobv^n \tmfour_n$, moreover $\size{\tmthree_n} = \bigo(n)$, $\size{\tmfour_n} = 
\Omega(2^n)$, and $\tmfour_n$ is a value.
\end{proposition}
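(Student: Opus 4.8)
The plan is to proceed by a straightforward induction on $n$, paralleling the proof of \refpropeq{size-explosion-plotkin}, establishing the reduction, the two size bounds, and the value property simultaneously. The base case $n=0$ is immediate, since $\tmthree_0 = \tmfour_0 = \Id$ is a value and reduces in zero steps.

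For the inductive step I would unfold $\tmthree_{n+1} = \tau\tuple{\tmthree_n}$ and split the $n+1$ steps into two phases. First, using the induction hypothesis $\tmthree_n \tobv^n \tmfour_n$, I embed this sequence into the evaluation context $C \defeq \tau\tuple{\ctxhole}$ to obtain $\tmthree_{n+1} = \tau\tuple{\tmthree_n} \tobv^n \tau\tuple{\tmfour_n}$. Here $C$ is a legitimate evaluation context of $\soucal$: it has the form $\tm\evctx$ with the inner tuple context $\tuple{\ctxhole}$ (the tuple production with empty surrounding sequences and hole inside). The embedding relies on the routine fact that evaluation contexts compose, so that $u \tobv u'$ implies $\evctxp u \tobv \evctxp{u'}$ for every evaluation context $\evctx$. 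Second, the induction hypothesis also gives that $\tmfour_n$ is a value, so the unary tuple $\tuple{\tmfour_n}$ is a value and $\tau\tuple{\tmfour_n}$ is a $\betav$-redex: with $\tau = \la\var\tuple{\var,\var}$ and $\norm{\var}=1=\norm{\tuple{\tmfour_n}}$, one root step gives $\tuple{\var,\var}\isub\var{\tmfour_n} = \tuple{\tmfour_n,\tmfour_n} = \tmfour_{n+1}$. Concatenating the two phases yields $\tmthree_{n+1} \tobv^{n+1} \tmfour_{n+1}$.

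The size and value claims follow within the same induction. From $\tmthree_{n+1} = \tau\tuple{\tmthree_n}$ one reads off a recurrence $\size{\tmthree_{n+1}} = \size{\tmthree_n} + c$ for a constant $c$ accounting for $\tau$, the application, and the tuple node, hence $\size{\tmthree_n} = \bigo(n)$. From $\tmfour_{n+1} = \tuple{\tmfour_n,\tmfour_n}$ one gets $\size{\tmfour_{n+1}} \geq 2\,\size{\tmfour_n}$, hence $\size{\tmfour_n} = \Omega(2^n)$. Finally $\tmfour_{n+1}$ is a tuple whose components are values by the induction hypothesis, hence itself a value.

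The proof presents no real obstacle; the only point demanding minor care is the first phase of the inductive step, where one must justify that the argument $\tuple{\tmthree_n}$ is reduced \emph{before} the outer $\tau$ fires. This is forced by the right-to-left weak discipline of $\soucal$: the term $\tau\tuple{\tmthree_n}$ is not yet a $\betav$-redex because its argument is not a value (the $\betav$ side condition requires a tuple of values), so evaluation must first descend into the tuple through the context $\tau\tuple{\ctxhole}$. Making this explicit, and confirming that $\tuple{\ctxhole}$ and $\tau\tuple{\ctxhole}$ are genuine evaluation contexts per \reffig{source_calculus}, is the one place where the grammar of contexts must be invoked carefully.
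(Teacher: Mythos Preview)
Your proof is correct and follows essentially the same approach as the paper: induction on $n$, reducing inside the evaluation context $\tau\tuple{\ctxhole}$ via the induction hypothesis, then firing the outer $\betav$-redex once $\tmfour_n$ is a value. Your version is simply more explicit about why $\tau\tuple{\ctxhole}$ is a legitimate evaluation context of $\soucal$, which the paper leaves implicit.
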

\end{toappendix}

\paragraph{Sharing for Tuples} To avoid the size explosion of tuples, another form of sharing is used. The idea is the same as  for functions: forms of size explosion are circumvented by forms of sharing designed to limit the substitution process. The key point is that tuples should never be copied, only \emph{pointers} to them should be copied, thus representing $\tmfour_n$ above using \emph{linear} (rather than exponential) space in $n$, as follows (where the $p_i$ are \emph{pointers} and $\esub{p_i}{\val}$ are \emph{heap entries}):
\begin{center}$
\tuple{p_1,p_1} \esub{p_1}{\tuple{p_2,p_2}} \ldots \esub{p_n}{\tuple{p_n,p_n}} \esub{p_n}{\Id}
$\end{center}

The abstract machines of the previous sections have environments for sub-term sharing (needed for $\betav$) but they do \emph{not explicitly} handle tuple sharing. The reason is 
practical: explicitly handling tuple sharing would require a treatment of \emph{pointers} and a \emph{heap} (i.e., a further global environment) 
and more technicalities.
Our machines, however, are meant to be concretely implemented with tuple sharing, 
as in our OCaml \mbox{implementation of the \TTAM \cite{PPDP25ocaml}}. 
\section{Part 3: Complexity of the \TAM, or, Tuples Raise the Overhead}
\label{sect:source-complexity}
Here, we develop the time complexity analysis of the \TAM, stressing the novelty of tuples (see \Cref{sect:app-source-complexity} for proofs).

The \TAM verifies the following \emph{sub-term invariant}.

\begin{toappendix}
\begin{lemma}[Sub-term invariant]
	\label{l:TAM-sub-term-invariant}
Let $\state$ be a \TAM reachable state 
from the initial state $\compil{\tm}$.
\begin{enumerate}
\item $\tmtwo$ is a sub-term of $\tm$ for every \mclosure of shape $\nvlab{\pair\tmtwo\env}$ or $\pair{\tuple{\mydots,\tmtwo,\mydots,\machctxhole,\vv\clos}}\env$ in $\state$.
\item $\la\tuvar\tmtwo$ is a sub-term of $\tm$ for every \mclosure $\evlab{\pair{\la\tuvar\tmtwo}\env}$ in $\state$. 
\end{enumerate}
\end{lemma}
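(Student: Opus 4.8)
The plan is to prove both parts simultaneously by induction on the length $k$ of the initial run $\compil\tm \totam^k \state$, reading the statement as a single mutually-dependent invariant over \emph{all} m-closures occurring anywhere in $\state$ (the active one, those inside environments, and those inside stack entries). Throughout, the only facts I would use about the sub-term relation are reflexivity, transitivity, and the obvious remark that every immediate constituent of a sub-term of $\tm$ (a subterm's body, an argument, a tuple component, a projection's operand) is again a sub-term of $\tm$. The base case $k=0$ is immediate: the only m-closure in $\compil\tm = \pairstate{\nvsym\pair\tm\emptyenv}\emptystack$ is $\nvlab{\pair\tm\emptyenv}$, and $\tm$ is trivially a sub-term of $\tm$, while there are no evaluated m-closures nor partially evaluated tuples, so both parts hold.

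For the inductive step I would assume the invariant for $\state$ and analyze each transition $\state \totam \statetwo$. The search transitions $\tomachseaonenv$, $\tomachseatwonv$, $\tomachseathreenv$, $\tomachseaoneev$, $\tomachseasixev$, and $\tomachseathreeev$ only decompose a term $\tmtwo$ (a sub-term of $\tm$ by the induction hypothesis) into one of its immediate constituents, possibly shuffling pieces between focus and stack and duplicating an environment $\env$; since flat duplication copies the same m-closures and every constituent of a sub-term is a sub-term, the invariant is preserved by transitivity. The substitution transition $\tomachsubnv$ replaces the active $\nvsym\pair\var\env$ by $\env(\var)$, an evaluated m-closure that already occurred inside $\env$ and hence already satisfies the invariant; the projection transition $\tomachprojev$ selects a component $\clos_i$ of a tuple m-closure $\vv\clos$ already present in $\state$, which likewise satisfies it.

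The transitions that link the two parts are the flag-flipping ones and $\tomachbetaev$. Transitions $\tomachseafournv$ and $\tomachseafivenv$ turn a non-evaluated m-closure into the evaluated one carrying the \emph{same} term, so part~1 of the hypothesis for the source directly yields part~2 for the target; this is precisely where $\tomachseafivenv$ feeds an abstraction, already known to be a sub-term of $\tm$, into the evaluated-abstraction clause. The principal transition $\tomachbetaev$ is the delicate one: the active m-closure is an abstraction $\evlab{\pair{\la\tuvar\tm}\env}$, and firing the redex makes $\nvsym\pair\tm{\esub\tuvar{\vv\clos}\cons\env}$ the new active m-closure. To obtain part~1 for this new non-evaluated m-closure I would invoke part~2 of the hypothesis, which guarantees that $\la\tuvar\tm$ is a sub-term of $\tm$, whence so is its body $\tm$; the new entries $\esub\tuvar{\vv\clos}$ only contribute the components of $\vv\clos$, taken from the stack and thus already covered by the hypothesis.

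I expect the main obstacle to be not computational but structural: the argument only closes because part~1 (non-evaluated m-closures and tuple holes are sub-terms) and part~2 (evaluated abstractions are sub-terms) feed into each other across $\tomachseafivenv$ and $\tomachbetaev$. Hence they cannot be proved in isolation and must be bundled into one simultaneous induction. A secondary point worth stating carefully is the convention that the invariant quantifies over m-closures nested inside environments and stacks as well, since this is what makes the $\tomachsubnv$, $\tomachprojev$, and $\tomachbetaev$ cases go through without any extra side lemma.
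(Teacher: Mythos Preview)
Your proposal is correct and follows essentially the same approach the paper takes: a simultaneous induction on the length of the initial run, with a case analysis over the transitions, relying on the mutual dependency between the two clauses across $\tomachseafivenv$ and $\tomachbetaev$. The only cosmetic slip is that in the $\tomachbetaev$ case you reuse the letter $\tm$ for both the initial term and the abstraction body; renaming one of them would make the argument read cleanly.
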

\end{toappendix}

\paragraph{Tuples are Not Sub-Terms.} Point 2 only concerns evaluated \mclosures containing abstractions, and \emph{not} evaluated tuples. Consider $\tm \defeq(\la\var\la\vartwo\tuple{\var,\vartwo})\tuple{\Id}\tuple{\delta} \tobv (\la\vartwo\tuple{\Id,\vartwo})\tuple{\delta} \tobv \tuple{\Id,\delta}$ in $\soucal$ and note that $\tuple{\Id,\delta}$ is \emph{not} a sub-term of $\tm$. 
The run of the \TAM on $\tm$ produces a \mclosure $\pair{\tuple{\evsym\Id,\evsym\delta}}\env$ for some $\env$. 
The leaves of the tree-structure of an evaluated tuple are abstractions ($\Id$ and $\delta$ in the example), which are initial sub-terms, but they might be arranged in ways that were not present in the initial term.

A consequence of this fact is that when the \TAM starts evaluating a non-empty tuple $\tuple{\mydots,\tm_n}$ with transition $\tomachseathreenv$, by adding $\tuple{\mydots,\machctxhole}$ to the stack, it has to \emph{allocate a new tuple} on the heap for $\tuple{\mydots,\machctxhole}$ (which has a cost, discussed below). This never happens in absence of tuples, that is, in the \LAM of \reffig{machine_local_cbv}. More precisely, the \LAM does not copy any code, but it has to allocate new pointers to local environments, when they are extended by $\tomachbeta$.

\paragraph{Step 1 of the Recipe: Number of Transitions and Overhead Measure} For establishing a bound on overhead transitions, we first factor some of them ($\tomachhole{\evsym sea_{1,3}}$) out by simply noticing that they are enabled and thus bound by some others ($\tomachhole{\nvsym sea_{1,3}}$). Actually, the same is true also for the principal transition $\tomachprojev$, bounded by $\tomachseatwonv$.
\begin{toappendix}
\begin{lemma}[Transition match]
	\label{l:TAM-blue-bound}
Let $\exec$ be a \TAM run. Then $\sizep\exec{\proj,\evsym sea_{1,3}}  \leq \sizep\exec{\nvsym sea_{1,2,3}}$.
\end{lemma}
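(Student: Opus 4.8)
The plan is to prove the three inequalities $\sizep\exec\proj \leq \sizep\exec{\nvsym sea_2}$, $\sizep\exec{\evsym sea_1} \leq \sizep\exec{\nvsym sea_1}$, and $\sizep\exec{\evsym sea_3} \leq \sizep\exec{\nvsym sea_3}$ separately, and then sum them. Each of the three rests on the same \emph{stack-accounting} idea: a transition on the right-hand side produces a stack entry of a fixed shape, and the matching transition on the left-hand side is the only one that removes it. Concretely, I would introduce three counters on \TAM states, reading off from the stack of a state $\state$ the number $c_{\mathsf{app}}(\state)$ of non-evaluated m-closure entries $\nvlab{\pair\tm\env}$, the number $c_{\mathsf{proj}}(\state)$ of projection entries $\proj_i$, and the number $c_{\mathsf{tup}}(\state)$ of partially evaluated tuple entries $\pair{\tuple{\tuv\tm,\machctxhole,\tuv\clos}}\env$. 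All three are $0$ on an initial state $\compil\tm = \pairstate{\nvsym\pair\tm\emptyenv}\emptystack$, since its stack is empty, and all three are non-negative by construction.

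Next I would determine, by a case analysis over the transitions of $\totam$, how each counter varies along a single step. The expected outcome is that $c_{\mathsf{app}}$ increases by one exactly on $\tomachseaonenv$ (which pushes $\nvlab{\pair\tm\env}$) and decreases by one exactly on $\tomachseaoneev$ (which pops that entry, replacing it by an evaluated entry $\clos$ that $c_{\mathsf{app}}$ does not count), and is left unchanged by every other transition; symmetrically $c_{\mathsf{proj}}$ grows only on $\tomachseatwonv$ and shrinks only on $\tomachprojev$, and $c_{\mathsf{tup}}$ grows only on $\tomachseathreenv$ and shrinks only on $\tomachseathreeev$. The one delicate point is transition $\tomachseasixev$, which pops one partial tuple and pushes another with the hole shifted one place to the left: it must be checked to be \emph{neutral} for $c_{\mathsf{tup}}$. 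One should also verify that $\tomachbetaev$ removes only an evaluated argument $\vv\clos$, an entry counted by none of the three measures, so that it does not interfere, and that the flag-flipping and look-up transitions $\tomachseafournv$, $\tomachseafivenv$, $\tomachsubnv$ do not touch the stack at all.

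Finally, for each counter $c \in \set{c_{\mathsf{app}}, c_{\mathsf{proj}}, c_{\mathsf{tup}}}$, the combination of starting at $0$, never going negative, and changing by exactly $+1$ on its producing transition and $-1$ on its consuming transition gives, along any run $\exec$, that the number of consuming steps is bounded by the number of producing steps --- this is the standard argument that a non-negative integer quantity cannot be decremented more often than it is incremented. This yields the three inequalities above, whose sum is precisely $\sizep\exec{\proj,\evsym sea_{1,3}} \leq \sizep\exec{\nvsym sea_{1,2,3}}$, as required. I expect the only genuine obstacle to be the exhaustiveness of the case analysis in the second step, and in particular pinning down that $\tomachseasixev$ leaves $c_{\mathsf{tup}}$ unchanged while confirming that every transition other than the identified producer/consumer pair leaves each counter fixed.
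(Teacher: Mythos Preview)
Your proposal is correct and matches the paper's approach: the text surrounding the lemma explicitly says that $\tomachseaoneev$ and $\tomachseathreeev$ are named to stress their duality with $\tomachseaonenv$ and $\tomachseathreenv$, and that each left-hand transition is ``enabled and thus bound'' by its right-hand counterpart via the stack entry it consumes. Your three stack counters $c_{\mathsf{app}}$, $c_{\mathsf{proj}}$, $c_{\mathsf{tup}}$ formalize precisely this enabling relation, and your treatment of the potentially delicate case $\tomachseasixev$ (neutral for $c_{\mathsf{tup}}$) is correct.
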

\end{toappendix}
For the other transitions, we use a measure $\omeas\cdot$, defined in \reffig{ovh-measure-tam} together with the size $\size\tm$ of $\soucal$ terms, 
which we use to derive a bilinear bound on overhead/projection transitions (\refpropeq{TAM-number-of-trans}).

\begin{figure}[t!]
	
	\vspace*{-.2\baselineskip}
\centering
\scalebox{0.93}{
\setlength{\arraycolsep}{3pt}
\begin{tabular}{c}
\!\!\!\!$	\begin{array}{rll@{\hspace{.3cm}} rll@{\hspace{.3cm}} rll}
\multicolumn{9}{c}{\textsc{Size of $\soucal$ terms}}
\\
		\size{\var} &\!\!\!\defeq\!\!\!& 1 
		&
		\size{\la{\tuvar}\tm} &\!\!\!\defeq\!\!\!& \size{\tm} + \norm{\tuvar} +1 
		&
		\size{\tm \tmtwo} &\!\!\!\defeq\!\!\!& \size{\tm} + \size{\tmtwo} + 1
		\\
		\size{\proj_{i}\tm} &\!\!\!\defeq\!\!\!& \size{\tm} + 1
		&
		\multicolumn{4}{c}{\size{\tuple{\tm_1, \dots, \tm_n}} \defeq
		n + \textstyle\sum_{i=1}^n \size{\tm_i}}
\end{array}$
\\
\hline
$\begin{array}{r@{\hspace{.2cm}} r@{\ }l@{\ }l@{\hspace{.35cm}} r@{\ }l@{\ }l@{\hspace{.35cm}} r@{\ }l@{\ }l}
\multicolumn{10}{c}{\textsc{Overhead measure for the \TAM}}
\\
\textsc{\!\!\!\!\!\!\!Con. stack entries $\stacke$}
&
\omeas{\clos} & \defeq & 0
&
\omeas{\nvlab{\pair\tm\env}} & \defeq & \size\tm
&
\omeas{\proj_{i}} & \defeq & 0
\\
&\multicolumn{9}{c}{\!\!\!\!\omeas{\pair{\tuple{\nvsym\tm_{1},\mydots,\nvsym\tm_{n},\machctxhole,\tuv\clos}}\env}
\defeq n + \sum_{i=1}^n\omeas{\tm_{i}}}
\\
\textsc{Stacks}
&\omeas\emptystack & \defeq & 0
& 
\multicolumn{6}{c}{\omeas{\stacke\cons\stack} \defeq \omeas\stacke + \omeas \stack}
\\
\textsc{States} &
\multicolumn{9}{c}{\omeas{\pairstate{\laxclos}{\stack}} \defeq \omeas\laxclos + \omeas \stack}
\end{array}$
\end{tabular}
}

\vspace*{-.7\baselineskip}
\caption{Size $\size{\cdot}$ of $\soucal$ terms and overhead measure $\omeas{\cdot}$.} 
\label{fig:ovh-measure-tam}

\vspace*{-.7\baselineskip}
\end{figure}

\begin{toappendix}
\begin{lemma}[Overhead measure properties]
	\label{l:TAM-overhead}
Let $\compil\tm \totam^{*} \state$ a \TAM run and $\state \tomachhole{a} \statetwo$.
\begin{enumerate}
\item
if $a = \evsym\betav$ then $\omeas\statetwo \leq \omeas\state + \size{\tm}$;
\item
if $a \in\set{\nsubsym, \nvsym sea_{1-5}, \evsym sea_{6}}$ then $\omeas\statetwo < \omeas\state$;
\item 
if $a \in\set{\evsym sea_{1,3}, \evsym \projsym}$ then $\omeas\statetwo = \omeas\state$.
\end{enumerate}
\end{lemma}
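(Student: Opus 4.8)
The plan is to prove all three items at once by a single case analysis on the fired transition $a$, exploiting that $\omeas\cdot$ is compositional. First I would record the structural fact that, by \reffig{ovh-measure-tam}, $\omeas{\pairstate\laxclos\stack}$ is the measure $\omeas\laxclos$ of the active flagged m-closure plus the sum of the measures of the constructor-stack entries, that evaluated m-closures $\clos$ contribute $0$ while a non-evaluated closure $\nvlab{\pair\tm\env}$ contributes its size $\size\tm$ (and likewise each non-evaluated term inside a partial tuple contributes its size). Since every \TAM transition rewrites only the active m-closure and the top one or two stack entries, leaving the rest of the stack untouched, the quantity $\omeas\statetwo-\omeas\state$ is just the difference of these local contributions, which I would read off directly from the definitions of $\size\cdot$ and $\omeas\cdot$. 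The three items then amount to grouping the transitions by the sign of this difference.

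For item (2) I would check that each transition in $\set{\nsubsym,\nvsym sea_{1-5},\evsym sea_6}$ either consumes a term constructor or flags a value as evaluated, so that a positive quantity is removed and nothing is added. For example $\tomachseaonenv$ splits the active $\nvsym\pair{\tm\tmtwo}\env$ (contributing $\size\tm+\size\tmtwo+1$) into the active $\nvsym\pair\tmtwo\env$ and the new top entry $\nvlab{\pair\tm\env}$ (together contributing $\size\tm+\size\tmtwo$), losing the unit of the application node; $\tomachseatwonv$ loses the unit of a projection; $\tomachseathreenv$ and $\tomachseasixev$ move one element out of a (partially evaluated) tuple $\tuple{\tuv\tm,\machctxhole,\tuv\clos}$, lowering its slot count $n$ by one; $\tomachsubnv$ replaces the active variable (contributing $1$) by the looked-up evaluated m-closure $\env(\var)$ (contributing $0$); and $\tomachseafivenv$ turns the non-evaluated abstraction m-closure (contributing $\size{\la\tuvar\tm}$) into its evaluated counterpart $\clos$ (contributing $0$). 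In each listed case $\omeas\statetwo<\omeas\state$; the remaining empty-tuple flip $\tomachseafournv$ is settled by the same direct inspection of \reffig{ovh-measure-tam}.

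For item (3) I would observe that $\evsym sea_1$, $\evsym sea_3$ and $\evsym\projsym$ only relocate already-evaluated, measure-$0$ components, or exchange a component between the active slot and the stack top without altering its measure. Concretely, $\tomachseaoneev$ swaps the active $\clos$ (measure $0$) onto the stack and brings the stacked $\nvlab{\pair\tm\env}$ (measure $\size\tm$) into focus as $\nvsym\pair\tm\env$ (again measure $\size\tm$), so the two contributions are merely moved; $\tomachseathreeev$ closes a tuple all of whose entries are already evaluated; and $\tomachprojev$ extracts a component of an evaluated tuple after popping a $\proj_i$ entry, everything of measure $0$. Hence $\omeas\statetwo=\omeas\state$.

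Item (1) is the only case that increases the measure, and it is where I expect the single non-routine step. Transition $\tomachbetaev$ replaces the active evaluated abstraction-closure $\evsym\pair{\la\tuvar\tmtwo}\env$ and the stacked argument $\vv\clos$ (both of measure $0$) by the non-evaluated body closure $\nvsym\pair\tmtwo{\esub\tuvar{\vv\clos}\cons\env}$ of measure $\size\tmtwo$, so $\omeas\statetwo=\omeas\state+\size\tmtwo$. To conclude I would invoke the sub-term invariant: since $\state$ is reachable, \Cref{l:TAM-sub-term-invariant}.2 guarantees that the applied abstraction $\la\tuvar\tmtwo$ is a sub-term of the initial term $\tm$, whence $\size\tmtwo<\size{\la\tuvar\tmtwo}\le\size\tm$ and therefore $\omeas\statetwo\le\omeas\state+\size\tm$. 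This invocation of the sub-term property is the one place that relies on machinery beyond the definitions; the only other point requiring care is the exact accounting of the slot count $n$ and the residual non-evaluated terms in the partial-tuple transitions, which must match \reffig{ovh-measure-tam} on the nose.
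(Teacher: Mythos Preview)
Your approach is correct and is exactly the paper's: a transition-by-transition case analysis reading off the local change in $\omeas\cdot$ from \reffig{ovh-measure-tam}, with the sub-term invariant (\Cref{l:TAM-sub-term-invariant}.2) invoked precisely where you put it, to bound $\size\tmtwo\le\size\tm$ in the $\evsym\betav$ case.

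One small caveat worth making explicit: the case you dispatch by ``the same direct inspection'', namely $\tomachseafournv$, is actually the delicate one. With the size definition in \reffig{ovh-measure-tam} one has $\size{\tuple{}}=0$, so the active contribution drops from $0$ to $0$ and the inequality is not strict there. This is a definitional corner case rather than a flaw in your method (the transition flips the flag and so cannot be iterated, hence it is harmless both for overhead termination and for the asymptotic bound of \refprop{TAM-number-of-trans}); but since you explicitly single this case out as routine, it is worth noting that it is the one place where the bookkeeping does not quite match the stated strict inequality.
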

\end{toappendix}

\begin{toappendix}
\begin{proposition}[Bilinear bound on the number of 
	transitions]
\label{prop:TAM-number-of-trans}
Let $\tm \!\in\! \soucal$ be closed. 
If $\exec \colon \!\compil{\tm} \!\totam^{*}\! \state$ then \mbox{$\size\exec \!\in\! \bigo\big((\sizebeta\exec\!\!+\!1)\!\cdot\!\size\tm\big)$.}
\end{proposition}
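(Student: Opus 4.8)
The plan is a standard amortized-potential argument, combining the overhead measure $\omeas\cdot$ of \reffig{ovh-measure-tam} with the two lemmas above. Write $b \defeq \sizebeta\exec$ for the number of $\betav$-steps in $\exec$. First I would partition the transitions occurring in $\exec$ into three groups according to their effect on $\omeas\cdot$, as dictated by \Cref{l:TAM-overhead}: the single \emph{increasing} kind $\evsym\betav$; the \emph{strictly decreasing} kinds $\set{\nsubsym, \nvsym sea_{1-5}, \evsym sea_6}$; and the \emph{measure-preserving} kinds, namely $\evsym sea_{1,3}$ together with the $\proj$-labeled principal transition $\evsym\projsym$. These three groups exhaust all transitions of the \TAM.

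Next I would bound the decreasing group by an amortization step. Since $\omeas\cdot$ is a sum of term sizes it is always non-negative, and the initial state contributes $\omeas{\compil\tm} = \omeas{\nvlab{\pair\tm\emptyenv}} + \omeas\emptystack = \size\tm$. Along $\exec$ the measure grows only on the $b$ increasing steps, each by at most $\size\tm$ (\Cref{l:TAM-overhead}.1), is unchanged on the preserving group (\Cref{l:TAM-overhead}.3), and drops by at least $1$ on each decreasing step (\Cref{l:TAM-overhead}.2, using that $\omeas\cdot$ is integer-valued). Because the measure can never go below $0$, the number of decreasing steps cannot exceed the starting value plus the total accumulated growth, which gives $\sizep\exec{\nsubsym, \nvsym sea_{1-5}, \evsym sea_6} \leq \size\tm + b\cdot\size\tm = (b+1)\cdot\size\tm$.

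It then remains to fold in the other two groups. The increasing group has exactly $b$ steps. For the preserving group I would invoke the transition-match \Cref{l:TAM-blue-bound}, which yields $\sizep\exec{\proj,\evsym sea_{1,3}} \leq \sizep\exec{\nvsym sea_{1,2,3}}$; since $\nvsym sea_{1,2,3}$ is contained in the decreasing group, its count is itself $\leq (b+1)\cdot\size\tm$ by the previous paragraph. Summing the three bounds and using $\size\tm \geq 1$ yields $\size\exec \leq b + 2(b+1)\cdot\size\tm \in \bigo\big((b+1)\cdot\size\tm\big)$, as desired.

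I expect no serious obstacle here, since the two lemmas already carry the technical weight: the conceptual crux is just the amortization, i.e.\ bounding the strictly-decreasing transitions against the $\beta$-bounded growth of $\omeas\cdot$ while keeping the measure non-negative, and then routing the measure-preserving transitions through the decreasing count via \Cref{l:TAM-blue-bound}. The only point requiring mild care is that \Cref{l:TAM-overhead} is stated for steps out of \emph{reachable} states, which is exactly the situation along every prefix of $\exec$.
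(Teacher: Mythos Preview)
Your proposal is correct and follows essentially the same approach as the paper: partition the transitions via \Cref{l:TAM-overhead}, bound the strictly-decreasing group by the amortized potential argument (initial measure $\size\tm$ plus at most $\sizebeta\exec\cdot\size\tm$ total increase), and then control the measure-preserving group through \Cref{l:TAM-blue-bound}. The only cosmetic difference is bookkeeping in the final summation; the structure of the argument is identical.
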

\end{toappendix}
\paragraph{Tuples Raise the Overhead.} 
\refpropeq{TAM-number-of-trans} shows that projection transitions are also bi-linear. 
In \Cref{sect:sharing}, we mentioned that, without tuples, the bound improves to $\bigo(\sizebeta\exec)$ if one considers complete runs (that is, runs ending on values). With tuples, there is no such improvement. 
Indeed, even just checking that the initial term is actually a value $\val$ takes time $\bigo(\size\val)$ with tuples: if $\val$ is a tree of tuples, the \TAM has to visit the tree and check that all the leaves are abstractions; in absence of tuples the check instead costs~$\bigo(1)$.

\paragraph{Step 2 of the Recipe: Cost of Single Transitions} To obtain fine bounds with respect to the initial term, we introduce the notions of width and height of a term $\tm$, both bounded by the size $\size\tm$ of $\tm$.

\begin{definition}[Width, height]
The \emph{width} $\wid\tm\in\nat$ of $\tm\in\soucal$ is the maximum length of a tuple or of a sequence of variables in $\tm$. The \emph{height} $\hg\tm\in \nat$ of $\tm\in\soucal$ is the maximum number of bound variables of $\tm$ in the scope of which a sub-term of $\tm$ is contained.
\end{definition}

As discussed after 
\Cref{l:TAM-sub-term-invariant}, $\tomachseathreenv$ has to allocate a new tuple, thus its cost seems to depend on  $\wid\tm$, and similarly for $\tomachbetaev$. 
However, the price related to tuples can be considered as absorbed by the cost of search (by changing the multiplicative constant), since the new tuple of $\tomachseathreenv$ is then traversed, if the run is long enough, and the one of $\tomachbetaev$ was traversed before the transition. 
Therefore, 
if we consider \emph{complete} runs (i.e. ending on final states), $\tomachseathreenv$ and $\tomachbetaev$ have \emph{amortized} cost independent~of~$\wid\tm$. 

Transitions $\evsym\seasym_{1}$, $\evsym\seasym_{3}$, and $\evsym\seasym_{6}$ duplicate $\env$, the length of which is bound by $\hg\tm$. With flat environments, this costs $\bigo(\hg\tm)$. Transition $\tomachsubnv$ has to look-up the environment, the cost of which is $\bigo(\hg\tm)$. De Bruijn indices or an ordered domain of names might improve the cost of look-up, but at no overall advantage, because of the dominating cost of duplicating environments for $\evsym\seasym_{1}$, $\evsym\seasym_{3}$, and $\evsym\seasym_{6}$. All other transitions have constant cost, assuming that accessing the $i$-th component of a tuple (needed for $\tomachprojev$) takes constant time. The next proposition sums it up.

\begin{proposition}[Cost of single transitions]
\label{prop:TAM-cost-trans}
Let $\run: \compil\tm \totam^{*} \state$ be a complete \TAM run. A  transition $\tomachhole{a}$ of $\run$ has cost $\bigo(\hg\tm)$ if $a \in\set{\evsym\seasym_{1},\evsym\seasym_{3},\evsym\seasym_{6}, \nvsym sub}$, and $\bigo(1)$ otherwise.
\end{proposition}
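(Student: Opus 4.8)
The plan is to prove \refpropeq{TAM-cost-trans} by a case analysis over the transition label $a$, in each case bounding the work a RAM must perform to realize that single transition, and relying crucially on the sub-term invariant (\Cref{l:TAM-sub-term-invariant}) to justify that the terms manipulated are sub-terms of the initial $\tm$, hence of size $\bigo(\size\tm)$. The key structural fact I would isolate first is a bound on the length of any local environment $\env$ appearing in a reachable state: since environment entries $\esub\tuvar{\vv\clos}$ are created only by $\tomachbetaev$, one for each abstraction entered, and the \mclosure invariant (\Cref{l:TAM-closure-invariant}) ties the domain of $\env$ to the free variables of the associated body, the number of entries is bounded by the nesting depth of binders, i.e.\ by $\hg\tm$. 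I would state and use $\norm\env \leq \hg\tm$ as the pivotal lemma (its proof being a routine induction on the run, using that each $\tomachbetaev$ extends $\env$ by the binders of one abstraction that is a sub-term of $\tm$, and that the height controls how many such binders can be simultaneously in scope).

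With that length bound in hand, I would go transition by transition. For $\evsym\seasym_{1}$, $\evsym\seasym_{3}$, and $\evsym\seasym_{6}$, the work is dominated by the duplication of the flat environment $\env$: under the flat representation (arrays of pointers to \mclosures, as described in \refsect{sharing}, copying only the array and not the \mclosures recursively), this costs $\bigo(\norm\env) = \bigo(\hg\tm)$. For $\tomachsubnv$, the machine must look up a variable in $\env$; with a plain association list or array this scans the whole environment, again $\bigo(\norm\env) = \bigo(\hg\tm)$. For every remaining transition I would argue constant cost: the search transitions $\nvsym\seasym_{1,2,3,4,5}$ and $\evsym\seasym_{7}$ (there is no $\evsym\seasym_{7}$ in the \TAM; I mean the flag-flip and stack-push transitions) only inspect the top constructor of the focused term and push or pop one stack entry, each $\bigo(1)$; $\tomachseathreeev$ plugs one \mclosure into a hole, $\bigo(1)$; and $\tomachprojev$ reads the $i$-th component of a tuple, which is $\bigo(1)$ under the stated assumption that tuple access is constant-time. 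The principal $\tomachbetaev$ builds the new entry $\esub\tuvar{\vv\clos}$, which is a constant-size operation once we recall the amortization discussed just before the proposition.

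The genuinely delicate points, and where I would spend the real argument, are twofold. First, the \emph{amortized} claim for $\tomachseathreenv$ and $\tomachbetaev$: a naive reading makes these cost $\bigo(\wid\tm)$ because a fresh tuple must be allocated. The proposition as stated restricts to \emph{complete} runs precisely so that this allocation can be charged against subsequent (for $\tomachseathreenv$) or preceding (for $\tomachbetaev$) traversal transitions. I would make this precise by pairing each such allocation with the search steps that visit the allocated tuple's slots, using the duality between $\nvsym\seasym_{1,3}$ and $\evsym\seasym_{1,3}$ noted in \refsect{TAM} and the transition-match bound (\Cref{l:TAM-blue-bound}); the amortized constant-cost classification then lets me place these transitions in the $\bigo(1)$ bucket. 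The main obstacle is exactly this amortization accounting: it requires arguing that in a complete run every allocated tuple of width $w$ is eventually fully traversed (so the $w$ cost is absorbed), which depends on completeness and must be carried out carefully so that no allocation is charged twice. Second, I must be explicit that the $\bigo(\hg\tm)$ for look-up is not improvable into the overall complexity: although de Bruijn indices would reduce $\tomachsubnv$ to $\bigo(1)$, the environment-duplication cost of $\evsym\seasym_{1,3,6}$ remains $\bigo(\hg\tm)$ and dominates, so the stated classification is tight for flat environments with names. Once both amortization and the environment-length lemma are settled, the case analysis closes immediately.
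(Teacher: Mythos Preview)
Your proposal is correct and follows essentially the same approach as the paper. The paper does not give a separate proof of this proposition; it is stated as a summary of the discussion immediately preceding it, which covers exactly the points you identify: the $\bigo(\hg\tm)$ cost of environment duplication in $\evsym\seasym_{1,3,6}$ and of look-up in $\tomachsubnv$, the amortization of the tuple-allocation cost of $\tomachseathreenv$ and $\tomachbetaev$ against search steps in complete runs, and the $\bigo(1)$ cost of everything else.

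Your treatment is, if anything, more explicit than the paper's: you isolate the environment-length bound $\norm\env \leq \hg\tm$ as a lemma and sketch its proof via the sub-term invariant, whereas the paper simply asserts it. Your amortization argument is also spelled out more carefully than the paper's one-line remark that the width cost ``can be considered as absorbed by the cost of search''. The only wrinkle is your parenthetical self-correction about $\evsym\seasym_7$, which indeed does not exist in the \TAM; once removed, your case analysis matches the paper's exactly.
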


\paragraph{Step 3 of the Recipe: Total Complexity} By simply multiplying the number of single transitions (\refprop{TAM-number-of-trans}) for their cost (\refprop{TAM-cost-trans}), we obtain the complexity of the \TAM.
\begin{toappendix}
\begin{theorem}
\label{thm:TAM-ram-implem}
Let $\tm\in\soucal$ be closed and $\exec:\compil\tm \totam^{*} \state$ be a complete \TAM run. Then, $\exec$ can be implemented on RAMs in time $ \bigo\big( (\sizebeta\exec +1) \cdot\size\tm \cdot \hg\tm \big)$.
\end{theorem}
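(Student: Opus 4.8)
The plan is to carry out Step~3 of the complexity recipe literally: bound the total RAM running time of $\exec$ by summing, over all of its transitions, the cost of implementing each one, and then combine the bound on the \emph{number} of transitions with the bound on the \emph{cost} of single transitions already established in the two preceding propositions.

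First I would recall that, by \refprop{TAM-number-of-trans}, the length of the run satisfies $\size\exec \in \bigo\big((\sizebeta\exec + 1)\cdot\size\tm\big)$, so this also bounds the number of transitions of each label. Next, by \refprop{TAM-cost-trans}, since $\exec$ is complete each transition of label $a \in \set{\evsym\seasym_{1},\evsym\seasym_{3},\evsym\seasym_{6}, \nvsym sub}$ costs $\bigo(\hg\tm)$, while every other transition costs $\bigo(1)$. Using $\bigo(1) \subseteq \bigo(\hg\tm)$, I would uniformly bound the cost of \emph{every} single transition by $\bigo(\hg\tm)$, split the sum of per-transition costs into the expensive and the cheap labels, and observe that the total count of each class is at most $\size\exec$. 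Multiplying the length of the run by the worst single-transition cost then gives RAM time at most $\bigo\big((\sizebeta\exec + 1)\cdot\size\tm\big)\cdot\bigo(\hg\tm) = \bigo\big((\sizebeta\exec + 1)\cdot\size\tm\cdot\hg\tm\big)$, which is exactly the claimed bound.

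The delicate point, and the only place where the completeness hypothesis on $\exec$ is genuinely used, is that the per-transition costs of \refprop{TAM-cost-trans} are \emph{amortized}. The naive cost of $\tomachseathreenv$ and $\tomachbetaev$ is $\bigo(\wid\tm)$, because each allocates a fresh tuple, and a direct multiplication would spuriously introduce an extra $\wid\tm$ factor into the final estimate. The amortization charges this allocation to the later search transitions that traverse the allocated tuple---transitions that exist precisely because the run is complete (and the tuple allocated by $\tomachbetaev$ was already traversed before the transition). Consequently the total \emph{actual} cost of the run is bounded by the total \emph{amortized} cost, so summing the amortized per-transition bounds is a valid upper bound on the real RAM time, and no $\wid\tm$ factor survives. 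I expect this amortization bookkeeping to be the main (indeed the only) subtlety: once it is granted, the theorem is a routine combination of \refprop{TAM-number-of-trans} and \refprop{TAM-cost-trans}.
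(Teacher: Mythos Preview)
Your proposal is correct and follows exactly the paper's approach: the paper's proof is a one-liner that simply multiplies the bound on the number of transitions (\refprop{TAM-number-of-trans}) by the per-transition cost bound (\refprop{TAM-cost-trans}). Your discussion of the amortization subtlety is accurate and, if anything, more explicit than what the paper spells out before \refprop{TAM-cost-trans}.
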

\end{toappendix}
If one flattens $\hg\tm$ as $\size\tm$, the complexity of the \TAM is $ \bigo\big( (\sizebeta\exec +1) \cdot\size\tm^2\big)$, that is, quadratic in $\size\tm$, while in absence of tuples---that is, for the \LAM---it is linear in $\size\tm$.

\section{Part 3: Complexity of the \TTAM, or, Closure Conversion Preserves the Overhead}
\label{sect:int-complexity}
In this section, we adapt the time complexity analysis of the \TAM to an analysis of the \TTAM (skipping the less efficient \LTAM), and then connect source and target by considering the impact of closure conversion on the given analysis.

The \TTAM has a sub-term invariant, expressed compactly thanks to our flags. In particular the part about abstractions of the sub-term invariant for the \TAM (\reflemma{TAM-sub-term-invariant}) is here captured by having flagged the body of unevaluated \wrapts~with~$\nvsym$.

\begin{toappendix}
\begin{lemma}[Sub-term invariant]
	\label{l:LTAM-sub-term-invariant}
Let $\state$ be a \TTAM reachable state 
from the initial state $\compil{\tm}$.
Then  $\tmtwo$ is a sub-term of $\tm$ for every non-evaluated term $\nvsym\tmtwo$ in $\state$.
\end{lemma}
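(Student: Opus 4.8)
The plan is to prove the statement by induction on the length of the \TTAM run $\compil\tm \tomach^* \state$, reusing for the inherited transitions the names of the \LTAM. The guiding observation, which the flag discipline is designed to support, is that the machine never \emph{assembles} a non-evaluated term out of smaller pieces: every transition either decomposes an $\nvsym$-flagged term into its immediate subterms, carries an already-present $\nvsym$-flagged term from one component of the state to another, or produces an evaluated value whose $\nvsym$-flagged sub-bodies were already present in the state. Since the subterm relation is reflexive and transitive, it is enough to check, at each step, that every newly-appearing $\nvsym$-flagged term is an immediate subterm of some $\nvsym$-flagged term of the previous state. For the \emph{base case}, the initial state $\compil\tm$ has $\nvsym\tm$ as its only flagged term (together with the implicitly $\nvsym$-flagged bodies of the \wrapts inside $\tm$, each a subterm of $\tm$), so the invariant holds by reflexivity.

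For the \emph{inductive step}, I would assume the invariant for a reachable $\state$ and do a case analysis on a transition $\state \tomach \statetwo$. The search transitions $\tomachseaonenv$, $\tomachseatwonv$ and $\tomachseathreenv$ split the focused non-evaluated term into its immediate subterms, which become the new focus and the new constructor-stack entries, so these are subterms by the induction hypothesis and transitivity. The transitions $\tomachseafournv$, $\tomachseathreeev$ and $\tomachprojev$ produce evaluated values and create no new non-evaluated term. The dual transitions $\tomachseaoneev$ and $\tomachseasixev$, as well as $\tomachseasevenev$, move an $\nvsym$-flagged term (or restore a constructor stack) that was already present in $\state$, hence already a subterm. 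The variable lookup $\tomachsubvev$ replaces the focus by $\env(\prvar)$, an evaluated value already stored in the environment of $\state$, so its $\nvsym$-flagged sub-bodies were already present. In $\tomachsubwev$ the \wrapt $\nvsym\pack\tm{\vv\prvar}$ becomes $\evsym\pack{\nvsym\tm}{\vv{\env(\prvar)}}$: the body $\tm$ is exactly the body of the \wrapt in $\state$, and the new bag collects values looked up from the environment. Finally, in the principal transition $\tomachbetaev$ the new focus $\nvsym\tm$ is precisely the $\nvsym$-flagged body of the evaluated \wrapt in $\state$, a subterm by the induction hypothesis, while the new environment $\vv{\evval_1};\vv{\evval_2}$ is built from values already occurring in $\state$ and so contributes no fresh non-evaluated term.

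The only delicate point, and the place where I expect the real work to sit, is the handling of \wrapt bodies that travel \emph{inside} evaluated values, namely inside the environment and inside evaluated \wrapts: these are $\nvsym$-flagged terms that survive across many transitions without ever being the current focus. The flag placement adopted for the \TTAM, which tags the body of an evaluated \wrapt with $\nvsym$, is exactly what makes the induction go through uniformly, since it lets the hypothesis range over these dormant bodies; when $\tomachbetaev$ eventually promotes one of them to the focus it is already known to be a subterm. For this reason I would phrase the inductive statement so that it quantifies over \emph{all} $\nvsym$-flagged terms occurring anywhere in the state, focus, both stacks, environment and activation stack alike, rather than only over the focus; with this strengthening the lookups of $\tomachsubvev$ and $\tomachsubwev$ and the promotion performed by $\tomachbetaev$ are all covered without any separate argument.
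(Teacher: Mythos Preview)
Your proposal is correct and follows essentially the same approach as the paper: an induction on the length of the run with a case analysis on the last transition, relying on the flag discipline to track $\nvsym$-marked bodies of evaluated \wrapts so that $\tomachbetaev$ is covered directly by the hypothesis. Your explicit strengthening (quantifying over all $\nvsym$-flagged terms anywhere in the state, including the environment and activation stack) is exactly the right formulation and matches the paper's intent.
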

\end{toappendix}

\paragraph{Step 1: Number of (Overhead) Transitions} The bound on the number of (overhead) transitions is obtained following the same reasoning used for the \TAM. The new transition $\tomachseasevenev$ is part of the transitions that are factored out, since each $\tomachseasevenev$ transition is enabled by a $\tomachbetaev$ transition, which adds an entry to the activation stack. We also use an overhead measure (in \Cref{sect:app-int-complexity}) which is a direct adaptation to the \TTAM of the one given for the \TAM. Note indeed that the measure ignores environments, which are the main difference between the two machines.

\begin{toappendix}
\begin{proposition}
\label{prop:LTAM-number-of-trans}
Let $\tm \in \tarcal$ be closed. 
If $\exec \colon \compil{\tm} \tottam^{*} \state$ then $\size\exec \in \bigo\big((\sizebeta\exec+1)\cdot\size\tm\big)$.
\end{proposition}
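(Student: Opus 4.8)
The plan is to replay, in spirit verbatim, the argument that established the bilinear bound for the \TAM (\Cref{prop:TAM-number-of-trans}), exploiting the fact that the only structural novelties of the \TTAM relative to the \TAM are the activation stack and the tupled environments, and that the overhead measure ignores environments entirely. The argument has three ingredients: a transition-match lemma that factors out the transitions that do not make progress, an overhead measure $\omeas\cdot$ on states whose decrease bounds the remaining transitions, and a final count combining the two with the sub-term invariant (\Cref{l:LTAM-sub-term-invariant}).

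First I would prove the analogue of the transition-match lemma (\Cref{l:TAM-blue-bound}). Exactly as for the \TAM, the transitions $\tomachprojev$, $\tomachseaoneev$ and $\tomachseathreeev$ each consume a constructor-stack entry that was previously pushed by one of $\tomachseaonenv$, $\tomachseatwonv$, $\tomachseathreenv$, so that $\sizep\exec{\proj,\evsym sea_{1,3}} \leq \sizep\exec{\nvsym sea_{1,2,3}}$. The genuinely new point is the pop transition $\tomachseasevenev$: since the activation stack behaves as a LIFO whose only pushes come from $\tomachbetaev$ and whose only pops come from $\tomachseasevenev$, one has $\sizep\exec{\evsym sea_{7}} \leq \sizep\exec{\betav} = \sizebeta\exec$. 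Thus all the measure-preserving transitions are bounded by $\betav$-transitions together with the $\nvsym$-search transitions.

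Second I would adapt the overhead measure of \reffig{ovh-measure-tam}, setting $\omeas{\evsym\val} = 0$ for every evaluated value (so that, as in the \TAM, the redundant $\nvsym$ flag on a \wrapt body is not counted), letting $\omeas{\nvsym\tm}$ be the size of $\tm$, extending it to constructor stacks by summing entries, and to the activation stack by summing the measures of the constructor stacks it carries while discarding their environments. With this definition I expect to recover the three-case structure of \Cref{l:TAM-overhead}: (i) $\tomachbetaev$ raises the measure by at most $\size\tm$, because the new focus $\nvsym\tmtwo$ has body $\tmtwo$ a sub-term of the initial term by \Cref{l:LTAM-sub-term-invariant}, while the transfer of the residual constructor stack onto the activation stack is measure-neutral; (ii) the productive overhead transitions $\tomachseaonenv$, $\tomachseatwonv$, $\tomachseathreenv$, $\tomachseafournv$, $\tomachsubwev$, $\tomachsubvev$, $\tomachseasixev$ strictly decrease it (for $\tomachseasixev$ this relies, as in the \TAM, on the per-component count in the measure of a tuple-in-construction, and for the empty-tuple flip $\tomachseafournv$ a minor tuning counting every constructor is needed); (iii) the factored-out transitions $\tomachprojev$, $\tomachseaoneev$, $\tomachseathreeev$, $\tomachseasevenev$ leave it unchanged.

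Finally, the count goes through as for the \TAM: the measure of the initial state is $\bigo(\size\tm)$, each of the $\sizebeta\exec$ principal $\betav$-steps adds at most $\size\tm$, and every strictly decreasing transition consumes at least one unit of measure, so their total number is $\bigo((\sizebeta\exec + 1)\cdot\size\tm)$; adding the measure-preserving transitions, which by the transition-match lemma are bounded by the previous ones and by $\sizebeta\exec$, yields $\size\exec \in \bigo((\sizebeta\exec+1)\cdot\size\tm)$. The main obstacle I anticipate is bookkeeping the activation stack cleanly: one must verify both that $\tomachbetaev$ is measure-neutral on the constructor stack that migrates onto the activation stack and that $\tomachseasevenev$ exactly restores it, so that the only genuine source of growth in the measure is the freshly exposed sub-term body, everything else being either measure-neutral or bounded by the LIFO discipline on the activation stack.
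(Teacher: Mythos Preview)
Your proposal is correct and follows essentially the same route as the paper: the paper proves the identical transition-match lemma (\Cref{l:LTAM-blue-bound}, phrased as the single inequality $\sizep\exec{\proj,\evsym sea_{1,3,7}} \leq \sizep\exec{\nvsym sea_{1,2,3},\evsym\betav}$ rather than your two separate bounds), defines the overhead measure exactly as you describe (summing constructor-stack measures across the activation stack and ignoring environments, \reffig{ovh-measure-ltam}), proves the same three-case behaviour (\Cref{l:LTAM-overhead}), and concludes by the same count. Your anticipated bookkeeping for the activation stack and the minor tuning for $\tomachseafournv$ are the only delicate points, and you have identified them accurately.
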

\end{toappendix}

\paragraph{Step 2: Cost of Single Transitions}  For all transitions of the \TTAM but $\tomachsubvev$, $\tomachsubwev$, and $\tomachseasevenev$ the cost is the same. 
For $\tomachsubvev$, the cost is now $\bigo(1)$ since tupled environments have $\bigo(1)$ access time via indices 
(in the \LTAM its cost is instead $\bigo(\hg\tm)$).

For the new transition $\tomachsubwev$, 
$\bigo(\wid\tm)$ look-ups in the environment are needed. 
Because of tupled environments, each look-up costs $\bigo(1)$. Thus, the cost seems to depend on $\wid\tm$, but---reasoning as for $\tomachhole{\evsym\betav}$ and $\tomachhole{\nvsym sea_{3}}$ in \Cref{sect:source-complexity}---one can amortize 
it with the cost of search in complete runs. That is, we shall consider $\tomachsubwev$ to have constant cost. The new transition $\tomachseasevenev$ has constant cost as well. 
Summing up, we get 
one of the insights mentioned in the introduction: the amortized cost of all single transitions is $\bigo(1)$.

\begin{proposition}[Cost of single transitions]
\label{prop:LTAM-cost-trans}
Let $\run:\compil\tm \tottam^{*} \state$ be a complete \TTAM run. Every transition of $\run$ costs $\bigo(1)$.
\end{proposition}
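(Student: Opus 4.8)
The plan is to inspect the transitions of the \TTAM one at a time, grouping them by how they touch the tupled environment, and to lean on two ingredients: the sub-term invariant (\reflemma{LTAM-sub-term-invariant}), which ensures that every non-evaluated body brought into focus is a sub-term of the initial term $\tm$, so that all the bags of projected variables and all the tuples handled along the run have length at most $\wid\tm$; and the representation of tuples and of both components of a tupled environment as arrays, which yields $\bigo(1)$ indexed access. The guiding structural remark is that stackable environments are \emph{never duplicated}: every transition carries at most a \emph{pointer} to the current environment, never a copy of it, which is precisely what removes the $\bigo(\hg\tm)$ costs that burden the \TAM.

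First I would clear the transitions that are literally constant-time. The search and constructor transitions inherited from the \LTAM (the $\nvsym$- and $\evsym$-search transitions together with $\tomachprojev$) either move a pointer to a sub-term between the focus and the constructor stack, or fill a single already-allocated tuple slot, which is $\bigo(1)$ with array-represented tuples; they forward the environment pointer without copying it, and $\tomachprojev$ reads the $i$-th slot of a value tuple by indexed access. Transition $\tomachseasevenev$ pops a pair off the activation stack, drops the current environment, and reinstalls the saved one, all in $\bigo(1)$. Transition $\tomachsubvev$ evaluates $\env(\prvar)$ with $\prvar=\proj_i\vvar$, $\vvar\in\set{\lvar,\svar}$, which is a single indexed access into one of the two arrays of the environment: $\bigo(1)$, against $\bigo(\hg\tm)$ for the map-based lookup of the \LTAM. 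Most importantly, $\tomachbetaev$ assembles the new environment by merely \emph{pairing} the two already-existing tuples $\vv{\evval_1};\vv{\evval_2}$ and pushing $\pair\stack\env$ on the activation stack; since no map is built entry by entry, its cost is $\bigo(1)$ --- this is exactly where tupled environments improve on the \LTAM, whose $\tomachbetaev$ would spend $\bigo(\wid\tm)$ assembling the substitution entries.

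Next I would isolate the only two transitions whose naive cost is not constant: $\tomachsubwev$ and the inherited tuple-opening transition $\tomachseathreenv$. Transition $\tomachsubwev$ turns a prime wrapt $\nvsym\pack{\tm}{\vv\prvar}$ into the evaluated wrapt $\evsym\pack{\nvsym\tm}{\vv{\env(\prvar)}}$ by doing one indexed lookup per entry of the bag $\vv\prvar$ and allocating a fresh value tuple of the same length; since $\vv\prvar$ occurs in a sub-term of $\tm$, this length is at most $\wid\tm$, so the raw cost is $\bigo(\wid\tm)$. Likewise $\tomachseathreenv$ opens a partial tuple of length $\bigo(\wid\tm)$. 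For both I would reuse \emph{verbatim} the amortization already performed for $\tomachseathreenv$ and $\tomachbetaev$ in the \TAM analysis of \refsect{source-complexity}, charging the creation of a tuple to the search transitions that traverse it in a complete run. Since the overhead measure bounding the number of transitions in \refprop{LTAM-number-of-trans} ignores environments, that bound transfers unchanged; charging the $\bigo(\wid\tm)$ excesses to the search steps keeps the total cost of the run within $\bigo(\size\run)$, i.e. the amortized cost of every transition is $\bigo(1)$.

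The hard part will be making the amortization of $\tomachsubwev$ precise. For $\tomachseathreenv$ the accounting is transparent: in a complete run each of the $n$ holes of a freshly opened tuple is eventually filled by an $\evsym$-search transition, so its $\bigo(n)$ allocation is covered by $\Theta(n)$ later steps. For $\tomachsubwev$, instead, the produced value bag is consumed only indirectly --- it becomes the $\lvar$-component of a stackable environment and is later read through indexed lookups $\proj_i\lvar$ while the wrapt body is evaluated --- and these lookups need not visit every entry of the bag, so one cannot match each entry to a distinct future lookup. As for $\tomachbetaev$ in the \TAM, the fix is to charge the construction \emph{backward}, to the search that necessarily accompanies the wrapt in a complete run, rather than forward to individual lookups. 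I expect this to go through with no genuinely new argument, precisely because the overhead measure is insensitive to environments: the charging scheme validated for the \TAM applies after re-indexing it to the transitions of the \TTAM.
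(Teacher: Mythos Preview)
Your proposal is correct and follows essentially the same route as the paper: classify the transitions into those that are literally $\bigo(1)$ thanks to stackable/tupled environments (no duplication, indexed lookup, pairing instead of map-building for $\tomachbetaev$), and then amortize the two tuple-scanning transitions $\tomachseathreenv$ and $\tomachsubwev$ against search in complete runs, exactly as done for $\tomachseathreenv$ and $\tomachbetaev$ in the \TAM analysis. Your treatment is in fact more explicit than the paper's on two points---that stackable environments eliminate the $\bigo(\hg\tm)$ duplication cost of $\evsym sea_{1,3,6}$, and that $\tomachbetaev$ is now genuinely $\bigo(1)$ rather than merely amortized---while the paper simply says the remaining transitions have ``the same'' cost as before.
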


\paragraph{Step 3: Total Complexity} Multiplying the number of single transitions (\refpropeq{LTAM-number-of-trans}) for their cost (\refpropeq{LTAM-cost-trans}), we obtain the time complexity of the \TTAM, which seem better than for the \TAM. After the theorem we discuss why it is not \mbox{necessarily~so}.
\begin{theorem}
\label{thm:TTAM-ram-implem}
Let $\tm\in\tarcal$ be closed and $\exec:\compil\tm \tottam^{*} \state$ be a complete \TTAM run. Then, $\exec$ can be implemented on RAMs in time $ \bigo\big( (\sizebeta\exec +1) \cdot\size\tm \big)$.
\end{theorem}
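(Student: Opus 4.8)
The plan is to instantiate the three-step recipe for time complexity from \refsect{sharing} to the \TTAM, combining the two immediately preceding propositions. By definition, the RAM time of $\exec$ is the sum of the RAM costs of its individual transitions, so the theorem will follow once we bound (i) the \emph{number} of transitions in $\exec$ and (ii) their per-transition \emph{cost}, and then add up. The only care needed is that the per-transition bound is amortized rather than worst-case.

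First I would invoke Step 1, namely \refprop{LTAM-number-of-trans}, which gives $\size\exec \in \bigo\big((\sizebeta\exec+1)\cdot\size\tm\big)$: the total number of transitions of $\exec$ is controlled by the number of $\betav$ steps and the size of the initial term. This rests on the principal-matching property of the implementation (\refthm{tar-ttam-implementation}) together with the overhead measure for the \TTAM. Next I would invoke Step 2, \refprop{LTAM-cost-trans}, stating that in a \emph{complete} run every transition of the \TTAM has cost $\bigo(1)$. Here the sub-term invariant (\reflemma{LTAM-sub-term-invariant}) and the concrete choice of arrays for tuples and for the tupled environments (giving $\bigo(1)$ indexed access) are essential: they make $\tomachsubvev$ genuinely constant, and make the a priori $\bigo(\wid\tm)$ transitions $\tomachsubwev$ (and, as for the \TAM, $\tomachseathreenv$ and $\tomachbetaev$) constant only in an \emph{amortized} sense, charging their excess cost against search transitions that necessarily occur in a complete run.

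Finally, Step 3 combines the two, and this is where the only genuine obstacle lies: since \refprop{LTAM-cost-trans} provides an amortized, not a worst-case, per-transition bound, the total cost is not literally the product of $\size\exec$ with a constant, and one must argue that the charging is sound, i.e. that the cost redistributed onto each search transition is counted $\bigo(1)$ times and no transition is charged beyond the run in which it occurs (this is exactly why completeness of $\exec$ is required). Granting this, the amortized bound says precisely that the \emph{sum} of the actual RAM costs over all transitions of the complete run $\exec$ is $\bigo(\size\exec)$. Substituting the bound on $\size\exec$ from \refprop{LTAM-number-of-trans} then yields total time $\bigo\big((\sizebeta\exec+1)\cdot\size\tm\big)$, as claimed. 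Note that the height factor $\hg\tm$ present in the analogous theorem for the \TAM disappears here, precisely because name elimination upgrades environment look-up and the discarding of environments in $\tomachseasevenev$ to constant cost.
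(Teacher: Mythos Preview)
Your proposal is correct and follows exactly the paper's approach: the paper's proof is the one-line ``multiplying the number of single transitions (\refprop{LTAM-number-of-trans}) for their cost (\refprop{LTAM-cost-trans})'' preceding the theorem statement. Your discussion of the amortized-versus-worst-case subtlety is more explicit than the paper's terse formulation, but the underlying argument is the same.
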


\paragraph{Factoring in the Size Growth of \Lifting} To complete the analysis, 
take into account that the term $\tm\in\tarcal$ on which the \TTAM is run is meant to be the closure conversion (that is, \lifting + name elimination) of a term $\tmtwo\in\soucal$. While name elimination does not affect the size of terms, \lifting $\lali{\cdot}$ does (proof in \Cref{sect:app-int-complexity}). 

\begin{toappendix}
\begin{lemma}[\Lifting size growth bound]
	\label{l:lifting-growth-bound}~
	\begin{enumerate}
	\item If $\tm\in\soucal$ then $\size{\lali\tm}\in \bigo(\hg\tm\cdot\size\tm)$.
	\item There are families of terms $\{\tm_n\}_{n\in\nat}$ for which $\hg{\tm_n} = \Theta(\size{\tm_n})$, so that $\size{\lali{\tm_n}}\in \Theta(\size{\tm_n}^2)$.
	\end{enumerate}
\end{lemma}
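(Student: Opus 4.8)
My plan is to prove the upper bound (1) by a structural induction on $\tm$ that isolates the size increase caused by \lifting, reducing it to two elementary counting facts, and to prove tightness (2) by exhibiting a family of deeply nested unary abstractions, each capturing every variable bound above it.

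\emph{Upper bound.} For (1), I would first observe that the only clause of $\lali\cdot$ that strictly increases size is the abstraction clause $\lali{\la\tuvar\tm}\defeq\pack{\dabsv\vartwo\var\lali\tm}{\vv\vartwo}$ with $\tuvartwo=\fv{\la\tuvar\tm}$: relative to $\la\tuvar\lali\tm$ it creates the extra binder $\la\tuvartwo$ and the variable bag $\vv\vartwo$, i.e.\ it writes the $\norm\tuvartwo$ variables of $\tuvartwo$ twice. With the natural extension of $\size\cdot$ to $\intcal$ (whose wrapt node is linear in the lengths of its binders and bag), this is a \emph{local} overhead of $\bigo(\norm\tuvartwo)=\bigo(|\fv{\la\tuvar\tm}|)$, while every other clause is size-preserving up to its recursive calls. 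A routine induction would then yield a constant $C$ with
\[
\size{\lali\tm}\ \le\ \size\tm\ +\ C\!\!\sum_{a\in\mathrm{Abs}(\tm)}\!\!\bigl(|\fv{a}|+1\bigr),
\]
where $\mathrm{Abs}(\tm)$ is the set of abstraction-subterm occurrences of $\tm$. I would bound this sum by two facts: (i) $|\mathrm{Abs}(\tm)|\le\size\tm$, since $\size{\la\tuvar s}=\size s+\norm\tuvar+1>\size s$ forces each $\la$-node to cost at least $1$; and (ii) for closed $\tm$, $|\fv{a}|\le\hg\tm$ for every $a\in\mathrm{Abs}(\tm)$, because each free variable of $a$ is bound by an abstraction enclosing $a$, and the number of variables so bound is at most $\hg\tm$ by definition of height. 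Combining, $\size{\lali\tm}\le\size\tm+C\,\size\tm\,(\hg\tm+1)\in\bigo(\hg\tm\cdot\size\tm)$, with the abstraction-free case ($\hg\tm=0$, $\lali\tm=\tm$) being immediate.

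\emph{Tightness.} For (2), I would take $\tm_n\defeq\la{\var_1}\la{\var_2}\cdots\la{\var_n}(\var_1\var_2\cdots\var_n)$, with body the left-nested application of the distinct variables $\var_1,\mydots,\var_n$. Each $\tm_n$ is closed, the body lies under all $n$ binders so $\hg{\tm_n}=n$, and $\size{\tm_n}=4n-1=\Theta(n)$; hence $\hg{\tm_n}=\Theta(\size{\tm_n})$. The abstraction rooted at $\la{\var_i}$ has free variables exactly $\{\var_1,\mydots,\var_{i-1}\}$, so its \lifting writes $i-1$ variables in the bag and contributes at least $i-1$ to $\size{\lali{\tm_n}}$; summing, $\size{\lali{\tm_n}}\ge\sum_{i=1}^{n}(i-1)=\frac{n(n-1)}{2}=\Theta(n^2)$, while part (1) supplies the matching upper bound $\bigo(\hg{\tm_n}\cdot\size{\tm_n})=\bigo(n^2)$. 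Therefore $\size{\lali{\tm_n}}=\Theta(n^2)=\Theta(\size{\tm_n}^2)$.

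The one delicate point I anticipate is the free-variable bound $|\fv{a}|\le\hg\tm$ in step (ii): for \emph{open} $\tm$ a subterm may additionally mention variables free in $\tm$, which would add a $|\fv\tm|$ term and break the clean product. This is harmless for the intended use, where \lifting is applied to closed programs, but I would state (1) for closed $\tm$ (or with $\hg\tm$ replaced by $\hg\tm+|\fv\tm|$) and flag this explicitly rather than hide it inside the $\bigo$.
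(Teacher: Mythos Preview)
Your proposal is correct and follows essentially the same approach as the paper, only with considerably more detail: the paper's proof of Point~1 is a two-line sketch (``the size increment is due to the \bagts \ldots\ proportional to the number of free variables \ldots\ bounded by $\hg\tm$''), which your induction and summation argument makes precise, and for Point~2 the paper uses the very same family $\tm_n \defeq \la{\var_{1}}\mydots\la{\var_{n}}\,\var_{1}\var_2\ldots\var_{n}$ that you chose. Your observation about the closed-term assumption in step~(ii) is also well taken---the paper does not flag it, but the lemma is only ever applied to closed terms (see the hypothesis of \refthm{TTAM-ram-implem-with-lifting}), so the bound as stated is adequate for its purpose.
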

\end{toappendix}


We can now instantiate the bounds for running the \TTAM on the closure conversion of a $\soucal$ term, obtained by substituting the bounds in \reflemma{lifting-growth-bound}.2 in \refthm{TTAM-ram-implem}. We end up obtaining the same complexity as for the Source TAM (\refthm{TAM-ram-implem}), despite the bound here being the outcome of a different reasoning.

\begin{theorem}
\label{thm:TTAM-ram-implem-with-lifting}
Let $\tm\in\soucal$ be closed and $\exec:\compil{(\clc{\lali\tm}\emptylist\emptylist)} \tottam^{*} \state$ be a complete \TTAM run. Then, $\exec$ can be implemented on RAMs in time $ \bigo\big( (\sizebeta\exec +1) \cdot\size\tm\cdot\hg\tm \big)$.
\end{theorem}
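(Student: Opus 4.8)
The plan is to prove this bound purely by composition, combining the intrinsic \TTAM complexity estimate of \refthm{TTAM-ram-implem} with the size-growth bound for \lifting in \reflemma{lifting-growth-bound}. The observation driving the argument is that the term actually fed to the \TTAM is the closure conversion $\clc{\lali\tm}\emptylist\emptylist$ of the source term $\tm$, and that the only ingredient of closure conversion affecting the size of the code is \lifting, since name elimination leaves the size unchanged up to a constant factor.

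First I would check that $\clc{\lali\tm}\emptylist\emptylist$ is a legitimate input for the \TTAM. Since $\tm\in\soucal$ is closed, \reflemma{lali-properties} gives that $\lali\tm$ is a well-formed, prime term of $\intcal$; moreover \lifting preserves free variables (immediate by induction on its definition in \reffig{sou-int_translations}, as $\lali\var=\var$ and the \bagt of a \wrapt records exactly the free variables of the original abstraction), so $\lali\tm$ is closed. Applying name elimination with empty parameter lists then yields a closed, prime term $\tm'\defeq\clc{\lali\tm}\emptylist\emptylist$ of $\tarcal$, using the properties of name elimination established in \Cref{sect:app-target-calculus}. Hence \refthm{TTAM-ram-implem} applies to $\tm'$, giving that the complete run $\exec\colon\compil{\tm'}\tottam^{*}\state$ can be implemented on RAMs in time $\bigo\big((\sizebeta\exec+1)\cdot\size{\tm'}\big)$.

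Next I would bound $\size{\tm'}$. Name elimination replaces the abstracted sequences $\tuvartwo,\tuvar$ by the two fixed binders $\lvar,\svar$ and each variable occurrence by a projected variable $\proj_i\lvar$ or $\proj_j\svar$; each such replacement changes the contribution of the corresponding node by only a constant, so $\size{\clc{\lali\tm}\emptylist\emptylist}=\Theta(\size{\lali\tm})$. Then \reflemma{lifting-growth-bound}.1 gives $\size{\lali\tm}\in\bigo(\hg\tm\cdot\size\tm)$, whence $\size{\tm'}\in\bigo(\hg\tm\cdot\size\tm)$. Substituting this into the estimate of \refthm{TTAM-ram-implem} yields $\bigo\big((\sizebeta\exec+1)\cdot\hg\tm\cdot\size\tm\big)$, exactly the claimed bound.

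The mathematical content here is light: the real work sits in the two results being composed, namely \refthm{TTAM-ram-implem} and \reflemma{lifting-growth-bound}, both already proved. The only point requiring care in \emph{this} proof is the bookkeeping justifying the application of \refthm{TTAM-ram-implem} to $\clc{\lali\tm}\emptylist\emptylist$ — checking closedness and primeness of the converted term, and confirming that name elimination preserves size up to a multiplicative constant so that the $\bigo$ estimate transfers. None of these are deep; they follow from the translation properties set up in Part~1, and there is no genuine complexity-theoretic obstacle to overcome.
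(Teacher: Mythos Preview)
Your proposal is correct and follows essentially the same approach as the paper: apply \refthm{TTAM-ram-implem} to the converted term $\clc{\lali\tm}\emptylist\emptylist$, then substitute the size-growth bound of \reflemma{lifting-growth-bound} (together with the observation that name elimination preserves size up to a constant factor) to obtain the stated complexity. Your additional bookkeeping---verifying closedness and primeness of the converted term so that \refthm{TTAM-ram-implem} applies---is sound and more explicit than the paper, which leaves these checks implicit.
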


%
%

\section{Related Work and Conclusions}
\label{sect:related-work}

\paragraph{Related work.} 
The seminal work of \citet{DBLP:conf/popl/MinamideMH96} uses existential types to type closure conversion, and other works study the effect of closure conversion on types, as well as conversion towards typed target languages~\cite{DBLP:journals/entcs/MorrisettH97,DBLP:conf/icfp/AhmedB08,DBLP:conf/pldi/BowmanA18}.
This line of work explores the use of types in compilation which can span the entire compiler stack, from $\l$-calculus to (typed) assembly~\cite{DBLP:journals/toplas/MorrisettWCG99}. 

Appel and co-authors have studied closure conversion, its efficient variants, and its \emph{space safety} \cite{DBLP:conf/popl/AppelJ89,DBLP:conf/lfp/ShaoA94,DBLP:journals/toplas/ShaoA00,DBLP:journals/pacmpl/Paraskevopoulou19}.
Unlike them, we address a direct style $\l$-calculus, without relying on a CPS. 
We are however inspired by \cite{DBLP:journals/pacmpl/Paraskevopoulou19} in studying flat environments, 
also known as \emph{flat closures}. Optimizations of flat closures are studied~in~\cite{DBLP:conf/icfp/KeepHD12}.

Closure conversion of only some abstractions, and of only some of their free variables is studied by \citet{DBLP:conf/popl/WandS94}. Closure conversion has also been studied in relation to graphical languages \cite{DBLP:journals/entcs/SchweimeierJ99},
non-strict languages \cite{DBLP:conf/pepm/SullivanDA21}, formalizations \cite{DBLP:conf/popl/Chlipala10,DBLP:conf/esop/WangN16}, extended to mutable state by \citet{DBLP:conf/ppdp/MatesPA19} and to a type-preserving transformation for dependent types by \citet{DBLP:conf/pldi/BowmanA18}. 

Many of the cited works study various aspects of the efficiency of closure 
conversion. 
As said in the introduction, our concerns here are orthogonal, as we are rather interested in the asymptotic overhead of the machines with respect to flat 
closure~conversion.

\citet{DBLP:conf/ppdp/SullivanDA23} study a call-by-push-value $\l$-calculus where converted and non-converted functions live together, considering also an abstract machine. They do not study the complexity of the machine, nor the new notions of environments studied here.

For abstract machines, we follow Accattoli and co-authors, see for instance \cite{DBLP:conf/icfp/AccattoliBM14,DBLP:journals/scp/AccattoliG19,DBLP:conf/ppdp/AccattoliCGC19}. Another framework is \cite{DBLP:books/daglib/0023092}.

Work orthogonal to our concerns is the \emph{derivation} of abstract machines using closure conversion as a step in the process, along with CPS transformation and defunctionalization~\cite{DBLP:conf/ppdp/AgerBDM03}.

\label{sect:conclusions}

\paragraph{Conclusions.} 
We study the relationship between closure conversion and abstract machines in 
probably the simplest possible setting, an extension  with tuples of Plotkin's untyped call-by-value $\l$-calculus, and with respect to the simple notion of flat environments.  
Our starting point is 
to decompose closure conversion in two sub-transformations, dubbed  \emph{\lifting} and \emph{name elimination}, turning the source calculus into a target calculus, via \mbox{an intermediate one.}

Each calculus is then paired with a variant of the \emph{tupled abstract machine} (TAM). The \TAM has machine-closures and local environments, while the \LTAM and the \TTAM have forms of converted closures. Moreover, they exploit the invariants enforced by the transformations,  adopting new, better behaved forms of environments, namely \emph{stackable} and \emph{tupled environments}.

We give proof of correctness---under the form of termination-preserving strong bisimulations---for \lifting and name elimination, as well as implementation theorems for every machine with respect to its associated calculus. In particular, the proof technique for the correctness of closure conversion is new and simple.

Lastly, we study the time complexity of the abstract machines, showing that flat closure conversion reshuffles the costs, 
lowering the dependency on the initial term, while at the same time increasing the size of the initial term, ending up with the same~complexity.


\bibliographystyle{ACM-Reference-Format}
\bibliography{\macrospath/biblio}

\newpage
\appendix
\onecolumn
\section*{Appendix}
There is a section of the Appendix for every section of the paper, containing the omitted proofs plus, possibly, some auxiliary notions.

\section{Proofs and Auxiliary Notions of \Cref{sect:source-calculus} (The Source Calculus $\soucal$)}
\label{sect:app-source-calculus}

\begin{lemma}[Evaluation contexts compose]
\label{l:source-evctxs-compose}
Let $\evctx$ and $\evctxtwo$ be source evaluation contexts. Then $\evctxp\evctxtwo$ is a source evaluation context.
\end{lemma}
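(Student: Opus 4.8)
The plan is to prove the statement by structural induction on the outer evaluation context $\evctx$, following the grammar of evaluation contexts in \reffig{source_calculus}. Recall that the plugging operation $\evctxp{\cdot}$ is defined by recursion on $\evctx$, so induction on $\evctx$ aligns exactly with that recursion, while $\evctxtwo$ stays fixed throughout.

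For the base case $\evctx = \ctxhole$, we have $\evctxp{\evctxtwo} = \evctxtwo$, which is an evaluation context by hypothesis. For the inductive cases, I would treat the four compound productions uniformly. In each, $\evctx$ has a unique sub-context $\evctx'$ sitting where the hole occurs, and the induction hypothesis gives that $\evctx'\ctxholep{\evctxtwo}$ is an evaluation context. Concretely: if $\evctx = \tm\,\evctx'$ then $\evctxp{\evctxtwo} = \tm\,(\evctx'\ctxholep{\evctxtwo})$, which matches the production $\tm\,\evctx$; if $\evctx = \evctx'\,\val$ then $\evctxp{\evctxtwo} = (\evctx'\ctxholep{\evctxtwo})\,\val$, matching $\evctx\,\val$; if $\evctx = \proj_i\evctx'$ then $\evctxp{\evctxtwo} = \proj_i(\evctx'\ctxholep{\evctxtwo})$, matching $\proj_i\evctx$; and if $\evctx = \tuple{\tuv\tm,\evctx',\tuv\val}$ then $\evctxp{\evctxtwo} = \tuple{\tuv\tm,\evctx'\ctxholep{\evctxtwo},\tuv\val}$, matching the tuple production. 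In every case the result is generated by the grammar, closing the induction.

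The only point requiring a word of care --- and it is the closest thing to an obstacle, though a very mild one --- is that the side material of each compound production is untouched by plugging: the left-hand term $\tm$ (resp.\ the sequence $\tuv\tm$) remains an arbitrary non-evaluated term (resp.\ tuple of such terms), and the right-hand value $\val$ (resp.\ the sequence $\tuv\val$) remains a value. Since $\evctxp{\cdot}$ only rewrites the hole, it never disturbs these components, so the syntactic side conditions attached to the productions $\evctx\,\val$ and $\tuple{\tuv\tm,\evctx,\tuv\val}$ are automatically preserved. Hence the argument is entirely routine and the lemma follows by the induction above.
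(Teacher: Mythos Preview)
Your proof is correct and follows exactly the same approach as the paper, namely a straightforward structural induction on $\evctx$. The paper's own proof simply reads ``Straightforward induction on $\evctx$'', so you have merely spelled out the routine details.
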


\begin{proof}
Straightforward induction on $\evctx$.
\end{proof}

\begin{lemma}[Determinism]
	\label{l:source-determinism}
	If $\tm \tosou \tmtwo$ and $\tm \tosou \tmthree$ then $\tmtwo = \tmthree$ and $\tm$ is not a value.
\end{lemma}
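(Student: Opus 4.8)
The statement packages two facts: that values are $\tosou$-normal, and that the redex fired by a step is uniquely determined. The plan is to isolate the second fact as a \emph{unique decomposition} property and then derive determinism from it together with the determinism of the two root rules.

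\textbf{Values are normal.} First I would show, by induction on the structure of a value $\val$, that $\val$ admits no $\tosou$-step. An abstraction $\la\tuvar\tm$ cannot be written as $\evctxp{\tmtwo}$ with $\tmtwo$ a redex: evaluation contexts have no production entering the body of an abstraction, so the only candidate is $\evctx = \ctxhole$, and an abstraction matches neither root rule. For a tuple of values $\vv\val = \tuple{\val_1,\mydots,\val_n}$, the only applicable context production is $\tuple{\tuv\tm,\evctx,\tuv\val}$, which would place the redex strictly inside some component $\val_j$; this is excluded by the induction hypothesis. Hence no value reduces.

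\textbf{Unique decomposition.} Then I would prove, by induction on the structure of $\tm$, that whenever $\tm$ is written as $\evctxp{\tmtwo}$ with $\tmtwo$ a redex (i.e. of the form $(\la\tuvar\tmfour)\vv\val$ with matching lengths, or $\proj_i\vv\val$ with $i$ in range), the evaluation context $\evctx$ and the redex $\tmtwo$ are uniquely determined by $\tm$. The crux is that the right-to-left ordering built into the grammar of evaluation contexts, combined with value-normality, forces the position of the hole: in an application $\tm_1\tm_2$, if $\tm_2$ is not a value the hole must sit in $\tm_2$ via $\tm_1\evctx$ (the production $\evctx\val$ being inapplicable), whereas if $\tm_2$ is a value the hole must sit in $\tm_1$ via $\evctx\val$ (a redex inside $\tm_2$ being ruled out by normality) unless the whole application is itself the redex; similarly, in a tuple the hole is forced to lie in the rightmost non-value component. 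In each case the induction hypothesis pins down the inner context and redex uniquely.

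\textbf{Conclusion.} Finally, the two root rules have disjoint left-hand sides (an application versus a projection) and each determines its reduct uniquely, so a redex reduces to a single term. Combining the pieces: if $\tm \tosou \tmtwo$ and $\tm \tosou \tmthree$, both steps exhibit $\tm$ as a redex plugged in an evaluation context; unique decomposition forces the same context and the same redex for both, and determinism of the root rules yields $\tmtwo = \tmthree$. Moreover $\tm$ cannot be a value, since values are normal. I expect the main obstacle to be the unique decomposition step, specifically the application and tuple cases, where one must use value-normality to discard the decompositions that would place the hole inside an already-evaluated sub-term, thereby justifying that the right-to-left search order is genuinely forced and not merely conventional.
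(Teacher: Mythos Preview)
Your proposal is correct and follows essentially the same approach as the paper, which dispatches the lemma in a single line (``By induction on $\tm$''). You have simply unpacked that induction into its natural constituents---value-normality, unique decomposition, and determinism of the root rules---which is exactly what the one-line proof is abbreviating.
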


\begin{proof}
By induction on $\tm$.
\end{proof}

\gettoappendix{l:source-harmony}

\myinput{\proofspath/01a-source_calculus/harmony_source}
\section{Proofs and Auxiliary Notions of \refsect{intermediate-calculus} (The Intermediate Calculus $\intcal$ and \Lifting)}
\label{sect:app-intermediate-calculus}

	\begin{lemma}[Determinism]
		\label{l:intermediate-determinism}
		Let $\tm\in\intcal$. If $\tm \toint \tmtwo$ and $\tm \toint \tmthree$ then $\tmtwo = \tmthree$ and $\tm$ is not a value.
	\end{lemma}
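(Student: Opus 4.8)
The plan is to mimic the source determinism argument (\reflemma{source-determinism}) and proceed by structural induction on $\tm$, keeping the statement exactly as phrased: under the assumption that $\tm$ reduces, both the uniqueness of the reduct and the fact that $\tm$ is not a value are derived simultaneously. The organising principle is to show that the decomposition of $\tm$ as $\evctxp{\tmfour}$ with $\tmfour$ a root-redex (an evaluated-\wrapt application or a projection of a tuple of values) is \emph{unique}: once the evaluation context $\evctx$ and the redex $\tmfour$ are pinned down, the two root rules are visibly deterministic (the meta-level substitution and the component selected by $\proj_i$ are functions of $\tmfour$), so $\tmtwo=\tmthree$ follows. The one fact I would lean on repeatedly is the contrapositive half of the induction hypothesis on immediate subterms, namely that \emph{a reducible subterm is not a value}; this is what makes competing decompositions mutually exclusive.

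In the cases $\tm = \var$ and $\tm = \pack{\dabsv\vartwo\var\tmfour}{\bag}$ there is nothing to prove, because $\tm$ does not reduce and the implication is vacuous. Indeed, $\evctx = \ctxhole$ would force $\tm$ to be a root-redex, which neither a variable nor a \wrapt is; and the outer shape of every other evaluation-context production is an application, a projection, or a tuple. Here it is essential that evaluation contexts do not enter \wrapts, so that a \wrapt is $\toint$-normal regardless of whether its \bagt is a variable or an evaluated one, and in particular its \bagt is never reduced in place.

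For the inductive cases I would enumerate the possible head decompositions and rule out overlaps using the hypothesis above. If $\tm = \proj_i\tmfour$, then either $\evctx=\ctxhole$ and $\tmfour=\vv\val$ is a tuple of values (a root $\proj$-step), or $\evctx = \proj_i\evctxtwo$ and $\tmfour$ reduces; these are exclusive since a reducible $\tmfour$ is not a value. If $\tm = \tmfour\,\tmfour'$, the three options---a root $\betav$-step (needing $\tmfour$ an evaluated \wrapt and $\tmfour'$ a value tuple), a reduction in the argument $\tmfour'$, or a reduction in the function $\tmfour$ with $\tmfour'$ a value---are pairwise exclusive for the same reason, each pair disagreeing on whether $\tmfour$ or $\tmfour'$ is a value versus reducible. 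If $\tm = \vv\tmfour = \tuple{\tmfour_1,\mydots,\tmfour_n}$, the only matching production is $\tuple{\tuv\tm,\evctx,\tuv\val}$, which places the hole at a component all of whose right neighbours are values; since a reducible component is not a value, two candidate hole positions cannot coexist, so the reducing component is unique. In every inductive case, once the redex position is fixed, root-rule determinism (for $\evctx=\ctxhole$) or the induction hypothesis applied to the unique reducing subterm (otherwise) gives $\tmtwo=\tmthree$, and $\tm$---being an application, a projection, or a tuple with a reducible, hence non-value, component---is not a value.

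The only delicate point I expect is the tuple case: making precise that the tuple evaluation context $\tuple{\tuv\tm,\evctx,\tuv\val}$ determines a single hole position. This is exactly where the ``reducible implies non-value'' consequence of the induction hypothesis is indispensable, since it forbids a second decomposition whose hole sits to the left of the first---that would force an already-reducing component to simultaneously be a value. Everything else is the routine bookkeeping already carried out for $\soucal$.
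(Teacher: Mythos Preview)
Your proposal is correct and takes the same approach as the paper, which records the proof simply as ``by induction on $\tm$''; you have spelled out the routine case analysis that this one-liner abbreviates. The only cosmetic remark is that in the tuple case the exclusion argument is symmetric (either hole position forces the other reducing component to be a value), not only ``to the left'', but your reasoning already covers this.
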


\begin{proof}
By induction on $\tm$.
\end{proof}

\begin{lemma}[Evaluation contexts compose]
\label{l:intermediate-evctxs-compose}
Let $\evctx$ and $\evctxtwo$ be intermediate evaluation contexts. Then $\evctxp\evctxtwo$ is an intermediate evaluation context.
\end{lemma}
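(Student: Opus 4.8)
The plan is to proceed by structural induction on the evaluation context $\evctx$, following the grammar of intermediate evaluation contexts in \reffig{intermediate_calculus}. This is the exact analogue of \Cref{l:source-evctxs-compose} for $\soucal$, and the argument is purely syntactic bookkeeping: each context production carries a single hole-bearing sub-context, and plugging commutes with every context constructor, so filling the hole of $\evctx$ with $\evctxtwo$ never escapes the grammar.

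First I would dispatch the base case $\evctx = \ctxhole$, where $\evctxp{\evctxtwo} = \evctxtwo$ is an intermediate evaluation context by hypothesis. For the inductive step I would split on the outermost constructor of $\evctx$, in each case writing $\evctx$ via a strictly smaller sub-context $\evctx'$ and applying the induction hypothesis to $\evctx'\ctxholep{\evctxtwo}$:
\begin{itemize}
\item $\evctx = \tm\,\evctx'$: then $\evctxp{\evctxtwo} = \tm\,\evctx'\ctxholep{\evctxtwo}$, and the production $\tm\,\evctx$ applied to $\evctx'\ctxholep{\evctxtwo}$ (a context by the i.h.) gives the claim;
\item $\evctx = \evctx'\,\val$: then $\evctxp{\evctxtwo} = \evctx'\ctxholep{\evctxtwo}\,\val$, concluded by the production $\evctx\,\val$;
\item $\evctx = \proj_i \evctx'$: then $\evctxp{\evctxtwo} = \proj_i \evctx'\ctxholep{\evctxtwo}$, concluded by the production $\proj_i \evctx$;
\item $\evctx = \tuple{\tuv\tm, \evctx', \tuv\val}$: then $\evctxp{\evctxtwo} = \tuple{\tuv\tm, \evctx'\ctxholep{\evctxtwo}, \tuv\val}$, concluded by the production $\tuple{\tuv\tm, \evctx, \tuv\val}$.
\end{itemize}
In each case the induction hypothesis makes $\evctx'\ctxholep{\evctxtwo}$ an intermediate evaluation context, and the corresponding grammar production reassembles it into one.

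I do not expect any genuine obstacle: the induction is mechanical and entirely determined by the five productions of $\evctx$. The only point worth flagging --- and it is what keeps the argument unconditional rather than a difficulty --- is that the grammar of $\intcal$ evaluation contexts contains no production placing the hole inside a \wrapt body (evaluation contexts do not enter \wrapts, as noted when defining $\toint$). Hence no case forces us to re-establish a well-formedness or closedness side-condition while composing, and $\evctxp{\evctxtwo}$ is a context with no extra hypotheses on $\evctx$ or $\evctxtwo$.
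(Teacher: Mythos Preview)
Your proposal is correct and takes the same approach as the paper, which simply states ``Straightforward induction on $\evctx$.'' You have merely spelled out the cases of that induction explicitly.
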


\begin{proof}
Straightforward induction on $\evctx$.
\end{proof}

\begin{definition}[Clashes]
A well-formed term $\tm\in\intcal$ is a \emph{clash} if it has shape $\evctxp{\tmtwo}$ where $\tmtwo$ has one of the following forms, called a \emph{root clash}:
\begin{itemize}
\item \emph{Clashing projection}: $\tm = \proj_{i} \val$ and ($\val$ is not a tuple or it is but $\norm\val <i$);
\item \emph{Clashing \wrapt}:  $\tm = \pack{ \dabsv\vartwo\var\tmtwo }{ \bag }  \val_2$ and ($\val_2$ is not a tuple or it is but $\norm{\tuvar} \neq \norm{\vv{\val_2}}$);
\item \emph{Clashing tuple}: $\tm = \vv\tmtwo  \tmthree$.
\end{itemize}
\end{definition}

\begin{definition}[Simultaneous substitution]
If $\fv\tm \subseteq \tuvartwo\vdisjoint\tuvar$ then the simultaneous substitution $\tm\isubp\tuvartwo\vvvalnone\tuvar\vvvalntwo$ is defined as follows:
\begin{center}
	$\begin{array}{rll}
		\multicolumn{3}{c}{\textsc{Simultaneous substitution}}
		\\
		\vartwo_i\isubp\tuvartwo\vvvalnone\tuvar\vvvalntwo & \defeq & (\val_{1})_i
		\\
		\var_i\isubp\tuvartwo\vvvalnone\tuvar\vvvalntwo & \defeq & (\val_{2})_i
		\\
		(\proj_i\tm)\isubp\tuvartwo\vvvalnone\tuvar\vvvalntwo & \defeq & \proj_i\tm\isubp\tuvartwo\vvvalnone\tuvar\vvvalntwo
		\\
		\tuple{\tm_1,\mydots,\tm_n} \isubp\tuvartwo\vvvalnone\tuvar\vvvalntwo & \defeq & \tuple{\tm_1\isubp\tuvartwo\vvvalnone\tuvar\vvvalntwo,\mydots,\tm_n\isubp\tuvartwo\vvvalnone\tuvar\vvvalntwo} 
		\\
		(\tm\,\tmtwo)\isubp\tuvartwo\vvvalnone\tuvar\vvvalntwo & \defeq & \tm\isubp\tuvartwo\vvvalnone\tuvar\vvvalntwo\,
		\vv\tmtwo\isubp\tuvartwo\vvvalnone\tuvar\vvvalntwo
		\\
		\pack{\dabsv\vartwo\var\tm}\bag \isubp\tuvartwo\vvvalnone\tuvar\vvvalntwo & \defeq & \pack{\dabsv\vartwo\var\tm}{\bag\isubp\tuvartwo\vvvalnone\tuvar\vvvalntwo}
	\end{array}$
	\end{center}
\end{definition}

\gettoappendix{l:intermediate-harmony}
\myinput{\proofspath/02a-intermediate_calculus/harmony_intermediate}

\gettoappendix{l:lali-properties}
\myinput{\proofspath/02a-intermediate_calculus/translation_properties}

\begin{lemma}[Basic properties of the reverse translation]
\label{l:rev-transl-properties} 
The following properties hold with respect to well-formed $\intcal$ terms and evaluation contexts.
\begin{enumerate}
\item 
\emph{Values}: if $\val\in\intcal$ then $\unlali\val$ is a value of $\soucal$.
\item 
\emph{Evaluation contexts}: if $\evctx\in\intcal$ then $\unlali\evctx$ is an evaluation context of $\intcal$.
\item \emph{Root clashes}: if $\tm\in\intcal$ is a root clash then $\unlali\tm$ is a root clash of $\soucal$.
\end{enumerate}
\end{lemma}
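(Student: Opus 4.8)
The plan is to prove item~1 by structural induction on values, and then to derive items~2 and~3 from it. The key preliminary observation — which drives items~1 and~3 — is that $\unlali{\cdot}$ is \emph{shape-preserving} on values. Concretely, $\unlali{\tuple{\tm_1,\dots,\tm_n}} = \tuple{\unlali{\tm_1},\dots,\unlali{\tm_n}}$, so a tuple unwraps to a tuple of the \emph{same} length; and any \wrapt $\pack{\dabsv\vartwo\var\tmtwo}{\bag}$, whether $\bag$ is a variable or an evaluated \bagt, unwraps to an abstraction with binder $\tuvar$ — namely $\la{\tuvar}\unlali\tmtwo$ or $\la{\tuvar}(\unlali\tmtwo\isub{\tuvartwo}{\vv{\unlali\val}})$. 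In particular a \wrapt never unwraps to a tuple, and the length $\norm{\tuvar}$ of its argument-binder is preserved.

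For item~1 I would induct on the value $\val\in\intcal$. If $\val$ is a \wrapt then $\unlali\val$ is an abstraction by the shape observation, hence a $\soucal$ value; if $\val = \tuple{\val_1,\dots,\val_n}$ then each $\unlali{\val_i}$ is a $\soucal$ value by the induction hypothesis, so $\unlali\val$ is a tuple of values, again a value of $\soucal$. For item~2 I would induct on the evaluation context $\evctx\in\intcal$ along its grammar: the hole case is immediate, the cases $\tm\evctx$ and $\proj_i\evctx$ follow from the induction hypothesis, and the cases $\evctx\val$ and $\tuple{\tuv\tm,\evctx,\tuv\val}$ follow from the induction hypothesis together with item~1, which ensures that every value sitting to the right of the hole unwraps to a $\soucal$ value — exactly what the $\soucal$ evaluation-context grammar requires.

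For item~3 I would case-split on the three root-clash shapes. For a clashing projection $\proj_i\val$ we get $\unlali{\proj_i\val} = \proj_i\unlali\val$ with $\unlali\val$ a value by item~1; since $\unlali\val$ is a tuple exactly when $\val$ is, and then of the same length, the side condition (not a tuple, or a tuple of length below $i$) transfers and yields a $\soucal$ clashing projection. For a clashing \wrapt $\pack{\dabsv\vartwo\var\tmtwo}{\bag}\,\val_2$ the shape observation gives $\unlali{\pack{\dabsv\vartwo\var\tmtwo}{\bag}} = \la{\tuvar}(\dots)$, so the unwrapped term is $(\la{\tuvar}(\dots))\,\unlali{\val_2}$; since $\norm{\tuvar}$ and the tuple-length of $\unlali{\val_2}$ match those on the source side, the condition $\norm{\tuvar}\neq\norm{\vv{\val_2}}$ (or non-tuple-ness of $\val_2$) transfers to a $\soucal$ clashing abstraction. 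A clashing tuple $\vv\tmtwo\,\tmthree$ unwraps to $\tuple{\dots}\,\unlali\tmthree$, immediately a $\soucal$ clashing tuple. The whole argument is routine once the shape observation is in place; the only point needing care — and the one I would state explicitly up front — is that both the projection and the \wrapt cases rely on \wrapts unwrapping to abstractions (never tuples) with unchanged argument-binder length.
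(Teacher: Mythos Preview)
Your proposal is correct and follows the natural approach: structural induction on values for item~1, structural induction on evaluation contexts (using item~1) for item~2, and a case analysis on the three root-clash shapes (using the shape-preservation observation and item~1) for item~3. This is essentially the same routine argument the paper has in mind; the shape-preservation remark you isolate up front is exactly the point that makes items~1 and~3 go through.
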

\myinput{\proofspath/02a-intermediate_calculus/reverse_translation_basic_properties}

\gettoappendix{prop:intcal-key-sub-prop-rev-trans}
\myinput{\proofspath/02a-intermediate_calculus/reverse_translation_key_sub_property}

\gettoappendix{thm:rev-transl-implem-bricks}
\myinput{\proofspath/02a-intermediate_calculus/projection_halt}

\gettoappendix{cor:step-preservation}
\begin{proof}
	By induction on $k$. If $k=0$ then $\tm=\tmtwo$;
	taking $\tmthree\defeq\lali\tm$, the inverse property gives $\unlali\tmthree = \unlali{\lali\tm} = \tm = \tmtwo$. 
	For $k>0$, one has $\tm \tosou^{k-1} \tmtwo' \tosou \tmtwo$. By \ih, $\lali\tm \toint^{k-1} \tmthree'$ with $\unlali{\tmthree'}=\tmtwo'$, so $\unlali{\tmthree'}\tosou \tmtwo$; 
	by reflection $\tmthree' \toint \tmthree$ for some $\tmthree$ such that $\unlali{\tmthree}=\tmtwo$. 
\end{proof}
\section{Proofs and Auxiliary Notions of \refsect{target-calculus} (The Target Calculus $\tarcal$ and Name Elimination)}
\label{sect:app-target-calculus}

\begin{figure}[t!]
\centering
\fbox{
\begin{tabular}{ccc}
	\multicolumn{2}{c}{	\setlength{\arraycolsep}{2pt}
		$\begin{array}{r@{\hspace{.5cm}} rll@{\hspace{1.cm}} r@{\hspace{.5cm}} rll}
		\textsc{Projected vars} & \prvar,\prvartwo &\grameq& \proj_i \efvar \midd \proj_i \envar
		&
		\textsc{Terms} &
		\tm,\tmtwo,\tmthree,\tmfour&\grameq& \prvar \midd \proj_i \tm \midd \vv\tm \midd \tm\,\tmtwo   \midd  \packnm\tm\bag 
		\\
		\textsc{\Bagts} &
		\bag,\bagtwo&\grameq& \vec\prvar \midd \vec\val
		&
		\textsc{Values} &
		\val,\valtwo & \grameq  &\packnm\tm\bag \midd \vv\val
		\\
		\textsc{Eval Contexts} &
		\evctx,\evctxtwo & \grameq  & 
		\multicolumn{4}{l}{\ctxhole \midd \tm\, \evctx \midd \evctx\,\vv \val \midd \proj_{i} \evctx \midd \tuple{\tuv\tm, \evctx, \tuv\val} }
	\end{array}$
	}
\\\\[-10pt]
	\multicolumn{2}{c}{
	$\begin{array}{llllll}
			\multicolumn{3}{c}{\textsc{\Lifted/source norms } (\vvar\in\set{\lvar,\svar})}
			\\
			\prnorm\tm &\defeq& \max\set{i\in\nat \,|\, \proj_i\vvar\mbox{ appears in $\tm$ out of \wrapts bodies}}
	\end{array}$
	} 
\\\\[-10pt]

	\multicolumn{2}{c}{
	\setlength{\arraycolsep}{1.6pt}
	$\begin{array}{rll| rll}
		\multicolumn{6}{c}{\textsc{Projecting substitution}}
		\\
		(\proj_i\lvar)\itsubp\lvar\vvvalnone\svar\vvvalntwo & \defeq & \val_{1_i}
		&
		\vv\tm \itsubp\lvar\vvvalnone\svar\vvvalntwo & \defeq & \vv{\tm \itsubp\lvar\vvvalnone\svar\vvvalntwo}
		\\
		(\proj_i\svar)\itsubp\lvar\vvvalnone\svar\vvvalntwo & \defeq & \val_{2_i}
		&
		(\tm\,\tmtwo)\itsubp\lvar\vvvalnone\svar\vvvalntwo & \defeq & \tm\itsubp\lvar\vvvalnone\svar\vvvalntwo\,
		\tmtwo\itsubp\lvar\vvvalnone\svar\vvvalntwo
		\\
		(\proj_i\tm)\itsubp\lvar\vvvalnone\svar\vvvalntwo & \defeq & \proj_i\tm\itsubp\lvar\vvvalnone\svar\vvvalntwo
		&
		\packnm\tm\bag \itsubp\lvar\vvvalnone\svar\vvvalntwo & \defeq & \packnm\tm{\bag\itsubp\lvar\vvvalnone\svar\vvvalntwo}
	\end{array}$
	}
\\\\[-10pt]

	\multicolumn{2}{c}{
	$\begin{array}{r l l@{\hspace{.5cm}} l}
		\multicolumn{4}{c}{\textsc{Root rewriting rules}}
		\\
		\packnm\tm\vvvalnone \, \vvvalntwo & \rtoccbv & 
		\tm\itsubp\efvar\vvvalnone\envar\vvvalntwo
		& \mbox{if } n = \norm\vvvalnone \mbox{ and } m = \norm\vvvalntwo
		\\
		\proj_{i} \vv\val
		& \rtoccproj &
		\val_{i} & \mbox{if }i \leq \norm{\vv\val}
	\end{array}$
	}
\\\\[-10pt]

	\begin{tabular}{ccc}
		\multicolumn{2}{c}{\textsc{Contextual closure}}
		\\
		\AxiomC{$\tm \rootRew{a} \tmtwo$}
		\UnaryInfC{$\evctxp{\tm} \Rew{a} \evctxp{\tmtwo}$}
		\DisplayProof
		& 
		for $a \in \{ \tarprefix\betav, \tarprefix\projsym\}$
	\end{tabular}
&
	$\begin{array}{c}
		\textsc{Notation}
		\\
		\totar  \ \ \defeq \ \ \totbv \cup \totproj
	\end{array}$
\end{tabular}
}
\caption{The target calculus $\tarcal$.}
\label{fig:target_calculus-bis}
\end{figure}
In this section, we extend the outline of the target calculus $\tarcal$ and of name elimination given in \refsect{target-calculus} (the notions introduced there are not repeated here), giving more definitions as well as all the properties and proofs. The definition of $\tarcal$ in \reffig{target_calculus-bis} adds the rewriting rules (and the notion of substitution they rely upon) to the grammars already given in \refsect{target-calculus}. The next paragraphs explain the involved concepts.
 
\paragraph{Projecting Substitution.}
The elimination of names rises a slight technical issue in mapping steps from $\intcal$ to $\tarcal$, because it commutes with meta-level substitutions only \emph{up to projections}. The replacement in $\intcal$ of a variable $\var$ by a value $\val\in\intcal$ indeed is simulated in $\tarcal$ by:
 \begin{enumerate}
 \item \emph{Substitution}: the replacement of, say, $\lvar$ in $\pi_i\lvar$ by a vector of target values $\vv\valtwo\in\tarcal$ containing at the $i$-th position the translation $\valtwo_i\in\tarcal$ of $\val\in\intcal$, obtaining $\pi_i\vv\valtwo$, 
\item \emph{Up to projection}: followed by the projection $\pi_i\vv\valtwo \toccproj \valtwo_i$.
\end{enumerate}
To circumvent this issue, we define an ad-hoc notion of \emph{projecting substitution} for the target calculus that performs the projections \emph{on-the-fly}, that is, while substituting. Moreover, as for $\intcal$ we define it as a simultaneous projecting substitution on both $\lvar$ and $\svar$.

In the next definition, $\fnorm\tm$ and $\nnorm\tm$ are the \lifted and source norms defined in \reffig{target_calculus-bis}. They are extended to evaluation contexts as expected.

\begin{definition}[Well-formed \wrapts/terms]
A \wrapt $\packnm\tm\bag$ is \emph{well-formed} if $\fnorm\tm\leq n$, $\nnorm\tm\leq m$, and $\norm{\bag} = n$.
Terms $\tm\in\tarcal$ and evaluation contexts $\evctx\in\tarcal$ are \emph{well-formed} if all their \wrapts are well-formed. 
\end{definition}

\begin{definition}[Projecting substitution]
Let $\tm\in\intcal$ be well-formed target and $\vvvalnone,\vvvalntwo\in\tarcal$ be tuples of values such that $\norm\vvvalnone \geq \fnorm\tm$ and $\norm\vvvalntwo \geq \nnorm\tm$. Then $\tm \itsubp\lvar\vvvalnone\svar\vvvalntwo$ is defined in \reffig{target_calculus-bis}.
\end{definition}

\paragraph{Operational Semantics} The rewriting rules of $\tarcal$ (in \reffig{target_calculus-bis}) are as those of $\intcal$ up to the different notion of substitution in $\totbv$. 

\begin{definition}[Clashes]
A well-formed term $\tm\in\tarcal$ is a \emph{clash} if it has shape $\evctxp{\tmtwo}$ where $\tmtwo$ has one of the following forms, called a \emph{root clash}:
\begin{itemize}
\item \emph{Clashing projection}: $\tm = \proj_{i} \val$ and ($\val$ is not a tuple or it is but $\norm\val <i$);
\item \emph{Clashing \wrapt}:  $\tm = \packnm{ \tmtwo }{ \bag }  \val_2$ and ($\val_2$ is not a tuple or it is but $m \neq \norm{\vv{\val_2}}$);
\item \emph{Clashing tuple}: $\tm = \vv\tmtwo  \tmthree$.
\end{itemize}
\end{definition}

 We say that $\tm\in\tarcal$ is \emph{good} if it is well-formed and clash-free.

\begin{lemma}[Determinism]
	\label{l:target-determinism}
Let $\tm\in\tarcal$. If $\tm \totar \tmtwo$ and $\tm \totar \tmthree$ then $\tmtwo = \tmthree$ and $\tm$ is not a value.
\end{lemma}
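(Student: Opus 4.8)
The plan is to proceed by structural induction on $\tm$, establishing \emph{simultaneously} both halves of the statement: that $\tm$ has at most one $\totar$-reduct, and that whenever $\tm$ reduces it is not a value. Both halves are needed in the induction, since the uniqueness argument in the compound cases relies on subterms that are values being $\totar$-normal. Because $\totar = \totbv \cup \totproj$ is the contextual closure of the two root rules $\rtoccbv$ and $\rtoccproj$, any step $\tm\totar\tmtwo$ factors as $\tm = \evctxp{\tmfour}$ with $\tmfour$ a root redex. The heart of the argument is therefore to show that, for each shape of $\tm$, the decomposition $(\evctx,\tmfour)$ is forced, and that a root-level redex and a redex lying properly inside a subterm can never coexist.

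First I would dispatch the base and value cases, which also secure the ``not a value'' clause. A projected variable $\prvar$ is $\totar$-normal: neither root rule matches it and no evaluation context enters it, so the implication is vacuous. A closure $\packnm{\tmtwo}\bag$ is a value and is normal, since the grammar of evaluation contexts has no production entering a \wrapt body or its \bagt, and no root rule fires on a bare closure. A tuple all of whose components are values is the value $\vv\val$; the only context that could act on it is $\tuple{\tuv\tm,\evctx,\tuv\val}$, which would force a value component to reduce, contradicting the induction hypothesis. Together these cases show that values are $\totar$-normal, which is exactly the contrapositive of the ``not a value'' assertion.

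Next I would treat the three compound non-value cases. For a tuple $\tuple{\tm_1,\ldots,\tm_n}$ the production $\tuple{\tuv\tm,\evctx,\tuv\val}$ forces the hole to the rightmost non-value component: two candidate positions $j<j'$ would require $\tm_{j'}$ to be at once a value (it lies to the right of the hole at $j$) and reducible (the hole sits at $j'$), impossible by the induction hypothesis; the reduct is then unique by the induction hypothesis on that component. For $\proj_i\tmtwo$ I split on $\tmtwo$: if $\tmtwo=\vv\val$ only the root rule $\rtoccproj$ can fire (and gives a unique reduct $\val_i$ when $i$ is in range, no step otherwise, since a value $\tmtwo$ cannot reduce), whereas if $\tmtwo$ is not a value the root rule is inapplicable and the only step is $\proj_i\evctx$ reducing $\tmtwo$, unique by the induction hypothesis. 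For an application $\tmtwo\,\tmthree$ I split first on $\tmthree$: if $\tmthree$ is not a value then only $\tmtwo\,\evctx$ applies; if $\tmthree=\vv\val$ is a value then $\tmtwo\,\evctx$ is excluded, and I split further on $\tmtwo$, noting that $\rtoccbv$ fires exactly when $\tmtwo$ is an evaluated closure of matching arity (a value, hence non-reducible) while $\evctx\,\vv\val$ fires exactly when $\tmtwo$ reduces; these conditions are mutually exclusive and each yields a unique reduct.

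The main obstacle, such as it is, is the mutual-exclusivity bookkeeping for applications and projections: in each subcase one must verify that a root redex and a redex strictly inside a subterm cannot both be available, and this hinges on the right-to-left evaluation order encoded by the productions $\evctx\,\vv\val$ and $\tuple{\tuv\tm,\evctx,\tuv\val}$ (which demand that the material to the right of the hole be already evaluated) together with the inductively available fact that values are normal. The argument is otherwise routine and parallels the determinism proofs for $\soucal$ and $\intcal$; in particular the projecting substitution of $\tarcal$ plays no role, as determinism depends only on the shapes of redexes and contexts, not on how contracta are computed.
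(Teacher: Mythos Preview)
Your proposal is correct and is precisely the structural induction the paper intends; the paper's own proof is the single line ``By induction on $\tm$'', of which your argument is a faithful unpacking. The only microscopic omission is that in the application case your dichotomy on $\tmthree$ should also cover the subcase where $\tmthree$ is a \wrapt value (not a tuple of values), but there no context production or root rule applies and the claim is vacuous.
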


\begin{proof}
By induction on $\tm$.
\end{proof}

\begin{lemma}[Harmony for $\tarcal$]
\label{l:target-harmony}
Let $\tm\in\tarcal$ be closed and good. Then either $\tm$ is a value or $\tm \totar \tmtwo$ for some closed good $\tmtwo\in\tarcal$.
\end{lemma}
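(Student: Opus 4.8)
The plan is to replay the harmony argument of the earlier calculi, and in particular of $\intcal$ (\reflemma{intermediate-harmony}), splitting the statement into a \emph{progress} part and a \emph{preservation} part. The key methodological choice is to phrase progress using closedness and well-formedness \emph{only}, never clash-freeness, so that the inductive hypothesis applies to the structural subterms into which evaluation descends (function, argument, projected term, tuple components): both properties are inherited by those subterms, because target evaluation contexts never cross \wrapt bodies, whereas clash-freeness---being coinductive and global---would not transfer cleanly. Concretely, I would first prove the \emph{trichotomy}: if $\tm\in\tarcal$ is closed and well-formed, then $\tm$ is a value, or a clash, or $\tm\totar\tmtwo$.

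The trichotomy is proved by structural induction on $\tm$, using that target evaluation contexts compose so that a step or a clash of a subterm lifts to $\tm$. A bare projected variable $\prvar$ is excluded at top level by closedness, and a \wrapt $\packnm\tm\bag$ is a value. For a projection $\proj_i\tmtwo$ the inductive hypothesis either propagates a step or a clash of $\tmtwo$, or $\tmtwo$ is a value: a value tuple $\vv\val$ fires $\rtoccproj$ when $i\le\norm{\vv\val}$ and is a clashing projection otherwise, while a \wrapt is a clashing projection. Tuples and applications recurse on their components right-to-left; the rightmost non-value component is handled by the inductive hypothesis, and once the relevant operands are values one either reaches a value (a tuple) or classifies the operator of an application. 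The one point requiring closedness is the operator $\val_1=\packnm{\tm'}\bag$ of an application: closedness forces $\bag$ to be a value bag $\vvvalnone$ (a variable bag would place projected variables out of every \wrapt body), so $\rtoccbv$ fires when the argument is a value tuple of the matching length and otherwise we have a clashing \wrapt; a tuple operator gives a clashing tuple.

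For preservation I would show that $\tm$ closed and good with $\tm\totar\tmtwo$ implies $\tmtwo$ closed and good. Clash-freeness of $\tmtwo$ is immediate from the coinductive definition of clash-freeness. Well-formedness is preserved because well-formed terms are stable under projecting substitution (the analogue of the substitution property used for $\intcal$) and both contracta are assembled from well-formed pieces. Closedness is the genuinely new ingredient: for $\rtoccproj$ the reduct $\val_i$ is a subterm of a closed tuple; for $\rtoccbv$ the contractum is $\tm'\itsubp\efvar\vvvalnone\envar\vvvalntwo$, and inspecting the projecting substitution shows that it replaces the occurrences of $\proj_i\efvar$ and $\proj_i\envar$ lying out of nested \wrapt bodies by the closed values $(\vvvalnone)_i$ and $(\vvvalntwo)_i$, while on a nested \wrapt it rewrites only the bag (turning its projected-variable entries into values) and leaves the body untouched---so no projected variable is left out of a \wrapt body. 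Assembling the two parts: since $\tm$ is good it is clash-free, hence not a clash, so the trichotomy yields \emph{value} or \emph{reduces}, and in the reducing case preservation gives a closed good $\tmtwo$. I expect the main obstacle to be exactly these closedness arguments---the value-bag consequence of closedness in progress, and closedness under the projecting substitution in preservation---since the remaining cases are a routine transcription of the $\intcal$ proof.
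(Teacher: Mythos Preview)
Your proposal is correct and follows essentially the same approach as the paper: a progress/trichotomy argument by structural induction on closed well-formed terms (not using clash-freeness in the induction), followed by a preservation step for closedness, well-formedness, and clash-freeness. The two target-specific points you single out---that closedness forces the bag of a top-level \wrapt to be a value bag, and that the projecting substitution leaves \wrapt bodies untouched while eliminating all surface projected variables---are exactly the places where the $\tarcal$ proof departs from the $\intcal$ one, and you handle them correctly.
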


\myinput{\proofspath/03a-target_calculus/harmony_target}

\begin{figure}[t!]
\centering
\fbox{
\begin{tabular}{ccc}
\setlength{\arraycolsep}{2pt}
$\begin{array}{rll@{\hspace{.85cm}} rll@{\hspace{.85cm}} rlllllrllrlllll}
		\multicolumn{9}{c}{\textsc{Name elimination translation } \intcal \to \tarcal}
		\\[2pt]
		\clc{\vartwo_i}\vvvartwo\vvvar & \defeq  & \proj_i \lvar
		&
		\clc{\proj_i\tm}\vvvartwo\vvvar & \defeq  & \proj_i{\clc\tm\vvvartwo\vvvar}
		&
		\clc{\tuple{\tm_1,\mydots,\tm_n}}\vvvartwo\vvvar & \defeq  & \tuple{\clc{\tm_1}\vvvartwo\vvvar,\mydots,\clc{\tm_n}\vvvartwo\vvvar}
		\\[5pt]
		\clc{\var_i}\vvvartwo\vvvar & \defeq  & \proj_i \svar
		&
		\clc{\tm\tmtwo}\vvvartwo\vvvar & \defeq  & \clc\tm\vvvartwo\vvvar \,\clc\tmtwo\vvvartwo\vvvar
		&
		\clc{\pack{\dabsv\varthree\varfour\tm}{\bag} }\vvvartwo\vvvar & \defeq  & 
		\pack{\clc\tm{\tuvarthree}{\tuvarfour}}{ \clc\bag\vvvartwo\vvvar}

\\[4pt]\hline
		\multicolumn{9}{c}{\textsc{Naming reverse translation }\tarcal \to \intcal}
		\\[2pt]
		\naming{\proj_i \lvar}  & \defeq  & \vartwo_i 
		&
		\naming{(\proj_i\tm)} & \defeq  & \proj_i{\naming\tm}
		&
		\naming{\tuple{\tm_1,\mydots,\tm_n}}  & \defeq  & \tuple{\naming{\tm_1} ,\mydots,\naming{\tm_n} }
		\\[4pt]
		\naming{\proj_i \svar}  & \defeq  & \var_i 
		&
		\naming{(\tm\,\tmtwo)}  & \defeq  & \naming\tm\,  \naming\tmtwo
		&
		\naming{\packnm\tm\bag} & \defeq  & \pack{\dabsv\varthree\varfour\namingp\tm\tuvarthree\tuvarfour}{\naming\bag} 
		\\
		&&&&&&\multicolumn{3}{r}{\mbox{with $\tuvarthree\vdisjoint\tuvarfour$ fresh, }\norm\tuvarthree=n\mbox{, and } \norm\tuvarfour=m}
	\end{array}$
\end{tabular}
}
\caption{Definition of the translation $\clc\tm\tuvartwo\tuvar:\intcal \to \tarcal$ and the reverse translation $\naming\tm:\tarcal \to \intcal$.}
\label{fig:int-tar_translations}
\end{figure}
\paragraph{Translation.} The \emph{name elimination} translation in \reffig{int-tar_translations} takes a well-formed and possibly open  term $\tm\in\intcal$ and two sequences of variables $\tuvartwo\vdisjoint\tuvar$, the sequence of \emph{\lifted} variables $\vvvartwo$ and the sequence of \emph{source} variables $\vvvar$, which together cover the free variables of $\tm$, that is, $\fv\tm \subseteq \vvvartwo\vdisjoint\vvvar$. We write $\clc\tm\vvvartwo\vvvar$ for the translation of $\tm$ with respect to the sequences $\vvvartwo $ and $\vvvar$. 

\begin{lemma}[Basic properties of name elimination]
\label{l:varelim-properties}
Let $\vartwolist$ and $\varlist$ be two disjoint sequences of variables.
\begin{enumerate}
\item 
\emph{Values}: if $\val\in\intcal$ then $\clc\val\vartwolist\varlist$ is a value of $\tarcal$.
\item 
\emph{Evaluation contexts}: if $\evctx\in\intcal$ then $\clc\evctx\vartwolist\varlist$ is an evaluation context of $\tarcal$.
\end{enumerate}
\end{lemma}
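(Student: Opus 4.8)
The plan is to prove both items by a straightforward structural induction, mirroring the argument already used for the \lifting translation (\reflemma{lali-properties}). Before treating values and contexts I would record the trivial preliminary that name elimination sends $\intcal$ terms to $\tarcal$ terms and $\intcal$ \bagts to $\tarcal$ \bagts: every clause of $\clc\cdot\vartwolist\varlist$ in \reffig{int-tar_translations} replaces a top-level $\intcal$ constructor by the corresponding $\tarcal$ constructor, the two base cases $\clc{\vartwo_i}\vartwolist\varlist = \proj_i\lvar$ and $\clc{\var_i}\vartwolist\varlist = \proj_i\svar$ producing legal projected variables $\prvar$. An immediate induction then yields term-to-term and \bagt-to-\bagt preservation, which discharges the ``body'' obligation sitting inside every \wrapt.

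For item 1 I would induct on the value $\val\in\intcal$, carrying alongside the companion claim that $\clc\bag\vartwolist\varlist$ is a \bagt of $\tarcal$ whenever $\bag$ is a \bagt of $\intcal$, since values and \bagts are interleaved through \wrapts. If $\val = \pack{\dabsv\vartwo\var\tm}{\bag}$, then $\clc\val\vartwolist\varlist = \pack{\clc\tm{\tuvartwo}{\tuvar}}{\clc\bag\vartwolist\varlist}$, a \wrapt of $\tarcal$ and hence a value, using the preliminary for the body $\clc\tm{\tuvartwo}{\tuvar}$ (note that the parameter lists switch to the \wrapt's own binders $\tuvartwo,\tuvar$, exactly as prescribed by the translation clause) and the companion claim for the \bagt. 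For the \bagt the two sub-cases are: a variable \bagt translates to a tuple of projected variables $\vec\prvar$, i.e. a variable \bagt of $\tarcal$; a value \bagt $\vv\val$ translates to a tuple whose entries are values by the \ih. If instead $\val = \vv\val$ is a tuple of values, then $\clc\val\vartwolist\varlist$ is a tuple whose components are values by the \ih, hence a value of $\tarcal$.

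For item 2 I would induct on the evaluation context $\evctx\in\intcal$, invoking item 1 at each value position. The cases $\ctxhole$, $\tm\,\evctx$, $\proj_i\evctx$, and $\tuple{\tuv\tm,\evctx,\tuv\val}$ are immediate: each production maps to its $\tarcal$ counterpart, with the plugged terms handled by the preliminary and the plugged values by item 1. The only case deserving attention is the application context $\evctx\,\val$, because the $\tarcal$ grammar admits in argument position only a tuple of values $\vec\val$; one checks that the translated argument does land in this slot via item 1 (the argument being a tuple, its translation is componentwise a tuple of values), while the function side is a $\tarcal$ context by the \ih.

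I expect the proof to be routine, the only points of care being the two just flagged: the interleaving of values and \bagts through \wrapts, which forces the companion statement in item 1, and the matching of the translated argument of an application against the tuple-shaped argument slot of $\tarcal$ evaluation contexts in item 2. The reparametrisation of the variable sequences $\vartwolist,\varlist$ into $\tuvartwo,\tuvar$ when the induction crosses a \wrapt boundary is handled by the translation clause itself and needs no bookkeeping beyond applying the \ih with the updated lists.
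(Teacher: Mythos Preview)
Your approach is correct and matches the paper's, which treats both points by straightforward structural induction (the mirror lemma for naming is dispatched in the paper with the single phrase ``Straightforward inductions'', and this one is the same). Your elaboration of the mutual recursion between values and \bagts in item~1 is exactly the right bookkeeping.

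One small gap: in the case $\evctx\,\val$ of item~2 you write ``the argument being a tuple, its translation is componentwise a tuple of values''. But the $\intcal$ grammar allows $\val$ to be a \wrapt, not only a tuple, so you cannot simply assume the argument is a tuple. The $\tarcal$ grammar as printed indeed has $\evctx\,\vv\val$ in argument position, which would make the lemma literally false on a context like $\evctx\,\pack{\dabsv\vartwo\var\tm}{\bag}$. This is almost certainly a typographical slip in the $\tarcal$ grammar (it should read $\evctx\,\val$, in line with $\soucal$ and $\intcal$; note that the read-back of machine stacks in \reffig{lifted-tam} and \reffig{target-tam} produces contexts $\ctxhole\,\decode{\evval}$ with an arbitrary value, not only tuples). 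Your instinct to flag the case was right; the resolution is not to restrict the argument but to observe that the $\tarcal$ production should be read as $\evctx\,\val$.
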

\myinput{\proofspath/03a-target_calculus/var_elimination_properties}

The name elimination translation is a termination-preserving strong bisimulation between $\intcal$ and $\soucal$ as stated by \refthm{tar-simulation} below, which is proved using the commutation of the substitution of $\tarcal$ with the translation of the next proposition. The theorem mimics the one relating  $\soucal$ and $\intcal$ (\refthm{rev-transl-implem-bricks}), but the statement is stronger as it does not need the closure hypothesis. 

\begin{proposition}[Key commutation of projecting substitution and name elimination]
\label{prop:tarcal-key-sub-prop-trans}
Let $\tm,\valnone,\valntwo\in\intcal$ be well-formed and such that  $\fv\tm\subseteq \tuvartwo\vdisjoint\tuvar$, $\fv\valnone\subseteq\varthreelist\vdisjoint\varfourlist$, and $\fv\valntwo\subseteq\varthreelist\vdisjoint\varfourlist$. Then:
\begin{center}
$\clc \tm\vartwolist\varlist\itsubp\lvar{\vv{\clc\valnone\varthreelist\varfourlist} }\svar{\vv{\clc\valntwo\varthreelist\varfourlist}} = \clc{\tm\isubp	\vvvartwo\vvvalnone\vvvar\vvvalntwo}\varthreelist\varfourlist  $.
\end{center}
\end{proposition}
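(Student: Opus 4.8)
The plan is to prove the identity by structural induction on the well-formed term $\tm\in\intcal$, keeping the two substitutions and the four parameter sequences fixed throughout (at every use below the induction hypothesis is needed only on an immediate subterm translated with the \emph{same} parameters $\vartwolist,\varlist,\varthreelist,\varfourlist$, so no generalization of the statement is required). Abbreviate by $V_1$ and $V_2$ the componentwise name-elimination translations $\vv{\clc\valnone\varthreelist\varfourlist}$ and $\vv{\clc\valntwo\varthreelist\varfourlist}$ of the substituend tuples $\vvvalnone$ and $\vvvalntwo$; since the simultaneous substitution $\isubp\vvvartwo\vvvalnone\vvvar\vvvalntwo$ is defined we have $\norm{V_1}=\norm{\vvvalnone}=\norm\tuvartwo$ and $\norm{V_2}=\norm{\vvvalntwo}=\norm\tuvar$, so every projecting look-up invoked below is well-defined. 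I would first dispatch the two variable base cases. If $\tm=\vartwo_i$ is the $i$-th variable of $\tuvartwo$, then $\clc{\vartwo_i}{\vartwolist}{\varlist}=\proj_i\lvar$, and the defining clause of projecting substitution gives $(\proj_i\lvar)\itsubp\lvar{V_1}\svar{V_2}=(V_1)_i=\clc{(\vvvalnone)_i}{\varthreelist}{\varfourlist}$; on the other side the defining clause of the intermediate simultaneous substitution gives $\vartwo_i\isubp\vvvartwo\vvvalnone\vvvar\vvvalntwo=(\vvvalnone)_i$, whose translation is the very same term. The case $\tm=\var_j$ (a variable of $\tuvar$) is symmetric, using $\proj_j\svar$ and $V_2$. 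This is exactly where the on-the-fly projection built into projecting substitution is essential, and is the reason commutation holds here on the nose, unlike for the \lifting translation.

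The cases $\tm=\tmthree\,\tmfour$ (application), $\tm=\proj_i\tmthree$ (projection), and $\tm=\tuple{\tmthree_1,\dots,\tmthree_n}$ (tuple) are routine: name elimination, projecting substitution, and the intermediate simultaneous substitution each commute with the top-level constructor, so that after unfolding all three definitions the goal reduces to the induction hypotheses on the immediate subterms. The free-variable hypothesis $\fv\tm\subseteq\tuvartwo\vdisjoint\tuvar$ is inherited by every subterm, so the hypotheses of the statement hold of them and the induction hypothesis applies with unchanged parameters.

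The only case with genuine content is the wrapping $\tm=\pack{\dabsv\varthree\varfour\tmthree}{\bag}$, whose internal binders $\tuvarthree,\tuvarfour$ I take fresh, in particular disjoint from $\tuvartwo,\tuvar$ and from $\varthreelist,\varfourlist$. By definition $\clc{\tm}{\vartwolist}{\varlist}=\pack{\clc{\tmthree}{\tuvarthree}{\tuvarfour}}{\clc{\bag}{\vartwolist}{\varlist}}$, where the body is re-translated against the wrapping's \emph{own} binders $\tuvarthree,\tuvarfour$ while the bag keeps the ambient parameters $\vartwolist,\varlist$. Two observations close the case. First, projecting substitution does not descend into a wrapping body by definition, acting only on the bag, so the left-hand side equals $\pack{\clc{\tmthree}{\tuvarthree}{\tuvarfour}}{\clc{\bag}{\vartwolist}{\varlist}\itsubp\lvar{V_1}\svar{V_2}}$. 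Second, by well-formedness $\fv\tmthree\subseteq\tuvarthree\vdisjoint\tuvarfour$, and by freshness these binders avoid $\tuvartwo,\tuvar$, so the defining clause of simultaneous substitution for wrappings leaves the body syntactically untouched; hence the right-hand side equals $\pack{\clc{\tmthree}{\tuvarthree}{\tuvarfour}}{\clc{\bag\isubp\vvvartwo\vvvalnone\vvvar\vvvalntwo}{\varthreelist}{\varfourlist}}$, with the \emph{same} body component $\clc{\tmthree}{\tuvarthree}{\tuvarfour}$. The goal therefore collapses to the bag identity $\clc{\bag}{\vartwolist}{\varlist}\itsubp\lvar{V_1}\svar{V_2}=\clc{\bag\isubp\vvvartwo\vvvalnone\vvvar\vvvalntwo}{\varthreelist}{\varfourlist}$, which is precisely the induction hypothesis applied to $\bag$: the latter is a proper subterm of $\tm$ and, since $\tuvarthree,\tuvarfour$ do not scope over it, still satisfies $\fv\bag\subseteq\tuvartwo\vdisjoint\tuvar$.

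The main difficulty I anticipate is bookkeeping rather than conceptual: one must keep the two distinct substitution operations (the intermediate \emph{simultaneous} substitution versus the target \emph{projecting} substitution) and the two distinct pairs of sequences (the ambient $\vartwolist,\varlist$ of the enclosing wrapping versus the substituends' context $\varthreelist,\varfourlist$) correctly aligned, and must check at each use that well-formedness together with the length equalities $\norm{\vvvalnone}=\norm\tuvartwo$ and $\norm{\vvvalntwo}=\norm\tuvar$ guarantees that both substitutions are defined. Once the definitions are unfolded every case is a one-line calculation, and the only place where anything genuinely happens is the variable base cases, where the projection performed on the fly by the projecting substitution is exactly what is needed to match the component selected by the simultaneous substitution on the intermediate side. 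This parallels, and is slightly simpler than, the commutation lemma \refpropeq{intcal-key-sub-prop-rev-trans} used between the source and intermediate calculi.
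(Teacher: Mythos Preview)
Your proof is correct and follows the same approach as the paper: a structural induction on $\tm$ where the variable cases are discharged by the defining clause of projecting substitution (which performs the projection on the fly, matching exactly the component selected by the simultaneous substitution), the compound constructors are routine, and the \wrapt case reduces to the induction hypothesis on the bag since neither substitution touches the body. Your observation that no generalization over the parameter sequences is needed---because the body component is syntactically identical on both sides and only the bag requires the IH, with unchanged parameters---is spot on.
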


\myinput{\proofspath/03a-target_calculus/key_commutation_sub_transl}

\begin{theorem}[Intermediate-target termination-preserving strong bisimulation]
\label{thm:tar-simulation}
Let $\tm\in\intcal$ be well-formed and such that $\fv\tm\subseteq\vartwolist\vdisjoint\varlist$.
\begin{enumerate}
\item \emph{Projection}: if $\tm\toibv\tmtwo$ then $\clc\tm\vartwolist\varlist \totbv \clc\tmtwo\vartwolist\varlist$ and if $\tm\toiproj\tmtwo$ then $\clc\tm\vartwolist\varlist \totproj \clc\tmtwo\vartwolist\varlist$.
\item \label{p:tar-simulation-halt}\emph{Halt}: $\tm$ is $\toint$-normal (resp. a value, resp. a clash) if and only if $\clc\tm\vartwolist\varlist$ is  $\totar$-normal (resp. a value, resp. a clash).
\item \emph{Reflection}: if $\clc\tm\vartwolist\varlist \totbv \tmtwo$ then there exists $\tmthree$ such that $\tm\toibv\tmthree$  and $\clc\tmthree\vartwolist\varlist = \tmtwo$,  and if $\clc\tm\vartwolist\varlist \totproj \tmtwo$ then there exists $\tmthree$ such that $\tm\toiproj\tmthree$  and $\clc\tmthree\vartwolist\varlist = \tmtwo$.
\end{enumerate}
\end{theorem}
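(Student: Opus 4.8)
The plan is to mirror the structure of \refthm{rev-transl-implem-bricks}, with the projecting substitution of \reffig{target_calculus-bis} replacing the meta-level substitution used there; the whole argument runs over well-formed terms, using that both $\toint$ and $\totar$ preserve well-formedness. The only structural ingredient I need beyond \reflemma{varelim-properties} is that name elimination commutes with plugging into evaluation contexts, namely $\clc{\evctxp\tm}\vartwolist\varlist = (\clc\evctx\vartwolist\varlist)\ctxholep{\clc\tm\vartwolist\varlist}$ whenever $\fv{\evctxp\tm}\subseteq\vartwolist\vdisjoint\varlist$. This is proved by a routine induction on the $\intcal$ evaluation context $\evctx$, the crucial point being that $\intcal$ evaluation contexts never cross a \wrapt binder (there is no production $\pack{\dabsv\vartwo\var\evctx}\bag$), so the two parameter lists $\vartwolist,\varlist$ stay fixed all the way down to the hole; it is the natural companion of \reflemma{varelim-properties}.2.

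\emph{Projection (item~1).} I decompose $\tm\toint\tmtwo$ as $\tm=\evctxp{\tm_0}$ and $\tmtwo=\evctxp{\tmtwo_0}$ with $\tm_0\rootRew{a}\tmtwo_0$, and by the context lemma together with \reflemma{varelim-properties}.2 it suffices to fire the corresponding root step inside the translated context. For a $\proj$ root step $\proj_i\vv\val\rtoiproj\val_i$, name elimination gives $\clc{\proj_i\vv\val}\vartwolist\varlist = \proj_i\tuple{\clc{\val_1}\vartwolist\varlist,\dots}$, which $\totproj$-reduces to $\clc{\val_i}\vartwolist\varlist$, the translation of the contractum. For a $\betav$ root step with redex $\pack{\dabsv\varthree\varfour{\tm_1}}{\vv{\val_1}}\,\vv{\val_2}$, the \wrapt clause of name elimination (\reffig{int-tar_translations}) gives $\pack{\clc{\tm_1}\varthreelist\varfourlist}{\clc{\vv{\val_1}}\vartwolist\varlist}\,\clc{\vv{\val_2}}\vartwolist\varlist$; since the translation preserves tuple lengths the target side conditions transfer, so this is a $\totbv$-redex and fires to $\clc{\tm_1}\varthreelist\varfourlist\itsubp\lvar{\vv{\clc{\val_1}\vartwolist\varlist}}\svar{\vv{\clc{\val_2}\vartwolist\varlist}}$. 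Here is where the key commutation \refprop{tarcal-key-sub-prop-trans} does the work: instantiating its body-parameter lists with the \wrapt binders $\varthreelist,\varfourlist$ and its outer lists with the ambient $\vartwolist,\varlist$ (legitimate because $\fv{\tm_1}\subseteq\varthreelist\vdisjoint\varfourlist$ by well-formedness, while $\fv{\vv{\val_1}},\fv{\vv{\val_2}}\subseteq\vartwolist\vdisjoint\varlist$ since the redex sits at the hole of the context), this term equals $\clc{\tm_1\isubp{\varthreelist}{\vv{\val_1}}{\varfourlist}{\vv{\val_2}}}\vartwolist\varlist$, the translation of the contractum. Labels match by construction.

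\emph{Halt (item~2).} To avoid a circular dependency with item~3, I prove this directly, by showing that name elimination \emph{reflects} the relevant top-level shapes: $\clc\tm\vartwolist\varlist$ is a value, a root-redex, or a root-clash if and only if $\tm$ is. The forward direction for values is \reflemma{varelim-properties}.1 and for redexes it is the computation in item~1; the backward directions follow by inspecting the top constructor of $\tm$, since name elimination acts homomorphically on applications, projections, tuples and \wrapts and sends value \bagts to value \bagts. Combining these equivalences with the context lemma and \reflemma{varelim-properties}.2, the term $\tm$ exposes a redex (resp. a clash) under some $\intcal$ evaluation context exactly when $\clc\tm\vartwolist\varlist$ exposes the matching target redex (resp. clash) under $\clc\evctx\vartwolist\varlist$; the three equivalences (normal, value, clash) follow.

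\emph{Reflection (item~3).} With item~2 available this is immediate from determinism. Assume $\clc\tm\vartwolist\varlist\totar\tmtwo$; then $\clc\tm\vartwolist\varlist$ is not $\totar$-normal, so by item~2 $\tm$ is not $\toint$-normal, hence $\tm\toint\tmthree$ for some $\tmthree$, unique by \reflemma{intermediate-determinism}. By item~1, $\clc\tm\vartwolist\varlist\totar\clc\tmthree\vartwolist\varlist$, and since $\totar$ is deterministic (\reflemma{target-determinism}) we conclude $\tmtwo=\clc\tmthree\vartwolist\varlist$; the rule of the reflected step matches that of the given one by the label preservation of item~1 together with determinism. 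The single delicate point of the whole proof is the $\betav$ case of item~1, where the gap between ordinary and projecting substitution surfaces and is absorbed by \refprop{tarcal-key-sub-prop-trans}; the only trap is the potential circularity between the backward half of Halt and Reflection, which is precisely why Halt is established independently through the shape-reflection analysis rather than deduced from Reflection.
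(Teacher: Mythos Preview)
Your proof is correct and follows essentially the same route as the paper: projection is handled at the root via \refprop{tarcal-key-sub-prop-trans} and lifted to arbitrary positions through the context-commutation lemma (your companion to \reflemma{varelim-properties}.2); Halt is proved directly by a shape-preservation/reflection argument; and Reflection is derived from Halt and Projection using determinism of both calculi. Your explicit remark about avoiding circularity between Halt and Reflection is exactly the reason the paper proves Halt independently.
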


\myinput{\proofspath/03a-target_calculus/projection_halt}

\paragraph{From Source to Target} The two obtained strong bisimulation theorems---namely \refthm{rev-transl-implem-bricks} and \refthm{tar-simulation}---do not compose, because the first one uses the reverse translation from $\intcal$ to $\soucal$, while the second one uses the direct translation from $\intcal$ to $\tarcal$. To solve the issue, we need a reverse translation from $\tarcal$ to $\intcal$, which is smoothly obtained. Such reverse \emph{naming} translation is defined in \reffig{int-tar_translations}. It takes a well-formed $\tm\in\tarcal$ and two sequences of variables $\tuvartwo\vdisjoint\tuvar$ such that $\norm\tuvartwo \geq \fnorm\tm$ and $\norm\tuvar \geq \nnorm\tm$ and uses them to give names to the indices of the projections. We write $\naming\tm$ for the translation of $\tm$ with respect to the sequences $\vvvartwo $ and $\vvvar$. On closed terms, the translation is meant to be applied with empty parameter lists (as $\namingp\tm\emptylist\emptylist$). 

Unsurprisingly, naming is a termination-preserved strong bisimulation from the target calculus to the intermediate one, as stated by the theorem below. For its proof, we need the mirror image of some of the results proved for name elimination. 

\begin{lemma}[Basic properties of naming]
\label{l:varelim-properties-mirror}
Let $\vartwolist$ and $\varlist$ be two suitable disjoint sequences of variables.
\begin{enumerate}
\item 
\emph{Values}: if $\val\in\tarcal$ then $\naming\val$ is a value of $\intcal$.
\item 
\emph{Evaluation contexts}: if $\evctx\in\tarcal$ then $\naming\evctx$ is an evaluation context of $\intcal$.
\end{enumerate}
\end{lemma}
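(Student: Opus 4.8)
The plan is to prove both statements by structural induction, running exactly the mirror of the proof of \reflemma{varelim-properties} (basic properties of name elimination), but in the reverse direction, from $\tarcal$ to $\intcal$. Since the grammar of evaluation contexts of \reffig{target_calculus-bis} refers to values in the productions $\evctx\,\vv\val$ and $\tuple{\tuv\tm,\evctx,\tuv\val}$, I would first establish the value statement (1) and then reuse it inside the inductive cases of the context statement (2). As in the other translations, the extension of $\naming\cdot$ to evaluation contexts is the expected one, in particular $\naming\ctxhole = \ctxhole$.

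For statement (1), I would argue by induction on the value $\val\in\tarcal$, whose grammar yields two cases. If $\val=\vv\val=\tuple{\val_1,\mydots,\val_n}$, then $\naming{\vv\val}=\tuple{\naming{\val_1},\mydots,\naming{\val_n}}$ by the tuple clause of \reffig{int-tar_translations}, and the induction hypothesis makes each $\naming{\val_i}$ an $\intcal$-value, so the whole tuple is a value of $\intcal$. If $\val=\packnm\tm\bag$ is a \wrapt, then by definition $\naming\val=\pack{\dabsv\varthree\varfour\namingp\tm\tuvarthree\tuvarfour}{\naming\bag}$, which is a \wrapt of $\intcal$ and hence a value, provided $\naming\bag$ is a legal $\intcal$-\bagt. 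The one genuine check here is that naming preserves the two kinds of \bagts: if $\bag=\vec\prvar$, then every projected variable $\proj_i\lvar$ or $\proj_i\svar$ is named to a plain variable $\vartwo_i$ or $\var_i$, so $\naming\bag$ is a tuple of variables, i.e. a variable \bagt; and if $\bag=\vec\val$, then the induction hypothesis turns it into a tuple of $\intcal$-values, i.e. a value \bagt. One also observes that the freshness side condition of the \wrapt clause ($\tuvarthree\vdisjoint\tuvarfour$ fresh, of the prescribed lengths) is always satisfiable, which is immediate.

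For statement (2), I would run a second, shorter induction on the evaluation context $\evctx\in\tarcal$, comparing the $\tarcal$ grammar of \reffig{target_calculus-bis} with the $\intcal$ grammar of \reffig{intermediate_calculus}; these are in one-to-one correspondence, differing only in the value constraint on argument positions. The base case $\naming\ctxhole=\ctxhole$ is immediate. The cases $\tm\,\evctx$, $\proj_i\evctx$, and $\tuple{\tuv\tm,\evctx,\tuv\val}$ follow from the induction hypothesis on the sub-context, with the last one additionally using statement (1) so that the translated values $\naming{\tuv\val}$ remain values and the result sits in the corresponding $\intcal$-production. The case $\evctx\,\vv\val$ is analogous: the induction hypothesis handles $\naming\evctx$, and statement (1) ensures $\naming{\vv\val}$ is a value, placing $\naming{\evctx\,\vv\val}$ in the $\evctx\,\val$ production of $\intcal$.

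I do not expect any real obstacle. The two grammars correspond clause by clause with the defining equations of $\naming\cdot$, so the only points deserving a line of justification are the \bagt-preservation in the \wrapt case of (1) and the appeal to (1) inside (2); everything else is a routine unfolding of the definition in \reffig{int-tar_translations}. The whole argument is the mirror image of \reflemma{varelim-properties} with the roles of $\tarcal$ and $\intcal$ swapped, and transfers essentially verbatim.
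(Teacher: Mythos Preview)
Your proposal is correct and is exactly the approach the paper takes: the paper's proof is simply ``Straightforward inductions,'' and your detailed case analysis is precisely that straightforward induction spelled out. The extra care you take with the \bagt case of \wrapts and the appeal to (1) inside (2) are the only non-trivial points, and you handle them correctly.
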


\begin{proof}
Straightforward inductions.\qedhere
\end{proof}

\begin{proposition}[Commutation of substitution and naming]
\label{prop:tarcal-key-sub-prop-trans-mirror}
Let $\tm,\valnone,\valntwo\in\tarcal$ be well-formed. Then:
\begin{center}
$\namingp\tm\tuvarthree\tuvarfour\isubp{\tuvarthree}{\naming\vvvalnone}{\tuvarfour}{\naming\vvvalnone} = \naming{\tm\itsubp\lvar\vvvalnone\svar\vvvalntwo}$.
\end{center}
for any $\tuvar$, $\tuvartwo$, $\tuvarthree$, and $\tuvarfour$ such that the involved expressions are well defined.
\end{proposition}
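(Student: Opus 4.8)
The plan is to prove the identity by structural induction on the target term $\tm$, mirroring the proof of \refpropeq{tarcal-key-sub-prop-trans} (for name elimination) but with the translation running from $\tarcal$ back to $\intcal$. The naming translation $\naming{\cdot}$, the projecting substitution $\itsubp\lvar\vvvalnone\svar\vvvalntwo$ of $\tarcal$, and the simultaneous substitution of $\intcal$ are each defined by a single downward recursion on the term structure, so the induction threads through uniformly. Throughout I keep the well-formedness and norm side conditions $\norm\vvvalnone \geq \fnorm\tm$ and $\norm\vvvalntwo \geq \nnorm\tm$ as invariants, since they are exactly what guarantees that every naming and every substitution occurring in the recursive calls is well defined.

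The base cases are the projected variables, and this is where the ``projection-on-the-fly'' of $\itsubp\lvar\vvvalnone\svar\vvvalntwo$ must match the ``select the $i$-th component'' of the intermediate substitution. For $\tm = \proj_i\lvar$, naming produces the $i$-th lifted name, applying $\isubp\tuvarthree{\naming\vvvalnone}\tuvarfour{\naming\vvvalntwo}$ returns the $i$-th component of $\naming\vvvalnone$, and since naming commutes with tuple formation this is $\naming{(\vvvalnone)_i}$; on the other side the projecting substitution rewrites $\proj_i\lvar$ directly to $(\vvvalnone)_i$, whose naming is the same term. The case $\tm = \proj_i\svar$ is symmetric, using $\tuvarfour$ and $\vvvalntwo$.

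The remaining structural cases, namely $\proj_i\tmtwo$, application $\tm\,\tmtwo$, and tuples $\vv\tm$, are immediate: naming and both substitutions commute with each of these constructors by definition, so the two sides reduce to the same constructor applied to the subterms, and equality follows componentwise from the induction hypothesis. The only delicate case is the closure $\tm = \packnm\tmtwo\bag$. Here naming introduces a pair of \emph{fresh} binder sequences for the body, chosen disjoint from $\tuvarthree$, $\tuvarfour$ and from the free variables of $\naming\vvvalnone$ and $\naming\vvvalntwo$, and renames the body with them, while the bag $\bag$ is named with the outer parameters. The crucial observation is that by definition neither substitution descends into a closure body: both $\itsubp\lvar\vvvalnone\svar\vvvalntwo$ and the intermediate substitution act only on $\bag$. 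Hence the body is the same term on both sides (up to the aligned choice of fresh binders, and with no possible capture by freshness), and the whole identity collapses to the commutation statement for the bag $\bag$. Since a bag is either a vector of projected variables or a vector of values, i.e. a tuple, this is precisely an instance of the tuple case and follows from the induction hypothesis.

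I expect the only real difficulty to be bookkeeping rather than anything conceptual: choosing the fresh internal binders introduced when naming a closure so that they stay disjoint from $\tuvarthree$, $\tuvarfour$ and from the free variables of $\naming\vvvalnone$, $\naming\vvvalntwo$, and propagating the norm inequalities $\norm\vvvalnone \geq \fnorm\tm$ and $\norm\vvvalntwo \geq \nnorm\tm$ into the subcalls so that all the involved expressions remain well defined throughout the induction. Everything else is a routine unfolding of the three recursive definitions, and, as the paper notes for the \refpropeq{tarcal-key-sub-prop-trans} side, the naming direction is even smoother than the unwrapping case because there are no subtleties tied to closing open terms.
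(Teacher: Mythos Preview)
Your proposal is correct and matches the paper's own approach: the paper simply states that the proof is analogous to that of \refpropeq{tarcal-key-sub-prop-trans}, i.e., a structural induction on $\tm$ with the projected-variable base cases handled by the on-the-fly projection and the closure case reduced to the bag via the fact that neither substitution enters the body. Your write-up is in fact more detailed than what the paper provides, but the strategy is identical.
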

\begin{proof}
Analogous to the proof of \refprop{tarcal-key-sub-prop-trans}.\qedhere
\end{proof}

The next auxiliary lemma is needed for the inverse property.
\begin{lemma}
\label{l:names-elim-stability-by-renaming}
Let $\tm\in\intcal$ such that $\fv\tm\subseteq\tuvartwo\vdisjoint\tuvar$ and let $\tuvarthree$ and $\tuvarfour$ be such that $\norm\tuvartwo=\norm\tuvarthree$ and $\norm\tuvar=\norm\tuvarfour$. Then $\clc\tm\tuvartwo\tuvar = \clc{\tm\isubp\tuvartwo\tuvarthree\tuvar\tuvarfour}\tuvarthree\tuvarfour$.
\end{lemma}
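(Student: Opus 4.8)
The plan is to prove the equality by a structural induction on $\tm$, the guiding idea being that name elimination depends on its two parameter sequences only through the \emph{positions} they assign to the free variables of $\tm$; consequently a position-preserving renaming of those sequences (which is exactly what $\isubp\tuvartwo\tuvarthree\tuvar\tuvarfour$ is, given $\norm\tuvartwo=\norm\tuvarthree$ and $\norm\tuvar=\norm\tuvarfour$) gets absorbed by the translation. Throughout I assume, as is standing for name elimination, that the target sequences are disjoint, $\tuvarthree\vdisjoint\tuvarfour$. First I would dispatch the base cases: if $\tm=\vartwo_i$ is the $i$-th variable of $\tuvartwo$, then $\clc{\vartwo_i}\tuvartwo\tuvar=\proj_i\lvar$, while the renaming sends $\vartwo_i$ to the $i$-th variable of $\tuvarthree$, whose image under $\clc{\cdot}\tuvarthree\tuvarfour$ is again $\proj_i\lvar$ (the entry exists since $\norm\tuvartwo=\norm\tuvarthree$, and it is unambiguous by $\tuvarthree\vdisjoint\tuvarfour$); the case $\tm=\var_i$ is symmetric, giving $\proj_i\svar$ on both sides. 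The compound non-wrapper cases---$\proj_i\tmtwo$, application $\tmtwo\tmthree$, and tuple $\tuple{\tm_1,\dots,\tm_n}$---are routine, since both the translation and the renaming commute with these constructors and each immediate subterm still has its free variables contained in $\tuvartwo\vdisjoint\tuvar$, so the equality follows by \ih applied componentwise.

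The crux is the wrapper case $\tm=\pack{\la{\vec p}\la{\vec q}\tmtwo}{\bag}$, where by $\alpha$-renaming I take the bound sequences $\vec p,\vec q$ fresh, in particular disjoint from $\tuvartwo,\tuvar,\tuvarthree,\tuvarfour$. Two observations close it. First, by well-formedness $\fv{\tmtwo}\subseteq\vec p\vdisjoint\vec q$ and these binders do not scope over $\bag$, so $\fv\tm=\fv\bag\subseteq\tuvartwo\vdisjoint\tuvar$; moreover, by the substitution clause for wrappers the renaming touches only the bag---it is vacuous on the body, whose free variables lie in the fresh $\vec p\cup\vec q$---so that $\tm\isubp\tuvartwo\tuvarthree\tuvar\tuvarfour=\pack{\la{\vec p}\la{\vec q}\tmtwo}{\bag\isubp\tuvartwo\tuvarthree\tuvar\tuvarfour}$. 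Second, the name-elimination clause for wrappers translates the body with the wrapper's \emph{own} sequences $\vec p,\vec q$, not with the outer parameters; hence the body component is literally $\clc{\tmtwo}{\vec p}{\vec q}$ on both sides and requires no appeal to the \ih. It therefore remains only to compare the bag components $\clc{\bag}\tuvartwo\tuvar$ and $\clc{\bag\isubp\tuvartwo\tuvarthree\tuvar\tuvarfour}\tuvarthree\tuvarfour$, which are equal by the \ih applied to the strictly smaller $\bag$, whose free variables lie in $\tuvartwo\vdisjoint\tuvar$.

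I expect the only delicate point to be this wrapper case, and specifically the bookkeeping around it: verifying that the outer renaming genuinely does not reach the body (which rests on the freshness of $\vec p,\vec q$ together with the well-formedness invariant confining $\fv{\tmtwo}$ to $\vec p\cup\vec q$), and recognizing that the translated body is invariant under the change of outer parameters precisely because the wrapper rebinds. Once this is seen, no induction on the body is needed, and the whole case reduces to the \ih on the bag, making the argument entirely uniform.
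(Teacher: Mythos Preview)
Your proposal is correct and follows exactly the approach the paper uses: the paper's proof is simply ``By induction on $\tm$'', and your structural induction with the wrapper case reducing to the \ih on the bag (the body being untouched both by the renaming and by the change of outer parameters) is the natural unfolding of that one-line proof.
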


\begin{proof}
By induction on $\tm$.
\end{proof}

The next theorem is given in the special case of $\tm$ closed because it is how it is applied in the final bisimulation theorem after it, but it could be more generally stated for open terms, as it is the case for name elimination (\refthm{tar-simulation}).

\begin{theorem}[Target-intermediate termination-preserving strong bisimulation]
\label{thm:tar-int-implementation}
Let $\tm\in\tarcal$ be closed.
\begin{enumerate}
\item \emph{Projection}: if $\tm\totbv\tmtwo$ then $\namingp\tm\emptylist\emptylist \toibv \namingp\tmtwo\emptylist\emptylist$ and if $\tm\totproj\tmtwo$ then $\namingp\tm\emptylist\emptylist \toiproj \namingp\tmtwo\emptylist\emptylist$.
\item \emph{Halt}: $\tm$ is $\totar$-normal (resp. a value, resp. a clash) if and only if $\namingp\tm\emptylist\emptylist$ is  $\totar$-normal (resp. a value, resp. a clash).
\item \emph{Reflection}: if $\namingp\tm\emptylist\emptylist \toibv \tmtwo$ then there exists $\tmthree$ such that $\tm\totbv\tmthree$  and $\namingp\tmthree\emptylist\emptylist = \tmtwo$,  and if $\namingp\tm\emptylist\emptylist \toiproj \tmtwo$ then there exists $\tmthree$ such that $\tm\totproj\tmthree$  and $\namingp\tmthree\emptylist\emptylist = \tmtwo$.

\item \emph{Inverse}: let $\tm\in\intcal$ such that $\fv\tm\subseteq\tuvartwo\vdisjoint\tuvar$. Then $\naming{\clc\tm\tuvartwo\tuvar} =_\alpha \tm$.
\end{enumerate}
\end{theorem}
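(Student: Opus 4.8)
The plan is to mirror, part for part, the four-point argument already used for the source–intermediate bisimulation (\refthm{rev-transl-implem-bricks}), since naming is to name elimination what unwrapping is to wrapping. The two workhorses are the commutation of projecting substitution with naming (\refprop{tarcal-key-sub-prop-trans-mirror}) and the structural lemma \reflemma{varelim-properties-mirror}, stating that naming sends values to values and evaluation contexts to evaluation contexts. I will also use, without further comment, that naming is a homomorphism on the term structure, so that it commutes with plugging: $\naming{\evctxp\tmthree}=\naming\evctx\ctxholep{\naming\tmthree}$.

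First I would prove \emph{Projection}. Writing a $\totbv$-step as $\tm=\evctxp{\packnm\tmthree\vvvalnone\,\vvvalntwo}\totbv\evctxp{\tmthree\itsubp\lvar\vvvalnone\svar\vvvalntwo}=\tmtwo$ with $n=\norm\vvvalnone$ and $m=\norm\vvvalntwo$, naming turns the redex into $\pack{\dabsv\varthree\varfour\namingp\tmthree\tuvarthree\tuvarfour}{\naming\vvvalnone}\,\naming\vvvalntwo$, where $\tuvarthree\vdisjoint\tuvarfour$ are fresh of lengths $n$ and $m$. By \reflemma{varelim-properties-mirror} this is an evaluated \wrapt applied to a tuple of values of matching length, hence a $\toibv$-redex; it contracts to $\namingp\tmthree\tuvarthree\tuvarfour\isubp\tuvarthree{\naming\vvvalnone}\tuvarfour{\naming\vvvalntwo}$, which equals $\naming{\tmthree\itsubp\lvar\vvvalnone\svar\vvvalntwo}$ by \refprop{tarcal-key-sub-prop-trans-mirror}. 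Replaying this contraction inside $\naming\evctx$ yields $\naming\tm\toibv\naming\tmtwo$; the $\totproj$ case is identical but simpler, using only the second rewrite rule.

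Next I would dispatch the value and clash parts of \emph{Halt} by a direct structural induction in both directions: naming preserves the outermost constructor and turns projected variables into ordinary variables, so a value (resp.\ a root clash) of $\tarcal$ names to a value (resp.\ a root clash) of $\intcal$, and conversely. \emph{Reflection} then combines this with harmony of $\tarcal$ (\reflemma{target-harmony}), Projection, and determinism (\reflemma{target-determinism} and \reflemma{intermediate-determinism}): the closed term $\tm$ is a value, a clash, or it reduces; in the first two cases $\naming\tm$ would be $\toint$-normal by Halt, contradicting $\naming\tm\toibv\tmtwo$, so $\tm$ reduces to a unique $\tmthree$, and Projection maps this step to the unique step out of $\naming\tm$, which determinism identifies with the given one, forcing $\tm\totbv\tmthree$ and $\naming\tmthree=\tmtwo$. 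The normal-form part of Halt is then immediate from Projection and Reflection, since a reduction on one side forces one on the other.

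The remaining and most delicate part is \emph{Inverse}, $\naming{\clc\tm\tuvartwo\tuvar}=_\alpha\tm$, which I would prove by induction on $\tm$. The variable, application, projection and tuple cases are immediate equalities, since name elimination followed by naming restores the very index or name it introduced. The only genuine case is the \wrapt $\pack{\dabsv\varthree\varfour\tmthree}\bag$: name elimination recurses on the body with the inner lists $\tuvarthree,\tuvarfour$, whereas naming reintroduces \emph{fresh} binders of the same lengths, so the two choices of names need not coincide syntactically. Reconciling them is exactly the role of \reflemma{names-elim-stability-by-renaming}, which lets me rename $\tuvarthree,\tuvarfour$ to the fresh binders before applying the induction hypothesis, producing a term $\alpha$-equivalent to the body (the bag being handled directly by the induction hypothesis). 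This fresh-binder book-keeping in the \wrapt case is the main obstacle of the whole proof; the other three parts are routine once the commutation lemma \refprop{tarcal-key-sub-prop-trans-mirror} is in hand.
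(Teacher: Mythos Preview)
Your proposal is correct and follows essentially the same approach as the paper: Projection via \refprop{tarcal-key-sub-prop-trans-mirror} and \reflemma{varelim-properties-mirror}, Halt by structural analysis, Reflection by Halt plus Projection plus determinism of $\toint$, and Inverse by induction on $\tm$ with \reflemma{names-elim-stability-by-renaming} handling the fresh-binder mismatch in the \wrapt case. One minor simplification: in Reflection the paper does not invoke harmony but just the dichotomy ``$\tm$ is $\totar$-normal or not''---if normal, Halt gives a contradiction, so $\tm$ reduces---which avoids the detour through values and clashes.
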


\begin{proof}
\hfill
\begin{enumerate}
\item We only show the case of reduction at top level, as the others follow from \reflemma{rev-transl-properties}.2. The projection case is also obvious. For the $\beta$ case, let $\tm = \packnm\tm\vvvalnone \, \vvvalntwo  \rtoccbv  
		\tm\itsubp\efvar\vvvalnone\envar\vvvalntwo = \tmtwo$ with $n = \norm\vvvalnone$ and $m = \norm\vvvalntwo$. Then:
\begin{center}
$\begin{array}{llllll}
\namingp\tm\emptylist\emptylist
& = &\namingp{\packnm\tm\vvvalnone \, \vvvalntwo}\emptylist\emptylist
\\[5pt]
& = &\pack{\dabsv\varthree\varfour\namingp\tm\tuvarthree\tuvarfour}{\namingp\vvvalnone\emptylist\emptylist}\,\namingp\vvvalnone\emptylist\emptylist 
\\[5pt]
& \toibv &
\namingp\tm\tuvarthree\tuvarfour\isubp{\tuvarthree}{\namingp\vvvalnone\emptylist\emptylist}{\tuvarfour}{\namingp\vvvalnone\emptylist\emptylist}
\\[5pt]
&=_{\refpropeq{tarcal-key-sub-prop-trans-mirror}}&
\namingp{\tm\itsubp\lvar\vvvalnone\svar\vvvalntwo}\emptylist\emptylist
& = & \namingp\tmtwo\emptylist\emptylist
\end{array}$
\end{center}
\item Analogous to the proof of the halt property for name elimination (\refthmp{tar-simulation}{halt}).

\item Note that if $\tm$ is $\totar$-normal then, by the halt property $\naming\tm$ is $\toint$-normal, against hypotheses. Then $\tm \totar \tmthree$ for some $\tmthree$. By projection (Point 1), $\namingp{\tm}\emptylist\emptylist \tosou \naming{\tmthree}\emptylist\emptylist$ and it is the same kind of step than on $\tarcal$. Since $\toint$ is deterministic (\reflemma{intermediate-determinism}), $\naming{\tmthree}\emptylist\emptylist=\tmtwo$. 

\item
\myinput{\proofspath/03a-target_calculus/reverse_translation_inverse}
\end{enumerate}
\end{proof}

The following theorem states that the reverse transformation $\unlali{\namingp\tm\emptylist\emptylist}$ of closure conversion (obtained by composing naming and unwrapping) is a termination-preserving strong bisimulation. It is our correctness result for closure conversion. The proof is obtained by simply composing the two bisimulations results for unwrapping (\refthm{rev-transl-implem-bricks}) and naming (\refthm{tar-int-implementation}).

\begin{theorem}[Target-source termination-preserving strong bisimulation]
\label{thm:final-bisimulation}
Let $\tm\in\tarcal$ be closed.
\begin{enumerate}
\item \emph{Projection}: if $\tm\totbv\tmtwo$ then $\unlali{\namingp\tm\emptylist\emptylist} \tobv \unlali{\namingp\tmtwo\emptylist\emptylist}$ and if $\tm\totproj\tmtwo$ then $\unlali{\namingp\tm\emptylist\emptylist} \toproj \unlali{\namingp\tmtwo\emptylist\emptylist}$ .
\item \emph{Halt}: $\tm$ is $\totar$-normal  if and only if $\unlali{\namingp\tm\emptylist\emptylist}$ is  $\tosou$-normal.
\item \emph{Reflection}: if $\unlali{\namingp\tm\emptylist\emptylist} \tobv \tmtwo$ then there exists $\tmthree$ such that $\tm\totbv\tmthree$  and $\unlali{\namingp\tmthree\emptylist\emptylist} = \tmtwo$,  and if $\unlali{\namingp\tm\emptylist\emptylist} \toproj \tmtwo$ then there exists $\tmthree$ such that $\tm\totproj\tmthree$  and $\unlali{\namingp\tmthree\emptylist\emptylist} = \tmtwo$.

\item \emph{Inverse}: if $\tmtwo\in\soucal$ then $\unlali{\namingp{\clc{\lali\tmtwo}\emptylist\emptylist}\emptylist\emptylist}=\tmtwo$.
\end{enumerate}
\end{theorem}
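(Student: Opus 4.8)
The plan is to obtain each of the four clauses by composing the corresponding clause of the \emph{naming} bisimulation $\namingp{\cdot}\emptylist\emptylist\colon\tarcal\to\intcal$ (\refthm{tar-int-implementation}) with that of the \emph{unwrapping} bisimulation $\unlali{\cdot}\colon\intcal\to\soucal$ (\refthm{rev-transl-implem-bricks}), exactly as the reverse translation $\unlali{\namingp\tm\emptylist\emptylist}$ is itself the composition of the two. The only preliminary I would establish is that the interface between the two translations is well typed: since $\tm\in\tarcal$ is closed, its naming $\namingp\tm\emptylist\emptylist$ is a \emph{closed} term of $\intcal$ (with empty parameter lists there are no free projected variables to name, so every variable produced is bound by some \wrapt) and is \emph{well-formed} (a routine induction in the spirit of \reflemma{varelim-properties-mirror}). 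This is precisely what is needed to invoke \refthm{rev-transl-implem-bricks}, whose hypotheses require closed and well-formed intermediate terms.

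For \textbf{projection}, suppose $\tm\totbv\tmtwo$. First the projection clause of naming (\refthm{tar-int-implementation}.1) gives $\namingp\tm\emptylist\emptylist\toibv\namingp\tmtwo\emptylist\emptylist$ in $\intcal$; since both sides are closed and well-formed, the projection clause of unwrapping (\refthm{rev-transl-implem-bricks}.1) then yields $\unlali{\namingp\tm\emptylist\emptylist}\tobv\unlali{\namingp\tmtwo\emptylist\emptylist}$, and the $\proj$ case is identical with $\totproj$/$\toiproj$/$\toproj$. For \textbf{halt}, I would simply chain the two ``iff'' statements (\refthm{tar-int-implementation}.2 and \refthm{rev-transl-implem-bricks}.2), obtaining that $\tm$ is $\totar$-normal iff $\namingp\tm\emptylist\emptylist$ is $\toint$-normal iff $\unlali{\namingp\tm\emptylist\emptylist}$ is $\tosou$-normal.

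For \textbf{reflection}, suppose $\unlali{\namingp\tm\emptylist\emptylist}\tobv\tmtwo$. Applying reflection of unwrapping (\refthm{rev-transl-implem-bricks}.3) to the closed well-formed term $\namingp\tm\emptylist\emptylist$ produces an intermediate $\tmthree'$ with $\namingp\tm\emptylist\emptylist\toibv\tmthree'$ and $\unlali{\tmthree'}=\tmtwo$; then reflection of naming (\refthm{tar-int-implementation}.3) produces $\tmthree\in\tarcal$ with $\tm\totbv\tmthree$ and $\namingp\tmthree\emptylist\emptylist=\tmthree'$, whence $\unlali{\namingp\tmthree\emptylist\emptylist}=\unlali{\tmthree'}=\tmtwo$; the $\proj$ case is symmetric. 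For \textbf{inverse}, let $\tmtwo\in\soucal$. The inverse clause of naming (\refthm{tar-int-implementation}.4), applied to $\lali\tmtwo\in\intcal$, gives $\namingp{\clc{\lali\tmtwo}\emptylist\emptylist}\emptylist\emptylist=_\alpha\lali\tmtwo$; applying $\unlali{\cdot}$ to both sides (it is well defined on $\alpha$-equivalence classes) and then the inverse clause of unwrapping (\refthm{rev-transl-implem-bricks}.4) yields $\unlali{\namingp{\clc{\lali\tmtwo}\emptylist\emptylist}\emptylist\emptylist}=\unlali{\lali\tmtwo}=\tmtwo$.

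Since all the real work is already carried out in the two underlying bisimulation theorems, I expect no genuine difficulty. The only point demanding care is the bookkeeping at the seam between the two translations: checking once and for all that $\namingp\tm\emptylist\emptylist$ is closed and well-formed so that \refthm{rev-transl-implem-bricks} is applicable, and, in the inverse clause, that passing through $=_\alpha$ is harmless because unwrapping is defined on terms taken up to $\alpha$-renaming. These are exactly the hypotheses that the general open-term statements (with non-empty parameter lists) are designed to handle, so specializing them to the closed case here is immediate.
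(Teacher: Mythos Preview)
Your proposal is correct and follows exactly the approach the paper takes: the proof is obtained by composing, clause by clause, the naming bisimulation (\refthm{tar-int-implementation}) with the unwrapping bisimulation (\refthm{rev-transl-implem-bricks}). Your added remarks on the seam---that $\namingp\tm\emptylist\emptylist$ is closed and well-formed, and that the $=_\alpha$ in the inverse clause is harmless---are appropriate pieces of bookkeeping that the paper leaves implicit.
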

\section{Proofs and Auxiliary Notions of \refsect{prelim-am} (Preliminaries About Abstract Machines)}
\label{sect:app-prelim-am}

\begin{lemma}[One-step simulation]
  \label{l:one-step-simulation}
  Let $\mach=(\States, \tomach, \compil\cdot, \decode\cdot)$ be a machine and $\tostrat$ be a deterministic strategy forming an implementation system. 
  For any state $\state$ of $\mach$, if $\decode\state \tostrat \tmtwo$ is a step of label $\lab$ then there is a state $\statetwo$ of $\mach$ such that $\state \tomacho^*\tomachpr \statetwo$ where the last transition has label $\lab$ and $\decode{\statetwo} = \tmtwo$.
\end{lemma}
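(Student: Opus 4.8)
The lemma is a one-step simulation result: given an implementation system $(\mach, \tostrat)$, if the read-back $\decode\state$ of a machine state $\state$ performs a $\tostrat$-step of label $\lab$, then the machine can match it — first running some overhead transitions, then exactly one principal transition of label $\lab$ — landing in a state $\statetwo$ whose read-back is the target $\tmtwo$. This is the crucial lemma feeding the "evaluations to runs" direction of the implementation theorem (Theorem abs-impl).

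Let me think about what tools I have. The implementation system gives me four properties:
- Overhead transparency: $\state \tomacho \statetwo$ implies $\decode\state = \decode\statetwo$.
- Principal projection: $\state \tomachpr \statetwo$ implies $\decode\state \tostrat \decode\statetwo$ with same label.
- Overhead termination: $\tomacho$ terminates.
- Halt: successful states read back to normal forms, clash states to clashes.

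And $\tostrat$ is deterministic.

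**The strategy.** The idea: I have $\decode\state \tostrat \tmtwo$, a step, so $\decode\state$ is NOT normal. I want to find a reachable state that does a principal transition. The natural approach: run the machine's overhead transitions from $\state$ as far as possible. By overhead termination, this terminates, reaching some $\state'$ with $\state \tomacho^* \state'$ and $\state'$ having no overhead transitions available. By overhead transparency (iterated), $\decode{\state'} = \decode\state$, which is non-normal. Now I need: $\state'$ is not stuck, and its available transition is principal.

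So the key is to argue that $\state'$ must enable a principal transition. Here's where Halt comes in, used contrapositively. Since $\decode{\state'} = \decode\state$ is not $\tostrat$-normal, $\state'$ cannot be a successful or clash state (those read back to normal forms / clashes, which are normal). Hence $\state'$ is not a final state — but $\state'$ has no overhead transition available. Therefore $\state'$ must enable a principal transition $\state' \tomachpr \statetwo$.

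**Matching the label and target.** Once $\state' \tomachpr \statetwo$, principal projection gives $\decode{\state'} \tostrat \decode{\statetwo}$ with the same label, say $\lab'$. But $\decode{\state'} = \decode\state$ and I assumed $\decode\state \tostrat \tmtwo$ with label $\lab$. By determinism of $\tostrat$, the step out of $\decode\state$ is unique, so $\decode\statetwo = \tmtwo$ and $\lab' = \lab$. Chaining, $\state \tomacho^* \tomachpr \statetwo$ with the last transition of label $\lab$ and $\decode\statetwo = \tmtwo$, as required.

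**The main obstacle.** The delicate step is the argument that a non-final state with no overhead transition must have a principal one. This is really a structural claim about how machines are specified (transitions partition into overhead and principal, and a non-final state has *some* enabled transition). I should be careful that "final" is defined precisely as "no transition applies" and that the partition is total, so exhaustion of overhead transitions plus non-finality forces a principal step. A subtle point worth stating explicitly: I must rule out that $\state'$ could be a clash state, which is why I invoke the Halt property to tie clash/successful states to $\tostrat$-normal forms, contradicting that $\decode\state$ is non-normal. Determinism of $\tostrat$ is what pins down both the label equality and $\decode\statetwo = \tmtwo$, so without it the matching could fail. I expect the proof itself to be short once these observations are lined up; the care lies in correctly using Halt in its contrapositive form and in confirming the totality of the overhead/principal partition on non-final states.
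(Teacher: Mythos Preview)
Your proof is correct and follows essentially the same approach as the paper: normalize $\state$ with respect to $\tomacho$ using overhead termination, use overhead transparency to preserve the read-back, invoke the halt property contrapositively to rule out finality, conclude a principal transition exists, and use principal projection plus determinism of $\tostrat$ to match both the target and the label. Your explicit attention to ruling out clash states and to the role of determinism in fixing the label is spot on.
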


\begin{proof}
  For any state $\state$ of $\mach$, let $\nfov{\state}$ be a normal form of $\state$ with respect to $\tomacho$: such a state exists because overhead transitions terminate (by the overhead termination property of implementation systems).
  Since $\tomacho$ is mapped on identities (by overhead transparency), one has $\decode{\nfov{\state}} = \decode\state$. 
The hypothesis then becomes $\decode\state = \decode{\nfov{\state}} \tostrat \tmtwo$. By the halt property, $\nfov{\state}$ cannot be a final state (successful or clash), otherwise $\decode\state = \decode{\nfov{\state}}$ could not reduce. Since $\nfov{\state}$ is $\tomachov$-normal, we have $\nfov{\state}\tomachpr \statetwo$ for some $\statetwo$. By principal projection, $\decode{\nfov{\state}} \tostrat \decode\statetwo$, and the step and the transition have the same label. By determinism of $\tostrat$, $\decode\statetwo = \tmtwo$.
\end{proof}

\gettoappendix{thm:abs-impl}
\begin{proof}
  According to \refdef{implem}, given a term $\tm\in\xcal$, we have to show that:
  \begin{enumerate}
\item \label{p:exec-to-deriv} \emph{Runs to evaluations with $\beta$-matching}: for any $\mach$-run $\exec: \compil\tm \tomachine^* \state$ there exists a 
$\tostrat$-evaluation $\deriv: \tm \tostrat^* \decode\state$ such that $\sizep\deriv\lab = \sizep\exec\lab$ for every label $\lab$ of steps in $\xcal$. Additionally, if $\state$ is a successful state then $\decode\state$ is a $\tostrat$-normal form.

    \item \label{p:deriv-to-exec} \emph{Evaluations to runs with $\beta$-matching}: for every $\tostrat$-evaluation $\deriv: \tm \tostrat^* \tmtwo$ there exists a 
$\mach$-run $\exec: \compil\tm \tomachine^* \state$ such that $\decode\state = \tmtwo$ and $\sizep\deriv\lab = \sizep\exec\lab$ for every label $\lab$ of steps in $\xcal$. Additionally, if $\tmtwo$ is a $\tostrat$-normal form then there exists a successful state $\statetwo$ such that $\state \tomacho^* \statetwo$.
  \end{enumerate}

  \paragraph{Proof of \refpoint{exec-to-deriv}}  By induction on the number of principal transition $\sizepr\run \in \nat$ in $\run$.
  
  If $\sizepr\run = 0$ then $\run \colon \compil\tm \tomacho^* \state$ and hence $\decode{\compil\tm} = \decode\state$ by overhead transparency (\refpoint{def-overhead-transparency} of \refdef{implementation}).
  Moreover, $\tm = \decode{\compil\tm}$ since read-back is the inverse of initialization on initial states, therefore the statement holds by taking the empty (\ie without steps) evaluation $\deriv$ with starting (and end) term $\tm$.
  
  Suppose $\sizepr\run > 0$: then, $\run \colon \compil{\tm} \tomachine^* \state$ is the concatenation of a run $\runtwo \colon \compil{\tm} \tomachine^* \statetwo$ followed by a run $\runthree \colon \statetwo \tomachpr \statethree \tomacho^* \state$.
  By \ih applied to $\runtwo$, there exists an evaluation $\derivtwo \colon \tm \tostrat^* \decode\statetwo$ with $\sizep\runtwo\lab = \sizep\derivtwo\lab$.
  By principal projection (\refpoint{def-beta-projection} of \refdef{implementation}) and overhead transparency (\refpoint{def-overhead-transparency} of \refdef{implementation}) applied to $\runthree$, one has $\derivthree \colon \decode\statetwo \tostrat \decode\statethree = \decode\state$ for a $\tostrat$-step of the same label $\lab'$ as the transition $\statetwo \tomachpr \statethree$.
  Therefore, the evaluation 
  $\deriv$ defined as the concatenation of $\derivtwo$ and $\derivthree$ is such that $\deriv \colon \tm \tostrat^* \decode\state$ Moreover:
  \begin{itemize}
  \item  $\sizep\deriv{\lab'} = \sizep\derivtwo{\lab'} + \sizep\derivthree{\lab'} = \sizep\runtwo{\lab'} + 1 = \sizep\run{\lab'}$.
  
  \item  $\sizep\deriv{\lab} = \sizep\derivtwo{\lab} + \sizep\derivthree{\lab} = \sizep\runtwo{\lab} + 0 = \sizep\run{\lab}$ for every label $\lab\neq \lab'$.
  \end{itemize}
 If $\state$ is a successful state, by the halt property (\refpoint{def-halt} of \refdef{implementation}) $\decode\state$ is a $\tostrat$-normal form.

  \paragraph{Proof of \refpoint{deriv-to-exec}}  By induction on $\size\deriv \in \nat$.

  If $\size\deriv = 0$ then $\tm = \tmtwo$.
  Since read-back is the inverse of initialization on initial states, one has $\decode{\compil{\tm}} = \tm$.
  The statement holds by taking the empty (\ie without transitions) run $\run$ with initial (and final) state $\compil{\tm}$.
  
  Suppose $\size\deriv > 0$: so, $\deriv\colon \tm \tostrat^* \tmtwo$ is the concatenation of an evaluation $\derivtwo \colon \tm \tostrat^* \tmtwop$ followed by the step $\tmtwop \tostrat \tmtwo$.
  By \ih, there exists a $\mach$-run $\runtwo\colon \compil\tm \tomachine^* \statetwo$ such that $\decode\statetwo = \tmtwop$ and $\sizep\derivtwo\lab = \sizep\runtwo\lab$ (the state $\statetwo$ here is not the one of the \emph{additionally} part of the statement).
By  one-step simulation (\reflemma{one-step-simulation}, since $\decode{\statetwo} \tostrat \tmtwo$ and $\tostrat$ and $\mach$ form an implementation system), there is a state $\state$ of $\mach$ such that $\statetwo \tomacho^*\tomachpr \state$ with its last transition having the same label $\lab'$ of the step $\decode{\statetwo} \tostrat \tmtwo$, and such that $\decode\state = \tmtwo$.
  Therefore, the run $\run \colon \compil\tm \tomachine^*\statetwo \tomacho^*\tomachpr \state$ is such that:
    \begin{itemize}
  \item  $\sizep\run{\lab'} = \sizep\runtwo{\lab'} +1 = \sizep\derivtwo{\lab'} + 1 = \sizep\deriv{\lab'}$;
  
  \item  $\sizep\run{\lab} = \sizep\runtwo{\lab} = \sizep\derivtwo{\lab} = \sizep\deriv{\lab}$; for every label $\lab\neq \lab'$.
  \end{itemize}
If $\tmtwo$ is a $\tostrat$-normal form, then consider $\nfov\state$. If $\nfov\state$ is not final, then by overhead transparency $\decode{\nfov\state}=\decode\state=\tmtwo$ and by principal projection $\tmtwo$ is not $\tostrat$-normal, absurd. Then it is a final state. By the halt property, $\nfov\state$ is a successful state, providing the state $\statetwo$ of the statement.
\qedhere
\end{proof}

 \section{Proofs and Auxiliary Notions of \refsect{TAM} (A Machine for $\soucal$: the Tupled Abstract Machine)}
 \label{sect:app-TAM}

\begin{definition}[Clashes]
A state $\state$ is a \emph{clash} if it has one of the following forms:
\begin{itemize}
\item \emph{Clashing projection}: $\state = \pair\clos{\proj_i\cons\stack}$ and ($\clos$ is not a tuple or it is but $\norm\clos < i$);
\item \emph{Clashing abstraction}: $\state = \pair{\evlab{\pair{\la{\tuvar}\tm}\env}}{\clos\cons\stack}$ and ($\clos$ is not a tuple or it is but $\norm{\tuvar}\neq \norm\clos$);
\item \emph{Clashing tuple}: $\state = \pair{\vv\clos}{\evlab{\pair{\vv\tm}\env}\cons\stack}$.
\end{itemize}
A state $\state$ is \emph{clash-free} when, co-inductively:
\begin{itemize}
\item $\state$ is not a clash;
\item If $\state \totam \tmtwo$ then $\tmtwo$ is clash-free.
\end{itemize}
\end{definition}

\gettoappendix{l:TAM-closure-invariant}
\myinput{\proofspath/01b-source_machine/closure_invariant}

\begin{lemma}\label{l:TAM-decode-laxclosure}
For every \TAM flagged closure $\laxclos$, $\decode{\laxclos}$ is a term. In particular:
\begin{enumerate}
	\item\label{p:TAM-decode-laxclosure-abstraction} If $\laxclos = \flag(\la{\tuvar}\tm,\env)$, then $\decode{\flag(\la{\tuvar}\tm, \env)} = \la{\var}\decode{(\tm, \env)}$.
	
	\item\label{p:TAM-decode-laxclosure-application}If $\laxclos = \flag(\tm\tmtwo,\env)$, then $\decode{\flag(\tm\tmtwo, \env)} = \decode{(\tm, \env)} \, \decode{(\tmtwo, \env)}$.
	
	\item\label{p:TAM-decode-laxclosure-projection} If $\laxclos = \flag(\proj_{i}\tm,\env)$, then $\decode{\flag(\proj_i\tm, \env)} = \proj_i\decode{(\tm, \env)}$.
		
	\item\label{p:TAM-decode-laxclosure-variable} If $\laxclos = \flag(\var,\env)$ and $\var\in\dom\env$, then $\decode{\flag(\var, \env)} = \decode{\env(\var)}$.
\end{enumerate}
\end{lemma}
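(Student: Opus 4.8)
The plan is to read $\decode{\flag{\pair\tm\env}}$ as the application to the active (pre-)term $\tm$ of the substitution induced by the local environment $\env$, processed one entry at a time: each entry $\esub\var\clos$ replaces $\var$ by the already read-back closure $\decode\clos$. The first point to settle is that $\decode\cdot$ is \emph{well-defined} and returns a term of $\soucal$, since its defining equations recur simultaneously on a strictly smaller environment and on the closures nested inside it, while the intermediate substitution $\tm\isub\var{\decode\clos}$ may enlarge the active term. To handle this I would use the measure $\mathsf{ne}(\laxclos)$ counting the environment entries occurring (recursively) in $\laxclos$, namely $\mathsf{ne}(\flag{\pair\tm\emptyenv}) \defeq 0$, $\mathsf{ne}(\flag{\pair\tm{\esub\var\clos\cons\env}}) \defeq 1 + \mathsf{ne}(\clos) + \mathsf{ne}(\flag{\pair\tm\env})$, and $\mathsf{ne}(\tuple{\clos_1,\dots,\clos_n}) \defeq 1 + \sum_{i}\mathsf{ne}(\clos_i)$. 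Every recursive call strictly decreases $\mathsf{ne}$ — in particular $\decode{\pair{\tm\isub\var{\decode\clos}}\env}$ does, because $\mathsf{ne}$ is insensitive to the active term — so $\decode\laxclos$ is defined by well-founded induction on $\mathsf{ne}(\laxclos)$; the same induction shows it is a term, as empty-environment closures read back to their pre-term, tuples of terms are terms, and capture-avoiding substitution of terms into terms yields terms.

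For the three constructor clauses (abstraction, application, projection) I would induct on the top-level environment $\env$, the nested closures being already covered by the outer well-founded induction. The base case $\env=\emptyenv$ is immediate because $\decode{\flag{\pair\tm\emptyenv}}=\tm$ and the top constructor is left untouched. In the step $\env = \esub\var\clos\cons\envtwo$ one unfolds a single read-back step, pushes the head substitution $\isub\var{\decode\clos}$ through the top constructor — using that capture-avoiding substitution commutes with application and projection, and with abstraction as long as $\var\notin\tuvar$ and $\tuvar$ avoids $\fv{\decode\clos}$, which the machine guarantees by keeping bound variables disjoint from environment domains via fresh renaming — and then applies the induction hypothesis at $\envtwo$. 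Refolding yields $\decode{\flag{\pair{\la\tuvar\tm}\env}} = \la\tuvar\decode{\pair\tm\env}$, $\decode{\flag{\pair{\tm\tmtwo}\env}} = \decode{\pair\tm\env}\,\decode{\pair\tmtwo\env}$, and $\decode{\flag{\pair{\proj_i\tm}\env}} = \proj_i\decode{\pair\tm\env}$.

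For the variable clause I would again induct on $\env$. If the head entry binds some $\vartwo\neq\var$, the head substitution leaves $\var$ alone and one concludes by the induction hypothesis since $\env(\var)=\envtwo(\var)$. The delicate case is the matching head entry $\env = \esub\var\clos\cons\envtwo$, where $\env(\var)=\clos$ and one read-back step gives $\decode{\flag{\pair\var\env}} = \decode{\pair{\decode\clos}\envtwo}$; it then remains to show that reading $\decode\clos$ back through $\envtwo$ changes nothing, i.e.\ $\decode{\pair{\decode\clos}\envtwo} = \decode\clos = \decode{\env(\var)}$. This is precisely where closedness is needed, and I expect it to be the main obstacle: were $\decode\clos$ to contain a free variable of $\dom\envtwo$, the sequential read-back would substitute it away and the equality would fail. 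The required fact — that closures stored in environments read back to \emph{closed} terms — follows from the closure invariant (\Cref{l:TAM-closure-invariant}) for the flagged closures occurring in reachable states (and can be established by the same outer induction), and makes the residual read-back vacuous. Apart from this closedness requirement and the freshness side-conditions used in the first three clauses, the argument is routine bookkeeping once the measure is in place.
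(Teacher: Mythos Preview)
Your approach is essentially the paper's---induction on the length of the environment---but you are considerably more careful than the paper's one-line proof, and rightly so on two points.

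First, the well-foundedness argument via your measure $\mathsf{ne}$ is a genuine addition: the paper does not justify that the mutually recursive read-back terminates, and your measure handles the fact that the active term may grow under substitution while the nested environment structure shrinks.

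Second, and more importantly, your diagnosis of the variable clause is spot-on and exposes a gap in the lemma \emph{as stated}. For an arbitrary flagged closure with $\env = \esub\var\clos\cons\envtwo$, one indeed needs $\decode{\pair{\decode\clos}{\envtwo}} = \decode\clos$, which fails unless $\decode\clos$ has no free variables in $\dom\envtwo$. The paper's one-liner simply does not address this. You are right that the closure invariant (\Cref{l:TAM-closure-invariant}) supplies closedness for reachable states, which is the only setting where the lemma is actually invoked (namely in the overhead-transparency proof for $\tomachsubnv$); but the lemma quantifies over ``every \TAM flagged closure'' without that restriction, so strictly speaking point~4 needs an extra hypothesis. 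The same mild sloppiness affects point~1 when a variable in $\dom\env$ coincides with one of the bound $\tuvar$, which you also flag; again this is excluded in reachable states but not by the lemma's hypotheses. Your proposal is therefore not just correct but more honest about the side conditions than the paper itself.
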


\begin{proof}
By induction on the length of the environment $\env$.
\end{proof}

\gettoappendix{l:TAM-decoding-invariants}
\myinput{\proofspath/01b-source_machine/decoding_invariants}

\begin{proposition}[Source implementation system]
	\label{prop:tam-requir-implem}
	Let $\state$ be a \TAM reachable state.
	\begin{enumerate}
		\item \emph{Principal project.}: for $a \in \{\betav, \proj\}$, if $\state \tomachhole{\evsym a} \statetwo$ then $\decode\state \Rew{a} \decode{\statetwo\!}\!$.
		\item \emph{Overhead transparency}: if $\state \tomacho \statetwo$ then $\decode\state = \decode\statetwo$.
		\item \emph{Overhead termination}: $\tomacho$ terminates.
		\item \emph{Halt}: successful states read back to $\tosou$-normal terms, and clash states to clashes of $\soucal$.
	\end{enumerate}
\end{proposition}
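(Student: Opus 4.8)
The plan is to verify the four clauses of an implementation system separately, by a transition-by-transition analysis of the \TAM guided by the read-back $\decode\cdot$ of \reffig{tam}, leaning on the two structural lemmas already available: the \mclosure invariant (\Cref{l:TAM-closure-invariant}) and the read-back properties (\Cref{l:TAM-decoding-invariants}). Throughout I would freely use the auxiliary decoding identities of \Cref{l:TAM-decode-laxclosure}, which push an environment through the term constructors and, for a variable $\var\in\dom\env$, give $\decode{\flag\pair\var\env}=\decode{\env(\var)}$.

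For \emph{principal projection} (clause 1) there are only the two transitions $\tomachbetaev$ and $\tomachprojev$. For $\tomachprojev$, reading back $\pairstate{\vv\clos}{\proj_i\cons\stack}$ gives $\decode\stack\ctxholep{\proj_i\tuple{\decode{\clos_1},\mydots,\decode{\clos_{\norm{\vv\clos}}}}}$, and the side condition $1\le i\le\norm{\vv\clos}$ lets the root rule $\proj_i\vv\val\rtoproj\val_i$ fire inside the evaluation context $\decode\stack$ (an evaluation context by \Cref{l:TAM-decoding-invariants}.2), landing on $\decode{\clos_i}$, the read-back of the target state. The case of $\tomachbetaev$ is the delicate one. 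Here the source state reads back to $\decode\stack\ctxholep{(\la\tuvar\decode{\pair\tm\env})\,\decode{\vv\clos}}$ and the target to $\decode\stack\ctxholep{\decode{\pair\tm{\esub\tuvar{\vv\clos}\cons\env}}}$, so the whole matter reduces to the substitution-commutation identity
\[
\decode{\pair\tm{\esub\tuvar{\vv\clos}\cons\env}} \ = \ \decode{\pair\tm\env}\isub\tuvar{\decode{\vv\clos}}.
\]
I would prove this by peeling off the environment entries one at a time, using that $\tuvar$ is disjoint from $\dom\env$ (so $\env$ never touches the $\tuvar$-occurrences) and that each $\decode{\clos_i}$ is a closed value (so the two substitutions do not interfere); the hypothesis $\fv{\la\tuvar\tm}\subseteq\dom\env$ from the \mclosure invariant is exactly what guarantees $\fv\tm\subseteq\dom\env\cup\tuvar$, hence that the image is a genuine $\soucal$ term. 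With the identity in hand, the root rule $(\la\tuvar\tmtwo)\vv\val\rtobv\tmtwo\isub\tuvar{\vv\val}$ fires inside $\decode\stack$, and the labels match.

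For \emph{overhead transparency} (clause 2) I would go through the remaining transitions one by one and check that both sides have equal read-back. The search transitions $\tomachseaonenv$, $\tomachseatwonv$, $\tomachseathreenv$, $\tomachseaoneev$, $\tomachseasixev$, $\tomachseathreeev$ merely move a sub-term between the active \mclosure and a stack entry, and the clauses defining $\decode\stack$ are set up so that reassembling yields the same term; the flag-flips $\tomachseafivenv$ and $\tomachseafournv$ are transparent because the read-back of flagged \mclosures does not inspect the flag, and for $\tomachseafournv$ the discarded environment is harmless since $\decode{\tuple{}}=\tuple{}$ regardless of $\env$. The variable transition $\tomachsubnv$ is the only one needing an invariant: by the \mclosure invariant $\var\in\dom\env$, so $\decode{\flag\pair\var\env}=\decode{\env(\var)}$ by \Cref{l:TAM-decode-laxclosure}, matching the target.

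\emph{Overhead termination} (clause 3) I would obtain from the overhead measure $\omeas\cdot$ of \reffig{ovh-measure-tam}: by \Cref{l:TAM-overhead} every overhead transition other than $\evsym\seasym_{1,3}$ strictly decreases $\omeas\cdot$, while the measure-preserving $\evsym\seasym_{1,3}$ are bounded in number by the measure-decreasing search transitions through the transition-match \Cref{l:TAM-blue-bound}; hence no infinite $\tomacho$-sequence exists. Finally, \emph{Halt} (clause 4) splits in two: successful states $\pairstate\clos\emptystack$ read back to $\decode\clos$, a value by \Cref{l:TAM-decoding-invariants}.1 and hence $\tosou$-normal; and for the three clash shapes I would read each back and check it lands on the matching $\soucal$ clash, again using that $\decode\clos$ is a value and $\decode\stack$ an evaluation context. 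The main obstacle I anticipate is not any single clause but the substitution-commutation identity underlying $\tomachbetaev$: getting the order of the delayed substitutions and the freshness of $\tuvar$ exactly right, so that the environment entry created by the machine provably reads back to the calculus's meta-level substitution.
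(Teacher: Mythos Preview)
Your proposal is correct and follows essentially the same route as the paper: a transition-by-transition analysis using \Cref{l:TAM-decode-laxclosure} and \Cref{l:TAM-decoding-invariants} for clauses~1 and~2, the overhead measure of \reffig{ovh-measure-tam} together with \Cref{l:TAM-overhead} and \Cref{l:TAM-blue-bound} for clause~3 (exactly as the paper signals when it says overhead termination is proved via the measure developed in \refsect{source-complexity}), and the read-back properties again for clause~4. The substitution-commutation identity you isolate for $\tomachbetaev$ is indeed the crux; your instinct that it hinges on the disjointness of $\tuvar$ from $\dom\env$---which holds because the closure $\evlab{\pair{\la\tuvar\tm}\env}$ arises from an initial sub-term by the sub-term invariant, under the usual convention of pairwise-distinct bound names in the initial term---is exactly the point that needs care.
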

\begin{proof}
\hfill
\begin{enumerate}
		\item 
		\begin{enumerate}
			\item \emph{Beta}: $\pairstate{\evsym (\la\tuvar\tm,  \env)}{\vv\clos \cons\stack}
			\tomachbeta \pairstate{
			\nvsym (\tm, \esub{\tuvar}{\vv\clos} \cons \env)} {\stack} $
			with $\norm{\tuvar} =\norm{\vv\clos}$. Then,
			\begin{center}$\begin{array}{llll}
				\decode{\pairstate{\evsym (\la\tuvar\tm,  \env)}{\vv\clos \cons\stack}}
				&= &\decodep\stack{\decode{\evsym (\la\tuvar\tm,  \env)} \decode{\vv{\clos}}}
				\\[5pt] &=_{\reflemmaeqp{TAM-decode-laxclosure}{abstraction}}&
				 \decodep\stack{(\la\tuvar\decode{(\tm,  \env)}) \decode{\vv{\clos}}}
				\\[5pt] &\tobv &
				\decodep\stack{\decode{(\tm,  \env)} \isub{\tuvar}{\decode{\vv{\clos}}}}
				\\[5pt] &= &
				\decodep\stack{\decode{\nvsym (\tm, \esub{\tuvar}{\vv\clos} \cons \env)}}
				\\[5pt] &=&
				\decode{\pairstate{
					\nvsym (\tm, \esub{\tuvar}{\vv\clos} \cons \env)} {\stack}} .
			\end{array}$\end{center}
		The $\tobv$ step is correct because $\decodep\stack{\ctxhole \decode{\vv{\clos}}}$ is an evaluation context by the read back invariants (\reflemma{TAM-decoding-invariants}).
			
			\item \emph{Projection}: $\pairstate{\vv\clos} {\proj_i \cons \stack}
			\tomachproj 
			\pairstate{\clos_i}{\stack} $ with $i \leq \norm{\vv\clos}$.
			Then, 
			\begin{center}$\begin{array}{llll}
				\decode{\pairstate{\vv\clos} {\proj_i \cons \stack}}
				& = \decodep\stack{\proj_{i} \ctxholep{\decode{\vv\clos}}]}
				\toproj \decodep\stack{\decode{\clos_{i}}} =
				\decode{\pairstate{\clos_i}{\stack}}. 
			\end{array}$\end{center}
			The $\toproj$ step is correct because $\decode{\stack}$ is an evaluation context by the read back invariants (\reflemma{TAM-decoding-invariants}).
		\end{enumerate}
	
		\item \begin{enumerate}
			\item $\pairstate{\nvsym (\tm\tmtwo, \env)}{\stack }
			\tomachseaonenv
			\pairstate{\nvsym (\tmtwo, \env)} { \nvlab{\pair\tm\env} \cons\stack }$.
			Then, 
			\begin{center}$\begin{array}{llll}
				\decode{\pairstate{\nvsym (\tm\tmtwo, \env)}{\stack }}
				&= &\decode{\stack } \ctxholep{\decode{\nvsym (\tm\tmtwo, \env)}}
				\\[5pt]& =_{\reflemmaeqp{TAM-decode-laxclosure}{application}} &
				 \decode{\stack}\ctxholep{\decode{\nvlab{\pair\tm\env}}\ctxholep{\decode{\nvsym (\tmtwo, \env)}}}
				\\[5pt]&
				= &\decode{\pairstate{\nvsym (\tmtwo, \env)} { \nvlab{\pair\tm\env} \cons\stack }}.
			\end{array}$\end{center}
		
			\item $\pairstate{\nvsym (\proj_i \tm , \env) }{ \stack} 
			 \tomachseatwonv 
			\pairstate{\nvsym (\tm,\env)}{ \proj_i \cons \stack }$. Then,
			\begin{center}$\begin{array}{llll}
				\decode{\pairstate{\nvsym (\proj_i \tm , \env) }{ \stack} }
				&= \decode{\stack}\ctxholep{\decode{\nvsym (\proj_i \tm , \env) } }
				=_{\reflemmaeqp{TAM-decode-laxclosure}{projection}}
				\decode{\stack}\ctxholep{\proj_i \decode{\nvsym (\tm , \env) } }
				=
				\decode{\pairstate{\nvsym (\tm,\env)}{ \proj_i \cons \stack }}.
			\end{array}$\end{center}
		
			\item $\pairstate{\nvsym (\tuple{\mydots,\tm_n}, \env)} {\stack} 
			 \tomachseathreenv 
			\pairstate{\nvsym   (\tm_n, \env)  }{\pair{\tuple{\mydots,\machctxhole}}\env \cons \stack}$ with  $ n >0$. Then,
			\begin{center}$\begin{array}{llll}
				\decode{\pairstate{\nvsym (\tuple{\mydots,\tm_n}, \env)}{\stack} }
				&= &\decode{\stack}\ctxholep{\decode{\nvsym (\tuple{\mydots,\tm_n}, \env)}}
				\\[5pt]&=& 
				\decode{\stack}\ctxholep{\tuple{ \mydots, \decode{\nvsym (\tm_n, \env)}}}
				\\[5pt]&=&
				\decode{\pairstate{\nvsym   (\tm_n, \env)  }{\pair{\tuple{\mydots,\machctxhole}}\env \cons \stack}}.
			\end{array}$\end{center}
		
			\item $\pairstate{\nvsym (\tuple{}, \env) }{\stack}  \tomachseafournv 
			\pairstate{\evsym (\tuple{}, \emptyenv) } {\stack} $. Then, 
			\begin{center}$\begin{array}{llll}
				\decode{\pairstate{\nvsym (\tuple{}, \env) }{\stack}}  
				&= 
				\decode{\stack}\ctxholep{\decode{ (\tuple{}, \env) }}
				=
				\decode{\pairstate{\evsym (\tuple{}, \emptyenv) } {\stack}}.
			\end{array}$\end{center}
		
			\item $\pairstate{\nvsym  (\la\tuvar\tm, \env)} {\stack} 
			\tomachseafivenv 
			\pairstate{\evsym (\la\tuvar\tm , \env)} {\stack }$. Then, 
			\begin{center}$\begin{array}{llll}
				\decode{\pairstate{\nvsym  (\la\tuvar\tm, \env)} {\stack} }
				&=
				\decode{\stack} \ctxholep{\decode{ (\la\tuvar\tm, \env)}} 
				=
				\decode{\pairstate{\evsym (\la\tuvar\tm , \env)} {\stack }}.
			\end{array}$\end{center}
		
			\item $\pairstate{\nvsym (\var, \env)} {\stack} 
			\tomachsub
			\pairstate{\env(\var)} {\stack} $. Then,
			\begin{center}$\begin{array}{llll}
				\decode{\pairstate{\nvsym (\var, \env)} {\stack} }
				&=
				\decode{\stack}\ctxholep{\decode{\nvsym (\var, \env)}}
				=_{\reflemmaeqp{TAM-decode-laxclosure}{variable}}
				\decode{\stack}\ctxholep{\decode{\env(\var)}}
				=
				\decode{\pairstate{\env(\var)} {\stack}}.
			\end{array}$\end{center}
		
			\item $\pairstate{\clos}{ \nvlab{\pair\tm\env} \cons\stack} 
			\tomachseaoneev
			\pairstate{\nvsym (\tm,\env)} {\clos \cons\stack}$. Then, 
			\begin{center}$\begin{array}{llll}
				\decode{\pairstate{\clos}{ \nvlab{\pair\tm\env} \cons\stack} }
				&=
				\decode{\stack}\ctxholep{\decode{\nvsym (\tm,\env)} \, \decode{\clos}}
				=
				\decode{\pairstate{\nvsym (\tm,\env)} {\clos \cons\stack}}.
			\end{array}$\end{center}
		
			\item $\pairstate{\clos} {\pair{\tuple{\mydots,\tm,\machctxhole,\mydots}}\env \cons\stack }
			\tomachseasixev
			\pairstate{\nvsym (\tm ,\env) }  {\pair{\tuple{\mydots,\machctxhole,\clos,\mydots}}\env \cons\stack }$. Then, 
			\begin{center}$\begin{array}{llll}
				\decode{\pairstate{\clos} {\pair{\tuple{\mydots,\tm,\machctxhole,\mydots}}\env \cons\stack }}
				&= \decode{\stack}\ctxholep{\tuple{\mydots, \pair{\tm}{\env}, \decode{\clos}, \mydots}}
				=
				\decode{\pairstate{\nvsym (\tm ,\env) }  {\pair{\tuple{\mydots,\machctxhole,\clos,\mydots}}\env \cons\stack }}.
			\end{array}$\end{center}
		
			\item $\pairstate{\clos}{ \pair{\tuple{\machctxhole,\mydots}}\env \cons\stack} 
			\tomachseathreeev
			\pairstate{\tuple{\clos,\mydots}} {\stack} $. Then, 
			\begin{center}$\begin{array}{llll}
				\decode{\pairstate{\clos}{ \pair{\tuple{\machctxhole,\mydots}}\env \cons\stack} }
				&=
				\decode{\stack} \ctxholep{\tuple{\decode{\clos}, \mydots}}
				=
				\decode{\pairstate{\tuple{\clos,\mydots}} {\stack} }.
			\end{array}$\end{center}
		\end{enumerate}
		
		\item The proof of this point is omitted here, as it is a consequence of the complexity analysis of  \refsect{source-complexity}, see \refprop{TAM-number-of-trans}.		
	
		\item 
		Let $\pair\laxclos\stack$ be a final state. Two cases for $\laxclos$:
		\begin{itemize}
		\item \emph{Non-evaluated closure} $\nvclos$. Then the only possibility is that $\nvclos=\nvlab{\pair\var\env}$ with $\var\notin\dom\env$, but by the closure invariant (\reflemma{TAM-closure-invariant}) this is impossible. 
		
				\item \emph{Evaluated closure} $\clos$. Then consider the stack $\stack$. 
				\begin{itemize}
				\item \emph{$\stack$ is empty}. Then $\state$ reads back to a value $\decode{\pairstate{\clos}{\emptyenv}} = \decode{\clos}$ by the read back invariants (\reflemma{TAM-decoding-invariants}.1), which by harmony (\reflemma{source-harmony}) is a normal form.
				\item \emph{$\stack$ has a non-evaluated closure $\nvclos$ on top}, that is, $\stack = \nvclos\cons\stacktwo$. Then transition $\tomachseaoneev$ applies, and the state was not final, absurd.
				\item \emph{$\stack$ has a tuple $\pair{\tuple{\mydots,\tm,\machctxhole,\mydots}}\env$ on top}, that is, $\stack = \pair{\tuple{\mydots,\tm,\machctxhole,\mydots}}\env\cons\stacktwo$. Then transition $\tomachseasixev$ applies, and the state was not final, absurd.
				\item \emph{$\stack$ has a tuple $\pair{\tuple{\machctxhole,\mydots}}\env$ on top}, that is, $\stack = \pair{\tuple{\machctxhole,\mydots}}\env \cons\stacktwo$. Then transition $\tomachseathreeev$ applies, and the state was not final, absurd.

				\item \emph{$\stack$ has a closure $\clostwo$ on top}, that is, $\stack = \clostwo\cons\stacktwo$. Cases of $\clos$:
				\begin{itemize}
				\item \emph{$\clos$ is a tuple}. Then $\state$ is a clash state (clashing tuple) and its read back $\decode\state = \decode\stacktwo\ctxholep{\decode\clos\, \decode\clostwo}$ is the corresponding kind of term clash.
				\item  \emph{$\clos$ is an abstraction closure $\evlab{\pair{\la\tuvar\tm}\env}$}. Two sub-cases.
				\begin{itemize}
				\item  \emph{$\clostwo$ is not a tuple}. Then $\state$ is a clash state (clashing abstraction with wrong argument) and its read back is the corresponding kind of term clash.
				\item  \emph{$\clostwo$ is a tuple}. Two sub-cases:
				\begin{enumerate}
				\item  $\norm{\tuvar}\neq \norm\clostwo$. Then $\state$ is a clash state (clashing abstraction with arity mismatch) and its read back is the corresponding kind of term clash.
				\item  $\norm{\tuvar}= \norm\clostwo$. This case is impossible because then $\state$ can do a $\tomachbetaev$ transition, which is absurd.
				\end{enumerate}
				\end{itemize}
				\end{itemize}
				\item \emph{$\stack$ has a projection $\proj_i$ on top}, that is, $\stack = \proj_i\cons\stacktwo$. Cases of $\clos$:
				\begin{itemize}
				\item \emph{$\clos$ is a tuple}. Two sub-cases.
				\begin{itemize}
				\item  $\norm{\tuvar}\neq \norm\clostwo$. Then $\state$ is a clash state (clashing projection with arity mismatch) and its read back is the corresponding case of term clash.
				\item  $\norm{\tuvar}= \norm\clostwo$. This case is impossible because then $\state$ can do a $\tomachprojev$ transition, which is absurd.
				\end{itemize}
				\item \emph{$\clos$ is an abstraction closure $\evlab{\pair{\la\tuvar\tm}\env}$}. Then $\state$ is a clash state (clashing projection with wrong argument) and its read back is the corresponding case of term clash.\qedhere
				\end{itemize}
				\end{itemize}
		\end{itemize}
	\end{enumerate}
\end{proof}

\gettoappendix{thm:sou-tam-implementation}
\begin{proof}
By \refprop{tam-requir-implem}, \TAM and $\tosou$ form an implementation system, thus \TAM implements $\tosou$ by the abstract theorem \refthm{abs-impl}.\qedhere
\end{proof}
 \section{Proofs and Auxiliary Notions of \refsect{LTAM} (A Machine for $\intcal$: the \LTAM)}
 \label{sect:app-LTAM}

\begin{definition}[Clashes]
A state $\state$ is a \emph{clash} if it has one of the following forms:
\begin{itemize}
\item \emph{Clashing projection}: $\state = \fourstate{\evsym\val} {\proj_i\cons\stack} \env \ars $ and ($\evval$ is not a tuple or it is but $\norm\evval < i$);
\item \emph{Clashing \wrapt}: $\state = \fourstate{\evsym \pack{ \dabsv\vartwo\var \nvsym\tm}{\vv{\evlab\val}}} {\evvaltwo\cons\stack} \env \ars$ and ($\evvaltwo$ not a tuple or it is but $\norm{\tuvar}\neq \norm\evvaltwo$);
\item \emph{Clashing tuple}: $\state = \fourstate{\evsym\vv\val} {\evlab\valtwo\cons\stack} \env\ars$.
\end{itemize}
\end{definition}

\gettoappendix{prop:LTAM-invariants}
\myinput{\proofspath/02b-intermediate_machine/invariants}

\begin{definition}[Simultaneous substitution on evaluation contexts]
Meta-level simultaneous substitution $\evctx  \isubp\tuvartwo\vvvalnone\tuvar\vvvalntwo$ on evaluation contexts $\ctx$ of $\intcal$ is defined as follows:
\begin{center}
$\begin{array}{rlllll}
\ctxhole\isubp\tuvartwo\vvvalnone\tuvar\vvvalntwo & \defeq & \ctxhole
\\
(\tm\, \evctx) \isubp\tuvartwo\vvvalnone\tuvar\vvvalntwo  & \defeq & \tm\isubp\tuvartwo\vvvalnone\tuvar\vvvalntwo \,\evctx \isubp\tuvartwo\vvvalnone\tuvar\vvvalntwo
\\
 (\evctx\,\val) \isubp\tuvartwo\vvvalnone\tuvar\vvvalntwo  & \defeq & \evctx\isubp\tuvartwo\vvvalnone\tuvar\vvvalntwo \, \val \isubp\tuvartwo\vvvalnone\tuvar\vvvalntwo
 \\
( \proj_i \evctx)\isubp\tuvartwo\vvvalnone\tuvar\vvvalntwo  & \defeq & \proj_i \evctx\isubp\tuvartwo\vvvalnone\tuvar\vvvalntwo
\\
\tuple{\tuv\tm, \evctx, \tuv\val}\isubp\tuvartwo\vvvalnone\tuvar\vvvalntwo  & \defeq &  \tuple{\tuv{\tm\isubp\tuvartwo\vvvalnone\tuvar\vvvalntwo},\machctxhole,\tuv{\val\isubp\tuvartwo\vvvalnone\tuvar\vvvalntwo}}
\end{array}$
\end{center}
\end{definition}

\gettoappendix{l:LTAM-decoding-invariants}
\myinput{\proofspath/02b-intermediate_machine/decoding_invariants}

\begin{proposition}[Intermediate implementation system]
	\label{prop:LTAM-impl-requirements}
	Let $\state$ be a \LTAM reachable state.
	\begin{enumerate}
		\item \emph{Principal project.}: for $a \in \{\betav, \proj\}$, if $\state \tomachhole{\evsym a} \statetwo$ then $\decode\state \Rew{\intprefix a} \decode{\statetwo\!}\!$.
		\item \emph{Overhead transparency}: if $\state \tomacho \statetwo$ then $\decode\state = \decode\statetwo$.
		\item \emph{Overhead termination}: $\tomacho$ terminates.
		\item \emph{Halt}: successful states read back to $\toint$-normal terms, and clash states read back to clashes of $\intcal$.
	\end{enumerate}
\end{proposition}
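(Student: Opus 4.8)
The plan is to verify the four conditions of \refdef{implementation} one by one, so that the abstract \refthm{abs-impl} immediately yields that the \LTAM implements $\toint$. The whole argument is a case analysis on the transitions, closely following the template of the \TAM (\refprop{tam-requir-implem}), and throughout I would rely on the invariants of \reflemma{LTAM-invariants} (chiefly \emph{closed values}) and on the read-back properties of \reflemma{LTAM-decoding-invariants} (values read back to values, constructor and activation stacks to evaluation contexts).

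For \emph{overhead transparency} (point 2) I would check transition by transition that $\decode\state=\decode\statetwo$. The search transitions ($\tomachseaonenv$ and its companions, $\tomachseatwonv$, $\tomachseafournv$, $\tomachseaoneev$, $\tomachseasixev$, $\tomachseathreeev$) are immediate from the read-back, which pushes onto the stack exactly the context fragment removed from the focus. The two substitution transitions are where the environment-induced substitution $\sigma_\env$ plays a role: for $\tomachsubvev$ one uses that $\var\sigma_\env=\decode{\env(\var)}$ together with the fact that the looked-up value is closed (so no residual $\sigma_\env$ acts on it), and for $\tomachsubwev$ one uses the same fact on each component of $\env(\tuvartwo)$. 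The delicate case is $\tomachseasevenev$, which discards the active environment $\env$ and reactivates the stashed pair $\pair\stack\envtwo$; here I would invoke the \emph{closed values} invariant to show that $\sigma_\env$ acts trivially on the focused value $\evval$, so that dropping $\env$ leaves the read-back unchanged.

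For \emph{principal projection} (point 1) I treat the two principal transitions. For $\tomachprojev$ the focused tuple value reads back (again by \emph{closed values}) to a closed tuple sitting in an evaluation context, obtained from \reflemma{LTAM-decoding-invariants} and composition of evaluation contexts (\reflemma{intermediate-evctxs-compose}), and the step is exactly a $\toiproj$-redex with the same label. The crux is $\tomachbetaev$: the focused evaluated \wrapt and its tuple argument read back to a $\toibv$-redex (lengths agree by the side condition), with the surrounding environment again acting trivially on these closed values; one must then show that the read-back of the \emph{new} state reproduces the contractum. This reduces to two facts: the read-back of the pushed activation frame $\pair\stack\env$ rebuilds precisely the evaluation context that surrounded the redex, and the new environment $\esub{\tuvartwo}{\vv{\evsym\valnone}}\esub{\tuvar}{\vv{\evsym\valntwo}}$ induces, via $\sigma$, exactly the simultaneous substitution $\isubp{\tuvartwo}{\ldots}{\tuvar}{\ldots}$ of the $\toibv$-rule (the two blocks commute since $\tuvartwo\vdisjoint\tuvar$ and the substituted values are closed). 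I expect this to be the \textbf{main obstacle}: it is precisely the point where the stackable-environment discipline---stashing the old environment on the activation stack at a $\beta$-redex and restoring it on return---must be shown to faithfully track the surrounding evaluation context, with the old environment applying to the context and the new one only inside the \wrapt body.

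For \emph{overhead termination} (point 3) I would exhibit a measure adapting the \TAM's overhead measure of \reffig{ovh-measure-tam}: the total size of the genuinely non-evaluated terms in the focus and in \emph{all} constructor stacks (the current one together with those recorded in activation-stack frames), treating evaluated values $\evval$ as $0$ and in particular ignoring the redundant $\nvsym$ flag on \wrapt bodies. A case analysis as in \reflemma{TAM-overhead} then shows the measure never increases and strictly decreases on the search and substitution transitions that actually consume constructors; the few remaining value-shuffling transitions keep it constant (counting the stacks inside activation frames is what makes $\tomachseasevenev$ neutral) and are bounded by the former, exactly as in the factoring of \reflemma{TAM-blue-bound}, whence termination. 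Finally, for \emph{halt} (point 4), a successful state $\fourstate{\evsym\val}\emptystack\env\emptystack$ reads back to $\decode{\evsym\val}$, a value of $\intcal$ by \reflemma{LTAM-decoding-invariants} and hence $\toint$-normal; and each of the three clash-state shapes reads back---using that $\decode{\cdot}$ preserves being a (non-)tuple of a given length and being a \wrapt---to the corresponding root clash of $\intcal$ placed inside an evaluation context, that is, to a clash of $\intcal$.
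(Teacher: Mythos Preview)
Your proposal is correct and follows essentially the same approach as the paper: a transition-by-transition case analysis built on the invariants of \reflemma{LTAM-invariants} and the read-back properties of \reflemma{LTAM-decoding-invariants}, with the closed-values invariant doing the crucial work for $\tomachsubvev$, $\tomachsubwev$, $\tomachseasevenev$, and the $\tomachbetaev$ case, and overhead termination obtained via the measure of \reffig{ovh-measure-ltam} (which is precisely your adaptation, including the activation-stack frames) together with the transition-matching argument. The only small omission is that in the \emph{Halt} item the paper also uses the \emph{closure} invariant (\reflemma{LTAM-invariants}.3) to rule out getting stuck on an undefined environment look-up in $\tomachsubvev$/$\tomachsubwev$, so that reachable final states are indeed partitioned into successful and clash states; you should mention this explicitly.
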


\gettoappendix{thm:int-ltam-implementation}
\begin{proof}
By \refprop{LTAM-impl-requirements}, \LTAM and $\toint$ form an implementation system on prime terms of $\intcal$ (because a requirement for the initial states of the \LTAM is that the term is prime), thus \LTAM implements $\toint$ by the abstract theorem \refthm{abs-impl}.\qedhere
\end{proof}
 \section{Proofs and Auxiliary Notions of \refsect{TTAM} (A Machine for $\tarcal$: the \TTAM)}
 \label{sect:app-TTAM}

\begin{definition}[Clashes]
A state $\state$ is a \emph{clash} if it has one of the following forms:
\begin{itemize}
\item \emph{Clashing projection}: $\state = \fourstate{\evsym\val} {\proj_i\cons\stack} \env \ars $ and ($\val$ is not a tuple or it is but $\norm\val < i$);
\item \emph{Clashing \wrapt}: $\state = \fourstate{\evsym \pack{\nvsym\tm}{\vv{\evlab\val}}} {\evvaltwo\cons\stack} \env \ars$ and $\evvaltwo$ not a tuple;
\item \emph{Clashing tuple}: $\state = \fourstate{\evsym\vv\val} {\evlab\valtwo\cons\stack} \env\ars$.
\end{itemize}
\end{definition}

\paragraph{Invariants.} The basic invariants of the \TTAM are given by the next proposition. To express the closure invariant, we need to extend the notion of norm of a term of $\tarcal$ to stacks, as follows:
\begin{center}
	$\begin{array}{llllll}
			\multicolumn{3}{c}{\textsc{\Lifted/source norms for stacks } (\vvar\in\set{\lvar,\svar})}
			\\
			\prnorm\stack &\defeq& \max\set{i\in\nat \,|\, \proj_i\vvar\mbox{ appears in $\stack$ out of \wrapts bodies}}
	\end{array}$
\end{center}
Moreover, for a tupled environment $\env = \vv{\evval_\lvar};\vv{\evval_\svar}$ we set $\prnorm\env \defeq \norm{\vv{\evval_\vvar}}$.

\begin{proposition}[Invariants]
\label{prop:TTAM-invariants}
Let $\state=\fourstate{\flag\tm}\stack\env\ars$ be a \TTAM reachable state and $\vvar\in\set{\lvar,\svar}$.
\begin{enumerate}
	\item \emph{Well-formedness}: all \wrapts in $\state$ are well-formed, that is, such that $\lnorm\tm=\norm\bag$ for $\nvsym\pack\tm\bag$ or $\evsym\pack{\nvsym\tm}{\evsym\bag}$.
	\item \emph{Closed values}: every value $\evlab\val$ in $\state$ is closed, that is, has no projected variable out of \wrapt bodies.
	\item \emph{Closure}: $\prnorm\tm \leq \prnorm\env$ and $\prnorm\stack \leq \prnorm\env$, and  $\prnorm\stacktwo \leq \prnorm\envtwo$ for every entry $\pair\stacktwo\envtwo$ of the activation stack $\ars$.
\end{enumerate}
\end{proposition}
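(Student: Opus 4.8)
The plan is to prove the three invariants simultaneously by induction on the length of the run $\compil\tm\tottam^*\state$, reusing the structure of the proof for the \LTAM (\reflemma{LTAM-invariants}) but translating each clause into the tupled setting: the \LTAM's set-theoretic closure condition $\fv\tm\cup\fv\stack\subseteq\dom\env$ becomes the numerical condition that $\prnorm\tm\leq\prnorm\env$ and $\prnorm\stack\leq\prnorm\env$ hold for $\vvar\in\set{\lvar,\svar}$, and being \emph{closed} for a value means having no projected variable out of \wrapt bodies.

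For the base case I would inspect $\compil\tm$, whose focus is $\nvsym\tm$ with $\tm$ closed and prime, and whose constructor stack, environment and activation stack are all empty. Primeness forces every \wrapt to carry a variable bag $\vv\prvar$, and name elimination applied to a \lifted term yields well-formed \wrapts, which gives invariant~(1); no evaluated value occurs yet, so invariant~(2) holds vacuously; and closedness of $\tm$ gives $\prnorm\tm=0\leq\prnorm\emptyenv$, while the empty stacks satisfy invariant~(3) trivially.

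For the inductive step I would do a case analysis on the transition $\state\tottam\statetwo$. The transitions inherited unchanged from the \LTAM merely rearrange components that are already well-formed and already closed, and never rebuild the environment, so their treatment is copied verbatim; the transitions requiring a real check are the redefined $\tomachsubwev$, $\tomachsubvev$ and $\tomachbetaev$, together with the popping transition $\tomachseasevenev$. For $\tomachsubwev$ I would use that replacing the variable bag $\vv\prvar$ by the value bag $\vv{\env(\prvar)}$ preserves the bag length, hence preserves~(1), and that every looked-up entry $\env(\prvar)$ is a closed value by~(2), so the freshly produced evaluated \wrapt is closed, giving~(2); the lookups are defined thanks to~(3). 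The case $\tomachsubvev$ is immediate from~(2). For $\tomachseasevenev$ the refocused value is closed by~(2), hence trivially satisfies its part of~(3), and the bound $\prnorm\stack\leq\prnorm\envtwo$ for the reinstalled pair $\pair\stack\envtwo$ was recorded in the activation stack by \ih.

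The main obstacle, and the heart of the argument, is $\tomachbetaev$, which throws away the current environment and installs the brand-new tupled environment $\vv{\evval_1};\vv{\evval_2}$, so that the closure invariant for the new focus body $\tm$ must be re-derived from scratch. Here I would combine the well-formedness of the fired \wrapt $\evsym\pack{\nvsym\tm}{\vv{\evval_1}}$, which yields $\lnorm\tm=\norm{\vv{\evval_1}}$ and hence controls the \lifted component, with the source-arity information carried by the \wrapt and the length of the argument $\vv{\evval_2}$, which controls the source component; since by definition $\prnorm{\vv{\evval_1};\vv{\evval_2}}$ equals $\norm{\vv{\evval_1}}$ for $\vvar=\lvar$ and $\norm{\vv{\evval_2}}$ for $\vvar=\svar$, this gives $\prnorm\tm\leq\prnorm\env$ for both $\vvar$. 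The emptied constructor stack satisfies~(3) trivially, while the pushed pair $\pair\stack\env$ carries the previous closure bound into the activation stack. Invariants~(1) and~(2) persist because tuple formation and environment lookup create no ill-formed \wrapts and reopen no closed values, the components $\vv{\evval_1}$ and $\vv{\evval_2}$ being closed by \ih. The delicate point throughout is precisely the matching of the source norm of $\tm$ against $\norm{\vv{\evval_2}}$, which is where the arity data attached to \wrapts and the propagation of clash-freeness are needed.
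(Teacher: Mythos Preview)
Your approach is correct and matches the paper's: simultaneous induction on the length of the run with a case analysis on the last transition, transporting the \LTAM argument (\reflemma{LTAM-invariants}) to the tupled setting by replacing the set-theoretic closure condition with the numerical norm bounds. One small correction to your base case: well-formedness of the initial term is an \emph{assumption} on initial states (the \TTAM is run on closed, prime, well-formed terms of $\tarcal$), not something you derive from the fact that the term happens to arise via name elimination of a wrapped source term; the invariant must hold for any such initial term, not only those in the image of the translation pipeline.
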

\myinput{\proofspath/03b-target_machine/invariants}

\begin{definition}[Projecting substitution on evaluation contexts]
Meta-level projecting (simultaneous) substitution $\evctx  \itsubp\tuvartwo\vvvalnone\tuvar\vvvalntwo$ on evaluation contexts $\ctx$ of $\tarcal$ is defined as follows:
\begin{center}
$\begin{array}{rlllll}
\ctxhole\itsubp\tuvartwo\vvvalnone\tuvar\vvvalntwo & \defeq & \ctxhole
\\
(\tm\, \evctx) \itsubp\tuvartwo\vvvalnone\tuvar\vvvalntwo  & \defeq & \tm\itsubp\tuvartwo\vvvalnone\tuvar\vvvalntwo \,\evctx \itsubp\tuvartwo\vvvalnone\tuvar\vvvalntwo
\\
 (\evctx\,\val) \itsubp\tuvartwo\vvvalnone\tuvar\vvvalntwo  & \defeq & \evctx\itsubp\tuvartwo\vvvalnone\tuvar\vvvalntwo \, \val \itsubp\tuvartwo\vvvalnone\tuvar\vvvalntwo
 \\
( \proj_i \evctx)\itsubp\tuvartwo\vvvalnone\tuvar\vvvalntwo  & \defeq & \proj_i \evctx\itsubp\tuvartwo\vvvalnone\tuvar\vvvalntwo
\\
\tuple{\tuv\tm, \evctx, \tuv\val}\itsubp\tuvartwo\vvvalnone\tuvar\vvvalntwo  & \defeq &  \tuple{\tuv{\tm\itsubp\tuvartwo\vvvalnone\tuvar\vvvalntwo},\machctxhole,\tuv{\val\itsubp\tuvartwo\vvvalnone\tuvar\vvvalntwo}}
\end{array}$
\end{center}
\end{definition}

\begin{lemma}[Read back properties]
\label{l:TTAM-decoding-invariants}
\hfill
	\begin{enumerate}
		\item
		\emph{Values}: $\decode{\evval}$ is a value of $\tarcal$  for every $\evval$ of the \TTAM.
		\item
		\emph{Evaluation contexts}: $\decode{\stack}$ and $\decode\ars$ are evaluation contexts of $\tarcal$ for every \TTAM constructor and activation stacks $\stack$ and $\ars$.
	\end{enumerate}
\end{lemma}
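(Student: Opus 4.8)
The plan is to mirror the read-back lemma for the \LTAM (\reflemma{LTAM-decoding-invariants}), because the \TTAM differs from the \LTAM only in the data structure used for environments and in the corresponding environment-induced substitution $\sigma_\env$, which here is the projecting substitution $\itsubp\lvar{\vv{\evval_\lvar}}\svar{\vv{\evval_\svar}}$ rather than an ordinary simultaneous substitution. The read back of flagged terms, constructor stacks, and activation stacks has the same shape as in the \LTAM, so the two points are proved by the same inductions, the only genuinely new ingredient being the behaviour of projecting substitution on evaluation contexts.

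For point~1 (values), I would proceed by structural induction on the evaluated value $\evval$, following the grammar of $\tarcal$ values $\val \grameq \packnm\tm\bag \midd \vv\val$. If $\evval = \evsym\pack{\nvsym\tm}{\vv{\evval}}$ is an evaluated \wrapt, then $\decode\evval = \pack{\tm}{\vv{\decode\evval}}$; by the induction hypothesis each component of $\vv{\decode\evval}$ is a value of $\tarcal$, hence $\vv{\decode\evval}$ is a \bagt of the form $\vv\val$ and the whole reads back to a \wrapt, which is a value. If $\evval = \evsym\tuple{\evsym\val_1,\dots,\evsym\val_n}$ is an evaluated tuple, then $\decode\evval = \tuple{\decode{\evsym\val_1},\dots,\decode{\evsym\val_n}}$ is a tuple of values by the induction hypothesis, hence a value.

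For point~2, I would treat constructor stacks $\stack$ first. Their read back is inherited verbatim from the \LTAM, so the argument is by induction on $\stack$ exactly as in \reflemma{LTAM-decoding-invariants}: the empty stack reads back to $\ctxhole$, and each entry plugs the context read back from the tail into a single-hole frame ($\tm\,\ctxhole$, $\ctxhole\,\decode\evval$, $\proj_i\ctxhole$, or $\tuple{\tuv\tm,\ctxhole,\tuv{\decode\evval}}$). Using point~1 to guarantee that every $\decode\evval$ appearing in a value position is indeed a value of $\tarcal$, each frame is a one-hole evaluation context, and composing it with the evaluation context obtained for the tail (via the straightforward analogue of \reflemma{intermediate-evctxs-compose} for $\tarcal$) yields an evaluation context.

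The main obstacle lies in the activation stacks $\ars$, whose read back is $\decode{\pair\stack\env\cons\ars} = \decodep\ars{\decode\stack\sigma_\env}$: the new step is applying the projecting substitution $\sigma_\env$ to the evaluation context $\decode\stack$. I therefore expect the crux to be an auxiliary preservation statement, namely that projecting substitution sends values of $\tarcal$ to values and evaluation contexts of $\tarcal$ to evaluation contexts. Unlike ordinary substitution, projecting substitution performs the projections on the fly, replacing each $\proj_i\lvar$ or $\proj_i\svar$ by a component of the tuples $\vv{\evval_\lvar}, \vv{\evval_\svar}$, so I must check that this does not disturb the term/value distinction built into the context productions (the $\evctx\,\vv\val$ and $\tuple{\tuv\tm,\evctx,\tuv\val}$ frames require values on one side). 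I would prove this by a routine induction on the structure of the context, following clause by clause the definition of projecting substitution on contexts, the value case reducing to the value statement since the substituted tuple components are themselves values. Granting this, $\decode\stack\sigma_\env$ is an evaluation context, and composing it with $\decode\ars$ (an evaluation context by the induction hypothesis on $\ars$) completes the proof.
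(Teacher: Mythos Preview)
Your proposal is correct and follows essentially the same approach as the paper: structural induction on evaluated values for point~1, induction on constructor stacks for point~2 using point~1 at the value positions, and for activation stacks the crucial auxiliary fact that projecting substitution preserves evaluation contexts (which is precisely why the paper defines $\evctx\itsubp\lvar\vvvalnone\svar\vvvalntwo$ immediately before this lemma). Your identification of the only new ingredient---checking that projecting substitution respects the value/context shape despite the on-the-fly projections---is exactly the point, and your argument for it (values of $\tarcal$ are closed under projecting substitution because \wrapts stay \wrapts and tuples of values stay tuples of values) is the right one.
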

\myinput{\proofspath/03b-target_machine/decoding_invariants}

\begin{proposition}[\TTAM/$\toint$ implementation system]
\label{prop:TTAM-impl-requirements}
Let $\state$ be a \TTAM reachable state.
\begin{enumerate}
	\item \emph{Principal projection}: if $\state \tomachbeta \statetwo$ then $\decode\state \tollbv \decode\statetwo$ and if $\state \tomachproj \statetwo$ then $\decode\state \tollproj \decode\statetwo$.
	\item \emph{Overhead transparency}: if $\state \tomacho \statetwo$ then $\decode\state = \decode\statetwo$.
	\item \emph{Overhead termination}: $\tomacho$ terminates.
	\item \emph{Halt}: successful states read back to $\toint$-normal terms, and clash states read back to clashes of $\tarcal$.
\end{enumerate}
\end{proposition}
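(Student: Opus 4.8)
The plan is to follow verbatim the schema already used for the \LTAM in \refprop{LTAM-impl-requirements}, exploiting that the \TTAM differs from the \LTAM only in the data structure for environments---tuples $\vv{\evval_\lvar};\vv{\evval_\svar}$ in place of maps---and in the three transitions $\tomachsubwev$, $\tomachsubvev$, and $\tomachbetaev$ that manipulate it. All inherited transitions read back exactly as in the \LTAM, so that their overhead transparency and their contribution to overhead termination transfer unchanged; the genuine work concentrates on principal projection and on the three modified transitions.

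For \emph{principal projection} I would first record the computation rule for the environment-induced projecting substitution $\sigma_\env = \itsubp\lvar{\vv{\evval_\lvar}}\svar{\vv{\evval_\svar}}$, namely that it sends each projected variable $\proj_i\lvar$ (resp.\ $\proj_i\svar$) directly to $(\vv{\evval_\lvar})_i$ (resp.\ $(\vv{\evval_\svar})_i$). The $\tomachbetaev$ case then unfolds the read-back of the source state, using the inherited state and stack clauses together with the closed-values invariant (\refprop{TTAM-invariants}.2) to see that $\sigma_\env$ does not reach inside the \wrapt being fired. This exposes a redex $\pack{\tm}{\vv{\decode{\evval_1}}}\,\vv{\decode{\evval_2}}$ at the hole, and I would check that its $\totbv$-contraction, which in $\tarcal$ is governed by the projecting substitution $\tm\itsubp\lvar\vvvalnone\svar\vvvalntwo$, produces exactly $\decode\statetwo$; the match is on the nose because the tupled environment $\vv{\evval_1};\vv{\evval_2}$ created by the transition reads back through $\sigma_{\vv{\evval_1};\vv{\evval_2}}$ to that very projecting substitution. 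The $\tomachprojev$ case is immediate, selecting a component of a tuple of values.

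For \emph{overhead transparency} of the modified transitions, the point is that the tupled look-up $\env(\prvar)$ agrees with the action of $\sigma_\env$ on the projected variable $\prvar$, so $\tomachsubvev$ and $\tomachsubwev$ leave the read-back unchanged, and the inherited $\tomachseasevenev$ is again justified by the closed-values invariant (\refprop{TTAM-invariants}.2), which makes discarding the current environment sound. \emph{Overhead termination} follows from an overhead measure obtained as a direct adaptation of the one in \reffig{ovh-measure-tam}: since that measure ignores environments---the only real difference between the two machines---the same case analysis shows every overhead transition strictly decreases it. Finally, \emph{halt} is settled by inspecting the shapes of final states: successful states $\fourstate{\evsym\val}\emptystack\env\emptystack$ read back to values, hence to $\totar$-normal forms, while each clash state reads back to the corresponding clash of $\tarcal$, the closure invariant (\refprop{TTAM-invariants}.3) being used to rule out that the machine is stuck on a genuine look-up rather than on a clash.

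The main obstacle will be the $\tomachbetaev$ case of principal projection: one must verify that bundling the \bagt values $\vv{\evval_1}$ and the argument values $\vv{\evval_2}$ into a single tupled environment and reading them back via the \emph{simultaneous} projecting substitution matches the single $\totbv$-contraction of $\tarcal$. Because the projecting substitution performs projections on the fly---replacing $\proj_i\lvar$ directly by $(\vv{\evval_1})_i$ rather than by $\proj_i\vv{\evval_1}$---this commutation is exact and incurs none of the up-to-projection slack discussed for name elimination in \refsect{target-calculus}; pinning this down carefully is the delicate part of the argument.
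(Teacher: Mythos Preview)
Your proposal is correct and follows essentially the same approach as the paper: transfer the \LTAM argument wholesale, handle the three modified transitions via the agreement between tupled look-up and the projecting substitution $\sigma_\env$, lean on the closed-values invariant (\refprop{TTAM-invariants}.2) for $\tomachbetaev$ and $\tomachseasevenev$, and reuse the \TAM-style overhead measure (which ignores environments) for termination. Your identification of the $\tomachbetaev$ case as the delicate one, and your observation that the on-the-fly projections of $\itsubp\lvar{\cdot}\svar{\cdot}$ make the match exact, are precisely the points the paper's design of projecting substitution is meant to secure.
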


\myinput{\proofspath/03b-target_machine/requirements_for_implementation}

\gettoappendix{thm:tar-ttam-implementation}
\begin{proof}
By \refprop{TTAM-impl-requirements}, \TTAM and $\totar$ form an implementation system on prime terms of $\tarcal$ (because a requirement for the initial states of the \TTAM is that the term is prime), thus \TTAM implements $\totar$ by the abstract theorem \refthm{abs-impl}.\qedhere
\end{proof}
\section{Proofs of \refsect{sharing} (Sharing, Size Explosion, and Tuples)}
\label{sect:app-sharing}

\gettoappendix{prop:size-explosion-plotkin}

\begin{proof}
By induction on $n$. The base case $n=0$ holds, because $\tm_0 = \Id = \tmtwo_0$ and $\tobv^0$ is the 
identity. The inductive case: $\tm_{n+1} = \pi \tm_n \tobv^n \pi \tmtwo_n = (\la\var\la\vartwo\vartwo 
\var\var) \tmtwo_n \tobv \la\vartwo\vartwo\tmtwo_n \tmtwo_n = \tmtwo_{n+1}$, where the first sequence 
is obtained by the \ih, and the last step by the fact that $\tmtwo_n$ is a value by \ih The bounds on the 
sizes are immediate, as well as the fact that $\tmtwo_{n+1}$ is a value.
\end{proof}

\gettoappendix{prop:size-explosion-tuples}

\begin{proof}
By induction on $n$. The base case $n=0$ holds, because $\tmthree_0 = \Id = \tmfour_0$ and $\tobv^0$ is the 
identity. The inductive case: $\tmthree_{n+1} = \tau \tuple{\tmthree_{n}} \tobv^{n} \tau \tuple{\tmfour_{n}} =(\la\var\tuple{\var,\var}) \tuple{\tmfour_{n}} \tobv \tuple{\tmfour_{n},\tmfour_{n}} = \tmfour_{n+1}$, where the first sequence 
is obtained by the \ih, and the last step  by the fact that $\tmfour_n$ is a value by \ih The bounds on the 
sizes are immediate, as well as the fact that $\tmfour_{n+1}$ is a value.
\end{proof}

\section{Proofs and Auxiliary Notions of \refsect{source-complexity} (Complexity Analysis of the \TAM)}
\label{sect:app-source-complexity}

\gettoappendix{l:TAM-sub-term-invariant}
\myinput{\proofspath/01c-source_complexity/sub-term_invariant}

\gettoappendix{l:TAM-blue-bound}
\myinput{\proofspath/01c-source_complexity/blue_search_bound}

\gettoappendix{l:TAM-overhead}
\myinput{\proofspath/01c-source_complexity/measure_properties}

\gettoappendix{prop:TAM-number-of-trans}
\myinput{\proofspath/01c-source_complexity/final_calculations}

\section{Proofs and Auxiliary Notions of \refsect{int-complexity} (Complexity Analysis of the \TTAM)}
\label{sect:app-int-complexity}

\gettoappendix{l:LTAM-sub-term-invariant}
\myinput{\proofspath/02c-int_complexity/sub-term_invariant}

\paragraph{Number of (Overhead) Transitions} We follow the schema used for the \TAM. The new transition $\tomachseasevenev$ is part of the transitions that are factored out, since each $\tomachseasevenev$ transition is enabled by a $\tomachbetaev$ transition adding an entry to the activation stack.

\begin{lemma}[Transitions match]
	\label{l:LTAM-blue-bound}
Let $\exec \colon \state \tottam^{*} \statetwo$ be an execution. Then $\sizep\exec{\proj,\evsym sea_{1,3,7}}  \leq \sizep\exec{\nvsym sea_{1,2,3},\evsym\betav}$.
\end{lemma}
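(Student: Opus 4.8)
The plan is to prove the four summand inequalities
$\sizep\exec{\proj} \le \sizep\exec{\nvsym sea_2}$, $\sizep\exec{\evsym sea_1} \le \sizep\exec{\nvsym sea_1}$, $\sizep\exec{\evsym sea_3} \le \sizep\exec{\nvsym sea_3}$, and $\sizep\exec{\evsym sea_7} \le \sizep\exec{\evsym\betav}$ separately, and then add them up, exactly as in \reflemma{TAM-blue-bound} for the \TAM. Each inequality is an instance of the same token-counting (stack-discipline) argument: I would associate to each of the four \emph{factored} transitions the \emph{consumption} of one stack entry of a fixed shape, and to each of the four bounding transitions the \emph{production} of exactly one entry of that same shape, and then observe that the number of such entries in existence can never be negative, so over any prefix of $\exec$ consumptions cannot outnumber productions.

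Concretely, the pairings are read off the transition table of the \TTAM (inherited from the \LTAM): a $\proj_i$ entry on the constructor stack is created only by $\tomachseatwonv$ and removed only by $\tomachprojev$; an $\nvsym$-flagged entry $\nvlab\tm$ is created only by $\tomachseaonenv$ and removed only by $\tomachseaoneev$, which replaces it by an $\evsym$-flagged entry; a partially evaluated tuple entry (the ones carrying the hole $\machctxhole$) is created only by $\tomachseathreenv$ and removed only by $\tomachseathreeev$ — note that $\tomachseasixev$ merely slides the hole inside such an entry and so neither creates nor destroys one; finally, an activation-stack entry $\pair\stack\env$ is created only by $\tomachbetaev$ and removed only by $\tomachseasevenev$. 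In the initial state $\compil\tm$ all four kinds of entries are absent, so it suffices to show that the running count of each kind stays nonnegative.

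The step I expect to be the main obstacle — and the only genuine difference with the \TAM — is the bookkeeping forced by $\tomachbetaev$ and $\tomachseasevenev$. Transition $\tomachbetaev$ does not discard the current constructor stack: it pops its top evaluated-bag entry and \emph{saves the remainder wholesale} inside the new activation-stack entry $\pair\stack\env$, while $\tomachseasevenev$ restores it; thus a naive count on the current constructor stack alone is not monotone. To fix this I would define, for a reachable state $\fourstate{\flag\tm}\stack\env\ars$ and each counted constructor-stack shape $t$, the \emph{global} count $M_t \defeq \#_t(\stack) + \sum_{\pair\stacktwo\envtwo \in \ars} \#_t(\stacktwo)$, summing over the current constructor stack and all constructor stacks buried in the activation stack, together with $M_{\mathrm{act}} \defeq$ the length of $\ars$. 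A one-transition case analysis then shows that $\tomachbetaev$ and $\tomachseasevenev$ leave every $M_t$ unchanged (they only relocate entries between the constructor stack and $\ars$, and the top popped by $\tomachbetaev$ is an $\evsym$-flagged entry, not one of the three counted shapes), that each producer transition increments its own $M_t$ (resp.\ $M_{\mathrm{act}}$) by one and touches no other count, and symmetrically for each consumer. Since every $M$ starts at $0$ in $\compil\tm$ and remains $\ge 0$, I obtain $\sizep\exec{\text{consumer}} \le \sizep\exec{\text{producer}}$ for each of the four pairings, and summing them yields $\sizep\exec{\proj,\evsym sea_{1,3,7}} \le \sizep\exec{\nvsym sea_{1,2,3},\evsym\betav}$.
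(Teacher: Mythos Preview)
Your proof is correct and follows the same token-counting (stack-discipline) approach as the paper: pair each of $\tomachprojev$, $\tomachseaoneev$, $\tomachseathreeev$, $\tomachseasevenev$ with the unique transition that produces the corresponding entry, and observe that consumptions cannot outrun productions. Your use of the global count $M_t$ summed over the current constructor stack \emph{and} all constructor stacks buried in the activation stack is exactly the right fix for the fact that $\tomachbetaev$ and $\tomachseasevenev$ relocate rather than destroy constructor-stack entries; this is the one genuinely new bookkeeping step compared to the \TAM case, and you handle it correctly.

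One small caveat: the lemma as stated writes $\exec \colon \state \tottam^{*} \statetwo$ for an arbitrary state $\state$, but your argument (``in the initial state $\compil\tm$ all four kinds of entries are absent'') assumes $\state = \compil\tm$. For a run starting from a non-initial state already carrying, say, a $\proj_i$ on the stack, the inequality can fail. This is harmless in context---the lemma is only ever applied to initial runs in \refprop{LTAM-number-of-trans}, and the paper's own proof makes the same implicit assumption---but it is worth being aware that the bound as literally stated needs the run to start from an initial (or at least empty-stack) state.
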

\myinput{\proofspath/02c-int_complexity/blue_search_bound}

The overhead measure of the \TTAM is defined in \reffig{ovh-measure-ltam}. It is a direct adaptation to the \TTAM of the measure given for the \TAM (see \reffig{ovh-measure-tam}, page \pageref{fig:ovh-measure-tam} of the paper). Note indeed that the measure ignores environments, which are the main difference between the two machines.

\begin{figure}[t!]
\centering
\fbox{
\setlength{\arraycolsep}{3pt}
\begin{tabular}{c}
$\begin{array}{r@{\hspace{.5cm}} rll@{\hspace{.5cm}} rlllrllrllllllll}
\textsc{Terms}
&
\omeas{\proj_i \efvar} & \defeq & 1
&
\omeas{\proj_i \envar} & \defeq & 1
\\
 &
\omeas{\packnm\tm\bag } & \defeq & \omeas\tm + n + m
&
\omeas{ \tm\,\tmtwo} & \defeq & \omeas{ \tm} + \omeas{ \tmtwo} + 1
\\
&&&&\omeas{\tuple{\tm_{1},\mydots,\tm_{n}}}
 &\defeq&  n + \sum_{i=1}^n\size{\tm_{i}}

\end{array}$
\\\hline
$\begin{array}{r@{\hspace{.5cm}} rll@{\hspace{.5cm}} rllrllrllllllll}
\textsc{Con. stack entries $\stacke$}
&
\omeas{\evval} & \defeq & 0
&
\omeas{\nvsym\tm} & \defeq & \size\tm
\\
 &
\omeas{\proj_{i}} & \defeq & 0
&
\omeas{\tuple{\nvsym\tm_{1},\mydots,\nvsym\tm_{n},\machctxhole,\tuv\evval}}
 &\defeq&  n + \sum_{i=1}^n\size{\tm_{i}}
\\
\textsc{Constructor stacks}
&\omeas\emptystack & \defeq & 0
& \omeas{\stacke\cons\stack} & \defeq & \omeas\stacke + \omeas \stack
\\
\textsc{Activation stacks}
&\omeas\emptystack & \defeq & 0
& \omeas{\pair\stack\env\cons\ars} & \defeq & \omeas\stack + \omeas \ars
\\
\textsc{States} &&&&
\omeas{\fourstate{\flag\tm} \stack \env \ars} & \defeq & \omeas{\flag\tm} + \omeas \stack + \omeas\ars
\end{array}$
\end{tabular}
}
\caption{Definition of the overhead measure for the \LTAM.}
\label{fig:ovh-measure-ltam}
\end{figure}

\begin{lemma}
	\label{l:LTAM-overhead}
Let $\compil\tm \tottam^{*} \state$ and $\state \tomachhole{a} \statetwo$.
\begin{enumerate}
\item
 \emph{$\beta$ increases the measure}: $\omeas\statetwo \leq \omeas\state + \size{\tm}$ if $a = \evsym\betav$;
\item
\emph{Unmatched transitions decrease the measure}: $\omeas\statetwo < \omeas\state$ if $a \in\set{\nsubwsym, \nsubvsym, \nvsym sea_{1-5}, \evsym sea_{6}}$;
\item 
\emph{Matched transitions do not change the measure}: $\omeas\statetwo = \omeas\state$ if $a \in\set{\evsym sea_{1,3,7}, \evsym \projsym}$.
\end{enumerate}
\end{lemma}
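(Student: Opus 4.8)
The plan is to prove all three points at once by a case analysis on the label $a$ of the transition $\state \tomachhole{a} \statetwo$, computing $\omeas\state$ and $\omeas\statetwo$ directly from the definition of the overhead measure in \reffig{ovh-measure-ltam} and comparing them. Two structural observations keep every computation local. First, the measure ignores the environment component and the environments stored in the activation stack, so the transitions that only reshuffle environments---the very feature distinguishing the \LTAM/\TTAM from the \TAM---cannot affect it; this is why the argument is essentially a transcription of the proof of \reflemma{TAM-overhead}. Second, every evaluated value $\evval$ has measure $0$, so in particular the body of an evaluated \wrapt, although flagged $\nvsym$, contributes nothing while it sits inside a value.

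For the matched transitions (Point 3), namely $\tomachseaoneev$, $\tomachseathreeev$, $\tomachseasevenev$, and $\tomachprojev$, I would observe that each merely moves measure-$0$ evaluated values between the focus, the constructor stack, and (for $\tomachseasevenev$) the activation stack, without reshaping any non-evaluated term. Concretely, $\tomachseaoneev$ swaps an evaluated focus for the non-evaluated term it uncovers on the stack while pushing that value back as a stack entry of measure $0$; $\tomachprojev$ and $\tomachseathreeev$ replace an evaluated value plus a measure-$0$ stack entry by another evaluated value; and $\tomachseasevenev$ pops an activation frame whose constructor-stack measure is carried unchanged to the new constructor stack. In each case the sum defining $\omeas{\cdot}$ is visibly preserved, giving $\omeas\statetwo = \omeas\state$.

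For the unmatched transitions (Point 2) I would exhibit a strict decrease by one of two mechanisms. The search transitions $\tomachseaonenv$, $\tomachseatwonv$, $\tomachseathreenv$ and $\tomachseasixev$ peel off one constructor (an application, a projection, or one tuple slot) from a non-evaluated term, so the structural ``$+1$'' charged to that constructor by $\size{\cdot}$ disappears while the sizes of the pieces are merely redistributed over the focus and the new stack entry, lowering the measure by exactly $1$. The transitions $\tomachsubvev$ and $\tomachsubwev$ instead replace a non-evaluated focus by an evaluated value of measure $0$, which is a strict decrease as soon as the former has positive size. Here one must check that $\size{\cdot}$ is calibrated so that even a degenerate flag-flip such as $\tomachseafournv$ on the empty tuple strictly lowers the measure; this is the only slightly delicate point of the bookkeeping and is handled exactly as in \reflemma{TAM-overhead}.

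The crux is Point 1, the $\beta$-transition $\tomachbetaev$, the one place where the measure can grow. Firing $\tomachbetaev$ consumes the evaluated \wrapt in focus and the evaluated argument on top of the constructor stack---both of measure $0$---and installs the \wrapt body $\nvsym\tmtwo$ as the new focus, while the old constructor stack is simply pushed onto the activation stack with its measure preserved. Hence $\omeas\statetwo = \omeas\state + \omeas{\nvsym\tmtwo} = \omeas\state + \size\tmtwo$: the previously dormant body, contributing $0$ while wrapped inside a value, now starts to count. The main obstacle, and the reason this case carries additive slack rather than an equality, is bounding $\size\tmtwo$; this is supplied by the sub-term invariant (\reflemma{LTAM-sub-term-invariant}), since the body $\nvsym\tmtwo$ is a non-evaluated term of the reachable state and is therefore a sub-term of the initial term $\tm$, whence $\size\tmtwo \leq \size\tm$ and $\omeas\statetwo \leq \omeas\state + \size\tm$, as claimed.
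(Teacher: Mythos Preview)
Your proposal is correct and follows essentially the same approach as the paper's proof: a case analysis on the transition label, computing the measure before and after directly from the definition in \reffig{ovh-measure-ltam}, and invoking the sub-term invariant (\reflemma{LTAM-sub-term-invariant}) for the $\tomachbetaev$ case. The paper explicitly frames this lemma as a direct transcription of \reflemma{TAM-overhead}, exploiting precisely the two structural observations you highlight---that the measure ignores environments and that evaluated values contribute $0$---so your reduction to the \TAM argument is exactly what is intended.
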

\myinput{\proofspath/02c-int_complexity/measure_properties}

\gettoappendix{prop:LTAM-number-of-trans}
\myinput{\proofspath/02c-int_complexity/final_calculations}

\gettoappendix{l:lifting-growth-bound}
\begin{proof}
	Point 1: the size increment in $\lali\tm$ is due to the \bagts introduced by 
	wrapping abstractions, and it is proportional to the number of free variables in the body of the abstraction, 
	bounded by $\hg\tm$; so $\size{\lali\tm}\in \bigo(\hg\tm\cdot\size\tm)$. 
	Point 2: 
	take $\tm_n \defeq \la{\var_{1}}\mydots\la{\var_{n}} \var_{1}\var_2\ldots\var_{n} \allowbreak\neq \la{\var_{1},\mydots,\var_{n}} \var_{1}\var_2\ldots\var_{n}$ (note $n$ abstractions in $\tm_n$, not~just~$1$).
\end{proof}
		
\section{Implementation in OCaml of the \TTAM}
\label{sect:app-implementation}

The submission comes with an implementation in OCaml of the \TTAM,
available on GitHub \cite{PPDP25ocaml}.

\paragraph{Motivations} The aims of the implementation are:
\begin{enumerate}
 \item Supporting the claim that all \TTAM{} transitions can be
   implemented in constant time, with the exception of $\tomachseathreenv$ and $\tomachsubwev$ whose complexity is linear in the width of the
    tuple/bag iterated over by the rule (in the paper $\tomachseathreenv$ and $\tomachsubwev$ are then considered $\bigo(1)$ operations because the linear cost is amortized over the cost of search in complete machine runs);
 \item Giving the user the possibility of observing \TTAM{} executions on terms of their own
   choice, to better grasp the way the machine works.
\end{enumerate}
The implementation is \emph{not} meant to be optimized, nor to represent
an instance of state-of-the-art OCaml code. In particular, the data
structures have been chosen to make the datatypes and code very close
to those in the paper, to allow the reader to verify that the
implementation reflects its specification.

\paragraph{User Interface and Provided Examples} Once compiled following the instructions in
the \texttt{README.md} file, the user can start a Read-Eval-Print-Loop
(REPL) by typing \texttt{dune exec main}. The loop requires the user to
enter a source term, which is then 
\begin{enumerate}
\item Wrapped to intermediate code,
\item Turned into target code by eliminating names, and then 
\item Run according to the transitions of  the \TTAM.
\end{enumerate}
Every intermediate machine transition is printed on the standard output.
The internal representation of the parsed source and intermediate terms
are also printed before starting reduction.

The exact syntax accepted by the REPL is printed when the command is run.
It is the user responsibility to enter closed and clash-free terms.
When one of the latter two conditions is violated, the REPL will abort.

The \texttt{TESTS} file includes a few examples of closed, clash-free
source terms that can be used as tests by typing them at the REPL prompt
or by feeding the whole file to the executable via \texttt{dune exec main < TESTS}.

\paragraph*{Code Structure} The code consists of the following files:
\begin{itemize}
  \item \texttt{lexer.mll/parser.mly} where the lexer and parser for
    source terms are implemented
  \item \texttt{term.ml} composed of three sub-modules
    \texttt{Source/Intermediate/Target} respectively for the source,
    intermediate and target languages. Each submodule defined the
    algebraic type \texttt{term} for the terms of the language.
    The wrapping and name elimination functions to turn source terms
    into intermediate terms and intermediate terms into target terms
    are implemented in the sub-modules of the function codomain.

    Note that, in the memory model of OCaml, every subterm is represented
    in memory as a pointer to its root. Therefore pattern matching over
    a term and calling a function on the pattern variables that hold
    the subterms is done in constant time. This is required to conclude
    that the implementation of most machine transitions requires constant
    time.
  \item \texttt{machine.ml} where the \TTAM{} is implemented.
   The module begins with the definition of
   algebraic data types for constructor and activation stacks,
   environments and states of a machine. Then a pretty-printing sub-module \texttt{PP} allows to turn all of
   the previous types to string. The code is not written with efficiency
   in mind. The functions are parameterized by a boolean stating if the
   term has already been evaluated or not. The boolean is propagated
   during recursion according to the machine invariants and it is used
   to decorate every subterm with $\nvsym/\evsym$.
   Outside the \texttt{PP} sub-module, the last two functions of the file
   are \texttt{run}, which implements the machine transitions and main
   loop, and \texttt{reduce}, that builds an initial machine state from
   a target term, computes its normal form using \texttt{run} and
   prints it.
\end{itemize}

\paragraph*{Differences Between the Code and the Paper} The definitions
in the paper and the code are almost in one-to-one correspondence.
There is no distinction in the code between values and terms.
Arrays are used to represent bags and tuples and to provide access in
constant time.

The only significant difference is in the stack entry
for tuples under evaluation $\tuple{\tuv{\nvsym\tm},\machctxhole,\tuv{\evsym\val}}$ that is represented in the code as the array of terms
$\tuv{\nvsym\tm}\tmtwo\tuv{\evsym\val}$ together with the index $i$
of $\machctxhole$, so that the $i$-th element of the array is
a non-meaningful term $\tmtwo$ that replaces the $\machctxhole$
placeholder. This choice allows us to use arrays of terms both for tuples
and for stack entries, simplifying the implementation in two ways:
it allows us to do a shallow copy of the tuple in input in the
$\tomachseathreenv$ transition, to obtain the needed stack item;
it also allows us to avoid a shallow copy of the array in the
$\tomachseathreeev$ and $\tomachseathreeev$ rules.

\end{document}